\PassOptionsToPackage{dvipsnames}{xcolor}
\documentclass[11pt,a4paper]{article}

\ifdefined\pdfobjcompresslevel
\fi

\usepackage[margin=1in]{geometry}
\usepackage[T1]{fontenc}
\usepackage{amsmath,amssymb,amsthm,mathtools}
\usepackage{booktabs,array}
\usepackage{graphicx}
\usepackage{tikz}
\usetikzlibrary{shadings}
\usepackage{placeins}
\usepackage{enumitem}
\usepackage{parskip}
\usepackage[dvipsnames]{xcolor}
\usepackage[mathlines]{lineno}

\usepackage{hyperref}
\hypersetup{colorlinks=true,linkcolor=black,citecolor=black,urlcolor=black,
  pdftitle={Chamber geometry and specification numbers of Boolean threshold functions},
  pdfauthor={Martin Anthony},
  pdfsubject={Boolean threshold functions, specification numbers, hyperplane arrangements},
  pdfkeywords={threshold function, specification number, essential point, hyperplane arrangement, chamber, resonance arrangement, zonotope, teaching dimension}}

\usepackage{caption}
\newtheorem{theorem}{Theorem}[section]
\newtheorem{proposition}[theorem]{Proposition}
\newtheorem{corollary}[theorem]{Corollary}
\newtheorem{lemma}[theorem]{Lemma}

\theoremstyle{definition}
\newtheorem{definition}[theorem]{Definition}
\newtheorem{problem}[theorem]{Problem}
\newtheorem{example}[theorem]{Example}
\theoremstyle{remark}
\newtheorem{remark}[theorem]{Remark}

\newcommand{\B}{\{0,1\}}
\newcommand{\R}{\mathbb{R}}
\newcommand{\Hn}{\mathcal{H}_n}
\newcommand{\Tn}{\mathcal{T}_n}

\newcommand{\Tk}[1]{\mathcal{T}_{#1}}
\newcommand{\C}{\mathcal{C}}
\newcommand{\Arr}{\mathcal{A}}
\newcommand{\Ess}{\operatorname{Ess}}
\newcommand{\TD}{\operatorname{TD}}
\newcommand{\sgn}{\operatorname{sgn}}
\newcommand{\Ch}{\operatorname{Ch}}
\newcommand{\Res}{\mathcal{R}_n}
\DeclareMathOperator{\Rep}{Rep}
\newcommand{\eps}{\varepsilon}

\tikzset{
  flab/.style={font=\small, fill=white, fill opacity=0.9, text opacity=1,
               inner sep=1pt, rounded corners=1pt},
  rayext/.style={thick, dashed, ->},
  cut/.style={gray!70, dashed, semithick},
}
\newcommand{\chambertriad}{%
  \draw[->,gray!70] (0,0)--(0.55,0)      node[right=-1pt,font=\scriptsize,black]{$a_1$};
  \draw[->,gray!70] (0,0)--(0,0.55)      node[above=-2pt,font=\scriptsize,black]{$a_2$};
  \draw[->,gray!70] (0,0)--(-0.40,-0.34) node[below=-2pt,font=\scriptsize,black]{$a_0$};
}

\definecolor{zonoblue}{HTML}{2C6FBB}
\definecolor{zonoorange}{HTML}{E08A1E}
\definecolor{zonoink}{HTML}{1F2933}
\definecolor{zonored}{HTML}{B03030}
\tikzset{
  lblblue/.style  ={draw=zonoblue,   rounded corners=2pt, fill=white, inner sep=3pt, font=\small, anchor=south},
  lblorange/.style ={draw=zonoorange, rounded corners=2pt, fill=white, inner sep=3pt, font=\small, anchor=west},
}

\title{Chamber geometry and specification numbers of Boolean threshold functions}

\author{Martin Anthony\\
{Department of Mathematics, London School of Economics}\\
{\texttt{m.anthony@lse.ac.uk}}}

\date{}

\begin{document}
\maketitle

\begin{abstract}
The specification number $\sigma_n(f)$ of a Boolean threshold function $f$ on
$n$ variables is the least number of labelled examples that determine $f$
uniquely among all threshold functions, and the smallest such set consists of
its essential examples. We develop the geometric interpretation of this
invariant observed by Zuev: the threshold functions are the chambers of a
central hyperplane arrangement in the $(n+1)$-dimensional space of weights and
thresholds, and the essential examples of a function correspond to the facets of
its chamber. The specification number is therefore the chamber's facet number,
and this correspondence organises the paper.

The viewpoint gives a unified geometric account of known and new results. The
lower bound $\sigma_n(f)\ge n+1$ amounts to the fact that a pointed
full-dimensional cone in $(n+1)$-dimensional space has at least $n+1$ facets,
with equality exactly for simplicial chambers. The average specification number
$\overline\sigma_n$ becomes an average facet count. Gutekunst, M\'esz\'aros, and
Petersen ask how the average number of facets per threshold chamber grows. We
show this average is the average specification number $\overline\sigma_n$,
evaluate it exactly via the resonance arrangement, and bound it using a theorem
of Fukuda, Tamura, and Tokuyama on average facet counts, obtaining
$\overline\sigma_n\le 2n$ and hence $\overline\sigma_n=\Theta(n)$. This settles
their question.

The same geometry connects threshold functions with a threshold zonotope whose
vertices are modified Chow vectors: its one-skeleton is the one-inclusion graph
of threshold functions, and the degree of a vertex is the specification number
of the corresponding function. The chamber method also extends to polynomial
threshold functions, giving analogous average-specification bounds.

Finally, we apply the correspondence to the operations of Lozin \emph{et al.}\
on threshold functions of minimum specification number. Adding a variable and
extending on a variable both amount to taking the product of a chamber closure
with a half-line, and so preserve simpliciality. For the symmetric-variables
extension we give an exact thresholdness criterion and show that, whenever the
extension is threshold, minimum specification number is preserved. We resolve
their fourth operation on $f_{n,k}$: it produces a function of minimum
specification number exactly in the boundary case $k=n-1$.
\end{abstract}


{\setlength{\parskip}{0pt}\tableofcontents}

\section{Introduction}\label{sec:intro}

\paragraph{Specification number and chamber geometry.}
A threshold function on $\B^n$, also called a halfspace or a linearly separable
Boolean function, is a Boolean function $f\colon\B^n\to\B$ for which $f(x)=1$ if
and only if $\langle w,x\rangle\ge\theta$ for some weight vector $w\in\R^n$ and
threshold $\theta\in\R$. (Here, and throughout, $\langle w,x\rangle$ denotes the standard Euclidean inner product, $\sum_{i=1}^n w_ix_i$.)
 We denote by  $\Hn$ the class of all threshold functions on
$\B^n$. A subset $S\subseteq\B^n$ \emph{specifies} $f\in\Hn$ if no threshold
function other than $f$ agrees with $f$ on $S$, and the \emph{specification
number} $\sigma_n(f)$ is the minimum cardinality of such a set. The
\emph{teaching dimension} of $\Hn$ is $\TD(\Hn)=\max_f\sigma_n(f)$, and the
\emph{average specification number} is
$\overline\sigma_n=\frac{1}{|\Hn|}\sum_{f\in\Hn}\sigma_n(f)$. The
specification-number terminology follows Anthony, Brightwell, and Shawe-Taylor
\cite{ABST1995}. Closely related terminology appeared earlier in computational
learning theory: Goldman and Kearns used teaching sequences and teaching
dimensions \cite{GoldmanKearns1995}, and Shinohara and Miyano used the term key
\cite{ShinoharaMiyano1991}. In that language a specifying set is a teaching set
for the target $f$ relative to $\Hn$, and $\sigma_n(f)$ is the concept-level
teaching dimension, or key size, of $f$. The definition involves no
distributional assumption and no learning algorithm.

A point $x\in\B^n$ is essential for $f\in\Hn$ if flipping the
value of $f$ at $x$ produces another threshold function; we write $\Ess(f)$ for the
set of essential points. The paper organises its questions around a single
geometric correspondence. Each $f\in\Hn$ corresponds to a chamber $C_f$ of the
\emph{Boolean threshold arrangement} $\Arr_n$, the finite central arrangement of
the $2^n$ hyperplanes in the $(n+1)$-dimensional parameter space of weights and
thresholds (defined formally in Section~\ref{sec:setup}). 
 The hyperplane $H_x$
associated with $x\in\B^n$ consists of the parameter vectors for which 
$\langle w,x\rangle=\theta$, and the chambers of $\Arr_n$ are the connected components
of the complement of the union of the $H_x$. Theorem~\ref{thm:facet} gives a
self-contained proof that
\[
        x\in\Ess(f)
        \quad\Longleftrightarrow\quad
        H_x\ \text{supports a facet of the closed chamber}\ \overline{C_f},
\]
where a \emph{facet} is a codimension-one boundary face of the polyhedral cone
$\overline{C_f}$. Consequently
\[
        \sigma_n(f)=\#\{\text{facets of }\overline{C_f}\}.
\]
Specification questions for Boolean threshold functions therefore become
facet-counting questions for chambers of $\Arr_n$.

\paragraph{Main contributions.}
The facet correspondence of Theorem~\ref{thm:facet} is the organising result of
the paper. Its statement, that the essential points of $f$ index the facets of
$\overline{C_f}$, is observed in Zuev's graph-theoretic discussion of the
facets of threshold cones \cite[\S5]{Zuev1991}; the contribution here is a
self-contained arrangement-theoretic formulation and proof. Several of its
consequences are not new. The specification number $\sigma_n(f)$ equals the
degree of $f$ in the one-inclusion graph $G(\Hn)$, essential specification
coincides with shortest-path closure, and $\overline\sigma_\C<2\,\mathrm{VCD}(\C)$
for every nontrivial shortest-path-closed class $\C$, all from Zuev
\cite{Zuev1991} and from Doliwa, Fan, Simon, and Zilles \cite{Doliwa2014}. The
lower bound $\sigma_n(f)\ge n+1$ (Corollary~\ref{cor:lower}) is in Hu
\cite{Hu1965}, and Zuev states it in facet form; the criterion that
$\sigma_n(f)=n+1$ exactly when $\overline{C_f}$ is simplicial
(Corollary~\ref{cor:simplicial}) is the definition of a simplicial cone applied
to the facet count, and for $n\ge2$ it identifies $\Tn$, the class of
all-variable-relevant threshold functions attaining the lower bound, with the
threshold functions whose chamber closures are simplicial
(Theorem~\ref{thm:Tn-simplicial}).

We then apply the correspondence. Viewing the standard chamber--facet
incidence identity through it, together with the resonance-arrangement
identification of \cite{GMP2021}, yields the exact formula
$\overline\sigma_n=2^{n+1}r(\Res)/|\Hn|$ (Corollary~\ref{cor:exact}), a synthesis
of that identity with Zuev's facet interpretation. A theorem of Fukuda, Tamura, and Tokuyama \cite{Fukuda1993} yields the sharper bound $\overline\sigma_n\le 2n$ through the spherical view of the average developed in Section~\ref{sec:average} (Theorem~\ref{thm:theta-n}); the order $\overline\sigma_n=\Theta(n)$ also follows from the bound of Doliwa, Fan, Simon, and Zilles \cite{Doliwa2014}. Because the formula is an equality, it sharpens the
region-count estimate of \cite{GMP2021} for $n\ge3$ and shows the average number
of facets per threshold chamber is linear in $n$, resolving
\cite[Problem~3]{GMP2021}. The
cardinality bound $\overline\sigma_\C\le\log_2|\C|$ for essentially specified (equivalently, shortest-path-closed) classes (Theorem~\ref{thm:average}) recovers
the $n^2$ bound of \cite{ABST1995} from a single edge-isoperimetric inequality.
The same chamber mechanism extends to degree-$d$ polynomial threshold functions.
There the lifted points are not in general position, so we prove essential
specification by a facet-sign lemma rather than by a general-position count
(Proposition~\ref{prop:ptf-average}). We also extend Stenson's identification
of the threshold functions with the vertices of the Chow-parameter zonotope $Z_n$
\cite{Stenson2014} to its one-skeleton, the one-inclusion graph $G(\Hn)$: the
edges at $f$ are indexed by the essential points of $f$, and the vertex degree is
$\sigma_n(f)$ (Theorem~\ref{thm:zonotope}).

We also use the chamber view to account for the minimum-specification-preserving
operations of Lozin, Razgon, Zamaraev, Zamaraeva, and Zolotykh \cite{Lozin2022}.
Adding a variable and extension on a variable are the same construction, the
product of the closed chamber with a closed half-line, and we compute the
resulting essential sets (Subsection~\ref{ssec:lozin-operations}); the
symmetric-variables extension is threshold exactly when a representation of the
input is sufficiently unbalanced, and minimum specification is preserved when it
is. We resolve the fourth operation left open in \cite{Lozin2022}, showing that
the image of $f_{n,k}$ has minimum specification number if and only if $k=n-1$
(Section~\ref{sec:fourth}). We do not attempt to classify the simplicial
chambers of $\Arr_n$, to determine the limiting constant in
$\overline\sigma_n=\Theta(n)$, or to give an output-polynomial algorithm for
computing $\Ess(f)$ from a compact weight representation;
Section~\ref{sec:open} collects these directions.

\paragraph{Relation to earlier work.}
The lower bound $\sigma_n(f)\ge n+1$ appears in Hu's threshold-logic monograph
\cite{Hu1965}. Anthony, Brightwell, and Shawe-Taylor developed the
specification-number theory of $\Hn$ in \cite{ABST1995}: every $f\in\Hn$
satisfies $\sigma_n(f)\ge n+1$; every nested function, now usually called linear
read-once (lro), attains the bound; the unique minimum specifying set is
$\Ess(f)$; and $\overline\sigma_n\le n^2$ for $n\ge2$. Section~\ref{sec:chamber}
gives the recursive definition of linear read-once, where, in the terminology of
\cite{ABST1995}, the lro functions depending on all ambient variables are the
\emph{nested} functions.

Zuev introduced the graph of threshold functions, whose vertices are the members
of $\Hn$ and whose edges join two functions that take different values at exactly
one point of the Boolean cube $\B^n$; this is the one-inclusion graph $G(\Hn)$ used here.
 In the same discussion he observed the
connection between adjacent cones in coefficient space, the facets of the cone,
and the cube vertices at which the function can be changed, noting that the facet
points form a smallest determining set, that their number is the graph degree,
and that graph distance agrees with Hamming distance
\cite[\S5]{Zuev1991}. Doliwa, Fan, Simon, and Zilles later
studied shortest-path-closed concept classes. They proved that a class is
shortest-path-closed if and only if, for every concept, the coordinates incident
with it in its one-inclusion graph form its unique minimum teaching set, and that
the average specification number of every nontrivial shortest-path-closed class is
less than twice its Vapnik--Chervonenkis dimension
\cite[Lemma~30 and Theorem~31]{Doliwa2014}. Section~\ref{sec:average} records
these results in the notation of this paper.

The 1995 paper conjectured that nested functions are the only $f\in\Hn$ with
$\sigma_n(f)=n+1$. Lozin, Razgon, Zamaraev, Zamaraeva, and Zolotykh disproved this
in \cite{Lozin2017,Lozin2018LRO}, and \cite{Lozin2022} broadened the
counterexamples to the family
\[
        f_{n,k}=x_1x_2\vee x_1x_3\vee\cdots\vee x_1x_k\vee x_2x_3\cdots x_n,
        \qquad 3\le k\le n-1,
\]
of non-nested threshold functions of $n$ relevant variables with
$\sigma_n(f_{n,k})=n+1$. That paper studied the class $\Tn$ of
all-variable-relevant threshold functions attaining the lower bound, giving
inductive operations that produce members of $\Tn$, a complete description for
$n\le5$, and a finite description for general $n$ left open. In the present
chamber language, for $n\ge2$, $\Tn$ is exactly the class of threshold functions
whose chamber closures are simplicial, and two of the operations of
\cite{Lozin2022} coincide: adding a variable and extension on a variable are the
same chamber operation, the product $\overline C\mapsto\overline C\times\R_{\ge0}$,
while the symmetric-variables extension is not of this form
(Remark~\ref{rem:two-extensions-coincide}).

The extremal-point picture points in a different direction. The 2017 conference paper proved the
threshold version \cite[Theorem 13]{Lozin2017}, and the 2018 journal paper
strengthened it to all positive Boolean functions \cite[Theorem 6]{Lozin2018LRO}:
a positive Boolean function with $k$ relevant variables has exactly $k+1$ extremal
points if and only if it is linear read-once. The 1995 conjecture therefore
identified the right structural class, lro, but the wrong characterising quantity,
specification number in place of extremal-point count. The 2018 journal paper also
supplies lro characterisations through read-once Chow functions and forbidden
subfunctions \cite[Theorems 7--8]{Lozin2018LRO}, which Section~\ref{ssec:lro-chow}
records.

The parameter-space view of threshold functions, and the geometry of their Chow
parameters, are older. Chow's injectivity theorem determines a threshold function
from its Chow parameters \cite{Chow1961}, and the corner theorem stated by Winder
identifies the threshold functions with the extreme points, the corners, of the
Chow-parameter set \cite[Theorem~7]{Winder1971}. Stenson introduced the threshold
zonotope and identified its vertices with the threshold functions
\cite{Stenson2014}.

\paragraph{Organisation of the paper.}
In Section~\ref{sec:setup}, we describe the setting and notation, distinguishing the
(homogenised) input space from the parameter space and introducing the Boolean
threshold arrangement. In Section~\ref{sec:chamber}, we prove the facet correspondence
(Theorem~\ref{thm:facet}) and discuss its immediate consequences, including the lower
bound and the simpliciality criterion.  Section~\ref{sec:average}
treats the average specification number through the one-inclusion graph and
shortest-path closure. We establish the cardinality bound
$\overline\sigma_\C\le\log_2|\C|$ using an edge-isoperimetric inequality for the
Hamming cube (Theorem~\ref{thm:average}), record the bound
$\overline\sigma_n<2(n+1)$ of Doliwa, Fan, Simon, and Zilles
(Theorem~\ref{thm:vc-average}), and obtain the sharper bound $\overline\sigma_n\le 2n$ from a result of Fukuda, Tamura, and Tokuyama. We give a similar argument for spanning feature arrangements
(Proposition~\ref{prop:feature-arrangements}), which, in particular, yields an
average-specification bound for polynomial threshold functions
(Proposition~\ref{prop:ptf-average}). Section~\ref{sec:exact} derives the exact resonance-arrangement
formula (Corollary~\ref{cor:exact}) and its degree-$d$ analogue.
Section~\ref{sec:growth} gives further consideration to the order of growth of the average specification number, establishing $\overline\sigma_n=\Theta(n)$ (Theorem~\ref{thm:theta-n}). In Section~\ref{sec:zonotope}, we discuss the Chow-parameter zonotope
(Theorem~\ref{thm:zonotope}). Section~\ref{sec:Tn} investigates three operations of Lozin \emph{et al.} \cite{Lozin2022} on the class $\Tn$ (the class of threshold functions with minimum specification number), framing these as chamber operations and determining when the third of these (the symmetric variables extension) preserves specification number minimality. 
 Section~\ref{sec:fourth} considers a fourth operation from \cite{Lozin2022} and answers an open problem concerning it.  
We collect open problems and further directions in Section~\ref{sec:open}.

\section{Setting and basic notation}\label{sec:setup}

We distinguish two vector spaces. The \emph{homogenised input space} is
$X=\R^{n+1}$, into which the Boolean cube $\B^n$ is embedded by
\[
        x \longmapsto \widetilde x=(1,x_1,\ldots,x_n) \in X .
\]
For each \emph{cube point} $x \in \B^n$, we refer to $\widetilde x$ as the corresponding \emph{lifted cube point}. 
The \emph{parameter space} is the dual space $A=X^*$, which we identify with
$\R^{n+1}$ using the standard inner product. A parameter vector
$a=(a_0,a_1,\ldots,a_n) \in A$ represents the affine form
\[
        x \longmapsto \langle a,\widetilde x\rangle
        =a_0+a_1x_1+\cdots+a_nx_n .
\]
Thus $a$ and $\widetilde x$ play dual roles. Fixing $a$ gives a threshold classifier on the
input cube: explicitly,  $a=(a_0, a_1, \dots, a_n)$ determines the threshold function
$f$ with $f(x)=1$ if and only if $\langle a,\widetilde x\rangle\ge 0$.
Fixing a cube point $x$ gives the hyperplane
\[
        H_x = \{a \in A : \langle a,\widetilde x\rangle=0\}
\]
in parameter space, the set of parameter vectors orthogonal to the lifted point
$\widetilde x$, equivalently those whose affine form vanishes at $x$.

We use the following standard arrangement terminology: a finite real
hyperplane arrangement is a finite set of hyperplanes, each of the form
$\{v:\langle u,v\rangle=c\}$ with a non-zero normal vector $u$. (The pair $(u,c)$
is determined by the hyperplane up to a common non-zero scalar.) The arrangement is said to be \emph{central} if every hyperplane passes
through the origin;  that is, each has $c=0$.

The \emph{Boolean threshold arrangement} is $\Arr_n=\{H_x:x\in\B^n\}$, a
finite central arrangement in $A$: each
$H_x=\{a:\langle a,\widetilde x\rangle=0\}$ passes through the origin and has
normal vector $\widetilde x$, and Lemma~\ref{lem:Hx-injective} below shows the
$H_x$ are distinct, so $|\Arr_n|=2^n$.
This parameter-space viewpoint is classical in threshold logic and pattern
recognition; see, for example, Hu \cite{Hu1965}, Winder \cite{Winder1963},
Cover \cite{Cover1965}, and Muroga \cite{Muroga1971}.  We use the language of
hyperplane arrangements, following the general face-counting framework of
Zaslavsky \cite{Zaslavsky1975}.

A point $x \in \B^n$ is \emph{essential} for $f \in \Hn$ if the function
$f \oplus \mathbf 1_{\{x\}}$, obtained by flipping the value of $f$ at $x$, is
also threshold. Here $\mathbf 1_{\{x\}}$ denotes the indicator of $x$, and
$\oplus$ is pointwise addition modulo $2$, so adding $\mathbf 1_{\{x\}}$ flips the
value at $x$ and leaves all other values unchanged. We write $\Ess(f)$ for the
set of essential points of $f$.

A subset $S\subseteq\B^n$ is a \emph{specifying set} for $f\in\Hn$ if no other
threshold function agrees with $f$ on $S$. The \emph{specification number}
$\sigma_n(f)$ is the minimum cardinality of a specifying set. A specifying set of
cardinality $\sigma_n(f)$ is called a \emph{minimum specification} of $f$.

\begin{lemma}\label{lem:Hx-injective}
For $x,y\in\B^n$, we have $H_x=H_y$ if and only if $x=y$.
\end{lemma}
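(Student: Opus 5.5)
The reverse implication is immediate: if $x=y$ then $\widetilde x=\widetilde y$, so $H_x$ and $H_y$ are the same set by definition. All the work is in the forward direction, and the plan is to reduce it to a statement about normal vectors.

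Suppose $H_x=H_y$. Both are hyperplanes through the origin in $A\cong\R^{n+1}$, with nonzero normals $\widetilde x$ and $\widetilde y$. They are nonzero because the first coordinate of each lifted point equals $1$. For a central hyperplane $H=\{a:\langle u,a\rangle=0\}$, the orthogonal complement $H^\perp$ is the line $\R u$. Hence $H_x=H_y$ gives $\R\widetilde x=\R\widetilde y$, so $\widetilde y=\lambda\widetilde x$ for some nonzero scalar $\lambda$. This is the parenthetical remark in the setup: the normal is determined by the hyperplane up to a nonzero scalar, and in the central case $c=0$, so only the normal matters.

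Comparing zeroth coordinates gives $1=\lambda\cdot 1$, so $\lambda=1$. Therefore $\widetilde x=\widetilde y$, and dropping the first coordinate gives $x=y$. The only mild point is justifying $H^\perp=\R u$, which follows from $\dim H=n$ and hence $\dim H^\perp=1$. If one prefers to avoid orthogonal complements, one can argue directly. Suppose $\widetilde y\notin\R\widetilde x$. Then $\widetilde x$ and $\widetilde y$ are linearly independent, so the linear map $a\mapsto(\langle a,\widetilde x\rangle,\langle a,\widetilde y\rangle)$ is surjective onto $\R^2$. In particular some $a$ satisfies $\langle a,\widetilde x\rangle=0$ and $\langle a,\widetilde y\rangle=1$, so $a\in H_x\setminus H_y$, a contradiction.

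The normalisation of the first coordinate to $1$ is exactly what prevents distinct cube points from giving proportional normals, so this is the step where the homogenisation is used. I expect no real obstacle.
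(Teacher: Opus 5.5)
Your proof is correct and follows the paper's argument: the normals $\widetilde x=(1,x)$ all have first coordinate $1$, so distinct cube points give non-parallel normals and hence distinct central hyperplanes. The only difference is that you also justify the step the paper takes for granted, that equal central hyperplanes have parallel normals, via $H^\perp=\R u$ or the surjectivity argument.
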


\begin{proof}
If $x=y$ then $H_x=H_y$. The normals $\widetilde x=(1,x)$ all have first
coordinate $1$, so for $x\ne y$ they are non-parallel, and so $H_x\ne H_y$.
\end{proof}

A \emph{chamber} of $\Arr_n$ is a connected component of
$A \setminus \bigcup_x H_x$. Equivalently, it is a maximal region on which, for
every cube point $x$, the evaluation $\langle a,\widetilde x\rangle$ has
constant non-zero sign as $a$ ranges over the region. Writing
$\eps_C(x)\in\{-1, 1\}$ for this sign defines the \emph{sign vector} $\eps_C$ of
the chamber $C$. More generally, the chambers of any arrangement are the
connected components of its complement. We write $\Ch(\Arr)$ for the set of
chambers and $r(\Arr)=|\Ch(\Arr)|$ for their number. The closure $\overline C$
is obtained by relaxing the strict inequalities to weak ones:
\[
        \overline C
        =\{a\in A: \eps_C(x)\langle a,\widetilde x\rangle \ge 0
          \text{ for all } x\in\B^n\}.
\]
Because $\Arr_n$ is central, these defining inequalities are homogeneous, so
$\overline C$ is a \emph{polyhedral cone}: a finite intersection of closed
halfspaces whose bounding hyperplanes pass through the origin; equivalently, it
is the set of non-negative combinations of finitely many vectors.  It is
full-dimensional, since $C$ is a non-empty open subset of $A$. It is also
\emph{pointed}, meaning $\overline C\cap(-\overline C)=\{0\}$, so that it
contains no line.  Indeed, if $a$ and $-a$ both lie in $\overline C$, then for every $x$ both
$\eps_C(x)\langle a,\widetilde x\rangle \ge 0$ and
$\eps_C(x)\langle a,\widetilde x\rangle \le 0$ hold, so
$\langle a,\widetilde x\rangle=0$ for all $x$. In particular
$\langle a,\widetilde 0\rangle=a_0=0$, and
$\langle a,\widetilde{e_i}\rangle=a_0+a_i=0$ then gives $a_i=0$ for each $i$ and hence 
 $a=0$.

A \emph{facet} of the full-dimensional cone $\overline C$ is an intersection
$\overline C\cap H$ of dimension $\dim A-1=n$, where $H$ is a supporting
hyperplane of $\overline C$, meaning $\overline C$ lies in one of the two closed
halfspaces bounded by $H$.  For each $x\in\B^n$, $\overline C$ lies in
the closed halfspace
\[
        \eps_C(x)\langle a,\widetilde x\rangle \ge 0,
\]
whose boundary is $H_x$. We say $H_x$ \emph{supports} a facet of $\overline C$ if the intersection
$F_x(\overline C)=\overline C\cap H_x$ has dimension $n$; equivalently, if the
inequality $\eps_C(x)\langle a,\widetilde x\rangle \ge 0$ is \emph{facet-defining}
for $\overline C$.  In that case $F_x(\overline C)$ is the facet supported by $H_x$. 
Note that the hyperplane $H_x$ is not itself the facet, but the ambient hyperplane
containing it. If $\overline C\cap H_x$ has smaller dimension, $H_x$ supports no
facet.

A full-dimensional pointed cone in $\R^d$ is said to be \emph{simplicial} if it is
generated by $d$ linearly independent rays; equivalently, it has exactly $d$
facets. Thus, in the present parameter space $A$, a chamber
closure is simplicial precisely when it has $n+1$ facets.

\begin{lemma}[Facet halfspaces suffice]\label{lem:facet-halfspaces}
Let $K\subseteq\R^d$ be a full-dimensional polyhedral cone. Then $K$ is the
intersection of its facet-defining closed halfspaces.
\end{lemma}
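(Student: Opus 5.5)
The plan is to write $K=\bigcap_{i\in I}\{v:\langle u_i,v\rangle\ge0\}$ for a finite family of non-zero normals $u_i$; such a description exists by the definition of a polyhedral cone. I will then remove redundant inequalities one at a time. Choose an irredundant subfamily $J\subseteq I$, that is, a subfamily whose intersection is still $K$ but from which no single inequality can be deleted without enlarging the intersection. Such a $J$ exists because $I$ is finite and deletions can be performed until none is possible. It then suffices to show that each inequality indexed by $j\in J$ is facet-defining. Once that is done, $K$ equals the intersection of the facet-defining halfspaces indexed by $J$, and intersecting with any further facet-defining halfspaces does not change the set, since each such halfspace contains $K$ by the definition of a supporting hyperplane.

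The key step, and the main obstacle, is proving that irredundancy forces $K\cap\{\langle u_j,\cdot\rangle=0\}$ to have dimension $d-1$. Fix $j\in J$. By irredundancy there is a point $p$ with $\langle u_i,p\rangle\ge0$ for all $i\in J\setminus\{j\}$ but $\langle u_j,p\rangle<0$. Since $K$ is full-dimensional, it has an interior point $q$, and at such a point all defining inequalities are strict. This holds because if $\langle u_i,q\rangle=0$, a small step in the direction $-u_i$ would leave $K$. Consider the segment from $q$ to $p$. The function $\langle u_j,\cdot\rangle$ is positive at $q$ and negative at $p$, so there is a unique parameter $t\in(0,1)$ at which the point $z=(1-t)q+tp$ satisfies $\langle u_j,z\rangle=0$. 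For every $i\ne j$ we have $\langle u_i,q\rangle>0$ and $\langle u_i,p\rangle\ge0$, so $\langle u_i,z\rangle>0$ because $t<1$.

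By continuity, a small ball $B$ around $z$ satisfies all inequalities $i\ne j$ strictly. Hence $B\cap\{\langle u_j,\cdot\rangle\ge0\}\subseteq K$. In particular, $B\cap\{\langle u_j,\cdot\rangle=0\}$ is a relatively open subset of the hyperplane contained in $K\cap\{\langle u_j,\cdot\rangle=0\}$, so that intersection has dimension $d-1$. The hyperplane supports $K$, since $K$ lies in the closed halfspace $\langle u_j,\cdot\rangle\ge0$. Thus the inequality indexed by $j$ is facet-defining in the sense defined above.

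Finally, I would note that distinct indices in $J$ might define the same hyperplane up to positive scaling. This is harmless, because they give the same halfspace. This completes the plan.
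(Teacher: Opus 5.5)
Your proposal is correct and follows the paper's proof essentially step for step: pass to an irredundant subsystem, use irredundancy to find a point violating only the $j$th inequality, then move along the segment to an interior point to reach a point of the $j$th hyperplane where every other inequality is strict, which yields a relatively open piece of dimension $d-1$. Two of your additions are harmless: the justification that an interior point satisfies every inequality strictly is fine, and the closing remark about duplicate normals is vacuous, since two positively proportional normals could not both survive in an irredundant system.
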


\begin{proof}
The correspondence between the facets of a full-dimensional polyhedron and
the inequalities of an irredundant representation is standard; see
Schrijver \cite[Theorem~8.1 and~(17)]{Schrijver1986}. For completeness we give
the short argument in the form needed here.

Since $K$ is contained in each of its facet-defining halfspaces, it lies in
their intersection. For the reverse inclusion, take any finite representation of
$K$ and discard redundant inequalities, giving an irredundant one,
\[
        K=\{a\in\R^d:\langle u_i,a\rangle\ge 0,\ i=1,\ldots,m\},
\]
meaning one in which no inequality is implied by the others. It suffices that
every $u_i$ be facet-defining, since then the facet-defining halfspaces include
all the halfspaces $\{a:\langle u_i,a\rangle\ge 0\}$, $i=1,\ldots,m$, whose
intersection is $K$.

 Fix $i$. By irredundancy, the $i$-th inequality is not implied by the
others, so some $b$ has $\langle u_j,b\rangle\ge 0$ for all $j\ne i$ and
$\langle u_i,b\rangle<0$. Choose an interior point $c$ of $K$, so
$\langle u_j,c\rangle>0$ for every $j$. Along the segment from $b$ to $c$ the
value of $\langle u_i,\cdot\rangle$ rises from negative to positive, vanishing
at $\bar a=(1-t)b+tc$ for a unique $t\in(0,1)$; and for $j\ne i$,
$\langle u_j,\bar a\rangle=(1-t)\langle u_j,b\rangle+t\langle u_j,c\rangle>0$.
Thus, $\bar a\in K$ lies on $H_i=\{a:\langle u_i,a\rangle=0\}$, with every
inequality $j\ne i$ strict. These strict inequalities persist on a relative
neighbourhood of $\bar a$ within $H_i$, which therefore lies in $K\cap H_i$;
hence $\dim(K\cap H_i)=d-1$, so $u_i$ is facet-defining.
\end{proof}

\begin{lemma}[Facet inequalities determine the chamber]\label{lem:facet-signs}
Let $\Arr$ be a finite central hyperplane arrangement in $\R^d$, and let $C$ be a
chamber of $\Arr$ with closure $\overline C$. If a chamber $D$ of $\Arr$ lies on
the same side as $C$ of every hyperplane that supports a facet of $\overline C$,
then $D=C$.
\end{lemma}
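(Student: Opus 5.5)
We will derive the statement from Lemma~\ref{lem:facet-halfspaces}, applied to the full-dimensional polyhedral cone $K=\overline C$. For each hyperplane $H\in\Arr$, with normal $u_H$, write $\eps_C(H)\in\{-1,1\}$ for the constant sign of $\langle u_H,\cdot\rangle$ on $C$. Then
\[
        \overline C=\{a:\eps_C(H)\langle u_H,a\rangle\ge 0\ \text{for all } H\in\Arr\}.
\]
Since $\Arr$ is central, this closure is a full-dimensional polyhedral cone. It is full-dimensional because $C$ is open and non-empty.

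\textbf{Facet halfspaces come from the arrangement.} First we check that every facet-defining closed halfspace of $\overline C$ has the form $\{a:\eps_C(H)\langle u_H,a\rangle\ge0\}$ for some $H\in\Arr$ supporting a facet of $\overline C$. Take the finite representation of $\overline C$ displayed above and prune it to an irredundant subsystem. The proof of Lemma~\ref{lem:facet-halfspaces} shows that each surviving inequality defines a facet and that the surviving inequalities cut out $\overline C$. Hence
\[
        \overline C=\bigcap_{H\in\mathcal F}\{a:\eps_C(H)\langle u_H,a\rangle\ge0\},
\]
where $\mathcal F$ is the set of hyperplanes of $\Arr$ that support a facet of $\overline C$. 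In particular, every surviving hyperplane lies in $\mathcal F$, and adding the remaining members of $\mathcal F$ does not change the intersection, because each of them is valid on $\overline C$.

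\textbf{Conclusion.} Let $D$ be a chamber on the same side as $C$ of every $H\in\mathcal F$. Then $\eps_D(H)=\eps_C(H)$ for all $H\in\mathcal F$, so
\[
        D\subseteq\{a:\eps_C(H)\langle u_H,a\rangle>0\ \text{for all } H\in\mathcal F\}.
\]
This set is the interior of $\overline C$. The interior of $\overline C$ is convex, so it is connected, and it misses every hyperplane of $\Arr$: for each $H\in\Arr$ the form $\eps_C(H)\langle u_H,\cdot\rangle$ is non-negative on $\overline C$, and it cannot vanish at an interior point, since a linear form that is non-negative on a neighbourhood is strictly positive at its centre. The interior is therefore a connected subset of the complement of $\bigcup\Arr$ that contains $C$. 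Since $C$ is a connected component of that complement, the interior equals $C$. It follows that $D\subseteq C$, and since distinct chambers are disjoint, $D=C$.

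The only delicate step is the identification of the open polyhedron cut out by the facet inequalities with $C$ itself. That is handled by the connectedness argument just given, which needs only convexity of the interior and the fact that no hyperplane of $\Arr$ meets the interior of $\overline C$.
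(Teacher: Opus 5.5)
Your proof is correct and follows essentially the same route as the paper. You write $\overline C$ as the intersection of its facet halfspaces via (the proof of) Lemma~\ref{lem:facet-halfspaces}, place $D$ inside $\operatorname{int}\overline C$, and identify $\operatorname{int}\overline C$ with $C$; the differences are cosmetic: pruning the arrangement's own inequality system makes explicit that the facet halfspaces are bounded by hyperplanes of $\Arr$, and your convexity/connected-component argument replaces the paper's pointwise neighbourhood argument for $\operatorname{int}\overline C\subseteq C$ (the linear form there must be non-zero, which holds because hyperplane normals are non-zero).
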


\begin{proof}
Because $C$ is a chamber, $\overline C$ lies in one closed halfspace $H^{+}$
of each $H\in\Arr$, and by Lemma~\ref{lem:facet-halfspaces} the facet halfspaces
alone intersect to give $\overline C$. The hypothesis then gives $D\subseteq\overline C$,
and, since $D$ is open, we have $D=\operatorname{int}D\subseteq\operatorname{int}\overline C$.
The interior of a finite intersection is the intersection of the interiors, so
$\operatorname{int}\overline C$ is the intersection of the open facet halfspaces.
Take a point $p\in\operatorname{int}\overline C$. Then $p$ lies on no facet
hyperplane and, being interior to $\overline C$, which is contained in  $H^{+}$ for every
$H\in\Arr$, lies on no hyperplane of $\Arr$ at all, so $p$ lies in a single
chamber $C'$. Since $p\in\overline C$, every neighbourhood of $p$ meets $C$. The
open chamber $C'$ is such a neighbourhood, so $C'\cap C\neq\varnothing$ and hence
$C'=C$. Therefore $p\in C$. As $p$ was an arbitrary point of
$\operatorname{int}\overline C$, we have $\operatorname{int}\overline C\subseteq C$,
so $D\subseteq C$ and, because distinct chambers are disjoint, $D=C$.
\end{proof}

For Boolean threshold functions, Lemma~\ref{lem:facet-signs} yields, once the
facet hyperplanes are identified with the essential points in
Theorem~\ref{thm:facet} below, that the essential points specify $f$. The lemma
is proved for a general arrangement, so it also covers the feature arrangements
of Proposition~\ref{prop:feature-arrangements}, such as those of the polynomial
threshold functions, which fall outside the Boolean-threshold setting of
\cite{ABST1995}.

\begin{lemma}[Facet normals span]\label{lem:facet-normals-span}
Let $K\subseteq\R^d$ be a full-dimensional pointed polyhedral cone. If
$K$ is written as the intersection of its facet-defining halfspaces,
\[
        K=\{a\in\R^d:\langle u_i,a\rangle\ge 0
        \text{ for } i=1,\ldots,m\},
\]
then the facet normals $u_1,\ldots,u_m$ span $\R^d$. In particular, $m\ge d$.
\end{lemma}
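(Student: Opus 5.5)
The argument goes by contradiction, using pointedness. Suppose the facet normals $u_1,\ldots,u_m$ do not span $\R^d$. Then their span $U$ is a proper subspace, so its orthogonal complement $U^\perp$ contains a non-zero vector $v$. For every $i$ we have $\langle u_i,v\rangle=0$, and likewise $\langle u_i,-v\rangle=0$.

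Next I invoke the representation of $K$ as the intersection of its facet-defining halfspaces, which is the hypothesis of the lemma and holds for any full-dimensional polyhedral cone by Lemma~\ref{lem:facet-halfspaces}. Since both $v$ and $-v$ satisfy every inequality $\langle u_i,\cdot\rangle\ge 0$, both lie in $K$. Hence $v\in K\cap(-K)$ with $v\neq 0$, contradicting pointedness $K\cap(-K)=\{0\}$. More generally, the whole line $\R v$ lies in $K$, so $K$ contains a line.

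Therefore $u_1,\ldots,u_m$ span $\R^d$. A spanning set of $\R^d$ has at least $d$ elements, which gives $m\ge d$.

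There is essentially no obstacle here. The only point needing care is that the representation uses \emph{all} the facet normals, so that every inequality defining $K$ is orthogonal to $v$. This is exactly what Lemma~\ref{lem:facet-halfspaces} guarantees; if $K$ were given by an arbitrary inequality system, the argument would instead use the normals of that system. Full-dimensionality enters only through Lemma~\ref{lem:facet-halfspaces}, which ensures the facet description actually recovers $K$. In the edge case $m=0$ the span is $\{0\}$, so $K=\R^d$, which is not pointed for $d\ge 1$; the contradiction therefore covers this case as well.
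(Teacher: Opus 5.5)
Your proof is correct and is essentially the paper's argument: take a non-zero $v$ orthogonal to every facet normal, note that both $v$ and $-v$ satisfy all the defining inequalities so the line $\R v$ lies in $K$, and contradict pointedness. The extra remarks on Lemma~\ref{lem:facet-halfspaces} and the edge case $m=0$ are accurate but not needed, since the representation by facet halfspaces is already part of the hypothesis.
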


\begin{proof}
If the normals did not span $\R^d$, then there would be a non-zero vector $v$ orthogonal to their span, so that 
\[
        \langle u_i,v\rangle=0 \quad (i=1,\ldots,m).
\]
The same equations hold with $-v$ in place of $v$, so both $v$ and $-v$
satisfy all the defining inequalities of $K$. Thus the line $\R v$ is
contained in $K$, contradicting the fact that it is pointed. Hence the normals span $\R^d$, and the lower bound follows because at least $d$ vectors are needed to do so.
\end{proof}

We use the Boolean sign convention
\[
        \sgn(t)=
        \begin{cases}
        1, & t\ge 0,\\
        0, & t< 0,
        \end{cases}
\]
which matches the definition $f(x)=1 \iff \langle w,x\rangle\ge\theta$ of
Section~\ref{sec:intro}: the positive class is the weak inequality and the
negative class the strict one. The value at $t=0$ is immaterial, since every
chamber satisfies the strict condition $\langle a,\widetilde x\rangle\ne 0$ for
all $x\in\B^n$, so $\sgn$ is only ever evaluated away from $0$.
Each chamber $C$ defines a function $f_C \colon \B^n \to \B$ by
\[
        f_C(x)=\sgn\langle a,\widetilde x\rangle
        \qquad (a\in C),
\]
where the right-hand side does not depend on the choice of $a\in C$, since
$\sgn\langle a,\widetilde x\rangle$ is constant on $C$.

Conversely, every threshold function $f \in \Hn$ arises in this form: we
perturb the threshold $\theta$ slightly so that no cube point satisfies
$\langle a,\widetilde x\rangle=0$. The map $C \mapsto f_C$ is a bijection
between chambers of $\Arr_n$ and threshold functions on $\B^n$ and we write $C_f$
for the chamber corresponding to $f$. Functions in $\Hn$ may have irrelevant variables. We state dependence on all $n$
variables explicitly when required.

The case $n=2$ already exhibits the phenomena of interest, so we record it in
full. The parameter space is $\R^3$, and a function on two bits is
$f(x_1,x_2)=\sgn(a_0+a_1x_1+a_2x_2)$. The four cube points give four planes
$H_x$ through the origin. On each chamber the signs of
the four evaluations are fixed and equal the truth table entries of the corresponding
function, so each chamber is labelled by that truth table. Of the $2^4=16$
truth tables, only two fail to be realised by a threshold function, namely exclusive-or, $x_1\oplus x_2$, and its complement. So there are fourteen chambers.
Table~\ref{tab:n2} lists the fourteen threshold functions with
their essential points and specification numbers, computed from the definition.
Figure~\ref{fig:n2-chambers} shows two of these chambers as cones in the
parameter space~$\R^3$: the simplicial chamber of $x_1\wedge x_2$, with three
facets, and the non-simplicial chamber of $x_1$, with four. 
By Theorem~\ref{thm:facet} the essential points correspond bijectively to the
facets of $\overline{C_f}$, so the eight functions with $\sigma_2(f)=3$ are
exactly those whose chamber is simplicial (Corollary~\ref{cor:simplicial}). A point is
inessential precisely when flipping there yields a parity function, so the simplicial
functions are the eight at Hamming distance one from a parity, namely the
conjunctions and disjunctions of two literals. The two constants and the four
functions of a single variable lie at distance two from both parities and have
all four points essential. Thus a function as simple as $x_1$ has all four
points essential, while a conjunction such as $x_1\wedge x_2$ has only three.
The average is $\overline\sigma_2=(8\cdot3+6\cdot4)/14=24/7$, the entry for
$n=2$ in Table~\ref{tab:numerics}.

\begin{table}[htbp]
\centering
\small
\begin{tabular}{lcccl}
\toprule
$f$ & truth table & $\sigma_2(f)$ & simplicial & $\Ess(f)$ \\
\midrule
$x_1\wedge x_2$          & $0001$ & $3$ & yes & $\{01,10,11\}$ \\
$x_1\wedge\neg x_2$      & $0010$ & $3$ & yes & $\{00,10,11\}$ \\
$\neg x_1\wedge x_2$     & $0100$ & $3$ & yes & $\{00,01,11\}$ \\
$\neg x_1\wedge\neg x_2$ & $1000$ & $3$ & yes & $\{00,01,10\}$ \\
$x_1\vee x_2$            & $0111$ & $3$ & yes & $\{00,01,10\}$ \\
$x_1\vee\neg x_2$        & $1011$ & $3$ & yes & $\{00,01,11\}$ \\
$\neg x_1\vee x_2$       & $1101$ & $3$ & yes & $\{00,10,11\}$ \\
$\neg x_1\vee\neg x_2$   & $1110$ & $3$ & yes & $\{01,10,11\}$ \\
\midrule
$x_1$       & $0011$ & $4$ & no  & $\{00,01,10,11\}$ \\
$\neg x_1$  & $1100$ & $4$ & no  & $\{00,01,10,11\}$ \\
$x_2$       & $0101$ & $4$ & no  & $\{00,01,10,11\}$ \\
$\neg x_2$  & $1010$ & $4$ & no  & $\{00,01,10,11\}$ \\
$0$         & $0000$ & $4$ & no  & $\{00,01,10,11\}$ \\
$1$         & $1111$ & $4$ & no  & $\{00,01,10,11\}$ \\
\bottomrule
\end{tabular}
\caption{The fourteen threshold functions on two bits, realised as the chambers
of $\Arr_2$. A cube point $(x_1,x_2)=(a,b)$ is written $ab$, in the order
$00,01,10,11$; the truth-table column is $f(00)\,f(01)\,f(10)\,f(11)$, which
encodes the chamber sign vector under $0\leftrightarrow-1$ and $1\leftrightarrow+1$. The two non-threshold functions, the parity
$x_1\oplus x_2=0110$ and its complement $1001$, have no chamber. By
Theorem~\ref{thm:facet}, $\sigma_2(f)=|\Ess(f)|$ is the number of facets of
$\overline{C_f}$; the eight functions with $\sigma_2(f)=3$ have simplicial
chambers, the six with $\sigma_2(f)=4$ do not. The average is
$\overline\sigma_2=24/7$.}
\label{tab:n2}
\end{table}

\begin{figure}[htbp]
\centering
\begin{tikzpicture}[line cap=round, line join=round, scale=1.0]
  \useasboundingbox (-1.7,-3.2) rectangle (4.1,3.9);
  \coordinate (O)  at (0,0);
  \coordinate (R1) at (2.9,2.74);  \coordinate (Q1) at (3.71,3.51);
  \coordinate (R2) at (2.9,0.84);  \coordinate (Q2) at (3.71,1.08);
  \coordinate (R3) at (1.0,2.74);  \coordinate (Q3) at (1.28,3.51);
  \fill[Green!55,      opacity=0.26] (O)--(R2)--(R3)--cycle;  
  \fill[BurntOrange!70,opacity=0.26] (O)--(R1)--(R3)--cycle;  
  \fill[RoyalBlue!60,  opacity=0.26] (O)--(R1)--(R2)--cycle;  
  \draw[cut] (R1)--(R2)--(R3)--cycle;
  \draw[thick] (O)--(R1) (O)--(R2) (O)--(R3);
  \draw[rayext] (R1)--(Q1);  \draw[rayext] (R2)--(Q2);  \draw[rayext] (R3)--(Q3);
  \fill (O) circle (1.7pt);
  \node[flab, text=RoyalBlue!70!black]   at (2.55,1.55) {$H_{10}$};
  \node[flab, text=BurntOrange!75!black] at (1.55,2.95) {$H_{01}$};
  \node[flab, text=Green!45!black]       at (1.45,1.10) {$H_{11}$};
  \node[font=\small, anchor=north] at (-0.05,-0.12) {apex};
  \begin{scope}[shift={(-1.25,2.85)}] \chambertriad \end{scope}
  \node[font=\small] at (1.2,-2.9) {(a) $x_1\wedge x_2$:\ $\sigma_2=3$};
\end{tikzpicture}%
\hspace{0.03\textwidth}
\begin{tikzpicture}[line cap=round, line join=round, scale=1.0]
  \useasboundingbox (-1.7,-3.2) rectangle (4.1,3.9);
  \coordinate (O)  at (0,0);
  \coordinate (R1) at (1.9,0);     \coordinate (Q1) at (2.43,0);
  \coordinate (R2) at (1.9,-1.9);  \coordinate (Q2) at (2.43,-2.43);
  \coordinate (R3) at (2.9,2.74);  \coordinate (Q3) at (3.71,3.51);
  \coordinate (R4) at (2.9,0.84);  \coordinate (Q4) at (3.71,1.08);
  \fill[gray!45,       opacity=0.26] (O)--(R1)--(R2)--cycle;  
  \fill[BurntOrange!70,opacity=0.26] (O)--(R1)--(R3)--cycle;  
  \fill[RoyalBlue!60,  opacity=0.26] (O)--(R3)--(R4)--cycle;  
  \fill[Green!55,      opacity=0.26] (O)--(R2)--(R4)--cycle;  
  \draw[cut] (R1)--(R3)--(R4)--(R2)--cycle;
  \draw[thick] (O)--(R1) (O)--(R2) (O)--(R3) (O)--(R4);
  \draw[rayext] (R1)--(Q1); \draw[rayext] (R2)--(Q2);
  \draw[rayext] (R3)--(Q3); \draw[rayext] (R4)--(Q4);
  \fill (O) circle (1.7pt);
  \node[flab, text=gray!40!black]        at (2.15,-0.85) {$H_{00}$};
  \node[flab, text=BurntOrange!75!black] at (1.75,1.55) {$H_{01}$};
  \node[flab, text=RoyalBlue!70!black]   at (3.15,1.95) {$H_{10}$};
  \node[flab, text=Green!45!black]       at (2.35,-1.35) {$H_{11}$};
  \node[font=\small, anchor=east] at (-0.12,0.05) {apex};
  \begin{scope}[shift={(-1.25,2.85)}] \chambertriad \end{scope}
  \node[font=\small] at (1.2,-2.9) {(b) $x_1$:\ $\sigma_2=4$};
\end{tikzpicture}
\caption{Two chambers of $\Arr_2$ as cones in the parameter space $\R^3$ with
coordinates $(a_0,a_1,a_2)$, where $H_x=\{a:\langle a,\widetilde x\rangle=0\}$.
Each cone is unbounded: solid segments are the bounded portion of the extreme
rays, dashed arrows mark their extension to infinity, and the 
cross-section is an arbitrary truncation, not a face. (a) The chamber of
$x_1\wedge x_2$ is simplicial, with three facets, supported by $H_{10}$,
$H_{01}$, and $H_{11}$; the plane $H_{00}$ meets the closure only at the apex,
so $00$ is inessential and $\sigma_2(x_1\wedge x_2)=3$. (b) The chamber of $x_1$
is not simplicial: all four planes $H_{00}$, $H_{01}$, $H_{10}$, and $H_{11}$
support facets, so every cube point is essential and $\sigma_2(x_1)=4$. By
Theorem~\ref{thm:facet}, a cube point $x$ is essential precisely when $H_x$
supports a facet. The TikZ drawing for this figure was produced with the
assistance of Anthropic Claude and verified by the author.}
\label{fig:n2-chambers}
\end{figure}

\section{The chamber-facet identification}\label{sec:chamber}

\subsection{Facet crossing and the signature theorem}

We now prove the geometric lemma used throughout the paper. We first isolate two
arrangement-theoretic facts it uses: a local characterisation of when a
hyperplane supports a facet, and the one-sign crossing lemma built on it.

\begin{lemma}[Relative-interior characterisation of facets]\label{lem:facet-relint}
Let $C$ be a chamber of $\Arr_n$ and let $x\in\B^n$. Then $H_x$ supports a facet
of $\overline C$ if and only if there is a point $c$ with the following properties: 
\[
        c\in\overline C\cap H_x
        \qquad\text{and}\qquad
        c\notin H_y \text{ for every } y\neq x .
\]
 When $H_x$ supports a facet, such a point can be taken in the relative
interior of $\overline C\cap H_x$.
\end{lemma}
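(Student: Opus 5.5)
The plan is to prove the two directions separately, using only the definitions of Section~\ref{sec:setup}, the distinctness of the hyperplanes (Lemma~\ref{lem:Hx-injective}), and elementary facts about polyhedral cones. Throughout, $F=\overline C\cap H_x$.

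\emph{($\Leftarrow$)} Suppose $c\in F$ and $c\notin H_y$ for every $y\ne x$. Then for each $y\ne x$ we have $\eps_C(y)\langle c,\widetilde y\rangle\ge0$ because $c\in\overline C$, and this value is nonzero because $c\notin H_y$. So it is strictly positive. There are finitely many such strict inequalities, so they persist on an open ball $U$ around $c$. Every point of $U\cap H_x$ satisfies the $x$-inequality with equality and all other inequalities strictly, so $U\cap H_x\subseteq F$. Since $U\cap H_x$ is a relatively open subset of the $n$-dimensional space $H_x$, we get $\dim F=n$. Hence $H_x$ supports a facet.

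\emph{($\Rightarrow$)} Suppose $\dim F=n$. The key point is that for each $y\ne x$, the set $F\cap H_y$ is contained in $H_x\cap H_y$. By Lemma~\ref{lem:Hx-injective}, $H_y\ne H_x$, so $H_x\cap H_y$ is a linear subspace of dimension $n-1$. A finite union of $(n-1)$-dimensional subspaces of $H_x$ cannot cover the $n$-dimensional convex set $F$, and in particular cannot cover its relative interior, which is a nonempty relatively open subset of $H_x$.

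To make this rigorous, I would argue as follows. A relatively open nonempty subset of $H_x\cong\R^n$ has positive $n$-dimensional Lebesgue measure, while each $H_x\cap H_y$ has measure zero. Alternatively, a Baire-category argument or a simple induction shows that a finite union of proper subspaces has empty interior. Either way, there is a point $c\in\operatorname{relint}F$ lying on no $H_y$ with $y\ne x$. This gives the required point, and it also gives the refinement that $c$ can be taken in the relative interior of $\overline C\cap H_x$.

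The only mildly delicate step is the claim that $\operatorname{relint}F$ is a nonempty open subset of $H_x$ when $\dim F=n$. This holds because a convex set whose affine hull is $H_x$ has nonempty relative interior in $H_x$, which is a standard fact. Note that $H_x$ itself is $n$-dimensional, so the affine hull of $F$ is all of $H_x$. Everything else is routine.
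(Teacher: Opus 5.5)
Your proof is correct and follows essentially the same route as the paper. For the converse direction you let the strict inequalities at $c$ persist on a relative neighbourhood in $H_x$; for the forward direction you use that the finitely many proper subspaces $H_x\cap H_y$ cannot cover the nonempty relatively open set $\operatorname{relint}(F)$. Your measure-zero (or Baire) justification of that covering step just makes explicit what the paper asserts as a standard fact.
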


\begin{proof}
Suppose such a $c$ exists. Since $c\in\overline C$ and $c\notin H_y$ for every
$y\neq x$, each defining inequality $\eps_C(y)\langle a,\widetilde y\rangle\ge0$
with $y\neq x$ is strict at $c$. By continuity these inequalities remain strict
on a relative neighbourhood $U$ of $c$ inside $H_x$, and on $U\subseteq H_x$ the
$x$-inequality holds with equality, so $U\subseteq\overline C\cap H_x$. As $U$
is a non-empty relatively open subset of the hyperplane $H_x$, it has dimension
$n$; hence $\overline C\cap H_x$ has dimension $n$ and $H_x$ supports a facet.

Conversely, suppose $F=\overline C\cap H_x$ is a facet. If $F\subseteq H_y$ for
some $y\neq x$, then $F$ has dimension $n$ and lies in both $H_x$ and $H_y$,
forcing $H_x=H_y$, contrary to Lemma \ref{lem:Hx-injective}. Thus
$F\not\subseteq H_y$ for every $y\neq x$,  and each $F\cap H_y$ lies in
$H_x\cap H_y$, a proper affine subspace of $H_x$ since $H_x\neq H_y$. As there
are finitely many $y\neq x$, the set $\bigcup_{y\neq x}(F\cap H_y)$ lies in a
finite union of proper affine subspaces of $H_x\cong\R^n$. A non-empty open
subset of $H_x$ cannot be covered by finitely many proper affine subspaces, and
$\operatorname{relint}(F)$ is such a subset because $\dim F=n=\dim H_x$; hence
there is a point $c\in\operatorname{relint}(F)$ with $c\notin H_y$ for every
$y\neq x$. This $c$ lies in $\overline C\cap H_x$ and on no $H_y$ with $y\neq x$,
as required. 
\end{proof}

\begin{lemma}[One-sign crossing]\label{lem:one-sign-crossing}
Let $C$ be a chamber of $\Arr_n$ and let $x\in\B^n$. There exists a chamber
$C'$ with
\[
        \eps_{C'}(x)=-\eps_C(x),
        \qquad
        \eps_{C'}(y)=\eps_C(y) \quad (y\neq x)
\]
if and only if $H_x$ supports a facet of $\overline C$. When such a chamber
exists, it is uniquely determined by its sign vector.
\end{lemma}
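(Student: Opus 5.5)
The plan is to prove the two directions separately, using Lemma~\ref{lem:facet-relint} for the forward construction and Lemma~\ref{lem:facet-signs} (through Lemma~\ref{lem:facet-halfspaces}) for the converse. Uniqueness is immediate, because a chamber is determined by its sign vector: the sign vector fixes, for every $x$, the open halfspace of $H_x$ containing the chamber. The intersection of these open halfspaces is convex, so it lies in a single connected component of the complement. Two chambers with the same sign vector therefore coincide.

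\emph{Facet implies crossing.} Suppose $H_x$ supports a facet of $\overline C$. By Lemma~\ref{lem:facet-relint} there is a point $c\in\overline C\cap H_x$ with $c\notin H_y$ for all $y\ne x$. Since $c$ lies in $\overline C$ and off $H_y$, the closed inequality for each $y\neq x$ is strict at $c$:
\[
\eps_C(y)\langle c,\widetilde y\rangle>0 \quad (y\ne x).
\]
Take $p=c-t\,\eps_C(x)\widetilde x$ for small $t>0$. Then
\[
\eps_C(x)\langle p,\widetilde x\rangle=-t\|\widetilde x\|^2<0,
\]
while by continuity the strict inequalities for $y\neq x$ persist. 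So $p$ lies on no hyperplane of $\Arr_n$, and the chamber $C'$ containing $p$ has the required sign vector.

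\emph{Crossing implies facet.} Suppose $C'$ exists and $H_x$ does not support a facet. Then every facet-supporting hyperplane of $\overline C$ is some $H_y$ with $y\ne x$, and $C'$ lies on the same side of each of these as $C$. Lemma~\ref{lem:facet-signs} then gives $C'=C$, which contradicts $\eps_{C'}(x)=-\eps_C(x)$.

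The only step needing care is the converse direction, and it reduces to Lemma~\ref{lem:facet-signs}, which rests on Lemma~\ref{lem:facet-halfspaces}: the facet halfspaces alone cut out $\overline C$.
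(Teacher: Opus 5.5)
Your proof is correct. The direction \emph{facet implies crossing} is the paper's argument: take a point of $\overline C\cap H_x$ lying on no other $H_y$, supplied by Lemma~\ref{lem:facet-relint}, and push it slightly across $H_x$. Your direction $-\eps_C(x)\widetilde x$ is the paper's $v$ with $t<0$.

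The direction \emph{crossing implies facet} takes a genuinely different route. The paper argues locally. It joins $a\in C$ to $b\in C'$ by a segment, which meets only $H_x$. The crossing point lies in $\overline C\cap H_x$ and on no other $H_y$, so the easy half of Lemma~\ref{lem:facet-relint} applies. You argue globally, by contradiction through Lemma~\ref{lem:facet-signs}.

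What your route buys is conceptual. It shows the crossing statement is the one-point case of the fact that the facet inequalities determine the chamber, tying the lemma directly to the specification argument of Theorem~\ref{thm:facet}. The paper's route needs only convexity and avoids the irredundant-representation argument behind Lemma~\ref{lem:facet-halfspaces}.

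One step you leave implicit is that every hyperplane supporting a facet of $\overline C$ is one of the $H_y$. This is standard, and the paper uses it too, but you should state why. Every point of a facet is a boundary point of $\overline C$, so it lies on some $H_y$. An $n$-dimensional facet cannot be covered by finitely many proper subspaces of its span, so its span equals some $H_y$.

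You can also sidestep this fact. By the proof of Lemma~\ref{lem:facet-halfspaces}, every inequality kept in an irredundant subsystem is facet-defining. So if $H_x$ supports no facet, $\overline C$ is cut out by the inequalities with $y\neq x$ alone. Then $C'\subseteq\overline C$, which contradicts $\eps_{C'}(x)=-\eps_C(x)$.
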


\begin{proof}
Suppose first that such a chamber $C'$ exists. Choose $a\in C$ and $b\in C'$,
and set
\[
        u(t)=(1-t)a+tb,\qquad 0\le t\le 1.
\]
For $y\neq x$, the numbers $\langle a,\widetilde y\rangle$ and
$\langle b,\widetilde y\rangle$ have the same strict sign, so
$\langle u(t),\widetilde y\rangle$ keeps that sign throughout $[0,1]$ and the
segment avoids $H_y$. At $x$ the signs are opposite, so there is a unique
$t_0\in(0,1)$ with $c=u(t_0)\in H_x$; for $t<t_0$ the point $u(t)$ lies in $C$,
so $c\in\overline C$. Thus $c\in\overline C\cap H_x$ and $c$ lies on no $H_y$
with $y\neq x$, and Lemma \ref{lem:facet-relint} shows that $H_x$ supports a
facet of $\overline C$.

Conversely, suppose $F=\overline C\cap H_x$ is a facet. By Lemma
\ref{lem:facet-relint} there is a point $p\in\operatorname{relint}(F)$ lying on
no $H_y$ with $y\neq x$. Choose $v\in A$ with
$\eps_C(x)\langle v,\widetilde x\rangle>0$. For sufficiently small $\delta>0$,
the points $p+tv$ with $|t|<\delta$ avoid every $H_y$, $y\neq x$, by continuity
at $p$. Since $p\in H_x$, we have
$\langle p+tv,\widetilde x\rangle=t\langle v,\widetilde x\rangle$, so
\[
        \eps_C(x)\langle p+tv,\widetilde x\rangle
        =t\,\eps_C(x)\langle v,\widetilde x\rangle,
\]
which is positive for $t>0$ and negative for $t<0$ by the choice of $v$. Hence
the side $t>0$ lies in $C$, while the side $t<0$ lies in a chamber $C'$ whose
sign vector agrees with that of $C$ at every $y\neq x$ and reverses at $x$.
Finally, a chamber is determined by its sign vector, so $C'$ is unique.
\end{proof}

\begin{theorem}[Facet/signature correspondence]\label{thm:facet}
For $f \in \Hn$ corresponding to chamber $C_f$ of $\Arr_n$, a point
$x \in \B^n$ is essential for $f$ if and only if $H_x$ supports a facet of
$\overline{C_f}$. Moreover, the map
\[
        x\longmapsto \overline{C_f}\cap H_x
\]
is a bijection from $\Ess(f)$ to the facets of $\overline{C_f}$.
Furthermore, $\Ess(f)$ specifies $f$, and every specifying set for $f$ contains
$\Ess(f)$. Consequently $\Ess(f)$ is the \emph{signature} of $f$, the unique
cardinality-minimum specifying set, and
\[
        \sigma_n(f)=|\Ess(f)|.
\]
\end{theorem}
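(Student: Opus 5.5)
The plan is to translate essentiality into a statement about sign vectors of chambers, and then use Lemma~\ref{lem:one-sign-crossing}.

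\textbf{Essentiality as one-sign crossing.} Under the bijection $C\mapsto f_C$, the truth table of $f_C$ is the sign vector $\eps_C$ read through $0\leftrightarrow-1$, $1\leftrightarrow+1$. So $g=f\oplus\mathbf 1_{\{x\}}$ is threshold exactly when some chamber $C'$ has sign vector equal to $\eps_{C_f}$ with only the $x$-coordinate reversed. By Lemma~\ref{lem:one-sign-crossing}, such a chamber exists if and only if $H_x$ supports a facet of $\overline{C_f}$. This proves the first equivalence.

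\textbf{The bijection onto facets.} By the definition of support, $x\mapsto\overline{C_f}\cap H_x$ sends $\Ess(f)$ into the set of facets. For injectivity, suppose two distinct essential points $x\neq y$ give the same facet $F$. Then $F$ is $n$-dimensional and lies in $H_x\cap H_y$, so $H_x=H_y$, contradicting Lemma~\ref{lem:Hx-injective}. For surjectivity, let $F$ be any facet, with supporting hyperplane $H$. Since $F$ is $n$-dimensional, $H$ is its linear span. By Lemma~\ref{lem:facet-halfspaces}, $\overline{C_f}$ is cut out by its facet-defining halfspaces, and every face of the cone lies on the boundary of one of the defining halfspaces $\eps(y)\langle a,\widetilde y\rangle\ge0$. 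Concretely, take a relative-interior point $p$ of $F$. Since $p\in\partial\overline{C_f}$, some inequality $\eps(y)\langle a,\widetilde y\rangle\ge0$ must be tight at $p$. That inequality is valid on $F$ and vanishes at the relative-interior point $p$, so it vanishes on all of $F$. Hence $F\subseteq H_y$, which forces $F=\overline{C_f}\cap H_y$ with $H_y=H$. So $y$ is essential and maps to $F$.

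\textbf{Specification.} Every specifying set $S$ must contain $\Ess(f)$. Indeed, if $x\in\Ess(f)\setminus S$, then $f\oplus\mathbf 1_{\{x\}}$ is a threshold function different from $f$ that agrees with $f$ on $S$. Conversely, $\Ess(f)$ specifies $f$. Suppose a threshold $g=f_D$ agrees with $f$ on $\Ess(f)$. Then $D$ lies on the same side as $C_f$ of each $H_x$ with $x\in\Ess(f)$, which by the first part are exactly the facet hyperplanes. Lemma~\ref{lem:facet-signs} then gives $D=C_f$, so $g=f$. Thus $\Ess(f)$ is the unique minimum specifying set, and $\sigma_n(f)=|\Ess(f)|$.

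\textbf{Main obstacle.} The only delicate step is surjectivity: showing that every facet, defined via an arbitrary supporting hyperplane, arises from some $H_y$. I would handle it with the tight-inequality argument at a relative-interior point sketched above. An alternative is the irredundant representation from the proof of Lemma~\ref{lem:facet-halfspaces} together with the standard uniqueness of facet-defining inequalities up to scaling.
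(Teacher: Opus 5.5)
Your proposal is correct and follows the paper's proof essentially step for step: the equivalence via Lemma~\ref{lem:one-sign-crossing}, injectivity via Lemma~\ref{lem:Hx-injective}, containment of $\Ess(f)$ in every specifying set by the one-point flip, and specification via Lemma~\ref{lem:facet-signs}. The only place you go further is surjectivity, which the paper simply asserts (every facet is supported by one of the chamber hyperplanes). Your argument fills that step in correctly: a constraint tight at a relative-interior point of the facet vanishes on the whole facet, and a supporting hyperplane meeting the cone in an $n$-dimensional set is the linear span of that set.
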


\begin{proof}
We first prove the equivalence between essentiality and facets. Under the
bijection $C\mapsto f_C$, the function $f\oplus\mathbf 1_{\{x\}}$ is
threshold if and only if there is a chamber $C'$ whose sign vector agrees with
that of $C_f$ at every $y\neq x$ and differs at $x$. By Lemma
\ref{lem:one-sign-crossing}, this happens if and only if $H_x$ supports a
facet of $\overline{C_f}$.

The same equivalence gives the stated bijection between essential points and
facets. If $x\in\Ess(f)$, then $\overline{C_f}\cap H_x$ is a facet. Conversely,
every facet of $\overline{C_f}$ is supported by one of the hyperplanes appearing
in the chamber inequalities, say by $H_x$, and then the equivalence gives
$x\in\Ess(f)$. The map is injective by Lemma \ref{lem:Hx-injective}: if two
essential points gave the same facet, that facet would lie in both $H_x$ and
$H_y$, forcing $H_x=H_y$, and hence $x=y$.

We now identify the signature. Every specifying set contains $\Ess(f)$:
if $x\in\Ess(f)\setminus S$, then $f\oplus\mathbf 1_{\{x\}}$ is threshold and
agrees with $f$ on $\B^n\setminus\{x\}\supseteq S$, so $S$ does not specify $f$.

Conversely, $\Ess(f)$ specifies $f$. Let $g\in\Hn$ agree with $f$ on
$\Ess(f)$, and let $C_g$ be its chamber. By the facet bijection proved above,
the hyperplanes $H_x$ with $x\in\Ess(f)$ are exactly the hyperplanes supporting
facets of $\overline{C_f}$. Agreement of $g$ and $f$ on $\Ess(f)$ says precisely
that $C_g$ lies on the same side as $C_f$ of each of these facet hyperplanes.
Lemma~\ref{lem:facet-signs} therefore gives $C_g=C_f$, and hence $g=f$.
Thus $\Ess(f)$ specifies $f$.

It follows that $\Ess(f)$ is contained in every specifying set and is itself a
specifying set. Hence it is the unique cardinality-minimum specifying set, and
\[
        \sigma_n(f)=|\Ess(f)|.
\]
\end{proof}

In an arbitrary concept class there need not be a unique minimum specification;
for threshold functions there is. The specification assertion was observed for
point sets in general position by Cover \cite{Cover1965}, without proof, in the
language of ambiguous or extreme points. The Boolean cube is not in general
position, so Cover's result does not directly apply to $\Hn$. Anthony,
Brightwell, and Shawe-Taylor note a proof via Mays' boundary matrices due to Hu
\cite{Hu1965}, and give an independent, self-contained proof for $\Hn$, where the term
signature originates \cite[Theorem~2.11 and Corollary~2.12]{ABST1995};
Theorem~\ref{thm:facet} recovers it from the chamber geometry. 
 Zuev had earlier recorded the corresponding observation in the language of the
graph of threshold functions: the cube points associated with facets of the
threshold cone form a smallest determining set, and their number is the degree
in that graph \cite[\S5]{Zuev1991}.

\subsection{Immediate consequences}

We now derive the lower bound on specification number, the affine-spanning
property for specifying sets, the simplicial-chamber criterion, and the exact
specification number of linear read-once functions.

\begin{corollary}\label{cor:lower}
$\sigma_n(f) \ge n+1$ for every $f \in \Hn$.
\end{corollary}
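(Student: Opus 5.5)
The plan is to combine Theorem~\ref{thm:facet} with Lemma~\ref{lem:facet-normals-span}. By Theorem~\ref{thm:facet}, $\sigma_n(f)=|\Ess(f)|$, and $\Ess(f)$ is in bijection with the facets of the closed chamber $\overline{C_f}$. So it suffices to show that $\overline{C_f}$ has at least $n+1=\dim A$ facets.

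First, recall from Section~\ref{sec:setup} that $\overline{C_f}$ is a full-dimensional polyhedral cone in $A\cong\R^{n+1}$. It is full-dimensional because $C_f$ is a non-empty open set. It is pointed by the explicit argument given there, which uses the lifted points $\widetilde 0$ and $\widetilde{e_i}$.

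Second, by Lemma~\ref{lem:facet-halfspaces}, $\overline{C_f}$ is the intersection of its facet-defining closed halfspaces. These are $\{a:\eps_{C_f}(x)\langle a,\widetilde x\rangle\ge0\}$ for $x\in\Ess(f)$, with normals $\eps_{C_f}(x)\widetilde x$. Lemma~\ref{lem:facet-normals-span} then says that these normals span $\R^{n+1}$, so there are at least $n+1$ of them.

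Third, I must check that the count of normals in that representation equals the number of facets. Two distinct facet inequalities come from distinct $x$, and the map from essential points to facets is injective by Lemma~\ref{lem:Hx-injective}. Hence $m=|\Ess(f)|\ge n+1$.

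There is no real obstacle here. The only point needing care is that pointedness holds for every chamber, including those of constant functions. The argument in Section~\ref{sec:setup} covers all of these uniformly, since it uses only the cube points $0$ and $e_i$.
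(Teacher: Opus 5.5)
Your proposal is correct and follows essentially the same route as the paper: identify $\sigma_n(f)$ with the facet count of $\overline{C_f}$ via Theorem~\ref{thm:facet}, then use full-dimensionality and pointedness with Lemma~\ref{lem:facet-normals-span} to conclude that the facet normals $\eps_{C_f}(x)\widetilde x$, $x\in\Ess(f)$, span $\R^{n+1}$. Your third step, checking that distinct facets come from distinct $x$, is harmless but not needed, since a spanning family indexed by $\Ess(f)$ already forces $|\Ess(f)|\ge n+1$.
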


\begin{proof}
Let $K=\overline{C_f}$. By Theorem \ref{thm:facet}, $\sigma_n(f)$ is the
number of facets of $K$, and the facet-defining normals may be taken to be
$\eps_{C_f}(x)\widetilde x$ for $x\in\Ess(f)$. The cone $K$ is full-dimensional
and pointed in the $(n+1)$-dimensional space $A$ by Section \ref{sec:setup}, so
Lemma \ref{lem:facet-normals-span} shows that these facet normals span
$\R^{n+1}$. A spanning set of $\R^{n+1}$ has at least $n+1$ elements, so $K$ has
at least $n+1$ facets, and hence
\[
        \sigma_n(f)\ge n+1 .
\]
\end{proof}

\begin{corollary}[Affine-spanning property for specifying sets]\label{cor:general-position}
For every $f \in \Hn$ and every specifying set $S$ for $f$, the lifted cube points
\[
        \{\widetilde x : x \in S\}
\]
span $\R^{n+1}$. Equivalently, $S$ contains $n+1$ affinely independent cube
points, meaning points $x_0,\ldots,x_n$ for which
$x_1-x_0,\ldots,x_n-x_0$ are linearly independent.
\end{corollary}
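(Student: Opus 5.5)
By Theorem~\ref{thm:facet}, every specifying set $S$ for $f$ contains $\Ess(f)$. It therefore suffices to show that $\{\widetilde x: x\in\Ess(f)\}$ spans $\R^{n+1}$, because enlarging a spanning set keeps it spanning. This is exactly the content of the proof of Corollary~\ref{cor:lower}. Let $K=\overline{C_f}$. By Lemma~\ref{lem:facet-halfspaces}, $K$ is the intersection of its facet-defining halfspaces, and by Theorem~\ref{thm:facet} these are the halfspaces $\eps_{C_f}(x)\langle a,\widetilde x\rangle\ge0$ for $x\in\Ess(f)$. Since $K$ is full-dimensional and pointed (Section~\ref{sec:setup}), Lemma~\ref{lem:facet-normals-span} shows that the normals $\eps_{C_f}(x)\widetilde x$, $x\in\Ess(f)$, span $\R^{n+1}$. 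The signs $\pm1$ do not change the span, so the lifted essential points span $\R^{n+1}$, and hence so do the lifted points of $S$.

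For the equivalent formulation I would first pick $n+1$ linearly independent vectors among the lifted points of $S$, say $\widetilde{x_0},\ldots,\widetilde{x_n}$. Such a basis exists because any spanning set contains a basis. Next I show that $x_1-x_0,\ldots,x_n-x_0$ are linearly independent. Suppose $\sum_{i\ge1}\lambda_i(x_i-x_0)=0$. Then
\[
\sum_{i\ge1}\lambda_i\bigl(\widetilde{x_i}-\widetilde{x_0}\bigr)=0,
\]
since the first coordinates of these differences vanish. Rewriting, $\sum_{i\ge1}\lambda_i\widetilde{x_i}-\bigl(\sum_{i\ge1}\lambda_i\bigr)\widetilde{x_0}=0$, and linear independence of the $\widetilde{x_i}$ forces every $\lambda_i=0$.

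For the converse direction of the equivalence, suppose $x_0,\ldots,x_n$ are affinely independent. The $n+1$ vectors $\widetilde{x_0},\widetilde{x_1}-\widetilde{x_0},\ldots,\widetilde{x_n}-\widetilde{x_0}$ are then linearly independent. Indeed, the last $n$ of them have first coordinate $0$ and independent tails $x_i-x_0$, while $\widetilde{x_0}$ has first coordinate $1$. Hence the $\widetilde{x_i}$ span $\R^{n+1}$, and so does the larger set $\{\widetilde x : x\in S\}$.

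No step here is a real obstacle: the spanning claim follows from Lemma~\ref{lem:facet-normals-span}, and the rest is the standard dictionary between linear independence of homogenised points and affine independence. The only point needing care is the bookkeeping of the first coordinate in that translation.
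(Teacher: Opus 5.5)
Your proposal is correct and follows essentially the same route as the paper: $S\supseteq\Ess(f)$ by Theorem~\ref{thm:facet}, the signed facet normals $\eps_{C_f}(x)\widetilde x$, $x\in\Ess(f)$, span $\R^{n+1}$ by Lemma~\ref{lem:facet-normals-span} as in Corollary~\ref{cor:lower}, and the first-coordinate argument translates linear independence of lifts into affine independence. You write out both directions of that translation explicitly, where the paper simply asserts it, but the argument is the same.
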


\begin{proof}
The proof of Corollary~\ref{cor:lower} shows that the facet-defining normals
$\eps_{C_f}(x)\widetilde x$, $x\in\Ess(f)$, span $\R^{n+1}$. Since multiplying a
vector by $\eps_{C_f}(x)\in\{-1,+1\}$ does not change its span,
\[
        \operatorname{span}\{\widetilde x:x\in\Ess(f)\}=\R^{n+1}.
\]
Let $S$ be any specifying set for $f$. By Theorem~\ref{thm:facet}, $S$ contains
$\Ess(f)$, so
\[
        \operatorname{span}\{\widetilde x:x\in S\}=\R^{n+1},
\]
and $S$ contains $n+1$ linearly independent lifted cube points. Finally, for
points $x_0,\ldots,x_n\in\B^n$ the lifted cube points
$\widetilde x_0,\ldots,\widetilde x_n$ are linearly independent in $\R^{n+1}$ if
and only if $x_1-x_0,\ldots,x_n-x_0$ are linearly independent in $\R^n$, which
gives the stated equivalence.
\end{proof}

For threshold functions, the statement that every specifying set contains
$n+1$ affinely independent points is \cite[Theorem~2.9]{ABST1995}; the proof
above derives it from the chamber geometry.

\begin{corollary}\label{cor:simplicial}
For $f \in \Hn$, $\sigma_n(f) = n+1$ if and only if $\overline{C_f}$ is a simplicial cone in $\R^{n+1}$.
\end{corollary}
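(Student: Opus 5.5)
The plan is to combine Theorem~\ref{thm:facet} with the definition of a simplicial cone recorded in Section~\ref{sec:setup}. By Theorem~\ref{thm:facet}, $\sigma_n(f)=|\Ess(f)|$ equals the number of facets of $K=\overline{C_f}$. Section~\ref{sec:setup} shows that $K$ is a full-dimensional pointed polyhedral cone in $A\cong\R^{n+1}$. For such cones, being generated by $n+1$ linearly independent rays is stated to be equivalent to having exactly $n+1$ facets. Granting that equivalence, the corollary is immediate: $\sigma_n(f)=n+1$ if and only if $K$ has $n+1$ facets, if and only if $K$ is simplicial.

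Since that equivalence is only asserted in the setup, I would justify it briefly in both directions.

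\textbf{Simplicial implies $n+1$ facets.} Suppose $K=\operatorname{cone}(r_0,\ldots,r_n)$ with the $r_i$ linearly independent. Apply a linear change of coordinates sending $r_i$ to the standard basis vector $e_i$. Then $K$ becomes the nonnegative orthant, whose facets are exactly the $n+1$ coordinate hyperplanes intersected with $K$.

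\textbf{$n+1$ facets implies simplicial.} Suppose $K$ has exactly $n+1$ facets. By Lemma~\ref{lem:facet-halfspaces},
\[
K=\{a:\langle u_i,a\rangle\ge0,\ i=0,\ldots,n\},
\]
where the $u_i$ are the facet normals. By Lemma~\ref{lem:facet-normals-span} these $n+1$ normals span $\R^{n+1}$, so they form a basis. Let $r_0,\ldots,r_n$ be the dual basis, defined by $\langle u_i,r_j\rangle=\delta_{ij}$. For any $a$, writing $a=\sum_j\langle u_j,a\rangle r_j$ shows that $a\in K$ exactly when all coefficients are nonnegative. Hence $K=\operatorname{cone}(r_0,\ldots,r_n)$, generated by $n+1$ linearly independent rays, so $K$ is simplicial.

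The only step needing care is this converse direction. Its key input is that the facet normals span, and Lemma~\ref{lem:facet-normals-span} supplies exactly that via pointedness. With the equivalence in hand, substituting $\sigma_n(f)=\#\{\text{facets of }\overline{C_f}\}$ finishes the proof.
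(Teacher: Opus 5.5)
Your proof is correct and follows the paper's route: Theorem~\ref{thm:facet} identifies $\sigma_n(f)$ with the number of facets of $\overline{C_f}$, and a full-dimensional pointed cone in $\R^{n+1}$ is simplicial exactly when it has $n+1$ facets. The paper simply asserts that last equivalence as part of its definition of simplicial. Your dual-basis verification of it, using Lemmas~\ref{lem:facet-halfspaces} and~\ref{lem:facet-normals-span}, is a sound supplement.
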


\begin{proof}
By Theorem \ref{thm:facet}, $\sigma_n(f)$ is the number of facets of
$\overline{C_f}$. A full-dimensional pointed cone in $\R^{n+1}$ is simplicial
if and only if it has exactly $n+1$ facets.
\end{proof}

\begin{remark}\label{rem:simplicial-zuev}
Corollaries~\ref{cor:lower} and~\ref{cor:simplicial} are the two immediate
consequences of the facet count of Theorem~\ref{thm:facet}: a full-dimensional pointed
cone in $\R^{n+1}$ has at least $n+1$ facets, with equality exactly when it is
simplicial. Zuev noted the corresponding lower-bound in terms of the
degree in his graph of threshold functions, which equals the number of facets of
the associated cone \cite[\S5]{Zuev1991}. We isolate
the simplicial reformulation for use in Section~\ref{sec:Tn}.
\end{remark}

We now record the standard recursive definition of linear read-once
functions, used in Proposition~\ref{prop:lro-upper} below.

\begin{definition}[Linear read-once function]\label{def:lro}
A \emph{literal} is a variable $x_i$ or its negation $\neg x_i$. A Boolean
function is \emph{linear read-once} (\emph{lro}) if it is constant, or if it
can be represented by a nested formula obtained recursively as follows: every
literal is lro, and if $\ell$ is a literal whose underlying variable does not
occur in an lro formula $t$, then
\[
        \ell\vee t
        \qquad\text{and}\qquad
        \ell\wedge t
\]
are lro. Thus each variable occurs at most once, and the formula is built by
successively adjoining one new literal.
\end{definition}

These are the nested functions of \cite{ABST1995}. The standard
induction, recorded there, shows that every nested Boolean function
is linearly separable. Hence every lro function belongs to $\Hn$.

\begin{proposition}\label{prop:lro-upper}
For every linear read-once threshold function $f$ depending on all $n$
variables, the chamber $\overline{C_f}$ in $\Arr_n$ is simplicial; equivalently,
$\sigma_n(f)=n+1$.
\end{proposition}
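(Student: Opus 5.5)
We argue by induction on $n$, showing directly that $|\Ess(f)|\le n+1$; Corollary~\ref{cor:lower} then gives equality, and Corollary~\ref{cor:simplicial} gives simpliciality. Negating a variable, $x_i\mapsto 1-x_i$, is an affine automorphism of the cube. It induces a linear automorphism of the parameter space $A$ that permutes the hyperplanes $H_x$, so it preserves essential sets up to relabelling. Complementing $f$ corresponds to $a\mapsto -a$ and likewise preserves $\Ess$. So we may assume $f=x_n\wedge t$ or $f=x_n\vee t$, where $t$ is lro on $x_1,\dots,x_{n-1}$ and depends on all of them. By De Morgan, $\neg(x_n\vee t)=\neg x_n\wedge\neg t$, and after negating $x_n$ this is again a conjunction case. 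So it suffices to treat $f=x_n\wedge t$. The base case $n=1$ is immediate: $x_1$ has two essential points, and $2=n+1$.

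\textbf{Inductive step.} Let $f=x_n\wedge t$, and write each point as $(y,x_n)$ with $y\in\B^{n-1}$.

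First, no point $(y,0)$ with $t(y)=0$ is essential when $n\ge2$. Flipping it gives a function $g$ with $g(y,0)=1$, $g(y,1)=f(y,1)=t(y)=0$, and $g(y',0)=0$ for all $y'\ne y$. Since $t$ depends on all its variables and is not constant, there is a $y'$ with $t(y')=1$, so $g(y',1)=1$. We then aim for a contradiction with thresholdness via an averaging (non-separability) argument. For the four points $(y,0),(y,1),(y',0),(y',1)$, the midpoints of the pairs $\{(y,0),(y',1)\}$ and $\{(y,1),(y',0)\}$ coincide, but $g$ is $1$ on the first pair and $0$ on the second. That is impossible for a threshold function, since the affine form would be $\ge0$ on one pair and $<0$ on the other with equal sums. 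So $(y,0)$ is inessential.

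Second, points $(y,0)$ with $t(y)=1$ can be essential, but we expect at most one. Flipping such a point gives $g$ equal to $1$ on $(y,0)$ and on $\{(y'',1):t(y'')=1\}$. Suppose $g$ is threshold. If some other $y'$ with $t(y')=1$ exists, the same swap argument on $(y,0),(y',1)$ versus $(y,1),(y',0)$ fails here, because $g(y,1)=1$. So we need a finer argument. The plan is to show that the set $\{(y,0)\}\cup(T\times\{1\})$, where $T=t^{-1}(1)$, is separable only if $y$ is the unique vertex of $T$ that some weight vector $w$ minimises over $T$ while the remaining points of $T$ sit strictly higher. We must then show, using the lro structure of $t$ (a subcube-like nested structure), that this $y$ is unique; plausibly it is the minimal true point of $t$ in the nested order.

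Third, points $(y,1)$ are essential exactly when $y\in\Ess(t)$. Restricted to the face $x_n=1$, flipping at $(y,1)$ gives $x_n\wedge t'$ with $t'=t\oplus\mathbf 1_{\{y\}}$, which is threshold iff $t'$ is. For the forward direction, the restriction of a threshold function to a face is threshold. For the converse, take $x_n\wedge t'$ with a large weight on $x_n$. By induction $|\Ess(t)|=n$.

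Combining the three steps gives $|\Ess(f)|\le n+1$. The main obstacle is the second step, bounding the essential points on the face $x_n=0$ by one. If the direct argument stalls, we would instead pass to the chamber picture. There $C_f$ should be combinatorially the product of $\overline{C_t}$, lifted, with a half-line in the $x_n$-weight direction, anticipating the adding-a-variable operation of Subsection~\ref{ssec:lozin-operations}. The product of a simplicial cone with a ray is simplicial. Equivalently, we would verify directly that the $n+1$ inequalities coming from $\Ess(t)\times\{1\}$ and the single extra point imply all the others.
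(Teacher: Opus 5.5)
Your reduction and Steps~1 and~3 are sound. Step~1 is a valid $2$-summability certificate, and Step~3 correctly reduces the face $x_n=1$ to $\Ess(t)$.

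The gap is Step~2, which carries the whole weight of the inductive step. You never show that at most one point $(y,0)$ with $t(y)=1$ is essential. What is written there is a plan (``we expect'', ``plausibly'', ``should be''), and the half-line fallback is only asserted. The guess that the surviving point is a minimal true point of $t$ is also off. In any representation of $g=f\oplus\mathbf 1_{\{(y,0)\}}$, the layer $x_n=0$ forces $y$ to be the unique strict maximiser of the $x$-part of the form over all of $\B^{n-1}$, not merely an extreme point of $t^{-1}(1)$.

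The ingredient you need is not the lro structure of $t$ but unateness. Negate the variables that occur negated in $t$; this is a congruence preserving the form $x_n\wedge t$, so you may take $t$ positive. Now let $t(y)=1$ with $y\neq\mathbf 1$, and pick $j$ with $y_j=0$. Relevance and positivity of $t$ give $u$ with $u_j=0$, $t(u)=0$ and $t(u+e_j)=1$. In $g$ the points $(y,0)$ and $(u+e_j,1)$ are ones, while $(y+e_j,0)$ and $(u,1)$ are zeros, and
\[
        (y,0)+(u+e_j,1)=(y+e_j,0)+(u,1).
\]
So $g$ is $2$-summable and $(y,0)$ is inessential. Only $(\mathbf 1,0)$, the top of that layer, can survive on the face $x_n=0$. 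This agrees with the new essential point $(1,\ldots,1,0)$ of $f\wedge y$ in Proposition~\ref{prop:add-variable-product}.

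With this step supplied, your induction is a self-contained alternative to the paper's argument. The paper simply quotes \cite[Theorem~2.10]{ABST1995} for nested functions and translates it through Theorem~\ref{thm:facet} and Corollary~\ref{cor:simplicial}. Your completed argument amounts to iterating the adding-a-variable operation from a single literal. The chamber fallback would also close the gap if you actually invoked Proposition~\ref{prop:add-variable-product} and Lemma~\ref{lem:halfline-simplicial}, both proved independently of this proposition. However, that proposition requires $t$ to be positive, and your reduction, which normalises only the outer literal, had not yet arranged this.
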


\begin{proof}
Linear read-once functions depending on all $n$ variables have
$\sigma_n(f)=n+1$: this is the bound of Anthony, Brightwell, and Shawe-Taylor for
their nested functions \cite[Theorem~2.10]{ABST1995}, which in the re-labelled
form are exactly the linear read-once functions. By
Theorem~\ref{thm:facet}, $\sigma_n(f)=|\Ess(f)|$ is the number of facets of
$\overline{C_f}$, so $\overline{C_f}$ has $n+1$ facets, and by
Corollary~\ref{cor:simplicial} it is simplicial.
\end{proof}

\subsection{Other characterisations of essentiality}

\begin{remark}[Equivalent tests for essentiality]\label{rem:three-languages}
The chamber-facet criterion can be translated into the summability and convex
separation languages used elsewhere in threshold logic. Let
$g=f\oplus\mathbf 1_{\{x\}}$. Then $x$ is essential for $f$ if and only if $g$ is
a threshold function. By the classical summability criterion \cite{Elgot1961}, $x$ is inessential if and only if the
zero and one sets of $g$ admit a balanced multiset equality
\[
        z_1+\cdots+z_k=y_1+\cdots+y_k,
\]
with all $z_i\in g^{-1}(0)$ and all $y_i\in g^{-1}(1)$, repetitions allowed.
Since $f$ itself is threshold, no such equality can use only points whose
labels are unchanged when passing from $f$ to $g$; any certificate for
non-thresholdness of $g$ must involve the flipped point $x$.

Equivalently, $x$ is essential if and only if the convex hulls
\[
        \operatorname{conv}\{\widetilde z:z\in g^{-1}(0)\}
        \quad\text{and}\quad
        \operatorname{conv}\{\widetilde y:y\in g^{-1}(1)\}
\]
are disjoint. 
 This is the standard strict-separation formulation, and gives a
linear-programming test. The separation condition and the summability
certificate are primal and dual: by the theorem of the alternative, the lifted
zero and one sets are strictly separable exactly when no balanced multiset
equality of the above kind exists. Proposition~\ref{prop:lp-test} restates this
in parameter space, testing essentiality by feasibility of a facet of
$\overline{C_f}$.

That a point is essential exactly when some separating hyperplane passes through
it is due to Lozin, Razgon, Zamaraev, Zamaraeva, and Zolotykh
\cite[Theorem~5]{Lozin2018LRO}. Theorem~\ref{thm:facet}
refines this to a bijection with the facets of the weight-space chamber.
\end{remark}

\subsection{A linear-programming certificate for essentiality}\label{ssec:lp-cert}

\begin{proposition}[Linear-programming certificate for essentiality]\label{prop:lp-test}
Let $f\in\Hn$, let $C_f$ be its chamber, and write
$\eps_f(y)=\eps_{C_f}(y)$. For $x\in\B^n$, the following are equivalent:
\begin{enumerate}[label=\textup{(\alph*)}]
\item $x\in\Ess(f)$;
\item the linear system
\[
        \langle a,\widetilde x\rangle=0,
        \qquad
        \eps_f(y)\langle a,\widetilde y\rangle\ge 1
        \quad (y\in\B^n,\ y\ne x)
\]
is feasible.
\end{enumerate}
The same criterion holds for any spanning feature arrangement in the sense of
Proposition~\ref{prop:feature-arrangements}, with $\widetilde y$ replaced by the
feature vector $\phi(y)$.
\end{proposition}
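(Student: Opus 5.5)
By Theorem~\ref{thm:facet}, (a) holds exactly when $H_x$ supports a facet of $\overline{C_f}$. By Lemma~\ref{lem:facet-relint}, that happens exactly when some point $c\in\overline{C_f}\cap H_x$ lies on no $H_y$ with $y\neq x$. So the plan is to show that feasibility of the system in (b) is equivalent to the existence of such a point $c$, using only the homogeneity of the constraints.

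For (a)$\Rightarrow$(b), take the point $c$ supplied by Lemma~\ref{lem:facet-relint}. It satisfies $\langle c,\widetilde x\rangle=0$. For each $y\neq x$ we have $\eps_f(y)\langle c,\widetilde y\rangle\ge0$ because $c\in\overline{C_f}$, and the inequality is strict because $c\notin H_y$. There are finitely many $y$, so
\[
  m=\min_{y\ne x}\eps_f(y)\langle c,\widetilde y\rangle
\]
is positive. Then $a=c/m$ satisfies every constraint of (b): scaling preserves the equation on $H_x$ and lifts each strict inequality to $\ge1$.

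For (b)$\Rightarrow$(a), let $a$ be a feasible point. For every $y\neq x$ we have $\eps_f(y)\langle a,\widetilde y\rangle\ge1>0$, so $a$ lies on no $H_y$. For the $x$-constraint, $\eps_f(x)\langle a,\widetilde x\rangle=0\ge0$. Hence $a$ satisfies every defining inequality of $\overline{C_f}$ as displayed in Section~\ref{sec:setup}, so $a\in\overline{C_f}\cap H_x$. Lemma~\ref{lem:facet-relint} then gives that $H_x$ supports a facet, and Theorem~\ref{thm:facet} gives $x\in\Ess(f)$.

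For the feature-arrangement version, replace $\widetilde y$ by $\phi(y)$ throughout. Lemma~\ref{lem:facet-relint} and the one-sign crossing argument of Lemma~\ref{lem:one-sign-crossing} go through verbatim, provided the hyperplanes $H_y=\{a:\langle a,\phi(y)\rangle=0\}$ are pairwise distinct. The one point needing care is that distinctness, the analogue of Lemma~\ref{lem:Hx-injective}. I would take it from the definition of a spanning feature arrangement; if the definition does not guarantee it, I would merge points whose feature vectors are parallel and state the criterion for hyperplanes rather than points. Essentiality is then defined by flipping the value at one point, and the crossing lemma again identifies this with facet support. Apart from that bookkeeping the argument is a direct scaling, with no real obstacle.
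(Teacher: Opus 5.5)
Your proposal is correct and follows the paper's proof essentially step for step. Both directions go through Lemma~\ref{lem:facet-relint} and Theorem~\ref{thm:facet}, with positive rescaling of a point off all $H_y$ giving (a)$\Rightarrow$(b). The distinctness you flag for the feature version needs no workaround, because injectivity of $x\mapsto H_x^\phi$ is an explicit hypothesis of Proposition~\ref{prop:feature-arrangements}.
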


The system has $2^n-1$ inequality constraints, one for each cube point $y\ne x$,
so Proposition~\ref{prop:lp-test} should be understood as an exact certificate-level
characterisation rather than as a scalable algorithm from a compact weight
representation. Its practical role is primarily for truth-table-scale
verification, small-$n$ experiments, and possible future implementations in
which candidate constraints are generated separately. The complexity of
computing $\Ess(f)$ from a compact representation is discussed further in
Section~\ref{sec:open}.

\begin{proof}
If $x\in\Ess(f)$, then $H_x$ supports a facet of $\overline{C_f}$ by Theorem
\ref{thm:facet}. By Lemma \ref{lem:facet-relint} the relative interior of this
facet contains a point $p$ lying on no other hyperplane $H_y$, $y\ne x$, so
\[
        \langle p,\widetilde x\rangle=0,
        \qquad
        \eps_f(y)\langle p,\widetilde y\rangle>0
        \quad (y\ne x).
\]
Since there are finitely many $y\ne x$, scaling $p$ by a positive constant
makes all the strict quantities at least $1$, so the displayed linear system is
feasible.

Conversely, suppose the displayed system is feasible, say at $a$. Then
$\langle a,\widetilde x\rangle=0$ and
$\eps_f(y)\langle a,\widetilde y\rangle\ge 1>0$ for every $y\ne x$, so
$a\in\overline{C_f}\cap H_x$ and $a$ lies on no $H_y$ with $y\ne x$. By Lemma
\ref{lem:facet-relint}, $H_x$ supports a facet of $\overline{C_f}$, and Theorem
\ref{thm:facet} gives $x\in\Ess(f)$. The feature-arrangement version is the
same proof with $\phi(y)$ in place of $\widetilde y$ (see
Proposition~\ref{prop:feature-arrangements}).
\end{proof}

Running this feasibility test for each of the $2^n$ cube points computes
$\Ess(f)$, and hence $\sigma_n(f)$ and the signature, in time polynomial in the
truth-table size $2^n$. The remaining algorithmic question, taken up in
Section~\ref{sec:open}, concerns \emph{compact} representations, where the input
size is $\mathrm{poly}(n)$ rather than $2^n$.

\section{\texorpdfstring{Average specification numbers and shortest-path closure}{Average specification numbers and shortest-path closure}}\label{sec:average}

In this section, we provide two upper bounds on the average specification number
$\overline\sigma_\C$ of a finite class $\C$ with the property that the essential
points specify each member of the class. These are both obtained through the
one-inclusion graph $G(\C)$. We first prove the cardinality bound
$\overline\sigma_\C\le\log_2|\C|$ for essentially specified classes, by applying
an edge-isoperimetric inequality to $G(\C)$. We then recall a bound of Doliwa,
Fan, Simon, and Zilles: the essential-specification property is equivalent to
shortest-path closure of $G(\C)$, and every nontrivial class with the property has
average specification number less than twice its Vapnik--Chervonenkis dimension
\cite[Lemma~30 and Theorem~31]{Doliwa2014}. For threshold functions both bounds
apply. By Theorem~\ref{thm:facet}, the class $\Hn$ is essentially specified, and
hence, by Doliwa, Fan, Simon, and Zilles, shortest-path closed. 
The cardinality bound recovers the $n^2$ bound of Anthony, Brightwell, and
Shawe-Taylor \cite[Lemmas 2.16--2.19, Propositions 2.20--2.21, and Corollaries
2.22--2.23]{ABST1995}, the Vapnik--Chervonenkis bound gives $\overline\sigma_n<2(n+1)$, and a theorem of Fukuda, Tamura, and Tokuyama sharpens this to $\overline\sigma_n\le 2n$.

\subsection{The one-inclusion graph and essential specification}

Let $\Omega$ be a finite set, and identify $\B^\Omega$ with the Hamming cube whose
coordinates are indexed by the points of $\Omega$. So, the vertices of this cube can be viewed as 
all Boolean functions $\Omega\to\B$,  and two vertices are adjacent if and only
if they differ at exactly one point of $\Omega$. When we specialise to Boolean threshold
functions, the finite domain is $\Omega=\B^n$. Thus $\B^\Omega=\B^{\B^n}$ is the Hamming
cube whose coordinates are indexed by input points $x\in\B^n$, and whose
vertices are Boolean functions $\B^n\to\B$. The threshold class $\Hn$ is a
subset of these vertices.

For a finite class $\C\subseteq\B^\Omega$, the \emph{one-inclusion graph} $G(\C)$
has vertex set $\C$ and an edge between $f,g\in\C$ if and only if they differ at exactly
one point of $\Omega$. Equivalently, $G(\C)$ is the induced subgraph of the Hamming
cube $\B^\Omega$ on the vertex set $\C$. This is the standard
one-inclusion graph of Haussler, Littlestone and Warmuth \cite{HLW1994},
specialised to the case where the finite sample is the whole domain $\Omega$.
{For $\C=\Hn$, this graph is  Zuev's graph of threshold functions \cite[\S5]{Zuev1991}.}

{For $f,g\in\C$, write
\[
        d_H(f,g)=|\{x\in \Omega:f(x)\ne g(x)\}|
\]
for their Hamming distance. Every path from $f$ to $g$ in $G(\C)$ has length at least $d_H(f,g)$, since each edge changes only one coordinate. The class $\C$ is called \emph{shortest-path closed} if, for every $f,g\in\C$, the graph $G(\C)$ contains a path of length $d_H(f,g)$ from $f$ to $g$. Equivalently, graph distance in $G(\C)$ agrees with Hamming distance on every pair, so $G(\C)$ is an isometric subgraph of the ambient Hamming cube. We use the terminology of Doliwa \emph{et al.}\ \cite{Doliwa2014}.}

For a non-empty finite class $\C\subseteq\B^\Omega$, we extend the specification
number of Section~\ref{sec:setup} from $\Hn$ to $\C$: write $\sigma_\C(f)$ for the
minimum size of a set $S\subseteq \Omega$ that specifies $f$ relative to $\C$, that
is, such that no other member of $\C$ agrees with $f$ on $S$. For $\C=\Hn$ and
$\Omega=\B^n$ this is the earlier $\sigma_n(f)$.
 We set
\[
        \overline\sigma_\C
        =
        \frac{1}{|\C|}\sum_{f\in\C}\sigma_\C(f).
\]
The relative essential points are
\[
        \Ess_\C(f)
        =
        \{x\in \Omega: f\oplus\mathbf 1_{\{x\}}\in\C\}.
\]
The adjective ``essential'' is relative to the class $\C$: these are precisely
the points whose one-point flips remain inside $\C$. Equivalently, they are the
points that produce neighbours of $f$ in $G(\C)$. We call $\C$
\emph{essentially specified} if $\Ess_\C(f)$ specifies $f$ relative to $\C$ for
every $f\in\C$. For a graph $G$ and a vertex $v$, write $\deg_G(v)$ for the
degree of $v$ in $G$.

We will use the following equivalence, which is Lemma~30 of Doliwa, Fan, Simon, and Zilles
\cite{Doliwa2014}. We include the short proof to identify their terminology with
ours.

\begin{proposition}[Shortest-path closure and essential specification]\label{prop:shortest-path}
For a non-empty finite class $\C\subseteq\B^\Omega$, the following are equivalent:
\begin{enumerate}[label=\textup{(\roman*)}]
\item $\C$ is shortest-path closed;
\item $\C$ is essentially specified.
\end{enumerate}
When these conditions hold, $\Ess_\C(f)$ is the unique minimum specifying set
for every $f\in\C$.
\end{proposition}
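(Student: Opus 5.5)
First I would record the easy half of the final assertion, which needs no hypothesis on $\C$: every specifying set for $f$ relative to $\C$ contains $\Ess_\C(f)$. The argument is the one in Theorem~\ref{thm:facet}. If $x\in\Ess_\C(f)\setminus S$, then $f\oplus\mathbf 1_{\{x\}}\in\C$ agrees with $f$ on $S$, so $S$ does not specify $f$. Once (ii) holds, $\Ess_\C(f)$ is therefore a specifying set contained in every specifying set, hence the unique minimum one.

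For (i)$\Rightarrow$(ii), fix $f\in\C$ and let $g\in\C$ with $g\ne f$ agree with $f$ on $\Ess_\C(f)$. I would derive a contradiction. By shortest-path closure there is a path $f=h_0,h_1,\ldots,h_k=g$ in $G(\C)$ with $k=d_H(f,g)\ge1$. Along a geodesic in the Hamming cube each coordinate is flipped at most once, and exactly the coordinates in the difference set $D=\{x:f(x)\ne g(x)\}$ are flipped. The first edge flips some point $x$, so $h_1=f\oplus\mathbf 1_{\{x\}}\in\C$ and $x\in\Ess_\C(f)$. Since the path is geodesic, $x\in D$, so $f(x)\ne g(x)$. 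This contradicts the agreement of $g$ and $f$ on $\Ess_\C(f)$.

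For (ii)$\Rightarrow$(i), I would induct on $d=d_H(f,g)$ and show that a path of length $d$ exists. The case $d=0$ is trivial. For $d\ge1$ we have $g\ne f$, and since $\Ess_\C(f)$ specifies $f$, $g$ must disagree with $f$ at some $x\in\Ess_\C(f)$. Then $f'=f\oplus\mathbf 1_{\{x\}}\in\C$ is adjacent to $f$, and flipping a coordinate in $D$ reduces the distance, so $d_H(f',g)=d-1$. The induction hypothesis gives a path of length $d-1$ from $f'$ to $g$, and prepending the edge $ff'$ gives a path of length $d$.

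The main care point is in (i)$\Rightarrow$(ii): one must justify that the first step of a length-$d_H$ path flips a coordinate in $D$. This holds because a walk of length exactly $d_H(f,g)$ must flip each coordinate of $D$ once and touch nothing else, since any coordinate flipped twice, or outside $D$, would force extra length. The rest is routine.
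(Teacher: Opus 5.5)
Your proposal is correct and follows the paper's proof essentially step for step. It uses the first-edge argument for (i)$\Rightarrow$(ii) (phrased by contradiction, and with a slightly more explicit justification that a path of length $d_H(f,g)$ flips only coordinates where $f$ and $g$ differ), induction on Hamming distance for (ii)$\Rightarrow$(i), and the observation that every specifying set contains $\Ess_\C(f)$ for the uniqueness of the minimum specifying set.
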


\begin{proof}
Suppose first that $\C$ is shortest-path closed, and fix distinct $f,g\in\C$.
Choose a path from $f$ to $g$ of length $d_H(f,g)$. 
Its first edge flips some coordinate $x\in\Ess_\C(f)$, and because the path has
Hamming-minimal length, $f$ and $g$ differ at $x$. Hence $f$ and $g$ are
distinguished by a point of $\Ess_\C(f)$. Since $g\in\C\setminus\{f\}$ was
arbitrary, no member of $\C$ other than $f$ agrees with $f$ on $\Ess_\C(f)$,
so $\Ess_\C(f)$ specifies $f$.

Conversely, suppose that $\C$ is essentially specified. We prove by induction on
$k=d_H(f,g)$ that every pair $f,g\in\C$ is joined in $G(\C)$ by a path of length
$k$. The assertion is trivial for $k=0$. If $k>0$, then, since
$\Ess_\C(f)$ specifies $f$ and $g\ne f$, there is some
$x\in\Ess_\C(f)$ with $f(x)\ne g(x)$. Put
$f'=f\oplus\mathbf 1_{\{x\}}$. Then $f'\in\C$ and
$d_H(f',g)=k-1$. By induction there is a path of length $k-1$ from $f'$ to $g$;
prepending the edge from $f$ to $f'$ gives a path of length $k$.

Finally, every specifying set for $f$ must contain $\Ess_\C(f)$, since omitting
$x\in\Ess_\C(f)$ leaves the neighbour
$f\oplus\mathbf 1_{\{x\}}$ indistinguishable. Under either equivalent condition,
$\Ess_\C(f)$ itself specifies $f$, so it is the unique minimum specifying set.
\end{proof}

The equivalence has a numerical counterpart, the degree identity that drives
both bounds of this section and appears in the proof of Doliwa \emph{et al.}'s
average bound.
\begin{lemma}\label{lem:degree}
For any finite essentially specified (equivalently, shortest-path-closed) class $\C\subseteq\B^\Omega$ and every $f\in\C$,
\[
        \sigma_\C(f)=\deg_{G(\C)}(f).
\]
\end{lemma}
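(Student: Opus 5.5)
The plan is to combine the final clause of Proposition~\ref{prop:shortest-path} with a direct count of the edges of $G(\C)$ at $f$. Since $\C$ is essentially specified, that proposition says $\Ess_\C(f)$ is the unique minimum specifying set for $f$ relative to $\C$. Hence
\[
        \sigma_\C(f)=|\Ess_\C(f)|.
\]
The lower bound $\sigma_\C(f)\ge|\Ess_\C(f)|$ comes from the observation that omitting an essential point leaves an indistinguishable neighbour. The upper bound comes from the hypothesis of essential specification. It then remains to show that $|\Ess_\C(f)|=\deg_{G(\C)}(f)$.

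For this, I will define the map $x\mapsto f\oplus\mathbf 1_{\{x\}}$ from $\Ess_\C(f)$ to the neighbourhood of $f$ in $G(\C)$, and check three things.

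First, the map is well defined. By the definition of $\Ess_\C(f)$, the function $f\oplus\mathbf 1_{\{x\}}$ lies in $\C$. It differs from $f$ at exactly the one point $x$, so it is adjacent to $f$.

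Second, the map is surjective. Any neighbour $g$ of $f$ differs from $f$ at exactly one point $x$. Then $g=f\oplus\mathbf 1_{\{x\}}\in\C$, so $x\in\Ess_\C(f)$ and $x$ maps to $g$.

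Third, the map is injective. If $f\oplus\mathbf 1_{\{x\}}=f\oplus\mathbf 1_{\{y\}}$, then $\mathbf 1_{\{x\}}=\mathbf 1_{\{y\}}$, and so $x=y$.

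Together these give $\deg_{G(\C)}(f)=|\Ess_\C(f)|=\sigma_\C(f)$.

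There is no real obstacle here. The only point needing care is that the identity $\sigma_\C(f)=|\Ess_\C(f)|$ genuinely uses essential specification, whereas the bijection between essential points and edges holds for any class. The parenthetical equivalence with shortest-path closure is supplied by Proposition~\ref{prop:shortest-path}. For $\C=\Hn$, Theorem~\ref{thm:facet} then recovers Zuev's observation that the graph degree equals the number of facets.
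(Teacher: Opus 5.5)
Your proposal is correct and follows essentially the same route as the paper: you obtain $\sigma_\C(f)=|\Ess_\C(f)|$ from essential specification together with the observation that every specifying set contains $\Ess_\C(f)$. You then identify $\Ess_\C(f)$ with the neighbours of $f$ in $G(\C)$ via the bijection $x\mapsto f\oplus\mathbf 1_{\{x\}}$, checking well-definedness, injectivity and surjectivity just as the paper does.
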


\begin{proof}
Since $\C$ is essentially specified, $\Ess_\C(f)$ specifies $f$ relative to
$\C$, so $\sigma_\C(f)\le|\Ess_\C(f)|$. Conversely, every relative specifying
set contains $\Ess_\C(f)$: if an essential point $x$ were omitted, then
$f\oplus\mathbf 1_{\{x\}}\in\C$ would agree with $f$ on the remaining
points, so the set would not specify $f$. Hence
$\sigma_\C(f)=|\Ess_\C(f)|$. 
Finally $|\Ess_\C(f)|=\deg_{G(\C)}(f)$, since $x\mapsto f\oplus\mathbf 1_{\{x\}}$
is a bijection from $\Ess_\C(f)$ to the neighbours of $f$ in $G(\C)$. The map is
well-defined and injective: by definition $x\in\Ess_\C(f)$ exactly when
$f\oplus\mathbf 1_{\{x\}}\in\C$, and distinct $x$ will flip $f$ at distinct points, so
give distinct functions. It is surjective, since every neighbour of $f$ differs from
$f$ at a single point $x$, which is then essential.
\end{proof}

\subsection{Essentially specified classes}\label{ssec:instances}

We next separate the argument from the special geometry of the Boolean cube. The following
feature-map principle records the part of the chamber-facet argument that we use
again below.

\begin{proposition}[Spanning feature arrangements]\label{prop:feature-arrangements}
Let $\Omega$ be finite, let $V$ be a finite-dimensional real vector space with dual space $V^*$, and let
$\phi:\Omega\to V$ be a map whose image spans $V$ and satisfies
$\phi(x)\neq 0$ for every $x\in \Omega$. Each $a\in V^*$ is a linear functional on $V$; write $a(v)$ for its value at $v\in V$. For $x\in \Omega$, put
\[
        H_x^\phi=
        \{a\in V^*: a(\phi(x))=0\}.
\]
Assume that $x\mapsto H_x^\phi$ is injective. Let $\C_\phi\subseteq\B^\Omega$ be
the chamber class of the arrangement $\Arr_\phi=\{H_x^\phi:x\in \Omega\}$, 
where a chamber $C$ gives the Boolean function
\[
        f_C(x)=\sgn a(\phi(x))
        \qquad (a\in C),
\]
well-defined because $a(\phi(x))$ has constant sign as $a$ ranges over the
chamber $C$, the sign changing only across the hyperplane $H_x^\phi$.
Then $\C_\phi$ is essentially specified (equivalently, shortest-path closed). More precisely, for every chamber
$C$, the map
\[
        x\longmapsto \overline C\cap H_x^\phi
\]
is a bijection from $\Ess_{\C_\phi}(f_C)$ to the facets of $\overline C$, and
$\Ess_{\C_\phi}(f_C)$ specifies $f_C$ inside $\C_\phi$.
\end{proposition}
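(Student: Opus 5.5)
The plan is to rerun the proof of Theorem~\ref{thm:facet} with the lifted cube points $\widetilde x$ replaced by the feature vectors $\phi(x)$, and to check at each step exactly which property of the Boolean setting is being used.

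\textbf{Setup.} Identify $V^*$ with $\R^d$, where $d=\dim V$. The arrangement $\Arr_\phi$ is then a finite central arrangement, and each chamber $C$ carries a sign vector $\eps_C$. Its closure is
\[
\overline C=\{a:\eps_C(x)\,a(\phi(x))\ge0\ \text{for all } x\in\Omega\}.
\]
Two facts are needed here.
\begin{itemize}
\item The hypothesis $\phi(x)\neq0$ makes each $H_x^\phi$ a genuine hyperplane.
\item The spanning hypothesis makes $\overline C$ pointed. If $a,-a\in\overline C$, then $a(\phi(x))=0$ for all $x$, and since the $\phi(x)$ span $V$ this forces $a=0$.
\end{itemize}
This replaces the coordinate computation with $\widetilde 0$ and $\widetilde{e_i}$ from Section~\ref{sec:setup}. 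Pointedness is not strictly needed for the specification claim, but it keeps the statement parallel to Lemma~\ref{lem:facet-normals-span}. Injectivity of $x\mapsto H_x^\phi$ is assumed outright, and it replaces Lemma~\ref{lem:Hx-injective}.

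\textbf{Facet lemmas.} Next, re-prove Lemma~\ref{lem:facet-relint} and Lemma~\ref{lem:one-sign-crossing} verbatim for $\Arr_\phi$.
\begin{itemize}
\item The relative-interior argument uses three ingredients only: continuity, injectivity of the hyperplanes (so that each $H_x^\phi\cap H_y^\phi$ is a proper subspace of $H_x^\phi$), and the fact that a nonempty relatively open set is not covered by finitely many proper subspaces.
\item The crossing argument uses a segment between two chambers in one direction. In the other direction it uses a transverse perturbation $p+tv$ with $a=v$ chosen so that $\eps_C(x)v(\phi(x))>0$; such a $v$ exists because $\phi(x)\neq0$.
\end{itemize}
Together these show that a chamber $C'$ whose sign vector differs from $\eps_C$ exactly at $x$ exists if and only if $H_x^\phi$ supports a facet of $\overline C$.

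\textbf{Conclusion.} The map $C\mapsto f_C$ is injective on chambers, because a chamber is determined by its sign vector and $f_C$ records that sign vector. Hence $x\in\Ess_{\C_\phi}(f_C)$ exactly when such a $C'$ exists, i.e.\ exactly when $H_x^\phi$ supports a facet. So $x\mapsto\overline C\cap H_x^\phi$ maps $\Ess_{\C_\phi}(f_C)$ into the facets of $\overline C$.
\begin{itemize}
\item \emph{Surjectivity.} By Lemma~\ref{lem:facet-halfspaces}, every facet is cut out by one of the defining inequalities, so every facet arises from some $H_x^\phi$.
\item \emph{Injectivity.} Two distinct points $x\neq y$ giving the same facet would force $H_x^\phi=H_y^\phi$, contradicting the hypothesis.
\item \emph{Specification.} If $g=f_D\in\C_\phi$ agrees with $f_C$ on the essential points, then $D$ lies on the same side as $C$ of every facet hyperplane. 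Lemma~\ref{lem:facet-signs}, which was stated for an arbitrary central arrangement, then gives $D=C$.
\end{itemize}
This shows $\C_\phi$ is essentially specified, and shortest-path closure follows from Proposition~\ref{prop:shortest-path}.

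\textbf{Main obstacle.} The step needing most care is the sign convention. The relevant point is that $f_C(x)$ equals $\sgn$ of a nonzero quantity, so $f_C$ flips at $x$ exactly when $\eps$ flips at $x$, and essential points of $f_C$ therefore correspond exactly to one-sign crossings. Everything else transfers mechanically from Section~\ref{sec:chamber}.
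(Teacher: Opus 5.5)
Your proposal is correct and follows the paper's own proof. Both rerun the argument of Lemmas~\ref{lem:facet-relint}, \ref{lem:one-sign-crossing} and Theorem~\ref{thm:facet} with $\phi(x)$ in place of $\widetilde x$, replacing Lemma~\ref{lem:Hx-injective} by the injectivity hypothesis, and both conclude specification from Lemma~\ref{lem:facet-signs} and shortest-path closure from Proposition~\ref{prop:shortest-path}. Your version simply makes explicit where $\phi(x)\neq0$ and the spanning hypothesis are used. One small point: the fact that every facet of $\overline C$ lies on some $H_x^\phi$ does not follow from Lemma~\ref{lem:facet-halfspaces} as stated. It is the standard fact that faces of a polyhedral cone are cut out by its defining inequalities, which your relative-interior argument can supply, and the paper relies on it in the same way.
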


\begin{proof}
Since $\phi(\Omega)$ spans $V$,
\[
        \bigcap_{x\in \Omega} H_x^\phi
        =
        \{a\in V^*:a(\phi(x))=0\text{ for all }x\in \Omega\}
        =
        \{0\}.
\]
Thus the chamber closures of $\Arr_\phi$ are full-dimensional pointed
polyhedral cones in $V^*$.

The proof of the one-sign crossing lemma and Theorem~\ref{thm:facet} uses only
this spanning property and the injectivity of $x\mapsto H_x^\phi$. Applying the
same argument to $\Arr_\phi$ gives the stated bijection between
$\Ess_{\C_\phi}(f_C)$ and the facets of $\overline C$.

It remains to show that these essential points specify $f_C$ inside
$\C_\phi$. Let $D$ be a chamber such that $f_D$ agrees with $f_C$ on
$\Ess_{\C_\phi}(f_C)$. Then $D$ lies on the same side as $C$ of every facet
hyperplane of $\overline C$. By Lemma~\ref{lem:facet-signs}, this forces
$D=C$, and hence $f_D=f_C$. Therefore $\Ess_{\C_\phi}(f_C)$ specifies $f_C$
inside $\C_\phi$. Hence $\C_\phi$ is essentially specified and, by
Proposition~\ref{prop:shortest-path}, shortest-path closed.
\end{proof}

We illustrate the proposition with three instances of the feature-map
construction, then record where the hypothesis holds more broadly and where it
fails.

\begin{example}[Homogeneous halfspaces]\label{ex:halfspaces}
Let $Y\subseteq \R^d\setminus\{0\}$ be finite, assume that $Y$
spans $\R^d$, and assume that no two distinct points of $Y$ lie on the same
line through the origin. Taking $\Omega=Y$, $V=\R^d$, and $\phi(y)=y$ gives the class
\[
        y\longmapsto \sgn\langle a,y\rangle,
        \qquad a\in\R^d,
\]
and Proposition~\ref{prop:feature-arrangements} says that this class is
essentially specified. Its relative essential points are precisely the data
points whose inequalities support facets of the corresponding realization cone.
\end{example}

\begin{example}[Boolean threshold functions]\label{ex:threshold-feature}
Taking $\Omega=\B^n$, $V=\R^{n+1}$, and
$\phi(x)=(1,x_1,\ldots,x_n)$, we recover the usual class of Boolean threshold
functions.
The lifted cube points span $\R^{n+1}$, and distinct lifted cube points give
distinct hyperplanes. Thus Proposition~\ref{prop:feature-arrangements}
recovers the chamber-facet statement for Boolean threshold functions.
\end{example}

\begin{example}[Polynomial threshold functions]\label{ex:ptf}
For $1\le d\le n$, the degree-$d$ monomial feature map
\[
        \phi_d(x)=\left(\prod_{i\in S}x_i\right)_{|S|\le d}
\]
gives the feature arrangement for degree-$d$ polynomial threshold functions,
which is the setting used below in Proposition~\ref{prop:ptf-average}.
The map $\phi_d$ satisfies the hypotheses of
Proposition~\ref{prop:feature-arrangements}, verified in the proof of
Proposition~\ref{prop:ptf-average}, so that proposition gives essential
specification for $\C_d$. The VC
estimate is then the theorem of Doliwa \emph{et al.}\ \cite{Doliwa2014}, and the
remaining numerical bounds parallel the threshold case with $N_d$ in place of
$n+1$.
\end{example}

\begin{remark}[Scope of the hypothesis]\label{rem:essential-hypothesis}
Essential specification (equivalently, shortest-path closure) is the standing
hypothesis of both bounds in this section. It is neither automatic nor confined
to linear separation. The monotone classes of Example~\ref{ex:monotone} satisfy
it without arising from linear separation, while the class of
Example~\ref{ex:non-spc} fails it, with $\sigma_\C(f)>|\Ess_\C(f)|$ there, so the
degree identity of Lemma~\ref{lem:degree} requires the hypothesis and is not a
general identity.
\end{remark}

\begin{example}[Monotone Boolean functions]\label{ex:monotone}
Let $P$ be a finite poset and let $\mathcal M(P)$ be the class of monotone
Boolean functions on $P$, those $f$ with $f(u)\le f(v)$ whenever $u\le v$. For
such an $f$, a point $x$ with $f(x)=1$ can have its value flipped to $0$ without destroying
monotonicity precisely when no point strictly below $x$ also takes the value
$1$; that is, when $x\in\operatorname{Min}(f^{-1}(1))$. Dually, a point $x$ with
$f(x)=0$ can have its value flipped to $1$ precisely when no point strictly above $x$ takes
the value $0$; that is, when $x\in\operatorname{Max}(f^{-1}(0))$. Hence
\[
        \Ess_{\mathcal M(P)}(f)
        =\operatorname{Min}(f^{-1}(1))\cup\operatorname{Max}(f^{-1}(0)),
\]
the minimal ones together with the maximal zeros. This boundary set specifies
$f$ inside $\mathcal M(P)$. For,  if $h\in\mathcal M(P)$ agrees with $f$ on it, then
any $p$ with $f(p)=1$ lies above some $m\in\operatorname{Min}(f^{-1}(1))$, so
$h(m)=1$ and monotonicity force $h(p)=1$, while any $p$ with $f(p)=0$ lies
below some $M\in\operatorname{Max}(f^{-1}(0))$, so $h(M)=0$ and monotonicity
force $h(p)=0$; thus $h=f$. The class $\mathcal M(P)$ is therefore essentially
specified, although it is defined by monotonicity rather than by linear
separation.
\end{example}

\begin{example}[A class that is not essentially specified]\label{ex:non-spc}
Write a function on $\Omega=\{a,b,c\}$ as the
string $f(a)f(b)f(c)$, and take
\[
        \C=\{000,001,010,101,110\}\subseteq\B^\Omega.
\]
For $f=101$, flipping $a$ gives $001\in\C$, while flipping $b$ gives
$111\notin\C$ and flipping $c$ gives $100\notin\C$, so $\Ess_\C(101)=\{a\}$. For
$f=110$, flipping $a$ gives $010\in\C$, while the other two flips leave $\C$, so
$\Ess_\C(110)=\{a\}$. The functions $101$ and $110$ thus have the same relative
essential set $\{a\}$ and agree there, both taking the value $1$ at $a$, yet
they are distinct members of $\C$. Neither is separated from the other by its
essential set, so neither essential set specifies its function. Indeed
$\sigma_\C(101)=2$ while $|\Ess_\C(101)|=1$, so the equality
$\sigma_\C(f)=|\Ess_\C(f)|$ underlying Lemma~\ref{lem:degree} fails here.
Equivalently, the class is not shortest-path closed: the Hamming distance between $101$ and $110$ is $2$, whereas the path $101,001,000,010,110$ has length $4$ and no shorter path lies in $G(\C)$.
Essential specification (equivalently, shortest-path closure) is therefore a genuine structural assumption, satisfied by threshold functions (Theorem~\ref{thm:facet}) and by the monotone classes above, but not by every finite class.
\end{example}

\subsection{The cardinality bound}

We use a standard weak edge-isoperimetric inequality for the cube, stated below
in internal-edge form. Write $\B^m$ for the $m$-dimensional hypercube, in which
every vertex has degree $m$. For $A\subseteq\B^m$, let $e(A)$ be the number of
edges with both endpoints in $A$, and let $\partial A$ be the set of edges with
exactly one endpoint in $A$. Counting in two ways the edges incident to $A$
gives
\[
        m|A|=2e(A)+|\partial A|.
\]
Through this identity, the internal-edge bound of Lemma~\ref{lem:isop} is
equivalent to the boundary form
\[
        |\partial A|\ge |A|\log_2(2^m/|A|).
\]
The boundary form is a standard consequence of the sharp edge-isoperimetric
theorem for the cube, established in separate papers by Harper
\cite{Harper1964}, Lindsey \cite{Lindsey1964}, Bernstein \cite{Bernstein1967},
and Hart \cite{Hart1976}, and recorded, for example, by Ellis
\cite{Ellis2011}. The proof below is the direct
induction proof, included for completeness and to keep the argument
self-contained.

\begin{lemma}[Hypercube edge-isoperimetric bound]\label{lem:isop}
For every $A\subseteq\B^m$,
\[
        e(A)\le \frac{1}{2}|A|\log_2 |A|,
\]
with the convention $0\log_2 0=0$.
\end{lemma}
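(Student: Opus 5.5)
We argue by induction on $m$. The case $m=0$ is immediate: $\B^0$ is a single vertex with no edges, so $e(A)=0$ and $\frac12|A|\log_2|A|=0$ whether $|A|$ is $0$ or $1$. The convention $0\log_20=0$ also handles $A=\varnothing$ in every dimension. So fix $m\ge1$, assume the bound in dimension $m-1$, and take a non-empty $A\subseteq\B^m$.

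Split the cube along the last coordinate into two copies of $\B^{m-1}$. Write $A_0$ and $A_1$ for the parts of $A$ with last coordinate $0$ and $1$, viewed as subsets of $\B^{m-1}$, and set $a=|A_0|$, $b=|A_1|$, labelling so that $a\ge b$. Each edge of $\B^m$ inside $A$ either stays in one copy, and so is an edge of $A_0$ or of $A_1$, or crosses between the copies. A crossing edge joins $(y,0)$ to $(y,1)$, and the two endpoints determine each other, so there are at most $\min(a,b)=b$ crossing edges inside $A$. Hence
\[
        e(A)\le e(A_0)+e(A_1)+b\le \tfrac12 a\log_2 a+\tfrac12 b\log_2 b+b,
\]
using the inductive hypothesis in the second step.

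It remains to prove the scalar inequality
\[
        a\log_2 a+b\log_2 b+2b\le (a+b)\log_2(a+b)\qquad(a\ge b\ge0).
\]
This is the main step. If $b=0$ it is an equality. Otherwise, put $t=b/a\in(0,1]$ and divide by $a$; the inequality becomes
\[
        g(t)=(1+t)\log_2(1+t)-t\log_2 t-2t\ge 0\quad\text{on }(0,1].
\]
Direct computation gives $g(1)=0$ and $g(t)\to0$ as $t\to0^+$. The derivative is $g'(t)=\log_2\bigl((1+t)/t\bigr)-2$, which is decreasing in $t$. It is positive for $t<1/3$ and negative for $t>1/3$, so $g$ first rises and then falls on $(0,1]$. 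A function with this shape is bounded below by its endpoint values, which are $0$, so $g\ge0$ on $(0,1]$. Combining this with the displayed bound on $e(A)$ completes the induction.
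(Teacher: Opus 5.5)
Your proof is correct and follows the paper's argument: the same induction on $m$, the same split along the last coordinate with at most $\min(a,b)$ crossing edges, and the same scalar inequality. The only difference is how that inequality is checked: you show $g(t)\ge0$ for $t=b/a$ from the sign of $g'$, while the paper writes it as $2p\le H_2(p)$ for $p=b/(a+b)$ and uses concavity of binary entropy; the two forms are related by $g(t)=(1+t)\bigl(H_2(p)-2p\bigr)$ with $p=t/(1+t)$.
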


\begin{proof}
We argue by induction on $m$. The case $m=0$ is immediate, since $|A|\le 1$ and there are no edges.

Assume $m\ge 1$, and write the last coordinate as the splitting coordinate. Set
\[
        A_0=\{x\in\B^{m-1}:(x,0)\in A\},
        \qquad
        A_1=\{x\in\B^{m-1}:(x,1)\in A\},
\]
and put $a_i=|A_i|$. Interchanging the labels $0$ and $1$ in the last coordinate if necessary, assume that $a_0\ge a_1$.

Every edge of the induced subgraph on $A$ is of exactly one of the following three types: an edge inside the $0$-slice, an edge inside the $1$-slice, or a vertical edge joining $(x,0)$ to $(x,1)$. Hence
\[
        e(A)=e(A_0)+e(A_1)+|A_0\cap A_1|.
\]
By the induction hypothesis, and since $|A_0\cap A_1|\le a_1$, this gives
\[
        e(A)
        \le
        \frac{1}{2}a_0\log_2 a_0
        +
        \frac{1}{2}a_1\log_2 a_1
        +
        a_1 .
\]

It remains to compare this with $\frac12(a_0+a_1)\log_2(a_0+a_1)$. If
$a_0+a_1=0$ there is nothing to prove. Otherwise set $M=a_0+a_1$ and
$p=a_1/M$, so $0\le p\le 1/2$ because $a_0\ge a_1$. With the binary entropy
function $H_2(p)=-(1-p)\log_2(1-p)-p\log_2 p$,
\[
        \tfrac12 M\log_2 M-\tfrac12 a_0\log_2 a_0-\tfrac12 a_1\log_2 a_1
        =\tfrac12 M H_2(p),
\]
so the inductive step reduces to $a_1\le\tfrac12 M H_2(p)$, that is, to
$2p\le H_2(p)$ on $[0,1/2]$. The function $H_2$ is concave with $H_2(0)=0$ and
$H_2(1/2)=1$, so its graph lies on or above the chord through those endpoints,
the line $p\mapsto 2p$; hence $2p\le H_2(p)$ for $p\in[0,1/2]$, which completes
the induction.
\end{proof}

\begin{theorem}\label{thm:average}
For any non-empty finite essentially specified (equivalently, shortest-path-closed) class $\C\subseteq\B^\Omega$,
\[
        \overline\sigma_\C\le \log_2|\C|.
\]
In particular, for the class of Boolean threshold functions,
\[
\overline\sigma_n \;\le\; \log_2 |\Hn|, \qquad \overline\sigma_n \;\le\; n^2 \quad (n \ge 2).
\]
\end{theorem}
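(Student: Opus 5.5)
The plan is to combine the degree identity of Lemma~\ref{lem:degree} with the edge-isoperimetric bound of Lemma~\ref{lem:isop}, applied to $A=\C$ viewed as a subset of the Hamming cube $\B^\Omega$ with $m=|\Omega|$. Since $\C$ is essentially specified, Lemma~\ref{lem:degree} gives $\sigma_\C(f)=\deg_{G(\C)}(f)$ for every $f\in\C$. Summing over $f$ and using the handshake identity for the induced subgraph $G(\C)$, we get
\[
        \sum_{f\in\C}\sigma_\C(f)=\sum_{f\in\C}\deg_{G(\C)}(f)=2e(\C),
\]
where $e(\C)$ is the number of cube edges with both endpoints in $\C$. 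This uses that $G(\C)$ is exactly the induced subgraph of $\B^\Omega$ on $\C$.

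Next, Lemma~\ref{lem:isop} gives $2e(\C)\le|\C|\log_2|\C|$. Dividing by $|\C|$, which is nonzero, yields $\overline\sigma_\C\le\log_2|\C|$.

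For threshold functions, take $\Omega=\B^n$ and $\C=\Hn$. Theorem~\ref{thm:facet} shows that $\Ess(f)$ specifies $f$ for each $f\in\Hn$. Moreover, $\Ess(f)$ coincides with $\Ess_{\Hn}(f)$ by definition, and $\sigma_{\Hn}=\sigma_n$. Hence $\Hn$ is essentially specified, and the first inequality $\overline\sigma_n\le\log_2|\Hn|$ follows.

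For the $n^2$ bound, I need a count $|\Hn|\le 2^{n^2}$ for $n\ge2$. The natural route is the chamber count of the central arrangement $\Arr_n$ of $2^n$ hyperplanes in $\R^{n+1}$. Any central arrangement of $N$ hyperplanes in $\R^d$ has at most $2\sum_{i=0}^{d-1}\binom{N-1}{i}$ chambers; this is Cover's function-counting bound, or follows from Zaslavsky's theorem. With $N=2^n$ and $d=n+1$ this gives
\[
        |\Hn|\le 2\sum_{i=0}^{n}\binom{2^n-1}{i}.
\]
I then bound the sum crudely by $2\cdot(n+1)\,(2^n)^n/n!$ or a similar expression, and check that it is at most $2^{n^2}$ for $n\ge2$. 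At $n=2$ the count is $14\le16$ directly. For $n\ge3$, the bound $\sum_{i\le n}\binom{N-1}{i}\le N^n$ holds, and a slightly sharper estimate absorbs the leading factor $2$. For example, use $\sum_{i\le n}\binom{N-1}{i}\le 2(N-1)^n/n!$ for $N-1\ge 2n$, or simply $\binom{N-1}{i}\le (N-1)^i/i!$ summed, which is at most $e\,(N-1)^n/1$ but with room to spare because $(N-1)^n<2^{n^2}/4$.

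The main obstacle is this last, purely arithmetic step: getting a clean inequality $|\Hn|\le2^{n^2}$ valid for all $n\ge2$ without fussing over the factor $2$ and the lower-order terms. Everything else is immediate from the two lemmas.
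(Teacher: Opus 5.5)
Your proof matches the paper's: Lemma~\ref{lem:degree} plus the handshake count gives $\sum_f\sigma_\C(f)=2e(\C)$, and Lemma~\ref{lem:isop} bounds this by $|\C|\log_2|\C|$. Theorem~\ref{thm:facet} makes $\Hn$ essentially specified, and Cover's count $|\Hn|\le 2\sum_{j=0}^{n}\binom{2^n-1}{j}$ gives the $n^2$ bound; the paper simply asserts this is at most $2^{n^2}$, with no further arithmetic. For that last step, keep your first estimate, $\sum_{i\le n}\binom{M}{i}\le 2M^n/n!$ for $M=2^n-1\ge 2n$, which gives $|\Hn|\le 4\cdot 2^{n^2}/n!\le 2^{n^2}$ for $n\ge 3$. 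Discard the alternative: $(2^n-1)^n=2^{n^2}(1-2^{-n})^n\ge 2^{n^2}/2$ for $n\ge2$, so it is never below $2^{n^2}/4$, and that bound would not suffice anyway since $2e>4$.
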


\begin{proof}
By Lemma \ref{lem:degree},
\[
        \sum_{f\in\C}\sigma_\C(f)
        =
        \sum_{f\in\C}\deg_{G(\C)}(f)
        =
        2|E(G(\C))|.
\]
Since $G(\C)$ is the induced subgraph of the Hamming cube $\B^\Omega$ on the vertex set $\C$, Lemma \ref{lem:isop}, applied with $A=\C$, gives
\[
        |E(G(\C))|\le \frac12|\C|\log_2|\C|.
\]
Therefore
\[
        \sum_{f\in\C}\sigma_\C(f)\le |\C|\log_2|\C|,
\]
and division by $|\C|$ gives
\[
        \overline\sigma_\C\le \log_2|\C|.
\]
The threshold class is essentially specified by Theorem \ref{thm:facet}, so the
first threshold bound follows. For the second, Cover's general-position count
\cite{Cover1965} gives
\[
        |\Hn|
        \le 2\sum_{j=0}^{n}\binom{2^n-1}{j}
        \le 2^{n^2}
        \qquad (n\ge2),
\]
hence $\log_2|\Hn|\le n^2$. Remark~\ref{rem:Hn-asymptotics} records sharper
estimates.
\end{proof}

\subsection{The Vapnik--Chervonenkis dimension bound}

The logarithmic bound is cardinality-based and holds for every essentially
specified class. When the class has small Vapnik--Chervonenkis dimension, the
same degree identity yields a stronger bound. A set $T\subseteq \Omega$ is
\emph{shattered} by $\C\subseteq\B^\Omega$ if every Boolean function on $T$ extends
to a member of $\C$, that is, $\{f|_T:f\in\C\}$ is all of $\B^T$. The
\emph{Vapnik--Chervonenkis dimension} $\mathrm{VCD}(\C)$ is the size of the
largest shattered set. For the threshold class, $\mathrm{VCD}(\Hn)=n+1$. The
points $0,e_1,\ldots,e_n$ are affinely independent, hence shattered by
halfspaces, so $\mathrm{VCD}(\Hn)\ge n+1$, while every threshold function is the
restriction to $\B^n$ of an affine halfspace in $\R^n$, a class of
Vapnik--Chervonenkis dimension $n+1$, so no $n+2$ points are shattered and therefore
$\mathrm{VCD}(\Hn)\le n+1$.

\begin{theorem}[Doliwa--Fan--Simon--Zilles bound]\label{thm:vc-average}
Let $\C\subseteq\B^\Omega$ be a finite essentially specified (equivalently,
shortest-path-closed) class with $|\C|>1$, of Vapnik--Chervonenkis dimension
$d=\mathrm{VCD}(\C)$.  Then
\[
        \overline\sigma_\C<2d.
\]
In particular, for the class of Boolean threshold functions on $\B^n$,
\[
        \overline\sigma_n< 2(n+1).
\]
\end{theorem}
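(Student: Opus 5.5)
By Lemma~\ref{lem:degree}, $\sum_{f\in\C}\sigma_\C(f)=2|E(G(\C))|$. It therefore suffices to prove the classical one-inclusion-graph density bound of Haussler, Littlestone and Warmuth \cite{HLW1994}: for every finite $\C\subseteq\B^\Omega$ of VC dimension $d$,
\[
        |E(G(\C))|\le d\,|\C|.
\]
This gives $\overline\sigma_\C\le 2d$. The strict inequality then needs a small refinement, which I deal with at the end. The threshold case follows at once, since $\Hn$ is essentially specified by Theorem~\ref{thm:facet} and $\mathrm{VCD}(\Hn)=n+1$.

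\textbf{The density bound.} I would prove it by induction on $|\Omega|$ through the standard restriction/reduction split at a point $x\in\Omega$. The base case $|\Omega|=0$ is trivial. Let $\C|_{\Omega\setminus\{x\}}$ be the restriction, and let $\C^x$ be the set of restricted functions $h$ both of whose extensions $h0,h1$ lie in $\C$. Then three facts hold.
\begin{itemize}
\item $|\C|=|\C|_{\Omega\setminus\{x\}}|+|\C^x|$.
\item The edges of $G(\C)$ in direction $x$ number exactly $|\C^x|$.
\item The edges in the other directions number at most $|E(G(\C|_{\Omega\setminus\{x\}}))|+|E(G(\C^x))|$. An edge $\{h,h'\}$ of the restriction lifts to two edges only when both $h,h'\in\C^x$.
\end{itemize}
Next, $\mathrm{VCD}(\C|_{\Omega\setminus\{x\}})\le d$, and $\mathrm{VCD}(\C^x)\le d-1$, because a set shattered by $\C^x$ together with $x$ is shattered by $\C$. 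Applying induction gives
\[
        |E|\le d|\C|_{\Omega\setminus\{x\}}|+(d-1)|\C^x|+|\C^x|=d|\C|.
\]
The case $d=0$ has $|\C|=1$ and needs separate mention. Note that this step does not use the essential-specification hypothesis at all; that hypothesis enters only through the degree identity.

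\textbf{The strict inequality.} This is the main obstacle, since the recursion naturally gives only $\le$. My first attempt would be to prove the sharper inequality $|E(G(\C))|\le d|\C|-\binom{d+1}{2}+\dots$. A simpler route is to show $|E|\le d(|\C|-1)$ for $|\C|\ge1$, which gives $\overline\sigma_\C\le 2d(1-1/|\C|)<2d$. I would rerun the same induction with the hypothesis $|E(G(\C))|\le d(|\C|-1)$ for nonempty $\C$. The accounting becomes
\[
        d(|\C|_{\Omega\setminus\{x\}}|-1)+(d-1)(|\C^x|-1)+|\C^x|=d(|\C|-1)-(d-1),
\]
when $\C^x\neq\varnothing$. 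When $\C^x=\varnothing$ there are no $x$-edges, and the bound is immediate from the restriction. The case $d=0$ is again trivial.

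I therefore expect the strict version to go through with this bookkeeping. It covers the degenerate subcases where $\C^x$ is empty or has VC dimension $0$. The only care needed is to state the induction for nonempty classes with the convention that the bound holds for singletons (zero edges). For $\C=\Hn$ with $|\Hn|>1$, this gives $\overline\sigma_n<2(n+1)$.
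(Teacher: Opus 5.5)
Your proof is correct, and it shares the paper's reduction: the degree identity of Lemma~\ref{lem:degree} turns $\overline\sigma_\C$ into $2|E(G(\C))|/|\C|$. The difference is in the next step. The paper cites the Haussler--Littlestone--Warmuth density bound, in the strict form used by Doliwa \emph{et al.}\ \cite[Theorem~31]{Doliwa2014}. You instead prove that bound.

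Your restriction/reduction split at a point $x$ gives $|E(G(\C))|\le d|\C|$. Rerunning the induction with the hypothesis $|E(G(\mathcal{D}))|\le\mathrm{VCD}(\mathcal{D})\,(|\mathcal{D}|-1)$ for nonempty $\mathcal{D}$ gives the refinement $|E(G(\C))|\le d(|\C|-1)$. The bookkeeping checks out on each point:
\begin{itemize}
\item An edge of $G(\C|_{\Omega\setminus\{x\}})$ lifts to two edges only when both endpoints lie in $\C^x$.
\item When $\C^x=\varnothing$ the fibres have size at most one and $|\C|=|\C|_{\Omega\setminus\{x\}}|$.
\item When $\C^x\neq\varnothing$ you have $d\ge1$, so the surplus $-(d-1)$ is non-positive.
\end{itemize}

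Your route buys self-containment and the explicit quantitative form $\overline\sigma_\C\le 2d(1-1/|\C|)$. It also shows transparently that essential specification enters only through the degree identity, since the density bound holds for every finite class. The paper's route buys brevity and keeps the attribution with the original authors.

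One point to make explicit: the final step $2d(1-1/|\C|)<2d$ needs $d\ge1$. This follows from $|\C|>1$, because two distinct members differ at some point and that singleton is then shattered. State this rather than treating $d=0$ as a separate trivial case, which the hypothesis already excludes. The aside about a sharper bound of the form $d|\C|-\binom{d+1}{2}+\cdots$ is not used and can be dropped.
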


\begin{proof}
By Proposition~\ref{prop:shortest-path}, $\C$ is shortest-path closed, so the strict inequality is Theorem~31 of Doliwa, Fan, Simon, and Zilles \cite{Doliwa2014}. We record their mechanism in the present notation.
By Lemma \ref{lem:degree}, $\sigma_\C(f)=\deg_{G(\C)}(f)$ for every $f\in\C$,
so
\[
        \overline\sigma_\C
        =\frac{1}{|\C|}\sum_{f\in\C}\deg_{G(\C)}(f)
        =\frac{2|E(G(\C))|}{|\C|}.
\]
The one-inclusion graph $G(\C)$ is the subgraph of the Hamming cube $\B^\Omega$
induced on $\C$, and Haussler, Littlestone, and Warmuth proved that the density
of this graph is bounded by the Vapnik--Chervonenkis dimension:
\[
        \frac{|E(G(\C))|}{|\C|}< \mathrm{VCD}(\C)
\]
 in the form used by Doliwa \emph{et al.}\ \cite[Theorem~31]{Doliwa2014}; see also \cite[Lemma~2.4]{HLW1994} and Haussler \cite{Haussler1995}. Hence $\overline\sigma_\C<2\,\mathrm{VCD}(\C)$. The threshold class is essentially
specified by Theorem \ref{thm:facet}, and $\mathrm{VCD}(\Hn)=n+1$ as recorded
above, so $\overline\sigma_n< 2(n+1)$.
\end{proof}

\begin{remark}
Theorem~\ref{thm:vc-average} gives $\overline\sigma_n<2(n+1)$ from the combinatorics of the one-inclusion graph alone. A sharper arrangement-theoretic route gives the bound $\overline\sigma_n\le 2n$: a theorem of Fukuda, Tamura, and Tokuyama bounds the average facet count of a spherical arrangement, and the threshold average $\overline\sigma_n$ can be read as such an average on the sphere. We record this next.
\end{remark}

\begin{theorem}[Fukuda--Tamura--Tokuyama bound]\label{thm:ftt-average}
For every $n\ge1$,
\[
        \overline\sigma_n\le 2n.
\]
\end{theorem}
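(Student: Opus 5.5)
The plan is to read $\overline\sigma_n$ as an average facet count of the cells of a spherical arrangement, and then apply the Fukuda--Tamura--Tokuyama bound. That bound says that in an arrangement of great $(k-1)$-spheres on $S^k$ whose hyperplanes meet only at the origin, the average number of facets per cell is at most $2k$. Here $k=n$.

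First I intersect $\Arr_n$ with the unit sphere $S^n\subseteq A=\R^{n+1}$. Each chamber $C$ is an open cone, so $C\cap S^n$ is exactly one cell of the induced spherical arrangement $\{H_x\cap S^n\}$, and distinct chambers give distinct cells. A facet $\overline C\cap H_x$ of the pointed full-dimensional cone $\overline C$ is an $n$-dimensional cone. It therefore meets $S^n$ in an $(n-1)$-dimensional face of the closed spherical cell, and conversely. So, by Theorem~\ref{thm:facet}, $\sigma_n(f)$ equals the number of facets of the spherical cell of $C_f$. The hypotheses of the spherical theorem also hold:
\begin{itemize}
\item the hyperplanes are distinct by Lemma~\ref{lem:Hx-injective};
\item their common intersection is $\{0\}$, because the lifted points $\widetilde x$ span $\R^{n+1}$, as computed in Section~\ref{sec:setup}.
\end{itemize}
Averaging over the $|\Hn|=r(\Arr_n)$ cells then gives $\overline\sigma_n\le 2n$.

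The main obstacle is making sure the cited theorem really is available in this central (spherical) form with constant $2k$ on $S^k$, and without any genericity assumption, since the Boolean cube is far from general position. If only the affine version is at hand, I would reprove the needed count directly by deletion--restriction. The steps are as follows.
\begin{itemize}
\item Each facet-incidence $(C,x)$ comes from a chamber of the restriction $\Arr_n^{H_x}$, namely the arrangement induced on $H_x$.
\item Each chamber of $\Arr_n^{H_x}$ borders exactly two chambers of $\Arr_n$, by the one-sign crossing Lemma~\ref{lem:one-sign-crossing}.
\item Hence $\sum_f\sigma_n(f)=2\sum_x r(\Arr_n^{H_x})$.
\end{itemize}
It then suffices to show that $\sum_{H\in\Arr}r(\Arr^H)\le k\,r(\Arr)$ for an essential central arrangement in $\R^{k+1}$. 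I would prove this by induction on the number of hyperplanes and the dimension, using $r(\Arr)=r(\Arr\setminus H)+r(\Arr^H)$. This is the Fukuda--Tamura--Tokuyama mechanism, and it uses no general-position hypothesis. The base case is $k=1$: planar central arrangements, where every chamber has exactly $2$ facets, so the bound $2k=2$ holds with equality. Deleting a hyperplane may make the arrangement inessential; I would handle this by quotienting out the lineality space, which can only lower the effective dimension and so only helps the bound.

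Finally, I would sanity-check the result against $n=1$, where $\overline\sigma_1=2$, and $n=2$, where $\overline\sigma_2=24/7\le 4$.
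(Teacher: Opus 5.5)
Your main route is the paper's own. You intersect $\Arr_n$ with $S^n$, match chambers and their facets with spherical cells and their $(n-1)$-faces, and apply the Fukuda--Tamura--Tokuyama bound with face dimension $n$ and subface dimension $n-1$ to get $2\binom{n}{n-1}=2n$.

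Your deletion--restriction fallback also goes through and would make the bound self-contained rather than cited. Write $F(\Arr)=\sum_{H\in\Arr}r(\Arr^H)$. Deletion--restriction inside each $H\neq H_0$ gives $F(\Arr)\le F(\Arr\setminus H_0)+r(\Arr^{H_0})+F(\Arr^{H_0})$. Induction with base case $k=1$ then yields $F(\Arr)\le k\,r(\Arr)$.
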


\begin{proof}
Fukuda, Tamura, and Tokuyama proved that, in a $d$-dimensional spherical arrangement, the average number of $j$-subfaces of a $k$-face is at most $2^{k-j}\binom{k}{j}$ \cite[Corollary~1.3]{Fukuda1993}; equivalently this is their oriented-matroid theorem \cite[Theorem~1.2]{Fukuda1993}. The bound assumes neither simplicity nor genericity of the arrangement.

Intersect the central arrangement $\Arr_n$ with the unit sphere $S^n\subseteq\R^{n+1}$. Each chamber of $\Arr_n$ is an open cone, so it meets $S^n$ in a single $n$-face of the resulting spherical arrangement of great $(n-1)$-spheres, and this radial section preserves the number of facets. Pointedness justifies this step: since the normals of $\Arr_n$ span $\R^{n+1}$, each closed chamber is a pointed full-dimensional cone, which meets $S^n$ in a single spherically convex $n$-face whose $(n-1)$-faces are in bijection with the facets of the cone. The chambers therefore correspond bijectively with the $n$-faces, and the average number of facets over the $n$-faces is exactly $\overline\sigma_n$. Applying the bound on $S^n$, which has dimension $n$, with $d=k=n$ and $j=n-1$ gives
\[
        \overline\sigma_n\le 2^{\,n-(n-1)}\binom{n}{n-1}=2n .
\]
\end{proof}

\begin{remark}[The two bounds compared]\label{rem:two-bounds}
The cardinality bound of Theorem~\ref{thm:average} and the VC bound of
Theorem~\ref{thm:vc-average} are related by a two-sided inequality on any finite
domain. In one direction $\mathrm{VCD}(\C)\le\log_2|\C|$ for every finite class,
since a shattered set of size $d$ forces $|\C|\ge 2^d$. In the other direction,
the Sauer--Shelah lemma \cite{Sauer1972,Shelah1972} says that a class of
Boolean functions on an $m$-point set with Vapnik--Chervonenkis dimension $d$
realizes at most
\[
        \sum_{i=0}^d\binom{m}{i}
\]
distinct labellings of that set. For a learning-theory account, see also
\cite[Chapter~3]{AnthonyBartlett1999}. Applying this with the $m$-point set
equal to the whole domain $\Omega$, where $m=|\Omega|$, gives
\[
        |\C|\le \sum_{i=0}^d\binom{m}{i}.
\]
For $d\ge1$, the standard estimate
$\sum_{i=0}^d\binom{m}{i}\le(em/d)^d$, with $d=\mathrm{VCD}(\C)$, gives
\[
        \log_2|\C|\le d\log_2\frac{em}{d}\le \bigl(\log_2|\Omega|+2\bigr)\,d .
\]
When $d=0$ the non-empty class $\C$ is a single function, so
$\log_2|\C|=0=\mathrm{VCD}(\C)$ and the comparison below holds trivially. Hence,
for $d\ge1$,
\[
        \mathrm{VCD}(\C)\le\log_2|\C|\le\bigl(\log_2|\Omega|+2\bigr)\,\mathrm{VCD}(\C),
\]
and the two upper bounds on $\overline\sigma_\C$ are equivalent up to a factor of
order $\log_2|\Omega|$. Within that range neither dominates. The cardinality bound is
the tighter when $\log_2|\C|$ is close to $\mathrm{VCD}(\C)$. For example, for the full cube
$\C=\B^\Omega$ with $d=|\Omega|$ every coordinate is essential and
$\overline\sigma_\C=d=\log_2|\C|$, against a VC bound of $2d$. The monotone
classes of Example~\ref{ex:monotone} are of the
same type, since on $\{0,1\}^k$ they shatter exactly the antichains, giving
$\mathrm{VCD}=\binom{k}{\lfloor k/2\rfloor}$ by Sperner's theorem \cite{Sperner1928}. The
cardinality of this monotone class is the $k$th Dedekind number, whose base-two
logarithm is asymptotic to $\binom{k}{\lfloor k/2\rfloor}$ \cite{Korshunov2003}. The VC bound is the
tighter when $\log_2|\C|$ is far above $\mathrm{VCD}(\C)$. For example, for the threshold
class, 
$\log_2|\Hn|=n^2(1-o(1))$ \cite{Zuev1989} against
$\mathrm{VCD}(\Hn)=n+1$, a ratio of order $n$, the largest the inequality allows
on the domain $\Omega=\B^n$ (where $\log_2 |\Omega|=n$). Combined with the universal lower bound $\sigma_n(f)\ge n+1$
(Corollary~\ref{cor:lower}),
the VC bound fixes the order of growth of $\overline\sigma_n$. We come back to this 
in Section~\ref{sec:growth}.
\end{remark}

\begin{remark}[Sharper estimates on $|\Hn|$]\label{rem:Hn-asymptotics}
The bound $|\Hn|\le 2^{n^2}$ used in the proof admits successively sharper
refinements. 
Zuev \cite{Zuev1989} proved the leading-order estimate
$\log_2 |\Hn| = n^2(1-o(1))$. Kahn, Komlós, and Szemerédi \cite[Section~4]{KahnKomlosSzemeredi1995} sharpened this to
\[
        \log_2 |\Hn| = n^2 - n\log_2 n + O(n).
\]
\end{remark}
\begin{remark}[Comparison with arbitrary finite classes]
The gain from the universal finite-class bound to Theorem~\ref{thm:average} is
structural, not numerical. For an arbitrary finite class $\C$, Bondy's
separating-set lemma with the argument of Kushilevitz, Linial, Rabinovich, and
Saks gives $\overline\sigma_\C\le 2\sqrt{|\C|}$, and the same authors construct
classes with average witness size $\Omega(\sqrt{|\C|})$ from finite projective
planes \cite{Bondy1972,KushilevitzLinialRabinovichSaks1996}, so no
cardinality-only bound improves the order $\sqrt{|\C|}$. For $\C=\Hn$, Zuev's
estimate $\log_2|\Hn|=n^2(1-o(1))$ \cite{Zuev1989} makes this universal bound
$2\sqrt{|\Hn|}=2^{n^2/2+o(n^2)}$, against $\overline\sigma_n\le n^2(1+o(1))$ from
Theorem~\ref{thm:average}, so essential specification, not a sharper count,
accounts for the separation. This does not assert that $n^2$ is sharp for $\Hn$,
since the facet and Vapnik--Chervonenkis bounds give $\overline\sigma_n=\Theta(n)$.
\end{remark}

\begin{remark}[Comparison with generic arrangements]
Cover's general-position calculation gives a second comparison. For $N$ points in general
position and homogeneous halfspaces in $d$ parameters, Cover
\cite[Eq.~(39)]{Cover1965} gives the expected number of extreme points of a
uniformly random separable dichotomy as
\[
        \frac{2N\,C(N-1,d-1)}{C(N,d)},
        \qquad
        C(N,d)=2\sum_{j=0}^{d-1}\binom{N-1}{j},
\]
which is the average facet count of a chamber in a generic central arrangement of
$N$ hyperplanes in $\R^d$, one in which every $d$ of the normals are linearly
independent. For fixed $d\ge2$ this ratio tends to $2(d-1)$ as $N\to\infty$, so
after the affine lift $x\mapsto(1,x)$, where $d=n+1$, the generic fixed-$n$ limit
is $2n$. The Boolean threshold arrangement is not generic, since its normals, the
lifted cube points $(1,x)$, satisfy many linear dependencies, so Cover's value is
a comparison rather than a computation of $\overline\sigma_n$. The exact formula
of Corollary~\ref{cor:exact} replaces the generic restriction count $C(N-1,d-1)$
by the chamber count of the resonance arrangement.
\end{remark}

\subsection{Polynomial threshold functions}

We now consider polynomial threshold functions as the main example of the
feature-arrangement framework of Proposition~\ref{prop:feature-arrangements}.
This lets us apply the same chamber-facet mechanism to a richer feature arrangement and show that the lower bound, simpliciality criterion, and dimension-sensitive
average bound are not specific to the linear setting. Let
\[
        \mathcal M_{n,d}=\{S\subseteq[n]: |S|\le d\},
        \qquad
        N_d=|\mathcal M_{n,d}|=\sum_{i=0}^d\binom{n}{i}.
\]
A \emph{degree-$d$ polynomial threshold function} on $\B^n$ (more precisely, a polynomial threshold function of degree at most $d$) is a Boolean function $f:\B^n\to\B$ for which there are real coefficients $a_S$, $S\in\mathcal M_{n,d}$, such that
\[
        f(x)=
        \sgn\left(
        \sum_{S\in\mathcal M_{n,d}}
        a_S\prod_{i\in S}x_i
        \right)
        \qquad (x\in\B^n),
\]
with the representing polynomial non-zero at every point of $\B^n$. Let $\C_d$
denote the class of such functions. The degree-$d$ monomial-feature arrangement
below has chambers naturally labelled by members of $\C_d$, and this labelling
gives a bijection between its chambers and $\C_d$. The non-vanishing condition
places the coefficient vector strictly inside a chamber, off every one of the
arrangement's hyperplanes, which is what makes the labelling a bijection.

\begin{proposition}[Polynomial threshold functions]\label{prop:ptf-average}
Let $1\le d\le n$, and let $\C_d$ be the class of degree-$d$ polynomial threshold
functions on $\B^n$, in the sense above, with specification number taken inside
$\C_d$. Then every $f\in\C_d$ satisfies
\[
        \sigma_{\C_d}(f)\ge N_d,
\]
with equality precisely when the corresponding chamber in the degree-$d$
monomial-feature arrangement is simplicial, and
\[
        N_d\le\overline\sigma_{\C_d}\le 2N_d-2 .
\]
\end{proposition}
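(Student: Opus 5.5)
The plan is to apply Proposition~\ref{prop:feature-arrangements} with $\Omega=\B^n$, $V=\R^{N_d}$ and $\phi=\phi_d$, and then repeat the arguments of Corollaries~\ref{cor:lower} and~\ref{cor:simplicial} and Theorem~\ref{thm:ftt-average} with $N_d$ in place of $n+1$. Two hypotheses need checking.

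First, spanning. I will use the Möbius (inclusion--exclusion) basis. The $2^n$ monomial functions $x\mapsto\prod_{i\in S}x_i$ on $\B^n$ are linearly independent: their evaluation matrix on the points $\mathbf 1_T$ is the zeta matrix of the Boolean lattice, which is unitriangular. Hence the $N_d\times 2^n$ matrix whose columns are $\phi_d(x)$ has full row rank $N_d$, and so $\{\phi_d(x)\}$ spans $\R^{N_d}$. Also $\phi_d(x)\ne0$, because its $S=\varnothing$ coordinate is $1$.

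Second, injectivity of $x\mapsto H^{\phi_d}_x$. Since every $\phi_d(x)$ has first coordinate $1$, two such vectors are parallel only if they are equal, exactly as in Lemma~\ref{lem:Hx-injective}. Equality is ruled out because the degree-one coordinates of $\phi_d(x)$ recover $x$ itself, using $d\ge1$.

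With both hypotheses in place, I identify $\C_d$ with the chamber class $\C_{\phi_d}$. The non-vanishing condition puts each coefficient vector off every hyperplane, and conversely every chamber point gives such a polynomial. Proposition~\ref{prop:feature-arrangements} then says that $\C_d$ is essentially specified and that $\sigma_{\C_d}(f)$ equals the number of facets of $\overline{C_f}$. This is the point where the lack of general position matters: the facet bijection is supplied by Lemmas~\ref{lem:facet-relint} and~\ref{lem:facet-signs}, neither of which needs a general-position count.

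For the lower bound, the closure is a full-dimensional pointed cone in $\R^{N_d}$, pointed because the $\phi_d(x)$ span. Lemma~\ref{lem:facet-normals-span} therefore gives at least $N_d$ facets, and equality holds exactly for simplicial cones by definition. Averaging this pointwise bound gives $\overline\sigma_{\C_d}\ge N_d$.

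For the upper bound, I will reproduce the proof of Theorem~\ref{thm:ftt-average}. Intersecting the central arrangement with $S^{N_d-1}$ turns chambers into $(N_d-1)$-faces and facets into $(N_d-2)$-subfaces, bijectively, again using pointedness. The Fukuda--Tamura--Tokuyama bound, applied with $k=N_d-1$ and $j=N_d-2$, then gives an average of at most $2\binom{N_d-1}{N_d-2}=2(N_d-1)=2N_d-2$.

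I expect the main obstacle to be the bookkeeping in this step. I need to confirm that the FTT bound needs no genericity, which the proof of Theorem~\ref{thm:ftt-average} already notes, and that the spherical dimension is $N_d-1$, in line with the threshold case where $N_1=n+1$ gives $2n$. The spanning check is routine once the monomial basis argument is written out.
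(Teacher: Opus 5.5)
Your proposal is correct and follows the paper's proof essentially step for step. It uses the same spanning and injectivity checks for $\phi_d$ (your unitriangular zeta-matrix argument is the paper's hint of evaluating at the indicator vectors $\mathbf 1_T$ and inducting on $|T|$), the same appeal to Proposition~\ref{prop:feature-arrangements} and Lemma~\ref{lem:facet-normals-span}, and the same Fukuda--Tamura--Tokuyama bound on $S^{N_d-1}$ with $k=N_d-1$, $j=N_d-2$. The only differences are that the paper first records the weaker Vapnik--Chervonenkis bound $\overline\sigma_{\C_d}<2N_d$ before sharpening it, and that it cites Lemma~\ref{lem:degree} for $\sigma_{\C_d}(f)=|\Ess_{\C_d}(f)|$; you should make that step explicit, since Proposition~\ref{prop:feature-arrangements} alone only says that $\Ess_{\C_d}(f)$ specifies $f$ and indexes the facets.
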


\begin{proof}
Let $V_d=\R^{\mathcal M_{n,d}}$, the coordinate space indexed by the subsets
$S\subseteq[n]$ with $|S|\le d$; its dimension is $N_d$. A coefficient vector
$a=(a_S)_{S\in\mathcal M_{n,d}}\in V_d$ gives the polynomial
\[
        p_a(x)=\sum_{S\in\mathcal M_{n,d}}a_Sx_S,
        \qquad
        x_S=\prod_{i\in S}x_i \quad (x_\varnothing=1),
\]
and the monomial-feature vector is
$\phi_d(x)=(x_S)_{S\in\mathcal M_{n,d}}\in V_d$, so that
$p_a(x)=\langle a,\phi_d(x)\rangle$ in the standard inner product. The central
arrangement is
\[
        \Arr_{n,d}
        =
        \{H_x^{(d)}:x\in\B^n\},
        \qquad
        H_x^{(d)}
        =
        \{a\in V_d: p_a(x)=0\}=\phi_d(x)^\perp,
\]
and its chambers are the functions in $\C_d$.

We verify the hypotheses of Proposition \ref{prop:feature-arrangements}. First,
$\phi_d(\B^n)$ spans $V_d$; equivalently, the monomial functions $x_S$ with
$|S|\le d$ are linearly independent on $\B^n$. This is \cite[Proposition~8]{Anthony1995}. (Evaluate $\sum_{|S|\le d}c_Sx_S=0$ at the indicator vectors
$1_T$ and induct on $|T|$.) 

Second, the map $x\mapsto H_x^{(d)}$ is injective. If $H_x^{(d)}=H_y^{(d)}$, then $\phi_d(x)$ and $\phi_d(y)$ are proportional. Their $\varnothing$-coordinates are both $1$, so the proportionality scalar is $1$. Since $d\ge 1$, the singleton coordinates then give $x_i=y_i$ for all $i$, and hence $x=y$.

Proposition \ref{prop:feature-arrangements} therefore applies: $\C_d$ is
essentially specified (equivalently, shortest-path closed) and the map $x\mapsto\overline{C_f}\cap H_x^{(d)}$
identifies $\Ess_{\C_d}(f)$ with the facets of $\overline{C_f}$. Since $\C_d$ is
essentially specified, Lemma~\ref{lem:degree} gives
$\sigma_{\C_d}(f)=\deg_{G(\C_d)}(f)=|\Ess_{\C_d}(f)|$, and therefore
\[
        \sigma_{\C_d}(f)
        =
        \#\{\text{facets of }\overline{C_f}\}.
\]

Since $\overline{C_f}$ is a full-dimensional pointed polyhedral cone in the $N_d$-dimensional space $V_d$, it has at least $N_d$ facets. Therefore
\[
        \sigma_{\C_d}(f)\ge N_d.
\]
Equality holds exactly when $\overline{C_f}$ has exactly $N_d$ facets, which is equivalent to $\overline{C_f}$ being simplicial.

For the average, each $f\in\C_d$ has the form $x\mapsto\sgn\langle a,\phi_d(x)\rangle$
for some $a\in V_d=\R^{N_d}$, so $\C_d$ is contained in the class of homogeneous
halfspaces on $V_d$ restricted to $\phi_d(\B^n)$. That class has
Vapnik--Chervonenkis dimension $N_d$, and restriction to a subset of the domain
does not increase it, so $\mathrm{VCD}(\C_d)\le N_d$. 
 Since $\C_d$ is essentially specified, Theorem~\ref{thm:vc-average} gives $\overline\sigma_{\C_d}< 2\,\mathrm{VCD}(\C_d)\le 2N_d$, and the cone lower bound $\sigma_{\C_d}(f)\ge N_d$ gives $\overline\sigma_{\C_d}\ge N_d$, so $N_d\le\overline\sigma_{\C_d}< 2N_d$. The sphere-arrangement argument proving Theorem~\ref{thm:ftt-average}, applied to $\Arr_{n,d}$ on $S^{N_d-1}$, shows that $\overline\sigma_{\C_d}$ is the average facet count of the cells there, and the theorem of Fukuda, Tamura, and Tokuyama \cite[Corollary~1.3]{Fukuda1993} with $d=N_d-1$ sharpens this to $N_d\le\overline\sigma_{\C_d}\le 2N_d-2$.
\end{proof}

\begin{remark}
The cardinality bound $\overline\sigma_{\C_d}\le\log_2|\C_d|$ of
Theorem~\ref{thm:average} also applies. The two bounds behave differently at
opposite ends of the degree range. For fixed $d$, Baldi and Vershynin's estimate
$\log_2|\C_d|=(1-o(1))\,nN_d$ \cite{BaldiVershynin2019} makes the cardinality
bound the weaker, giving only $O(nN_d)$ against the Vapnik--Chervonenkis bound
$O(N_d)$. At the other extreme, when $d=n$, the class $\C_n$ is all of
$\B^{\B^n}$ and $\log_2|\C_n|=2^n=N_n$, so the cardinality bound gives the exact
value $N_n$, while the Vapnik--Chervonenkis bound gives $2N_n$; this is the
full-cube case of Remark~\ref{rem:two-bounds}. We do not attempt here to locate
the crossover for intermediate degrees.
\end{remark}

In the next section we develop the exact facet-count approach, which specialises in the linear case to the resonance arrangement.

\section{Exact formulas via resonance arrangements}\label{sec:exact}

In this section, we explore how the chamber-facet identification gives an exact formula for
$\sum_{f\in\Hn}\sigma_n(f)$ (and hence for the average specification number) through 
 chamber counts of restricted arrangements.

\subsection{The chamber--facet incidence formula}

We first record the standard arrangement-theoretic incidence formula that
underlies this computation. This is the codimension-one incidence form of
Zaslavsky's face-count formula \cite{Zaslavsky1975}; see also Stanley's
formulation of face counts for restricted arrangements
\cite[Lecture 2, Definition 2.4 and the subsequent face-count formula]{Stanley2007}.
We include the proof to fix conventions and to make explicit why restrictions
of the arrangement appear.

Let $\Arr$ be a finite central hyperplane arrangement in $\R^d$. We say that
$\Arr$ is \emph{full rank} if its hyperplanes meet only at the origin; in other words, 
$\bigcap_{H\in\Arr}H=\{0\}$.  This means, equivalently, that the normals to the hyperplanes span $\R^d$. The closed
chambers of such an arrangement are then pointed cones. For such an arrangement and
$H \in \Arr$, the \emph{restriction} $\Arr^H$ is the arrangement on $H$ formed
by the proper intersections
\[
        \{H \cap K : K \in \Arr,\ H\cap K \neq H\}.
\]

\begin{lemma}[Standard chamber--facet incidence]\label{lem:facet-count}
For any finite full-rank central arrangement $\Arr$ in $\R^d$,
\[
\sum_{C \in \Ch(\Arr)} \#\{\text{facets of } \overline C\} \;=\; 2 \sum_{H \in \Arr} r(\Arr^H).
\]
For the Boolean threshold arrangement $\Arr_n$ on $\R^{n+1}$,
\[
\sum_{f \in \Hn} \sigma_n(f) \;=\; 2 \sum_{x \in \B^n} r(\Arr_n^{H_x}).
\]
\end{lemma}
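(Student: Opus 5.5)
Double counting of chamber--facet incidences: I count the pairs $(C,F)$ with $C$ a chamber of $\Arr$ and $F$ a facet of $\overline C$, grouped by the hyperplane $H\in\Arr$ spanned by $F$. Every facet of a chamber closure lies in a unique hyperplane of $\Arr$: it is supported by one of the defining inequalities, as in Lemma~\ref{lem:facet-halfspaces}, and two distinct hyperplanes cannot both contain an $(d-1)$-dimensional set. So the left-hand side equals $\sum_{H\in\Arr}N_H$, where $N_H$ is the number of pairs $(C,F)$ with $F=\overline C\cap H$ a facet. It then suffices to prove $N_H=2\,r(\Arr^H)$ for each $H$.

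Fix $H$. The plan is to set up a bijection between the chambers $R$ of $\Arr^H$ and the facets lying in $H$, such that each facet is a facet of exactly two chambers, one on each side of $H$.

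Given a chamber $R$ of $\Arr^H$, pick $p\in R$. Then $p\in H$ and $p$ lies on no $K\in\Arr$ with $K\neq H$. Here I use that $H\cap K\ne H$ whenever $K\ne H$; in a central arrangement $H\cap K=H$ forces $K=H$, and distinct hyperplanes of $\Arr$ are distinct as sets. The argument of Lemma~\ref{lem:one-sign-crossing} applies verbatim: for a vector $v$ not in $H$, the points $p\pm tv$ with small $t>0$ lie in two chambers $C^+$ and $C^-$, separated only by $H$. Lemma~\ref{lem:facet-relint}, whose proof works for any arrangement with distinct hyperplanes, shows that $\overline{C^\pm}\cap H$ is a facet of each. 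I then check that this facet is exactly $\overline R$. For $K\ne H$, the sign of $K$ is constant on $R$, and it equals the sign of $K$ on $C^\pm$ by continuity at $p$. Thus $\overline{C^\pm}\cap H$ is the set of points of $H$ satisfying the weak inequalities for all $K\ne H$ with those signs, which is $\overline R$.

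Conversely, a facet $F=\overline C\cap H$ contains, by Lemma~\ref{lem:facet-relint}, a relative-interior point $p$ on no other hyperplane. That point lies in a chamber $R$ of $\Arr^H$, and $F=\overline R$ by the same sign description. Chambers on opposite sides of $H$ have different sign vectors, so each $R$ contributes exactly two incidences, namely $(C^+,\overline R)$ and $(C^-,\overline R)$. Distinct $R$ give distinct facets because their closures have disjoint interiors. Hence $N_H=2r(\Arr^H)$.

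The main obstacle is the well-definedness in this last step. I need to show that the facet $\overline C\cap H$ coincides with the closure of a single chamber of $\Arr^H$, and not a union of several. This follows because the relative interior of the facet avoids all other hyperplanes, so it is connected inside the complement of $\Arr^H$ in $H$. For the Boolean specialisation, Theorem~\ref{thm:facet} gives $\sigma_n(f)=\#\{\text{facets of }\overline{C_f}\}$, and $C\mapsto f_C$ is a bijection $\Ch(\Arr_n)\to\Hn$. The arrangement $\Arr_n$ is full rank, by the pointedness computation in Section~\ref{sec:setup}, and Lemma~\ref{lem:Hx-injective} indexes its hyperplanes by $x\in\B^n$. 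Substituting these into the general identity gives the second display.
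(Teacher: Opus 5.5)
Your proposal is correct and follows the paper's proof: you double-count chamber--facet pairs by their supporting hyperplane, attach to each chamber $R$ of $\Arr^H$ the two chambers $C^\pm$ on either side of $H$ near a point of $R$, identify the shared facet with $\overline R$, and then specialise via Theorem~\ref{thm:facet} and the chamber--function bijection. The only minor difference is that you identify $\overline{C^\pm}\cap H$ with $\overline R$ through the weak sign-inequality description, whereas the paper argues that the facet's relative interior is a single chamber of $\Arr^H$ lying between $R$ and $\overline R$; both arguments work.
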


\begin{proof}
We first prove the arrangement-theoretic identity. The left-hand side counts
ordered pairs $(C,F)$, where $C$ is a chamber of $\Arr$ and $F$ is a facet of
$\overline C$. Every such facet is supported by a unique hyperplane
$H\in\Arr$. Indeed, the relative interior of a facet cannot be
contained in two distinct hyperplanes, since the intersection of two distinct
hyperplanes has codimension at least two.

Fix $H\in\Arr$. The chambers of the restricted arrangement $\Arr^H$ are the
connected components of
\[
        H\setminus \bigcup_{\substack{K\in\Arr\\ H\cap K\neq H}}(H\cap K).
\]
Let $R$ be a chamber of $\Arr^H$ and fix a point $p\in R$. Since $R$ meets
no hyperplane $K\neq H$ and $\Arr$ is finite, a small enough ball $B$ about $p$
meets no hyperplane of $\Arr$ except $H$. Within $B$ the hyperplane $H$ has two
sides, and each side avoids every hyperplane of $\Arr$, so each lies in a single
chamber. Call these two chambers $C^+$ and $C^-$, distinct because $H$ separates
them. They do not depend on the choice of $p$, since $R$ is connected. Because
$C^+$ is a convex cone lying in the closed halfspace bounded by $H$, the
intersection $\overline{C^+}\cap H$ is a single face of $\overline{C^+}$; it has
dimension $\dim H$ since it contains the relatively open set $R$, so it is a facet
of $\overline{C^+}$, and likewise $\overline{C^-}\cap H$ is a facet of
$\overline{C^-}$. The relative interior of each of these facets avoids every
hyperplane $K\neq H$, by the argument of Lemma~\ref{lem:facet-relint}, so it is a
single chamber of $\Arr^H$; containing $R$ and contained in $\overline R$, it
equals $R$, and the facet equals $\overline R$. This gives the two incidences
$(C^+,\overline R)$ and $(C^-,\overline R)$. Conversely, the relative interior of
any facet supported by $H$ avoids every $K\neq H$ and so is a chamber of
$\Arr^H$, recovering some $R$.  Thus the
chamber-facet incidences supported by $H$ number $2\,r(\Arr^H)$.  Summing over all $H\in\Arr$ gives
\[
\sum_{C \in \Ch(\Arr)} \#\{\text{facets of } \overline C\}
        =2\sum_{H\in\Arr} r(\Arr^H).
\]

For the Boolean threshold arrangement, Theorem~\ref{thm:facet} identifies
$\sigma_n(f)$ with the number of facets of the chamber closure
$\overline{C_f}$, and the chamber-function correspondence identifies
$\Hn$ with $\Ch(\Arr_n)$. Substituting these identifications into the
arrangement-theoretic identity gives the stated threshold-function formula.
\end{proof}

\subsection{\texorpdfstring{Threshold functions and the resonance arrangement}{Threshold functions and the resonance arrangement}}

The restrictions of the Boolean threshold arrangement that occur in this
formula are themselves finite, central and full rank as arrangements inside the
supporting hyperplanes. Finiteness and centrality are immediate from the
construction. For full rank, fix $x\in\B^n$. If $y\neq x$, then
$H_y\neq H_x$ by Lemma~\ref{lem:Hx-injective}, so $H_x\cap H_y$ is a proper
hyperplane inside $H_x$. The common intersection of the restricted hyperplanes
inside $H_x$ is
\[
        \bigcap_{y\neq x}(H_x\cap H_y)
        =H_x\cap\bigcap_{y\neq x}H_y
        =\bigcap_{y\in\B^n}H_y
        =\{0\},
\]
where the third expression simply adds the missing $H_x$ term. The final
intersection is the set of all $a\in\R^{n+1}$ orthogonal to every lifted cube
point $\widetilde y$. Since these lifted points span $\R^{n+1}$, this
intersection is $\{0\}$. Hence $\Arr_n^{H_x}$ is full rank inside $H_x$.

We now identify $\Arr_n^{H_x}$ for the Boolean threshold arrangement. Gutekunst, M\'esz\'aros, and Petersen identify the
resonance arrangement with the intersection of the threshold arrangement and one
of its hyperplanes \cite[Section~5; see also Figure~15]{GMP2021}. The following
records this in the present coordinates.

\begin{lemma}\label{lem:restriction}
Under the identification $H_0=\{a_0=0\}\cong\R^n$, where
$0=(0,\ldots,0)\in\B^n$, the restricted arrangement $\Arr_n^{H_0}$ is the
central arrangement $\Res$ on $\R^n$ consisting of the $2^n - 1$ hyperplanes
$\{b \in \R^n : \langle b, x \rangle = 0\}$ for
$x \in \B^n \setminus \{0\}$.
\end{lemma}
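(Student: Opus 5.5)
The plan is a direct computation in coordinates. First I will identify $H_0$. Since $\widetilde 0=(1,0,\ldots,0)$, we have $H_0=\{a\in A:a_0=0\}$. The map $b\mapsto(0,b)$ is therefore a linear isomorphism $\R^n\to H_0$, and it carries the standard inner product on $\R^n$ to the restriction of the one on $A$. I will use it to transport everything.

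Next, for each $y\in\B^n\setminus\{0\}$, I will compute $H_0\cap H_y$. A point $(0,b)$ lies in $H_y$ exactly when $0+\langle b,y\rangle=0$. Under the identification, $H_0\cap H_y$ is therefore $\{b\in\R^n:\langle b,y\rangle=0\}$. Since $y\ne0$, this is a genuine hyperplane of $\R^n$, so $H_0\cap H_y\ne H_0$. Consequently every $y\ne 0$ contributes a proper intersection to the restriction $\Arr_n^{H_0}$. The index $y=0$ gives $H_0$ itself and is excluded by the definition of restriction.

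It remains to check that the list has exactly $2^n-1$ distinct members, so that $\Arr_n^{H_0}$, as a set of hyperplanes, coincides with $\Res$ with no collapsing. For distinct nonzero $y,y'\in\B^n$, the normals are $0/1$ vectors. Two nonzero $0/1$ vectors are parallel only if they are equal, because a positive scalar multiple of a $0/1$ vector is again $0/1$ only when the scalar is $1$, and negative multiples are excluded by sign. Hence the hyperplanes are pairwise distinct, and the count is $2^n-1$. Centrality is immediate since every hyperplane passes through $0$.

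The only delicate point, and it is minor, is this distinctness step. The full arrangement $\Arr_n$ has distinct hyperplanes by Lemma~\ref{lem:Hx-injective}, but distinctness is not automatically inherited by restrictions: two different $H_y$ can meet $H_0$ in the same subspace. So the $0/1$ non-parallelism argument is needed here rather than Lemma~\ref{lem:Hx-injective}. Everything else is an unwinding of definitions.
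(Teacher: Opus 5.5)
Your proposal is correct and follows the paper's proof, which likewise identifies $H_0$ with $\R^n$ via $a=(0,b)$, computes $\langle (0,b),\widetilde x\rangle=\langle b,x\rangle$, and drops the index $x=0$ because it gives $H_0$ itself. Your extra check, that distinct nonzero $0/1$ normals are non-parallel so the $2^n-1$ restricted hyperplanes are pairwise distinct, is valid; the paper leaves this step implicit.
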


\begin{proof}
The hyperplane $H_0 \subseteq \R^{n+1}$ is $\{a:a_0=0\}$, naturally identified
with $\R^n$ via the last $n$ coordinates: $a=(0,b)$ with $b\in\R^n$. For
$x\in\B^n$,
\[
        H_0\cap H_x
        =\{(0,b):\langle (0,b),\widetilde x\rangle=0\}.
\]
Since
\[
        \langle (0,b),\widetilde x\rangle
        =\sum_{i=1}^n b_i x_i
        =\langle b,x\rangle,
\]
the intersections with $H_x$, for $x\neq 0$, give exactly the hyperplanes
\[
        \{b\in\R^n:\langle b,x\rangle=0\}.
\]
The remaining case $x=0$ corresponds to the member $H_0$ itself, which is
omitted in the definition of the restricted arrangement. Hence $\Arr_n^{H_0}$
is precisely $\Res$.
\end{proof}

\begin{corollary}[Resonance-arrangement formula]\label{cor:exact}
For the Boolean threshold arrangement $\Arr_n$, every restriction
$\Arr_n^{H_x}$, $x\in\B^n$, has $r(\Arr_n^{H_x}) = r(\Res)$. Consequently,
\[
\overline\sigma_n \;=\; \frac{2 \cdot 2^n \cdot r(\Res)}{|\Hn|} \;=\; \frac{2^{n+1}\, r(\Res)}{|\Hn|}.
\]
\end{corollary}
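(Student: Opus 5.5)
The plan is to show that all restrictions $\Arr_n^{H_x}$ are linearly isomorphic to $\Arr_n^{H_0}$, which Lemma~\ref{lem:restriction} identifies with $\Res$. Then we substitute into Lemma~\ref{lem:facet-count}. The natural tool is the symmetry group of the cube acting on parameter space. For $z\in\B^n$, the translation $x\mapsto x\oplus z$ permutes $\B^n$ and sends $0$ to $z$. We realise it by a linear automorphism $T_z$ of $A=\R^{n+1}$ that permutes the hyperplanes of $\Arr_n$ and satisfies $T_z(H_x)=H_{x\oplus z}$.

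Concretely, the affine form $a_0+\sum a_ix_i$ evaluated at $x\oplus z$ is again an affine form in $x$. In coordinate $i$ with $z_i=1$ we have $(x\oplus z)_i=1-x_i$. So define $T_z(a)=a'$ by
\[
  a'_i=-a_i \quad (z_i=1),\qquad a'_i=a_i \quad (z_i=0),\qquad a'_0=a_0+\sum_{i:z_i=1}a_i .
\]
This gives $\langle a',\widetilde x\rangle=\langle a,\widetilde{x\oplus z}\rangle$ for every $x$. The map $T_z$ is linear and is an involution, hence invertible. It satisfies $T_z^{-1}(H_x)=\{a:\langle a,\widetilde{x\oplus z}\rangle\ldots\}$; more simply, $a\in H_{x\oplus z}$ if and only if $T_z a\in H_x$. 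Thus $T_z$ maps $\Arr_n$ bijectively onto itself, sending $H_{x\oplus z}$ to $H_x$.

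A linear automorphism $\Phi$ preserving an arrangement $\Arr$ and mapping $H$ to $H'$ restricts to a linear isomorphism $H\to H'$. This isomorphism carries the proper intersections $H\cap K$ to the proper intersections $H'\cap\Phi(K)$. So it maps the complement of $\bigcup\Arr^H$ homeomorphically onto the complement of $\bigcup\Arr^{H'}$, and hence gives a bijection on connected components, $r(\Arr^H)=r(\Arr^{H'})$. Applying this with $z=x$ gives $r(\Arr_n^{H_x})=r(\Arr_n^{H_0})=r(\Res)$ for every $x$, the last equality being Lemma~\ref{lem:restriction}.

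Finally, Lemma~\ref{lem:facet-count} gives $\sum_f\sigma_n(f)=2\sum_{x}r(\Arr_n^{H_x})=2\cdot2^n r(\Res)$, and dividing by $|\Hn|$ gives the formula. The only step needing care is checking the transformation formula for $T_z$ and that it sends hyperplanes of $\Arr_n$ to hyperplanes of $\Arr_n$, not merely to some hyperplanes. Once the identity $\langle T_z a,\widetilde x\rangle=\langle a,\widetilde{x\oplus z}\rangle$ is verified coordinatewise, everything else is formal.
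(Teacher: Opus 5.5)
Your proof is correct and is essentially the paper's. The paper first passes to $\pm1$ coordinates ($b_0=a_0+\tfrac12\sum_i a_i$, $b_i=\tfrac12 a_i$), where the arrangement-preserving maps become the diagonal sign changes $\tau_{s,u}$; these are your reflections $T_z$, with $z$ the set of coordinates where $s$ and $u$ differ, written in conjugated coordinates. From there both arguments proceed identically: restrict the automorphism to the hyperplanes, identify $\Arr_n^{H_0}$ with $\Res$ via Lemma~\ref{lem:restriction}, and sum with Lemma~\ref{lem:facet-count}.
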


\begin{proof}
We use an invertible linear change of parameter coordinates in order to make
the symmetry of the threshold arrangement explicit. Under the substitution
$x_i=(s_i+1)/2$, with $s_i\in\{-1,+1\}$, the threshold inequality
$a_0+\sum_{i=1}^n a_i x_i\ge 0$ becomes
\[
        a_0+\tfrac12\sum_{i=1}^n a_i
        +\tfrac12\sum_{i=1}^n a_i s_i\ge 0; 
\]
that is,
\[
        b_0+\sum_{i=1}^n b_i s_i\ge 0,
        \qquad
        b_0=a_0+\tfrac12\sum_{i=1}^n a_i,
        \quad
        b_i=\tfrac12 a_i.
\]
The map $(a_0,a_1,\ldots,a_n)\mapsto (b_0,b_1,\ldots,b_n)$ is an invertible
linear map on $\R^{n+1}$. It therefore preserves chamber counts and carries
restricted arrangements to restricted arrangements.

In these coordinates the threshold arrangement is the central arrangement of
$2^n$ hyperplanes
\[
        K_s=\{b\in\R^{n+1}:\langle b,(1,s)\rangle=0\},\qquad
        s\in\{-1,+1\}^n.
\]
Given any two of these hyperplanes $K_s$ and $K_u$, the diagonal sign-change
\[
        \tau_{s,u}(b_0,b_1,\ldots,b_n)=(b_0,\,s_1u_1b_1,\ldots,s_nu_nb_n),
\]
which fixes $b_0$ and flips the sign of $b_i$ precisely when $s_i\ne u_i$,
preserves the whole arrangement and maps $K_s$ to $K_u$, since
\[
        \langle\tau_{s,u}(b),(1,u)\rangle
        =b_0+\sum_{i=1}^n s_iu_i^2b_i
        =\langle b,(1,s)\rangle .
\]
These maps belong to the signed-permutation group, the invertible linear maps
that permute the $n+1$ coordinates and independently change their signs. This
group therefore acts transitively on the hyperplanes $K_s$, meaning that any one
of them can be sent to any other by an arrangement-preserving linear
automorphism. The restriction of $\tau_{s,u}$ to $K_s$ carries the corresponding
restricted arrangement onto that of $K_u$, so the two restricted arrangements
have the same number of chambers.

Lemma \ref{lem:restriction} identifies one such restriction, namely
$\Arr_n^{H_0}$, with $\Res$. Hence $r(\Arr_n^{H_x})=r(\Res)$ for every
$x\in\B^n$. Applying Lemma \ref{lem:facet-count} and dividing by
$|\Hn|=|\Ch(\Arr_n)|$ gives the displayed formula for $\overline\sigma_n$.
\end{proof}

The arrangement $\Res$ is the \emph{resonance arrangement} (also called the all-subsets arrangement, the adjoint braid arrangement, and the arrangement of maximal unbalanced families), a well-studied object in algebraic combinatorics: see Billera \emph{et al.}\ \cite{BTMDW2012} for the connection with maximal unbalanced families, Kühne \cite{Kuhne2023} for Betti number computations, and Gutekunst, Mészáros and Petersen \cite{GMP2021} for the connection with threshold functions. The number of chambers $r(\Res)$ is sequence A034997 in the On-Line Encyclopedia
of Integer Sequences (OEIS) \cite{OEIS}; we use these tabulated values in
Section~\ref{sec:growth}.  The term ``resonance-arrangement formula'' in Corollary \ref{cor:exact} is descriptive rather than a standard named formula: the formula is called this here because the restriction of the Boolean threshold arrangement to any one of its supporting hyperplanes is isomorphic to $\Res$.

\begin{remark}
The exact formula of Corollary~\ref{cor:exact} can also be obtained from the
analysis of the threshold arrangement by Gutekunst, Mészáros, and Petersen, in
particular \cite[Corollary~8 and Remark~6]{GMP2021}. The chamber--facet incidence
identity behind Lemma~\ref{lem:facet-count} is standard; here it acquires
specification-number content, since the average number of facets of a threshold
chamber is $\overline\sigma_n$. In these terms their Theorem~1,
$(n+1)|\Hn|/2^{n+1} < r(\Res) < |\Hn|/2$, is $n+1 < \overline\sigma_n < 2^n$ for
$n\ge2$. A theorem of Fukuda, Tamura, and Tokuyama sharpens the upper bound to $\overline\sigma_n\le 2n$, equivalently $r(\Res)\le 2n\,|\Hn|/2^{n+1}=n\,|\Hn|/2^{n}$, which answers \cite[Problem~3]{GMP2021}; their quantity, the average number of walls per threshold chamber, is the average facet count, which Theorem~\ref{thm:facet} identifies with $\overline\sigma_n$. The Doliwa--Fan--Simon--Zilles bound of Theorem~\ref{thm:vc-average} gives
the same order for every essentially specified class. Both are developed in
Section~\ref{sec:growth}.
\end{remark}

\subsection{Degree-\texorpdfstring{$d$}{d} resonance arrangements}\label{ssec:degree-d}

We next give a degree-$d$ analogue of Corollary~\ref{cor:exact}. No new
chamber-facet argument is needed: we apply Lemma~\ref{lem:facet-count} to the
monomial-feature arrangement from Proposition~\ref{prop:ptf-average}. The only
point is to identify the common restriction of that arrangement, which gives the
following degree-$d$ version of the resonance arrangement.

\begin{definition}[Degree-$d$ resonance arrangement]\label{def:degree-d-resonance}
For $1\le d\le n$, put
\[
        \mathcal M_{n,d}^{+}=\{S\subseteq[n]:1\le |S|\le d\}.
\]
The degree-$d$ resonance arrangement $\mathcal R_{n,d}$ is the central
arrangement in $\R^{\mathcal M_{n,d}^{+}}$ with one hyperplane for each
non-empty $T\subseteq[n]$:
\[
        R_T^{(d)}=
        \left\{b:\sum_{\substack{S\subseteq T\\1\le |S|\le d}} b_S=0\right\}.
\]
For $d=1$ this is the usual resonance arrangement.
\end{definition}

\begin{proposition}[Degree-$d$ exact formula]\label{prop:ptf-exact}
Let $\C_d$ be the degree-$d$ polynomial threshold class from
Proposition~\ref{prop:ptf-average}. Then
\[
        \overline\sigma_{\C_d}
        =
        \frac{2^{n+1}r(\mathcal R_{n,d})}{|\C_d|}.
\]
\end{proposition}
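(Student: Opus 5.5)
The plan is to repeat the argument of Corollary~\ref{cor:exact} for the monomial-feature arrangement $\Arr_{n,d}$ of Proposition~\ref{prop:ptf-average}. That arrangement is finite, central and full rank in $V_d=\R^{\mathcal M_{n,d}}$, since $\phi_d(\B^n)$ spans $V_d$. Lemma~\ref{lem:facet-count} therefore gives
\[
\sum_{C}\#\{\text{facets of }\overline C\}=2\sum_{x\in\B^n} r\bigl(\Arr_{n,d}^{H^{(d)}_x}\bigr).
\]
By Proposition~\ref{prop:ptf-average}, the left side is $\sum_{f\in\C_d}\sigma_{\C_d}(f)$. It then suffices to show that each restriction $\Arr_{n,d}^{H_x^{(d)}}$ has exactly $r(\mathcal R_{n,d})$ chambers, and to divide by $|\C_d|$.

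\textbf{Step 1: the restriction at $x=0$.} Since $\phi_d(0)=e_\varnothing$, we have $H_0^{(d)}=\{a_\varnothing=0\}$, which is identified with $\R^{\mathcal M_{n,d}^+}$ through $a=(0,b)$. For $y=1_T$ one has $(1_T)_S=1$ exactly when $S\subseteq T$, so
\[
\langle (0,b),\phi_d(1_T)\rangle=\sum_{S\subseteq T,\,1\le|S|\le d}b_S.
\]
The intersections $H_0^{(d)}\cap H_{1_T}^{(d)}$ for $T\ne\varnothing$ are therefore exactly the hyperplanes $R_T^{(d)}$. These are genuine hyperplanes, because the coefficient of $b_{\{i\}}$ for $i\in T$ is $1$. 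One should also confirm that distinct $T$ give distinct hyperplanes. This is not needed for the chamber count, since duplicates would not change the complement, but it follows anyway from the singleton coordinates, as in the injectivity argument. Hence $\Arr_{n,d}^{H_0^{(d)}}=\mathcal R_{n,d}$.

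\textbf{Step 2: transitivity (main obstacle).} For $d=1$, Corollary~\ref{cor:exact} used the $\pm1$ sign symmetry. For general $d$, I would instead use the cube automorphism $x\mapsto x\oplus u$ for fixed $u\in\B^n$. Write $x_i\oplus u_i$ as $x_i$ when $u_i=0$ and as $1-x_i$ when $u_i=1$. Expanding $\prod_{i\in S}(x\oplus u)_i$ gives a polynomial in $x$ of degree $|S|\le d$, so it is a linear combination of the monomials $x_{S'}$ with $S'\subseteq S$. Hence there is a linear map $L_u$ on $V_d$ with $\phi_d(x\oplus u)=L_u\phi_d(x)$. The map $L_u$ is invertible because $L_u L_u=\mathrm{id}$ on the spanning set $\phi_d(\B^n)$. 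Then $a\mapsto L_u^{\top}a$ satisfies
\[
\langle L_u^\top a,\phi_d(x)\rangle=\langle a,\phi_d(x\oplus u)\rangle,
\]
so it maps $H^{(d)}_{x\oplus u}$ onto $H^{(d)}_x$ and permutes the arrangement. An invertible linear map preserving the arrangement carries the restriction to one hyperplane onto the restriction to its image, and so preserves chamber counts. Taking $u=x$ sends $H_x^{(d)}$ to $H_0^{(d)}$. The point to check with care is that the transpose action on parameters is indeed arrangement-preserving and bijective on hyperplanes; this is immediate from the displayed identity together with the injectivity of $x\mapsto H_x^{(d)}$.

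\textbf{Step 3: conclusion.} Combining the steps, every restriction has $r(\mathcal R_{n,d})$ chambers, so $\sum_f\sigma_{\C_d}(f)=2\cdot2^n\,r(\mathcal R_{n,d})$. Dividing by $|\C_d|=r(\Arr_{n,d})$ gives the formula.
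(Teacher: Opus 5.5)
Your proposal is correct and follows essentially the same route as the paper. Both apply Lemma~\ref{lem:facet-count} to $\Arr_{n,d}$, show that all restrictions $\Arr_{n,d}^{H_x^{(d)}}$ are linearly isomorphic via the invertible coefficient-space map induced by complementing variables on the cube, and identify the restriction at $x=0$ with $\mathcal R_{n,d}$. Your construction of $L_u$ and its transpose is a more explicit version of the paper's change of variables; the paper also allows coordinate permutations, but as you observe, complementations alone suffice.
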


\begin{proof}
Let $\Arr_{n,d}$ be the degree-$d$ monomial-feature arrangement from the proof
of Proposition~\ref{prop:ptf-average}. Proposition~\ref{prop:feature-arrangements}
identifies $\sigma_{\C_d}(f)$ with the number of facets of the corresponding
chamber closure. Lemma~\ref{lem:facet-count}, applied to $\Arr_{n,d}$, gives
\[
        \sum_{f\in\C_d}\sigma_{\C_d}(f)
        =2\sum_{x\in\B^n} r\bigl(\Arr_{n,d}^{H_x^{(d)}}\bigr).
\]

We first note that the restrictions have the same number of chambers. Given two
cube points $x,y\in\B^n$, choose a change of variables on the cube obtained by
permuting coordinates and replacing selected variables $t_i$ by $1-t_i$, so that
$x$ is sent to $y$. If $p(t)$ is a polynomial of degree at most $d$, then $p$
after this change of variables is again a polynomial of degree at most $d$, and
the induced map on coefficient vectors is invertible. Moreover, evaluation at
$x$ before the change of variables is evaluation at $y$ after the change. Thus
this invertible linear change of coefficient coordinates carries $H_x^{(d)}$ to
$H_y^{(d)}$, and carries the whole arrangement $\Arr_{n,d}$ to itself. It
follows that the restricted arrangements $\Arr_{n,d}^{H_x^{(d)}}$ and
$\Arr_{n,d}^{H_y^{(d)}}$ are linearly isomorphic, so they have the same number
of chambers.

We now identify one restriction. For $x=0$, the hyperplane
$H_0^{(d)}$ is $\{a_\varnothing=0\}$. Identifying it with the coefficient space
with coordinates $b_S=a_S$, $1\le |S|\le d$, the intersection with
$H_{1_T}^{(d)}$, where $T\ne\varnothing$, comes from setting $a_\varnothing=0$ in
$\sum_{|S|\le d}a_S(1_T)_S=0$. Since the monomial $(1_T)_S$ equals $1$ when
$S\subseteq T$ and $0$ otherwise, this is
\[
        \sum_{\substack{S\subseteq T\\1\le |S|\le d}} b_S=0.
\]
Thus $\Arr_{n,d}^{H_0^{(d)}}$ is $\mathcal R_{n,d}$. Substituting this into the
incidence formula and dividing by $|\C_d|$ proves the result.
\end{proof}

Combining the exact formula with the two-sided bound $N_d\le\overline\sigma_{\C_d}\le 2N_d-2$ of Proposition~\ref{prop:ptf-average} gives the two-sided estimate for the
degree-$d$ region count:
\[
        \frac{N_d|\C_d|}{2^{n+1}}
        \le
        r(\mathcal R_{n,d})
        \le
        \frac{(2N_d-2)|\C_d|}{2^{n+1}} .
\]
For $d=1$ this recovers the resonance comparison of
Section~\ref{sec:growth}; there the lower inequality is strict, by the witness in
Theorem~\ref{thm:theta-n}, whereas the general degree-$d$ lower bound is
non-strict.

\begin{example}[$n=3$, $d=2$]\label{ex:n3d2}
The smallest informative case is $n=3$, $d=2$. Here $\mathcal R_{3,2}$ is a
central arrangement of $7$ hyperplanes in $\R^{6}$, with coordinates $b_S$
indexed by $S\in\mathcal M_{3,2}^{+}=\{\{1\},\{2\},\{3\},\{1,2\},\{1,3\},\{2,3\}\}$
and one hyperplane for each non-empty $T\subseteq[3]$. Order the columns as
$(b_{\{1\}},b_{\{2\}},b_{\{3\}},b_{\{1,2\}},b_{\{1,3\}},b_{\{2,3\}})$ and the rows
by
\[
        T=\{1\},\{2\},\{3\},\{1,2\},\{1,3\},\{2,3\},\{1,2,3\}.
\]
The seven normals are the rows of
\[
        \begin{pmatrix}
        1&0&0&0&0&0\\
        0&1&0&0&0&0\\
        0&0&1&0&0&0\\
        1&1&0&1&0&0\\
        1&0&1&0&1&0\\
        0&1&1&0&0&1\\
        1&1&1&1&1&1
        \end{pmatrix},
\]
the $0/1$ vectors with a $1$ in coordinate $S$ exactly when $S\subseteq T$. Any
six of the seven normals are linearly independent, so the arrangement is in
general position. Hence
$r(\mathcal R_{3,2})=2\sum_{i=0}^{5}\binom{6}{i}=126$. The degree-$2$ polynomial
threshold class on three variables consists of all Boolean functions on $\B^3$
except parity $x_1\oplus x_2\oplus x_3$ and its complement. 
Indeed, Wang and Williams \cite[Corollary~2.2 and Theorem~3.4]{WangWilliams1991}
prove that for every $n$ the only Boolean functions on $n$ variables of threshold
order $n$ are the two parity functions, every other function having threshold
order at most $n-1$. Here the threshold order of $f$ is the least degree of a
polynomial separator for $f$, so threshold order at most $d$ is membership in
$\C_d$. For $n=3$ this leaves parity and its complement as the only functions of
threshold order $3$, so every other function on $\B^3$ lies in $\C_2$, and
\[
        |\C_2|=2^8-2=254 .
\]
 Proposition
\ref{prop:ptf-exact} gives
\[
        \overline\sigma_{\C_2}
        =\frac{2^{4}\cdot 126}{254}
        =\frac{1008}{127}
        \approx 7.937 .
\]
This is consistent with the degree-$2$ lower bound $\sigma_{\C_2}(f)\ge N_2=7$
of Proposition~\ref{prop:ptf-average}, lies in the interval $[N_2,2N_2-2]=[7,12]$,
and lies slightly below the full-class value $8$, as expected since almost every
Boolean function on three variables is a degree-$2$ polynomial threshold
function. The region-count comparison is $111.125\le r(\mathcal R_{3,2})<190.5$, which the exact value $126$ satisfies.
\end{example}

We record the numerical consequences of Corollary \ref{cor:exact} in Section \ref{sec:growth}.

\section{The growth rate of the average specification number}\label{sec:growth}

Corollary~\ref{cor:lower} gives a lower bound on $\overline\sigma_n$, and Theorem~\ref{thm:ftt-average}
gives an upper bound. Together they
settle the order of growth of $\overline\sigma_n$.

\begin{theorem}[Linear order of the average specification number]\label{thm:theta-n}
$\overline\sigma_n=\Theta(n)$. More precisely, for every $n\ge1$,
\[
        n+1\le\overline\sigma_n\le 2n,
\]
and for $n\ge2$ the lower inequality is strict.
\end{theorem}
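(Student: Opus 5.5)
The two-sided bound comes from results already proved, and the strictness of the lower bound needs a witness function.

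\emph{Lower bound.} Corollary~\ref{cor:lower} gives $\sigma_n(f)\ge n+1$ for every $f\in\Hn$. Averaging over $\Hn$ gives $\overline\sigma_n\ge n+1$.

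\emph{Upper bound.} Theorem~\ref{thm:ftt-average} gives $\overline\sigma_n\le 2n$ for all $n\ge1$. Together these show $\overline\sigma_n=\Theta(n)$.

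\emph{Strictness for $n\ge2$.} Since every term satisfies $\sigma_n(f)\ge n+1$, the average equals $n+1$ only if every $f\in\Hn$ has $\sigma_n(f)=n+1$. So it is enough to find, for each $n\ge2$, a single $f\in\Hn$ with $\sigma_n(f)\ge n+2$.

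I would take a constant function, say $f\equiv 0$, and show that every cube point is essential, so that $\sigma_n(f)=2^n$. For each $x\in\B^n$, the flipped function $\mathbf 1_{\{x\}}$ must be threshold. This holds because the indicator of a single cube vertex is a conjunction of $n$ literals, which is lro, and is realised by
\[
a_0=-(|x|-\tfrac12),\qquad a_i=1 \text{ if } x_i=1,\qquad a_i=-1 \text{ if } x_i=0 .
\]
With these weights $\langle a,\widetilde y\rangle=\tfrac12-d_H(x,y)$, which is positive exactly when $y=x$.

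By Theorem~\ref{thm:facet}, $\sigma_n(0)=|\Ess(0)|=2^n$. For $n\ge2$ we have $2^n\ge n+2$, with equality at $n=2$, which matches Table~\ref{tab:n2}, where the constants have $\sigma_2=4$. Hence $\sum_f\sigma_n(f)>(n+1)|\Hn|$, and the lower inequality is strict.

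At $n=1$ the statement reads $2\le\overline\sigma_1\le2$. This is consistent, since all four functions on one bit have $\sigma_1=2$, so no strictness is claimed there.

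The only step needing any care is the witness for strictness, and it is handled by the explicit weight vector for the vertex indicators displayed above.
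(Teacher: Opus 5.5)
Your proposal is correct and follows the paper's proof: you average Corollary~\ref{cor:lower} for the lower bound, cite Theorem~\ref{thm:ftt-average} for the upper bound, and get strictness for $n\ge2$ from the constant-zero function, whose $2^n$ single-point flips are all threshold. Your explicit weight vector, with $\langle a,\widetilde y\rangle=\tfrac12-d_H(x,y)$, justifies the step that the vertex indicators are threshold, which the paper only asserts.
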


\begin{proof}
We obtain the lower bound by averaging the universal bound $\sigma_n(f)\ge n+1$
(Corollary~\ref{cor:lower}). For the upper bound, Theorem~\ref{thm:ftt-average} gives $\overline\sigma_n\le 2n$. Together these give $\overline\sigma_n=\Theta(n)$.
For $n\ge2$ the lower bound is strict: flipping the constant-zero function at any
single vertex yields a single-point indicator, which is threshold, so every vertex
is essential and $\sigma_n(0)=2^n>n+1$; hence the average exceeds $n+1$. For $n=1$, $\overline\sigma_1=2=n+1=2n$.
\end{proof}

Gutekunst, Mészáros and Petersen point out that their companion bound
$r(\Res)<|\Hn|/2$ would improve given a bound on the average number of facets of a
threshold chamber \cite[Remark~6]{GMP2021}. They write that average as $w(\cdot)$,
indexed by the rank of the arrangement. The Boolean threshold arrangement for $n$
variables has rank $n+1$, so their average facet count for it is $w(n+1)$, and by
Corollary~\ref{cor:exact} this equals $\overline\sigma_n$:
\[
        w(n+1)
        =\frac{1}{|\Hn|}\sum_{f\in\Hn}\#\{\text{facets of }\overline{C_f}\}
        =\frac{1}{|\Hn|}\sum_{f\in\Hn}\sigma_n(f)
        =\overline\sigma_n .
\]
The exact identity is, after reindexing, the chamber--facet incidence relation
underlying \cite[Corollary~8 and Remark~6]{GMP2021}; what is new here is its
interpretation as the average specification number through Theorem~\ref{thm:facet}.
Combining it with Theorem~\ref{thm:theta-n} gives, for $n\ge2$, the two-sided
comparison
\[
        \frac{(n+1)|\Hn|}{2^{n+1}}
        <
        r(\Res)
        <
        \frac{2n\,|\Hn|}{2^{n+1}}=\frac{n\,|\Hn|}{2^{n}},
\]
whose lower bound is the strict inequality $\overline\sigma_n>n+1$ of
Theorem~\ref{thm:theta-n} and whose upper bound is the theorem of Fukuda, Tamura, and Tokuyama \cite{Fukuda1993}, both transferred through Corollary~\ref{cor:exact}. The same
strict lower bound appears as the resonance inequality of \cite{GMP2021}. As an estimate of $r(\Res)$ this improves on $r(\Res)<|\Hn|/2$ for $n\ge3$, and coincides with it at $n=2$.
Writing $\Delta_n=\log_2|\Hn|-\log_2 r(\Res)$,
\[
        n-\log_2 n<\Delta_n<n+1-\log_2(n+1),
\]
so $\Delta_n=n-\log_2 n+O(1)$; the interval has width $\log_2(2n)-\log_2(n+1)\to1$, reflecting that $n+1\le\overline\sigma_n\le 2n$ fixes $\overline\sigma_n$ to within a factor of two.

Feeding the Kahn, Koml\'os and Szemer\'edi estimate
$\log_2|\Hn|=n^2-n\log_2 n+O(n)$ (Remark~\ref{rem:Hn-asymptotics}) into the
comparison gives
\[
        \log_2 r(\Res)=n^2-n\log_2 n+O(n).
\]
This logarithmic asymptotic already follows by combining the two-sided resonance
estimate of \cite{GMP2021} with the same threshold-function estimate; the sharper conclusion here is the multiplicative upper bound $r(\Res)\le 2n\,|\Hn|/2^{n+1}=n\,|\Hn|/2^{n}$.

Once this fixes the order of growth, the remaining question is the value of the constant.

\begin{problem}\label{prob:constant}
Does $\lim_{n\to\infty}\overline\sigma_n/(n+1)$ exist, and if so, what is its
value? If the limit exists it lies in $[1,2]$.
\end{problem}

We obtain the values in Table~\ref{tab:numerics} from
Corollary~\ref{cor:exact}, using OEIS A000609 for $|\Hn|$ and OEIS A034997 for
$r(\Res)$ \cite{OEIS}. The largest entry, $r(\Res)$ at $n=9$, is the value
posted to OEIS by Chroman and reported by Brysiewicz, Eble, and K\"uhne
\cite{BEK2023}; as of the access date recorded in \cite{OEIS} it is the largest exact value tabulated there, and it is the largest resonance-arrangement computation reported in \cite{BEK2023}.

\begin{table}[htbp]
\centering
\resizebox{\textwidth}{!}{%
\begin{tabular}{rrrrc}
\toprule
$n$ & $|\Hn|$ & $r(\Res)$ & $\overline\sigma_n$ (decimal) & $\overline\sigma_n/(n+1)$ \\
\midrule
1 & 4 & 2 & $2.0000$ & $1.0000$ \\
2 & 14 & 6 & $3.4286$ & $1.1429$ \\
3 & 104 & 32 & $4.9231$ & $1.2308$ \\
4 & 1{,}882 & 370 & $6.2912$ & $1.2582$ \\
5 & 94{,}572 & 11{,}292 & $7.6417$ & $1.2736$ \\
6 & 15{,}028{,}134 & 1{,}066{,}044 & $9.0799$ & $1.2971$ \\
7 & 8{,}378{,}070{,}864 & 347{,}326{,}352 & $10.6129$ & $1.3266$ \\
8 & 17{,}561{,}539{,}552{,}946 & 419{,}172{,}756{,}930 & $12.2208$ & $1.3579$ \\
9 & 144{,}130{,}531{,}453{,}121{,}108 & 1{,}955{,}230{,}985{,}997{,}140 & $13.8913$ & $1.3891$ \\
\bottomrule
\end{tabular}}
\caption{Average specification numbers for small $n$, computed from
$\overline\sigma_n = 2^{n+1} r(\Res)/|\Hn|$.}
\label{tab:numerics}
\end{table}

These values are illustrative. By Theorem~\ref{thm:theta-n} the order of
$\overline\sigma_n$ is linear, so the table bears not on the order but on the
constant, which it illustrates rather than determines: each ratio $\overline\sigma_n/(n+1)$ lies in $[1,2]$, equal to $1$
only at $n=1$, and over $n\le 9$ the ratios increase monotonically from
$1.0000$ to $1.3891$. From $n=5$ onward the increments
$\overline\sigma_n-\overline\sigma_{n-1}$ increase, from $1.35$ at $n=5$ to
$1.67$ at $n=9$, remaining below $2$.

The generic comparison of Section~\ref{sec:average} is suggestive here. A short computation from Cover's region count \cite{Cover1965} shows that a generic central arrangement of $N\to\infty$ hyperplanes in $\R^{n+1}$ has, on average, $2n$ facets per chamber, a ratio $2n/(n+1)\to2$. By the spherical-arrangement theorem of Fukuda, Tamura, and Tokuyama \cite[Corollary~1.3]{Fukuda1993}, the value $2n$ bounds the average facet count of every spherical arrangement in this dimension; a simple arrangement approaches this value as the number of hyperplanes grows \cite[Corollary~3.6 and Remark~3.7]{Roudneff1991}. For $n\ge2$, the Boolean threshold arrangement is far from generic and lies below this ceiling, so the comparison does not determine the constant, though it places it at the upper end of the interval $[n+1,2n]$.

The range is too short for us to decide Problem~\ref{prob:constant}. The monotone
increase of $\overline\sigma_n/(n+1)$ over $n\le 9$ is consistent both with an
interior limiting constant and with slow convergence to the generic upper
endpoint $2$. No extrapolation from these values is reliable. One reformulation
may help: writing
$\delta_n=\overline\sigma_n-\overline\sigma_{n-1}=2^{n+1}r(\Res)/|\Hn|
-2^{n}r(\mathcal R_{n-1})/|\mathcal H_{n-1}|$, the Stolz--Ces\`aro theorem gives
that $\delta_n\to c$ implies $\overline\sigma_n/(n+1)\to c$, so it would suffice,
though it is not necessary, to settle the convergence of the single sequence
$\delta_n$.

\FloatBarrier

\section{Chow vectors and the threshold zonotope}\label{sec:zonotope}

The chamber-facet identification, the one-inclusion graph, the resonance edge
count, and the Chow parameters of a threshold function all appear in the
combinatorics of a single polyhedral object, the threshold zonotope. Recall that
$X=\R^{n+1}$ is the lifted input space, with
$\widetilde x=(1,x_1,\ldots,x_n)$. A parameter vector
$a=(a_0,\ldots,a_n)$ evaluates a lifted point by
\[
        \langle a,\widetilde x\rangle
        =a_0+a_1x_1+\cdots+a_nx_n .
\]

\begin{definition}[Threshold zonotope]\label{def:threshold-zonotope}
A \emph{zonotope} is a Minkowski sum of finitely many line segments. The
\emph{threshold zonotope} is the zonotope
\[
        Z_n=\sum_{x\in\B^n}[-\widetilde x,\widetilde x]\subseteq X,
\]
generated by the symmetrised lifted cube points. Equivalently,
\[
        Z_n=\left\{\sum_{x\in\B^n}t_x\widetilde x:-1\le t_x\le 1\right\}.
\]
For a convex body $Z\subseteq X$, its \emph{support function} is
\[
        h_Z(a)=\max_{z\in Z}\langle a,z\rangle,\qquad a\in A.
\]
Support functions are additive under Minkowski sums, and a segment centred
at the origin has $h_{[-v,v]}(a)=|\langle a,v\rangle|$, so
\[
        h_{Z_n}(a)=\sum_{x\in\B^n}|\langle a,\widetilde x\rangle| .
\]
\end{definition}

The classical Chow parameters of a Boolean function $f$ are the number of true
points and, for each variable $x_i$, the number of true points with $x_i=1$.
For any Boolean function $f:\B^n\to\B$, the \emph{modified Chow vector} is
\[
        \chi(f)=\sum_{x\in\B^n}(2f(x)-1)\,\widetilde x\in X .
\]
Its coordinates are an invertible affine image of the classical Chow parameters
of $f$ \cite{Chow1961,Winder1971}:
\[
        \chi_0(f)=2|f^{-1}(1)|-2^n,
        \qquad
        \chi_i(f)=2\bigl|\{x\in\B^n:f(x)=1,\ x_i=1\}\bigr|-2^{n-1}
        \quad(1\le i\le n).
\]

Stenson \cite{Stenson2014} uses the vertex map $f\mapsto\sum_{x\in\B^n}(2f(x)-1)(-1,x)$.
Up to the sign of the zeroth coordinate (since Stenson lifts with $(-1,x)$, whereas here we use 
$\widetilde x=(1,x)$), this is the map $\chi$ defined above. Her
signed critical-instance coordinates $\gamma_i$ (Proposition~2.6 of \cite{Stenson2014}) are a
shift of the modified Chow coordinates used here, $\gamma_i=\chi_i(f)-\tfrac12\chi_0(f)$.

For example, when $n=2$, the four generators $\widetilde x=(1,x)$, $x\in\B^2$, are
\[
        (1,0,0),\quad (1,1,0),\quad (1,0,1),\quad (1,1,1).
\]
It turns out that, up to linear equivalence, the resulting three-dimensional
zonotope is a rhombic dodecahedron, with $12$ parallelogram faces, $24$ edges,
and $14$ vertices. These
vertices are the modified Chow vectors of the $14$ two-variable threshold
functions. To make this concrete, write $\widetilde{ij}$ for
$\widetilde{(i,j)}$. For the conjunction $x_1\wedge x_2$,
\[
        \chi(x_1\wedge x_2)
        =
        -\widetilde{00}-\widetilde{10}-\widetilde{01}+\widetilde{11}
        =
        (-2,0,0),
\]
while for the single-literal function $x_1$,
\[
        \chi(x_1)
        =
        -\widetilde{00}+\widetilde{10}-\widetilde{01}+\widetilde{11}
        =
        (0,2,0).
\]
These are two of the fourteen vertices. The two non-threshold functions, XOR
and its complement, both give the central Chow point $(0,0,0)$ rather than
vertices. That a non-threshold Boolean function never gives a vertex is
Stenson's Corollary~2.7 \cite{Stenson2014}. This is the $n=2$ instance of the classical Chow-parameter corner
picture discussed by Winder \cite[Section~2A and Theorem~7]{Winder1971}, now
drawn with its zonotope edge structure.

For the following theorem, recall that a \emph{fan} is a collection of
polyhedral cones fitting together along common faces. A central hyperplane
arrangement induces such a fan by taking the closures of its chambers, together
with all their faces. For a vertex $v$ of a polytope $Z\subseteq X$, its normal
cone is the set of directions $a\in A$ for which $v$ maximises the linear
functional $z\mapsto\langle a,z\rangle$ over $Z$:
\[
        N_Z(v)=\{a\in A:\langle a,v\rangle
        =\max_{z\in Z}\langle a,z\rangle\}.
\]
The \emph{normal fan} of $Z$ is the fan whose full-dimensional cones are these
vertex normal cones, together with all their faces.

\begin{theorem}[Vertices, edges, and degrees of the threshold zonotope]\label{thm:zonotope}
Let $Z_n$ be the threshold zonotope. Then:
\begin{enumerate}[label=\textup{(\alph*)}]
\item The normal cones at the vertices of $Z_n$ are exactly the chamber closures
$\overline{C_f}$, $f\in\Hn$, of the Boolean threshold arrangement.
Equivalently, the normal fan of $Z_n$ is the fan induced by $\Arr_n$.
\item The vertex set of $Z_n$ is
\[
        \operatorname{vert}(Z_n)=\{\chi(f):f\in\Hn\}.
\]
The vertex whose normal cone is $\overline{C_f}$ is $\chi(f)$. Equivalently, on
the open chamber $C_f$ the support function is the linear function
$h_{Z_n}(a)=\langle a,\chi(f)\rangle$.
\item Two vertices $\chi(f)$ and $\chi(g)$ are joined by an edge of $Z_n$ if
and only if $g=f\oplus\mathbf 1_{\{x\}}$ for some $x\in\Ess(f)$; the edge is
parallel to $\widetilde x$, and
\[
        \chi(f\oplus\mathbf 1_{\{x\}})=\chi(f)-2\bigl(2f(x)-1\bigr)\widetilde x .
\]
Consequently the one-inclusion graph $G(\Hn)$ is the $1$-skeleton of $Z_n$, with each $f$ placed at $\chi(f)$, and
\[
        \sigma_n(f)=\deg_{Z_n}\bigl(\chi(f)\bigr).
\]
\end{enumerate}
\end{theorem}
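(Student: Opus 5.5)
The whole argument rests on the support-function formula $h_{Z_n}(a)=\sum_{x}|\langle a,\widetilde x\rangle|$ of Definition~\ref{def:threshold-zonotope}. I would proceed in the order (b), (a), (c).

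\textbf{Support function on a chamber.} Fix $f\in\Hn$ and $a\in C_f$. For every $x$ we have $\sgn\langle a,\widetilde x\rangle=f(x)$ and $\langle a,\widetilde x\rangle\neq 0$. Hence $|\langle a,\widetilde x\rangle|=(2f(x)-1)\langle a,\widetilde x\rangle$, and summing gives $h_{Z_n}(a)=\langle a,\chi(f)\rangle$ on $C_f$. Since $\chi(f)\in Z_n$ (take $t_x=2f(x)-1$), the point $\chi(f)$ maximises $\langle a,\cdot\rangle$ over $Z_n$. Moreover it is the unique maximiser. Any $z=\sum t_x\widetilde x$ with $|t_x|\le 1$ satisfies $\langle a,z\rangle=\sum t_x\langle a,\widetilde x\rangle$, and equality with the maximum forces $t_x=2f(x)-1$ for every $x$, because every $\langle a,\widetilde x\rangle\neq0$. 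Therefore $\chi(f)$ is a vertex of $Z_n$ and $C_f\subseteq N_{Z_n}(\chi(f))$.

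\textbf{Normal fan.} Every vertex $v$ of a full-dimensional polytope has a full-dimensional normal cone, so the interior of $N_{Z_n}(v)$ meets some chamber $C_f$. The uniqueness just shown then gives $v=\chi(f)$, which proves (b). Distinct threshold functions have distinct chambers, and the maximisers over those chambers are distinct because different sign patterns give different unique maximisers; so $f\mapsto\chi(f)$ is injective. It remains to show $N(\chi(f))=\overline{C_f}$. Normal cones of distinct vertices have disjoint interiors and their union is $A$. The union of the closed chambers is also $A$. Each $\overline{C_f}$ is contained in $N(\chi(f))$, by closedness of normal cones. Suppose the inclusion were strict. Then an interior point of $N(\chi(f))$ would lie in some other chamber $C_g$, and hence in the interior of $N(\chi(g))$, which contradicts disjointness of interiors. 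This proves (a).

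\textbf{Edges.} Edges of a polytope correspond to pairs of vertices whose normal cones share a facet, that is, a codimension-one common face. So by (a), $\chi(f)$ and $\chi(g)$ are adjacent exactly when $\overline{C_f}\cap\overline{C_g}$ is a common facet. Such a facet is supported by some $H_x$, and its two sides are the chambers of Lemma~\ref{lem:one-sign-crossing}. Hence $g=f\oplus\mathbf 1_{\{x\}}$ with $x\in\Ess(f)$ by Theorem~\ref{thm:facet}. Conversely, each essential $x$ gives such a shared facet. The displacement formula is immediate from the definition of $\chi$: only the $x$ term changes sign. The degree count then follows from the bijection between $\Ess(f)$ and facets in Theorem~\ref{thm:facet}.

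\textbf{Main obstacle.} The one step I expect to need care is the standard polytope fact that adjacency of vertices is equivalent to their normal cones sharing a codimension-one face. I would justify it directly. Take $a$ in the relative interior of a common facet $\overline{C_f}\cap H_x$, avoiding every other $H_y$ as in Lemma~\ref{lem:facet-relint}. Then the maximisers of $\langle a,\cdot\rangle$ are exactly the points with $t_y$ fixed for $y\ne x$ and $t_x$ free in $[-1,1]$, which is the segment $[\chi(f),\chi(g)]$. Conversely, an edge with a normal vector $a$ in the relative interior of its normal cone has $a$ lying on exactly one hyperplane $H_x$; if it lay on two or more, the face would have dimension at least two.
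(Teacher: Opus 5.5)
Your proof is correct. It reaches the result by a somewhat different, more self-contained route than the paper.

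\textbf{Order of (a) and (b).} The paper gets (a) at once by citing the standard normal-fan theorem for zonotopes. It then deduces (b): the maximisation computation gives $C_f\subseteq N_{Z_n}(\chi(f))$, and (a) forces this normal cone to be the only chamber closure containing $C_f$. Likewise, every vertex's full-dimensional normal cone is, by (a), some $\overline{C_f}$.

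You reverse the order and do without the citation. Your forced-coefficient argument says that $t_x=2f(x)-1$ is forced whenever $\langle a,\widetilde x\rangle\neq0$. From this single computation you get three things: $\chi(f)$ is the unique maximiser, every vertex equals some $\chi(f)$, and $f\mapsto\chi(f)$ is injective on $\Hn$. Statement (a) then follows from the general polytope fact that vertex normal cones cover $A$ with pairwise disjoint interiors.

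In that step you should say explicitly why an interior point of $N_{Z_n}(\chi(f))$ lands in another chamber. The set $\operatorname{int}N_{Z_n}(\chi(f))\setminus\overline{C_f}$ is open. It is non-empty because a full-dimensional closed convex cone is the closure of its interior. It therefore contains a point off every $H_y$, and that point lies in some $C_g$ with $g\neq f$.

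\textbf{Edges in (c).} The paper treats the edge/shared-facet correspondence as standard and gives only a brief sketch. You prove it in this setting with two explicit arguments. First, a direction on exactly one hyperplane $H_x$ exposes the segment obtained by freeing $t_x$. Second, a direction on two or more hyperplanes exposes a face of dimension at least two, because distinct lifted cube points are non-parallel.

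\textbf{What each route buys.} The paper's argument is shorter. Yours essentially reproves the zonotope normal-fan theorem for $Z_n$ from the support-function computation alone. It also makes explicit the injectivity of $\chi$ on $\Hn$, which the paper leaves implicit and which is needed to read the $1$-skeleton of $Z_n$ as $G(\Hn)$ vertex-for-vertex.
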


\begin{proof}
For (a), we use the standard normal-fan theorem for zonotopes: if
\[
        Z=\sum_{i=1}^N[-v_i,v_i],
\]
then the normal fan of $Z$ is the fan whose maximal cones are the chambers of the
central arrangement $\{v_i^\perp:1\le i\le N\}$, duplicate hyperplanes ignored 
\cite[\S7.3]{Ziegler1995}. Applied to the generators $\widetilde x$,
$x\in\B^n$, this gives exactly $\Arr_n$, since $H_x=\widetilde x^\perp$ and the
$\widetilde x$ are distinct. Hence the full-dimensional normal cones of $Z_n$ are
the chamber closures of $\Arr_n$.

The support function gives the same description. By additivity over the Minkowski
sum, $h_{Z_n}(a)=\sum_{x\in\B^n}\lvert\langle a,\widetilde x\rangle\rvert$. On an
open chamber $C$ of $\Arr_n$ the chamber signs $\eps_C(x)\in\{-1,1\}$,
characterised by $\lvert\langle a,\widetilde x\rangle\rvert=\eps_C(x)\langle a,\widetilde x\rangle$,
are constant, so
\[
        h_{Z_n}(a)=\sum_{x\in\B^n}\eps_C(x)\langle a,\widetilde x\rangle
        =\Bigl\langle a,\ \sum_{x\in\B^n}\eps_C(x)\widetilde x\Bigr\rangle ,
\]
and $h_{Z_n}$ is linear on $C$, extending to $\overline C$ by continuity. Each
chamber is therefore a cone of linearity of the support function, in agreement
with the normal-fan description.

For (b), fix $a$ in the open chamber $C_f$. The linear functional
$\langle a,\cdot\rangle$ is maximised over
$Z_n=\{\sum_x t_x\widetilde x:-1\le t_x\le 1\}$ by choosing
$t_x=1$ when $\langle a,\widetilde x\rangle>0$ and $t_x=-1$ when
$\langle a,\widetilde x\rangle<0$. On $C_f$, this choice is
$t_x=2f(x)-1$, so the unique maximising point is
\[
        \sum_{x\in\B^n}(2f(x)-1)\widetilde x=\chi(f).
\]
Thus every threshold function $f$ gives a vertex $\chi(f)$ of $Z_n$, and
every $a\in C_f$ maximises there, so $C_f\subseteq N_{Z_n}(\chi(f))$. By (a)
this normal cone is one of the chamber closures, and the only chamber
closure containing the open chamber $C_f$ is $\overline{C_f}$, since distinct
open chambers are disjoint. Hence $N_{Z_n}(\chi(f))=\overline{C_f}$, so the
vertex with normal cone $\overline{C_f}$ is $\chi(f)$.

The lifted cube points span $X$, so $Z_n$ is full-dimensional and every
vertex has a normal cone of full dimension. By (a) that cone is a chamber
closure $\overline{C_f}$, and the previous paragraph identifies the
corresponding vertex as $\chi(f)$. So every vertex equals $\chi(f)$ for some
$f\in\Hn$, giving
\[
        \operatorname{vert}(Z_n)=\{\chi(f):f\in\Hn\}.
\]
The coordinate formulas follow by evaluating
$\chi(f)=\sum_x(2f(x)-1)\widetilde x$ in the coordinates of $X$, using
$|\{x:x_i=1\}|=2^{n-1}$.

For (c), we determine the edges of $Z_n$ from its normal cones. For a direction
$a$, the points of $Z_n$ maximising $\langle a,\cdot\rangle$ form a face, the
convex hull of the vertices attaining the maximum. When $a$ is in the
interior of a normal cone $\overline{C_f}$, the unique maximiser is the
single vertex $\chi(f)$. When $a$ is in the relative interior of a facet
shared by two such cones, exactly two vertices attain the maximum, so the
maximising face is the segment joining them, a one-dimensional face, that is,
an edge of $Z_n$. Conversely, the normal cone of an edge is the
codimension-one cone common to the normal cones of its two endpoints. Hence
two vertices are joined by an edge if and only if their normal cones share a
facet. By Lemma~\ref{lem:one-sign-crossing}, this
means crossing a single supporting hyperplane, namely $H_x$, so
$g=f\oplus\mathbf 1_{\{x\}}$ with $H_x$ supporting a facet of
$\overline{C_f}$, equivalently $x\in\Ess(f)$ by
Theorem~\ref{thm:facet}. The two Chow vectors then satisfy
\[
        \chi(f\oplus\mathbf 1_{\{x\}})-\chi(f)
        =
        -2(2f(x)-1)\widetilde x,
\]
so the joining edge is parallel to $\widetilde x$. Changing the function at the
single input $x$ changes the Chow vector only in the lifted direction
$\widetilde x$. The number of edges at $\chi(f)$ equals the number of facets of
$\overline{C_f}$, which is $|\Ess(f)|=\sigma_n(f)$ by
Theorem~\ref{thm:facet}.

\end{proof}

We view the vertex statement in part (b) as the zonotope form of the classical corner
theorem stated by Winder. To state that theorem explicitly, form the finite set
of Chow vectors $\chi(g)$ over all Boolean functions $g:\B^n\to\B$, and take
its convex hull. By the definition of $Z_n$, this convex hull is precisely
$Z_n$. Winder's corner theorem says that the vertices of this convex hull are
precisely the vectors $\chi(f)$ with $f\in\Hn$
\cite[Theorem~7]{Winder1971}. Winder also treats the solid hull and its
supporting hyperplanes. The normal cone of a vertex of a polytope is the set of linear functionals
maximised at that vertex, and its interior is the set maximised there uniquely.
By part~(a) the normal cone at $\chi(f)$ is $\overline{C_f}$, so its interior is
the open chamber $C_f$. By the definition of the chamber, $C_f$ is the set of
weight vectors that strictly represent $f$, those $a$ with
$\langle a,\widetilde x\rangle>0$ at every $x$ where $f(x)=1$ and
$\langle a,\widetilde x\rangle<0$ at every $x$ where $f(x)=0$ (in particular
$\langle a,\widetilde x\rangle\ne0$ throughout). A strict representation of $f$
is therefore a vector $a$ for which $\chi(f)$ is the unique maximiser of
$\langle a,\cdot\rangle$ on $Z_n$, equivalently one whose supporting hyperplane
meets $Z_n$ only at the corner $\chi(f)$; this is Winder's observation
\cite[Section~2E]{Winder1971}. A weight vector on the boundary of $\overline{C_f}$ does not strictly
represent $f$: there $\langle a,\widetilde x\rangle=0$ for some cube point $x$,
so the generator $[-\widetilde x,\widetilde x]$ of $Z_n$ is orthogonal to $a$
and lies in the face on which $\langle a,\cdot\rangle$ is maximised. The maximum
is then attained on a face of $Z_n$ containing $\chi(f)$, not at $\chi(f)$
alone, so $\chi(f)$ is no longer the unique maximiser. The apex $0$ lies in
every $\overline{C_f}$, since all of $Z_n$ maximises the zero functional, and it
strictly represents no function, as $\langle 0,\widetilde x\rangle=0$ at every
$x$. In zonotope form, parts (a) and (b) are due to
Stenson, who defined this zonotope and identified its vertices with the threshold
functions \cite[Corollary~2.3]{Stenson2014}. Part (c)  records the edge-specification
interpretation used here.

Thus the theorem gives the following correspondence:
\[
\begin{array}{c@{\quad\longleftrightarrow\quad}c}
\text{threshold function } f & \text{vertex } \chi(f),\\
\overline{C_f} & \text{normal cone at } \chi(f),\\
x\in\Ess(f) & \text{incident edge direction } \widetilde x .
\end{array}
\]

For $n=2$, the rhombic dodecahedron $Z_2$ has two vertex types: eight trivalent
vertices and six tetravalent vertices. By Theorem~\ref{thm:zonotope}, these are
exactly the functions with $\sigma_2=3$ and $\sigma_2=4$, respectively, and the
average vertex degree is
\[
        \frac{8\cdot3+6\cdot4}{14}=\frac{24}{7}=\overline\sigma_2 .
\]

\begin{figure}[htbp]
\centering
\resizebox{\textwidth}{!}{%
\begin{tikzpicture}[line cap=round,line join=round]
  \begin{scope}[shift={(0.00,-2.54)},yscale=0.30]
    \shade[shading=radial,inner color=black!18,outer color=white] (0,0) circle (3.06);
  \end{scope}
  \definecolor{fcT0}{rgb}{0.838,0.892,0.973}
  \definecolor{fcB0}{rgb}{0.748,0.802,0.883}
  \definecolor{fcT1}{rgb}{0.675,0.717,0.780}
  \definecolor{fcB1}{rgb}{0.585,0.627,0.690}
  \definecolor{fcT2}{rgb}{0.851,0.906,0.989}
  \definecolor{fcB2}{rgb}{0.761,0.816,0.899}
  \definecolor{fcT3}{rgb}{0.678,0.720,0.784}
  \definecolor{fcB3}{rgb}{0.588,0.630,0.694}
  \definecolor{fcT4}{rgb}{0.841,0.895,0.977}
  \definecolor{fcB4}{rgb}{0.751,0.805,0.887}
  \definecolor{fcT5}{rgb}{0.848,0.903,0.986}
  \definecolor{fcB5}{rgb}{0.758,0.813,0.896}
  \shade[top color=fcT0,bottom color=fcB0] (2.550,0.194) -- (1.684,-0.323) -- (0.367,1.375) -- (1.233,1.893) -- cycle;
  \shade[top color=fcT1,bottom color=fcB1] (2.550,0.194) -- (1.317,-1.699) -- (0.451,-2.216) -- (1.684,-0.323) -- cycle;
  \shade[top color=fcT2,bottom color=fcB2] (-2.550,-0.194) -- (-0.866,-0.517) -- (0.367,1.375) -- (-1.317,1.699) -- cycle;
  \shade[top color=fcT3,bottom color=fcB3] (-2.550,-0.194) -- (-1.233,-1.893) -- (0.451,-2.216) -- (-0.866,-0.517) -- cycle;
  \shade[top color=fcT4,bottom color=fcB4] (0.367,1.375) -- (1.684,-0.323) -- (0.451,-2.216) -- (-0.866,-0.517) -- cycle;
  \shade[top color=fcT5,bottom color=fcB5] (0.367,1.375) -- (1.233,1.893) -- (-0.451,2.216) -- (-1.317,1.699) -- cycle;
  \draw[zonoink,line width=1.0pt] (2.550,0.194) -- (1.317,-1.699);
  \draw[zonoink,line width=1.0pt] (2.550,0.194) -- (1.684,-0.323);
  \draw[zonoink,line width=1.0pt] (2.550,0.194) -- (1.233,1.893);
  \draw[zonoink,line width=1.0pt] (-2.550,-0.194) -- (-1.233,-1.893);
  \draw[zonoink,line width=1.0pt] (-2.550,-0.194) -- (-0.866,-0.517);
  \draw[zonoink,line width=1.0pt] (-2.550,-0.194) -- (-1.317,1.699);
  \draw[zonoink,line width=1.0pt] (0.367,1.375) -- (1.684,-0.323);
  \draw[zonoink,line width=1.0pt] (0.367,1.375) -- (1.233,1.893);
  \draw[zonoink,line width=1.0pt] (0.367,1.375) -- (-0.866,-0.517);
  \draw[zonoink,line width=1.0pt] (0.367,1.375) -- (-1.317,1.699);
  \draw[zonoink,line width=1.0pt] (0.451,-2.216) -- (1.317,-1.699);
  \draw[zonoink,line width=1.0pt] (0.451,-2.216) -- (1.684,-0.323);
  \draw[zonoink,line width=1.0pt] (0.451,-2.216) -- (-1.233,-1.893);
  \draw[zonoink,line width=1.0pt] (0.451,-2.216) -- (-0.866,-0.517);
  \draw[zonoink,line width=1.0pt] (-0.451,2.216) -- (1.233,1.893);
  \draw[zonoink,line width=1.0pt] (-0.451,2.216) -- (-1.317,1.699);
  \draw[gray!55,line width=0.5pt] (3.95,-3.5) -- (3.95,3.7);
  \draw[zonoink,line width=1.3pt] (8.000,1.500) -- (8.000,2.000);
  \draw[zonoink,line width=1.3pt] (8.000,1.500) -- (6.939,0.750);
  \draw[zonoink,line width=1.3pt] (8.000,1.500) -- (9.061,0.750);
  \draw[zonoink,line width=1.3pt] (8.000,1.500) -- (8.000,0.462);
  \draw[zonoink,line width=1.3pt] (8.000,-1.500) -- (8.000,-2.000);
  \draw[zonoink,line width=1.3pt] (8.000,-1.500) -- (6.939,-0.750);
  \draw[zonoink,line width=1.3pt] (8.000,-1.500) -- (9.061,-0.750);
  \draw[zonoink,line width=1.3pt] (8.000,-1.500) -- (8.000,-0.462);
  \draw[zonoink,line width=1.3pt] (5.172,0.000) -- (8.000,2.000);
  \draw[zonoink,line width=1.3pt] (5.172,0.000) -- (6.939,0.750);
  \draw[zonoink,line width=1.3pt] (5.172,0.000) -- (8.000,-2.000);
  \draw[zonoink,line width=1.3pt] (5.172,0.000) -- (6.939,-0.750);
  \draw[zonoink,line width=1.3pt] (8.653,0.000) -- (9.061,0.750);
  \draw[zonoink,line width=1.3pt] (8.653,0.000) -- (8.000,0.462);
  \draw[zonoink,line width=1.3pt] (8.653,0.000) -- (9.061,-0.750);
  \draw[zonoink,line width=1.3pt] (8.653,0.000) -- (8.000,-0.462);
  \draw[zonoink,line width=1.3pt] (10.828,0.000) -- (8.000,2.000);
  \draw[zonoink,line width=1.3pt] (10.828,0.000) -- (9.061,0.750);
  \draw[zonoink,line width=1.3pt] (10.828,0.000) -- (8.000,-2.000);
  \draw[zonoink,line width=1.3pt] (10.828,0.000) -- (9.061,-0.750);
  \draw[zonoink,line width=1.3pt] (7.347,0.000) -- (6.939,0.750);
  \draw[zonoink,line width=1.3pt] (7.347,0.000) -- (8.000,0.462);
  \draw[zonoink,line width=1.3pt] (7.347,0.000) -- (6.939,-0.750);
  \draw[zonoink,line width=1.3pt] (7.347,0.000) -- (8.000,-0.462);
  \filldraw[fill=white,draw=zonored,line width=1pt] (8.000,0.000) circle (3.4pt);
  \fill[zonored] (8.000,0.000) circle (1.1pt);
  \filldraw[fill=zonoorange,draw=white,line width=1pt] (8.000,1.675) -- (8.175,1.500) -- (8.000,1.325) -- (7.825,1.500) -- cycle;
  \filldraw[fill=zonoorange,draw=white,line width=1pt] (8.000,-1.325) -- (8.175,-1.500) -- (8.000,-1.675) -- (7.825,-1.500) -- cycle;
  \filldraw[fill=zonoorange,draw=white,line width=1pt] (5.172,0.175) -- (5.347,0.000) -- (5.172,-0.175) -- (4.997,0.000) -- cycle;
  \filldraw[fill=zonoorange,draw=white,line width=1pt] (8.653,0.175) -- (8.828,0.000) -- (8.653,-0.175) -- (8.478,0.000) -- cycle;
  \filldraw[fill=zonoorange,draw=white,line width=1pt] (10.828,0.175) -- (11.003,0.000) -- (10.828,-0.175) -- (10.653,0.000) -- cycle;
  \filldraw[fill=zonoorange,draw=white,line width=1pt] (7.347,0.175) -- (7.522,0.000) -- (7.347,-0.175) -- (7.172,0.000) -- cycle;
  \filldraw[fill=zonoblue,draw=white,line width=1pt] (8.000,2.000) circle (4.6pt);
  \filldraw[fill=zonoblue,draw=white,line width=1pt] (6.939,0.750) circle (4.6pt);
  \filldraw[fill=zonoblue,draw=white,line width=1pt] (9.061,0.750) circle (4.6pt);
  \filldraw[fill=zonoblue,draw=white,line width=1pt] (8.000,0.462) circle (4.6pt);
  \filldraw[fill=zonoblue,draw=white,line width=1pt] (8.000,-2.000) circle (4.6pt);
  \filldraw[fill=zonoblue,draw=white,line width=1pt] (6.939,-0.750) circle (4.6pt);
  \filldraw[fill=zonoblue,draw=white,line width=1pt] (9.061,-0.750) circle (4.6pt);
  \filldraw[fill=zonoblue,draw=white,line width=1pt] (8.000,-0.462) circle (4.6pt);
  \node[lblblue] (Lb3) at (8.100,3.050) {$x_1\wedge x_2$};
  \draw[zonoblue,line width=0.7pt] (Lb3.south) -- (8.000,2.180);
  \node[lblorange] (Lb4) at (11.978,0.000) {$x_1$};
  \draw[zonoorange,line width=0.7pt] (Lb4.west) -- (11.028,0.000);
  \node[font=\small\sffamily,anchor=north] at (0,3.95) {3D orientation view};
  \node[font=\small\sffamily,anchor=north] at (8.00,3.95) {Schlegel view};
\end{tikzpicture}
}
\caption{Two views of the threshold zonotope $Z_2$, which is
a rhombic dodecahedron. The left panel is a three-dimensional orientation view.
The right panel gives a Schlegel-style drawing (a planar representation) of the
one-skeleton. In the right panel, the blue circular
vertices are the eight trivalent vertices, corresponding to the functions with
$\sigma_2=3$; these are the conjunctions and disjunctions of two literals. The
orange diamond vertices are the six tetravalent vertices, corresponding to the
functions with $\sigma_2=4$; these are the four single-literal functions and
the two constants. The labels show the representative vertices $x_1\wedge x_2$
and $x_1$. By Theorem~\ref{thm:zonotope}, vertex degree is specification number.
The circled central point is the common Chow point of XOR and its complement;
these two functions are not threshold functions and do not give vertices of
$Z_2$. The TikZ code for this figure was prepared with the assistance of Anthropic Claude and verified by the author.}
\label{fig:threshold-zonotope-n2}
\end{figure}

The zonotope makes several earlier quantities transparent. Under the vertex
identification $f\mapsto\chi(f)$, its one-skeleton is the one-inclusion graph
$G(\Hn)$. Thus chamber adjacency, one-inclusion adjacency, and zonotope edge
adjacency are the same structure: crossing the facet indexed by $x$ corresponds
to changing the value at $x$, and to traversing the zonotope edge parallel to
$\widetilde x$. In particular, the average specification number is the average
vertex degree of $Z_n$,
\[
        \overline\sigma_n=\frac{2|E(Z_n)|}{|V(Z_n)|},
        \qquad |V(Z_n)|=|\Hn|,
\]
and, comparing with Lemma~\ref{lem:facet-count}, the edge count is
$|E(Z_n)|=\sum_{x\in\B^n}r(\Arr_n^{H_x})=2^n r(\Res)$, so the resonance formula
of Corollary~\ref{cor:exact} is the statement that $Z_n$ has $2^n r(\Res)$
edges. In the same picture, Corollary~\ref{cor:lower} and
the known bound recorded in Theorem~\ref{thm:vc-average} give linear lower and upper bounds for the average
vertex degree of $Z_n$. The standard correspondence between a
central arrangement and its associated zonotope is classical
\cite[\S7.3]{Ziegler1995};  in the synthesis used here, the vertex coordinates are the Chow parameters and the vertex degree is the specification number. This places the facet correspondence, Zuev's threshold graph, the resonance count, and the Chow parameters on one object.

\section{Threshold functions with minimum specification number}\label{sec:Tn}

Lozin \emph{et al.}\ \cite{Lozin2017,Lozin2018LRO} disproved the converse of \cite[Theorem 2.10]{ABST1995} by exhibiting the case $k=n-1$ of what we call the \emph{Lozin family}
\[
        f_{n,k}
        =
        x_1x_2 \vee x_1x_3 \vee \cdots \vee x_1x_k
        \vee x_2x_3\cdots x_n,
        \qquad 3 \le k \le n-1.
\]
For $n\ge 4$, the function $f_{n,n-1}$ is threshold, non-lro, and has specification number $n+1$. The later paper \cite{Lozin2022} extended this to the full family $f_{n,k}$ and studied
\[
\Tn = \{f \in \Hn : f \text{ depends on all } n \text{ variables and } \sigma_n(f) = n+1\}
\]
and developed three operations, which we call the \emph{Lozin operations}: \emph{adding a variable}, \emph{extension on a variable}, and \emph{symmetric variables extension}, the first two of which already give an inductive description of $\Tn$ for $n \le 5$. They also found a member $f^\ast$ of $\Tk{6}$ not generated by any of these three operations, observed that $f^\ast$ arises from $f_{4,3}$ by a fourth operation, and conjectured that this operation always produces members of $\Tk{n+2}$, leaving the general case open \cite[\S6, equation (10)]{Lozin2022}.

\subsection{The simplicial-chamber reformulation}

The chamber-facet viewpoint reformulates $\Tn$ cleanly.

\begin{theorem}\label{thm:Tn-simplicial}
For $n\ge2$, $\Tn$ is the set of threshold functions $f\in\Hn$ whose chamber $\overline{C_f}$ in $\Arr_n$ is simplicial.
\end{theorem}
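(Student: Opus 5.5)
The inclusion $\Tn\subseteq\{f:\overline{C_f}\text{ simplicial}\}$ is immediate from Corollary~\ref{cor:simplicial}, since $\sigma_n(f)=n+1$ exactly when $\overline{C_f}$ is simplicial. For the converse, suppose $\overline{C_f}$ is simplicial, so that $\sigma_n(f)=n+1$ by Corollary~\ref{cor:simplicial}. It remains to show that $f$ depends on all $n$ variables. I will prove the contrapositive: if some variable is irrelevant to $f$, then $\sigma_n(f)\ge 2n>n+1$, where the last inequality uses $n\ge2$.

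By symmetry, suppose $f$ does not depend on $x_n$, so that $f(y,0)=f(y,1)=g(y)$ for a threshold function $g\in\mathcal H_{n-1}$. The key step is to show that for every $y\in\Ess(g)$, both points $(y,0)$ and $(y,1)$ lie in $\Ess(f)$. Since $n-1\ge1$, Corollary~\ref{cor:lower} gives $|\Ess(g)|\ge n$. Hence $|\Ess(f)|\ge 2n$, and Theorem~\ref{thm:facet} yields $\sigma_n(f)\ge 2n$.

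To prove the key step, fix $y\in\Ess(g)$. Theorem~\ref{thm:facet} says $H_y$ supports a facet of $\overline{C_g}$ in $\R^{n}$. Lemma~\ref{lem:facet-relint} then gives a point $p=(p_0,\ldots,p_{n-1})$ with the following properties:
\begin{itemize}
\item $\langle p,\widetilde y\rangle=0$;
\item $\eps_g(z)\langle p,\widetilde z\rangle>0$ for every $z\ne y$.
\end{itemize}
Now consider the parameter vector $q=(p_0+c_0,p_1,\ldots,p_{n-1},c_n)\in\R^{n+1}$ with $c_0,c_n$ small. Its value at the lifted point of $(z,z_n)$ is
\[
\langle p,\widetilde z\rangle+c_0+c_nz_n .
\]
For $z\ne y$ the first term dominates when $c_0,c_n$ are small, so the sign stays $\eps_g(z)$, matching $f$. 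At $z=y$ the value is $c_0+c_nz_n$.

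To flip $f$ only at $(y,0)$, choose $c_0$ with sign $-\eps_g(y)$ and $c_n=-2c_0$. Then $c_0+c_n=-c_0$ has sign $\eps_g(y)$. So $q$ strictly represents $f\oplus\mathbf 1_{\{(y,0)\}}$, which is therefore threshold. To flip $f$ only at $(y,1)$, choose $c_0$ with sign $\eps_g(y)$ and again $c_n=-2c_0$; this flips the sign at $(y,1)$ and keeps it at $(y,0)$. Hence both points are essential.

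The main obstacle is this perturbation step. One must keep all the other $2^n-2$ sign conditions intact while controlling the signs at the two fibre points independently. The relative-interior point supplied by Lemma~\ref{lem:facet-relint} handles this: its strict inequalities give a uniform positive margin, which survives any sufficiently small perturbation.
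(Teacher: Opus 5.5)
Your proof is correct, but it handles the relevance clause by a different route from the paper's. The paper splits on the number $r$ of relevant variables. It treats constants separately ($\sigma_n=2^n$). For $1\le r\le n-1$ it imports the exact multiplicative formula $\sigma_n(f)=2^{n-r}\sigma_r(g)$ from \cite[Proposition~2.14]{ABST1995}, then minimises $2^{n-r}(r+1)$ over $r$ to reach $\sigma_n(f)\ge 2n$.

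You need only a single irrelevant variable. You prove the inclusion $\Ess(g)\times\B\subseteq\Ess(f)$ directly inside the chamber framework, using the relative-interior point of Lemma~\ref{lem:facet-relint} and the perturbation $(c_0,c_n)=(c_0,-2c_0)$, which controls the signs at the two fibre points $(y,0)$ and $(y,1)$ independently. Combined with Corollary~\ref{cor:lower} for $g\in\mathcal H_{n-1}$, this gives the same bound $\sigma_n(f)\ge 2n$.

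Your route is self-contained, avoiding the external citation. It also removes the case split, since a constant $g$ is covered by Corollary~\ref{cor:lower} like any other threshold function. It is worth saying explicitly that $g=f|_{x_n=0}$ is threshold as a restriction of $f$.

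What the paper's citation buys is the exact value rather than a lower bound. Your argument nearly recovers that as well. If $f\oplus\mathbf 1_{\{(y,b)\}}$ is threshold, its restriction to the layer $x_n=b$ is $g\oplus\mathbf 1_{\{y\}}$, so $y\in\Ess(g)$. This gives $\Ess(f)=\Ess(g)\times\B$ and $\sigma_n(f)=2\sigma_{n-1}(g)$, and iterating yields the factor $2^{n-r}$.
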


\begin{proof}
By Corollary~\ref{cor:simplicial}, $\overline{C_f}$ is simplicial if and only if $\sigma_n(f)=n+1$. For $n\ge2$, the condition $\sigma_n(f)=n+1$ forces $f$ to depend on all $n$ variables. Suppose instead that $f$ depends on only $r\le n-1$ of its variables. If $r=0$ then $f$ is constant; for $f\equiv1$ the function equal to $0$ at a single vertex and $1$ elsewhere is a threshold function agreeing with $f$ except at that vertex, so every vertex lies in any specifying set and $\sigma_n(f)=2^n$ (symmetrically for $f\equiv0$). If $1\le r\le n-1$, let $g$ be the restriction of $f$ to its relevant variables; by \cite[Proposition~2.14]{ABST1995} the $n-r$ irrelevant variables multiply the specification number by $2^{\,n-r}$, so $\sigma_n(f)=2^{\,n-r}\sigma_r(g)\ge 2^{\,n-r}(r+1)$, using $\sigma_r(g)\ge r+1$ (Corollary~\ref{cor:lower}). The minimum of $2^{\,n-r}(r+1)$ over $1\le r\le n-1$ is $2n$, attained at $r=n-1$, and $2^n\ge 2n$ for $n\ge1$, so in either case $\sigma_n(f)\ge 2n>n+1$, contradicting $\sigma_n(f)=n+1$. Hence, for $n\ge2$, the clause requiring dependence on all $n$ variables in the definition of $\Tn$ is implied by $\sigma_n(f)=n+1$, and $\Tn$ is exactly the set of threshold functions whose chamber $\overline{C_f}$ in $\Arr_n$ is simplicial. The hypothesis $n\ge2$ is necessary: at $n=1$ all four threshold functions on one variable have $\sigma_1=2=n+1$ and simplicial chambers, yet $\Tk{1}=\{x_1,\overline{x_1}\}$ excludes the two constants.
\end{proof}

\begin{remark}
Theorem~\ref{thm:Tn-simplicial} does more than translate the equality
$\sigma_n(f)=n+1$ into a facet count: it also shows that, for $n\ge2$, the
requirement that a member of $\Tn$ depend on all variables is redundant.
Minimum specification number already forces full relevance. Thus, for $n\ge2$,
$\Tn$ may be viewed intrinsically as the class of threshold functions with
$\sigma_n(f)=n+1$, or geometrically as the class of simplicial chambers of
$\Arr_n$. Such a chamber $\overline{C_f}$ has exactly $n+1$ facets, supported
by the hyperplanes $H_x$ with $x\in\Ess(f)$, so the problem is to determine
which labelled sets of $n+1$ essential points arise from functions in $\Tn$.
The labels matter: $\Ess(f)$ alone does not determine $f$, since $f$ and its
complement have the same essential points but opposite labels. The chamber
analysis of the operations that build $\Tn$ begins in
Subsection~\ref{ssec:lozin-operations} and continues, for the
symmetric-variables extension and the fourth operation, in
Subsection~\ref{ssec:symmetric} and Section~\ref{sec:fourth}.
\end{remark}

\subsection{Linear read-once, Chow, and forbidden subfunctions}\label{ssec:lro-chow}

The 2018 paper of Lozin \emph{et al.}\ gives complementary characterisations of the lro class that are useful for separating the lro theory from the minimum-specification theory. Among read-once Boolean functions, the following are equivalent \cite[Theorem~7]{Lozin2018LRO}: being lro, being a Chow function, and having no restriction obtained, up to variable negation, from
\[
  g_1=(x\vee y)\wedge(z\vee u), \qquad
  g_2=(x\wedge y)\vee(z\wedge u).
\]
Here a Chow function means a Boolean function uniquely determined by its Chow parameters (Section~\ref{sec:zonotope}). The phrase ``up to variable negation'' means that, before taking the restriction, we may replace any subset of variables by their complements. Thus the forbidden restrictions include $g_1$, $g_2$, and all functions obtained from them by relabelling variables and complementing variables.

Within the threshold universe, lro functions are characterised by a different forbidden-restriction family: a threshold function is lro if and only if it contains no restriction, up to relabelling and variable negation, of
\[
  q_m = x_1(x_2\vee\cdots\vee x_m) \vee x_2x_3\cdots x_m, \qquad m\ge 3,
\]
where $q_m$ is positive, threshold, non-lro, minimal with respect to restrictions, and has specification number $2m$ \cite[Theorem 8 and Lemma 8]{Lozin2018LRO}.

These results should not be taken as a characterisation of minimum specification. The functions $q_m$ are the minimal threshold obstructions to lro, yet the counterexamples $f_n$ of \cite[Theorem 9]{Lozin2018LRO} have minimum specification number $n+1$ while remaining non-lro, and the paper observes that this minimum-specification family is not closed under restrictions. Thus the forbidden-restriction characterisations of lro functions and the $\Tn$ simplicial-chamber problem are distinct. The lro functions form one natural subfamily of $\Tn$, but $\Tn$ is strictly larger and is not governed by the same forbidden-restriction characterisation.

\begin{remark}
We should regard simplicial chambers as a restricted subclass. By the theorem of Fukuda, Tamura, and Tokuyama \cite[Corollary~1.3]{Fukuda1993}, every spherical arrangement of this dimension has average chamber-facet count at most $2n$; in a generic central arrangement of $N$ hyperplanes in $\R^{n+1}$, the average tends to $2n$ as $N\to\infty$ (Section~\ref{sec:growth}), so a simplicial chamber, with its $n+1$ facets, is atypical. The Boolean threshold arrangement $\Arr_n$ has hyperplanes from the lifted cube points $(1,x)$, $x\in\B^n$, which satisfy many linear dependencies, and identifying which of its chambers are simplicial is a substantive condition on this structured arrangement. For $n\ge2$, the class $\Tn$ is precisely the set of threshold functions whose associated chamber closure is simplicial. The Lozin family $f_{n,k}$ shows that such simplicial chambers can arise from functions more elaborate than nested or linear read-once formulae.
\end{remark}

\subsection{Lozin operations as chamber operations}\label{ssec:lozin-operations}

Lozin \emph{et al.}\ build members of $\Tn$ inductively, beginning with small ones and
applying operations designed to hold the specification number at its minimum.
The chamber model explains why their first two operations do this. Each is, on
chambers, the same construction, the product with a closed half-line, which sends
a simplicial cone to a simplicial cone and adjoins a single facet. Since, for $n\ge2$, a member
of $\Tn$ is exactly a threshold function with simplicial chamber
(Theorem~\ref{thm:Tn-simplicial}), these operations carry $\Tn$ into $\Tk{n+1}$
(Corollary~\ref{cor:base-preserve}). We establish this and determine the essential
sets that result. Throughout this subsection, positive means monotone increasing
in each variable.

Two polyhedral cones $K\subseteq V$ and $K'\subseteq V'$ are \emph{linearly
equivalent} if there is an invertible linear map $T:V\to V'$ such that
$T(K)=K'$. Such a map preserves dimension and carries bounding halfspaces to
bounding halfspaces, so it maps the facets of $K$ bijectively to those of $K'$.
Linearly equivalent cones therefore have the same number of facets, and in
particular linear equivalence preserves simpliciality.

We start with the single geometric fact behind both operations: taking the product with
a closed half-line preserves simpliciality, adding exactly one facet.

\begin{lemma}[The half-line product preserves simpliciality]\label{lem:halfline-simplicial}
Let $K\subseteq\R^d$ be a full-dimensional pointed polyhedral cone. Then
$K\times\R_{\ge0}\subseteq\R^{d+1}$ is again full-dimensional and pointed, and its
facets are the sets $F\times\R_{\ge0}$, one for each facet $F$ of $K$, together
with the base $K\times\{0\}$. It therefore has exactly one more facet than $K$,
and it is simplicial if and only if $K$ is.
\end{lemma}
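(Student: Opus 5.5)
I would write $K$ by its facet-defining halfspaces, using Lemma~\ref{lem:facet-halfspaces}:
\[
        K=\{a\in\R^d:\langle u_i,a\rangle\ge0,\ i=1,\ldots,m\},
\]
where the $u_i$ are the facet normals, one for each facet $F_i$ of $K$. Then
\[
        K\times\R_{\ge0}=\{(a,s)\in\R^{d+1}:\langle u_i,a\rangle\ge0\ (i\le m),\ s\ge0\}.
\]
So $K\times\R_{\ge0}$ is a polyhedral cone with the $m+1$ candidate normals $(u_i,0)$ and $(0,1)$. Full-dimensionality is immediate: if $c$ is an interior point of $K$, then $(c,1)$ is interior to the product, since the product of open sets is open. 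Pointedness is also direct. If $(a,s)$ and $-(a,s)$ both lie in the product, then $s\ge0$ and $-s\ge0$ give $s=0$, and $a\in K\cap(-K)=\{0\}$.

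Next I would identify the facets, using the criterion already used in Lemma~\ref{lem:facet-halfspaces}: a defining inequality is facet-defining exactly when some point of the cone satisfies it with equality and all the other inequalities strictly. For $(u_i,0)$, Lemma~\ref{lem:facet-relint}'s argument (or directly the relative interior of $F_i$) gives a point $p\in F_i$ with $\langle u_j,p\rangle>0$ for $j\ne i$. Then $(p,1)$ lies on $\{\langle u_i,a\rangle=0\}$ with every other inequality strict, including $s>0$. Hence $(K\times\R_{\ge0})\cap\{\langle u_i,a\rangle=0\}$ is $d$-dimensional, and it equals $F_i\times\R_{\ge0}$. For the base, the point $(c,0)$ with $c$ interior to $K$ satisfies $s=0$ and all $\langle u_i,c\rangle>0$, so $K\times\{0\}$ is a facet.

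It remains to show there are no other facets and no coincidences. Every facet of a polyhedral cone is cut out by one of the inequalities of any representation: the facet's relative interior lies on some defining hyperplane, since otherwise its points would be interior. So the facets are among these $m+1$ sets. They are pairwise distinct because their supporting hyperplanes are distinct: the $u_i$ are pairwise non-parallel facet normals, and $(0,1)$ is not parallel to any $(u_i,0)$ because each $u_i\neq0$. The facet count is therefore exactly $m+1$.

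Finally, simpliciality in $\R^{d+1}$ means having $d+1$ facets, which by the count happens exactly when $m=d$, that is, when $K$ is simplicial. The only mildly delicate step is the "no other facets" claim. I would handle it by the relative-interior argument above: a relative-interior point of a facet lies on exactly one supporting hyperplane, and that hyperplane must be one of the defining ones, since otherwise the point would be interior to the cone.
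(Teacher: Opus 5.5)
Your proof is correct, but it takes a different route from the paper's. The paper works through the face lattice of a product. The nonempty faces of $K\times\R_{\ge0}$ are the products $F\times G$ of faces of the two factors, with dimensions adding, so a codimension-one face pairs a facet of one factor with the whole of the other. Since $\R_{\ge0}$ has the single facet $\{0\}$, the list $F\times\R_{\ge0}$ together with $K\times\{0\}$ follows at once, and distinctness is automatic. You instead start from the facet description $\langle u_i,a\rangle\ge0$, $s\ge0$ supplied by Lemma~\ref{lem:facet-halfspaces}. You certify each inequality as facet-defining with an explicit witness: $(p,1)$ with $p\in\operatorname{relint}F_i$, and $(c,0)$ with $c$ interior to $K$. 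You then argue separately that no other facets occur and that the $m+1$ facets are distinct. The paper's version is shorter and carries over unchanged to an arbitrary factor $Q$, as in Proposition~\ref{prop:product-operations}. However, it rests on the product-face theorem, which the paper asserts without proof, and it treats pointedness and full-dimensionality in a single clause. Your version uses only tools already established in Section~\ref{sec:setup} and Lemma~\ref{lem:facet-relint}. It also exhibits the facet normals $(u_i,0)$ and $(0,1)$ explicitly, which is the form in which Propositions~\ref{prop:add-variable-product} and~\ref{prop:extension-product} read off the new essential point from the base facet $u=0$. The step you flag as delicate is sound. A linear functional that is non-negative on a facet and vanishes at one of its relative-interior points vanishes on the whole facet. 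Hence any defining hyperplane through such a point contains the facet and equals its span. The same observation justifies taking $p$ directly in $\operatorname{relint}F_i$: if $\langle u_j,p\rangle=0$ for some $j\ne i$, then $F_i$ would lie in the hyperplane of $F_j$, forcing $F_i=F_j$.
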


\begin{proof}
Pointedness and full dimensionality pass to the product. The nonempty faces of a product of polyhedra are the products $F\times G$ of a face $F$ of one factor and a face $G$ of the other, and $\dim(F\times G)=\dim F+\dim G$; a codimension-one face therefore pairs a facet of one factor with the whole of the other. Since $\R_{\ge0}$ has the single facet $\{0\}$, the
facets of $K\times\R_{\ge0}$ are the sets $F\times\R_{\ge0}$ for $F$ a facet of
$K$, together with the base $K\times\{0\}$. This is one more facet than $K$, in
one higher dimension. A full-dimensional pointed cone in $\R^{d+1}$ is simplicial
exactly when it has $d+1$ facets, so $K\times\R_{\ge0}$ is simplicial precisely
when $K$ has $d$ facets, that is, precisely when $K$ is simplicial.
\end{proof}

\begin{lemma}\label{lem:positive-weights}
Let $f$ be a positive threshold function depending on all $n$ variables, and
write $L_a(x)=a_0+\sum_{j=1}^n a_jx_j$. If $a\in C_f$, then
$a_j>0$ for $j=1,\ldots,n$. If $a\in\overline{C_f}$, then $a_j\ge0$ for
$j=1,\ldots,n$.
\end{lemma}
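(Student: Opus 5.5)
The plan is to prove the open-chamber statement directly from positivity and relevance, and then pass to the closure by continuity.

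Fix $a\in C_f$ and an index $j$. Because $f$ depends on $x_j$, there is a point $z\in\B^n$ with $z_j=0$ and $f(z)\ne f(z+e_j)$. Positivity (monotone increasing in each variable) forces $f(z)=0$ and $f(z+e_j)=1$.

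Now read off the chamber signs. Since $a$ lies in the open chamber $C_f$, its signs are strict: $\eps_{C_f}(z)=-1$ and $\eps_{C_f}(z+e_j)=+1$. This gives
\[
  L_a(z)=\langle a,\widetilde z\rangle<0<\langle a,\widetilde{z+e_j}\rangle=L_a(z+e_j).
\]
Subtracting, $L_a(z+e_j)-L_a(z)=a_j>0$. Note that the witness pair $(z,z+e_j)$ depends only on $f$, not on $a$. This is the only point that needs care: a single witness must serve for all $a$, so that it transfers to the closure.

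For the closure, I would use the description of $\overline{C_f}$ by the weak inequalities $\eps_{C_f}(x)\langle a,\widetilde x\rangle\ge0$ from Section~\ref{sec:setup}. For $a\in\overline{C_f}$ these give $L_a(z)\le0\le L_a(z+e_j)$ for the same fixed witness pair, and hence $a_j\ge0$. Alternatively, since $\overline{C_f}$ is the closure of the open set $C_f$, $a$ is a limit of points of $C_f$, each with positive $j$-th coordinate, and the limit has $a_j\ge0$.

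No real obstacle is expected. The one subtlety is to note that strictness holds on the open chamber but only weakly on the closure. This is why the closure conclusion is $a_j\ge0$ rather than $a_j>0$: for instance, the apex $0$ lies in $\overline{C_f}$.
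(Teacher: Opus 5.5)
Your proof is correct and follows the paper's argument essentially verbatim. You take a relevance witness $z$ with $z_j=0$, oriented by positivity so that $f(z)=0$ and $f(z+e_j)=1$, which gives $L_a(z)<0<L_a(z)+a_j$ on the open chamber; the closure case then follows by taking limits, and your alternative via the weak defining inequalities of $\overline{C_f}$ works equally well.
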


\begin{proof}
For each relevant variable $x_j$, positivity gives a point $u$ with $u_j=0$
such that $f(u)=0$ and $f(u+e_j)=1$. If $a\in C_f$, then
\[
        L_a(u)<0<L_a(u+e_j)=L_a(u)+a_j,
\]
so $a_j>0$. The closure statement follows by taking limits.
\end{proof}

\begin{proposition}[Adding a variable as a product with a closed half-line]\label{prop:add-variable-product}
Let $f$ be a positive threshold function depending on all $n$ variables.
Let $g=f\vee y$ and $h=f\wedge y$. Then both chamber closures
$\overline{C_g}$ and $\overline{C_h}$ in the $(n+1)$-variable threshold
arrangement are linearly equivalent to
\[
        \overline{C_f}\times\R_{\ge0}.
\]
Consequently, adding a variable preserves simpliciality of chamber closures.
More precisely,
\[
        \Ess(g)=\{(x,0):x\in\Ess(f)\}\cup\{(0,\ldots,0,1)\},
\]
and
\[
        \Ess(h)=\{(x,1):x\in\Ess(f)\}\cup\{(1,\ldots,1,0)\}.
\]
\end{proposition}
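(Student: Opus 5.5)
We will write parameter vectors for $n+1$ variables as $(a_0,a,c)$ with $a=(a_1,\ldots,a_n)$, where $c$ is the coefficient of $y$. The lifted point of $(x,t)$ is $(1,x,t)$. We first treat $g=f\vee y$ and describe $\overline{C_g}$ explicitly. A vector $(a_0,a,c)$ represents $g$ strictly if and only if two conditions hold. First, $(a_0,a)$ represents $f$ strictly on the slice $t=0$, so $(a_0,a)\in C_f$. Second, $a_0+\langle a,x\rangle+c>0$ for every $x$. By Lemma~\ref{lem:positive-weights}, $a_j>0$ on $C_f$, so the minimum of $a_0+\langle a,x\rangle$ over the cube is $a_0$, attained at $x=0$. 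The second condition therefore reduces to $a_0+c>0$.

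This gives
\[
C_g=\{(a_0,a,c):(a_0,a)\in C_f,\ a_0+c>0\}.
\]
Taking closures, and using that the open description has nonempty interior, we get $\overline{C_g}=\{(a_0,a)\in\overline{C_f},\ a_0+c\ge0\}$. We need to check that the closure of this intersection equals the intersection of the closures. This holds because the set is an intersection of open convex sets with a common point, and the closure of a nonempty open convex set is its closed hull. Now apply the invertible linear map $T(a_0,a,c)=(a_0,a,a_0+c)$. It carries $\overline{C_g}$ onto $\overline{C_f}\times\R_{\ge0}$, which proves the linear equivalence. By Lemma~\ref{lem:halfline-simplicial}, the facets of the product are the sets $F\times\R_{\ge0}$ together with the base $\overline{C_f}\times\{0\}$, so simpliciality is preserved.

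To read off $\Ess(g)$ through Theorem~\ref{thm:facet}, we pull the facets back under $T$. The base corresponds to $\{a_0+c=0\}$, which is the hyperplane $H_{(0,\ldots,0,1)}$. A facet $F=\overline{C_f}\cap H_x$ pulls back to $\overline{C_g}\cap\{a_0+\langle a,x\rangle=0\}$, which is $\overline{C_g}\cap H_{(x,0)}$. Every facet of $\overline{C_g}$ is supported by some $H_z$, and Lemma~\ref{lem:Hx-injective} identifies $z$ uniquely. Hence $\Ess(g)=\{(x,0):x\in\Ess(f)\}\cup\{(0,\ldots,0,1)\}$.

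The case $h=f\wedge y$ is symmetric, with the roles of the two slices exchanged. On the slice $t=0$ we need $a_0+\langle a,x\rangle+0<0$ for all $x$. On the slice $t=1$ we need $(a_0+c,a)\in C_f$. Substitute $b_0=a_0+c$. The condition on the slice $t=0$ becomes $b_0-c+\langle a,x\rangle<0$ for all $x$. The maximum of $\langle a,x\rangle$ occurs at $x=\mathbf 1$, again because $a_j>0$. So the condition is $b_0+\sum_j a_j-c<0$. The map $(a_0,a,c)\mapsto(a_0+c,a,\,c-a_0-c-\sum_j a_j)$ is invertible and sends $\overline{C_h}$ onto $\overline{C_f}\times\R_{\ge0}$. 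Its last coordinate is $-(a_0+\sum_j a_j)$, which is the negative of the evaluation at $(\mathbf 1,0)$.

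Reading off the facets as before gives $H_{(x,1)}$ for $x\in\Ess(f)$ and $H_{(1,\ldots,1,0)}$ for the base. This yields the stated $\Ess(h)$.

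The main obstacle is the reduction of the infinitely strong condition "for all $x$" to the single corner inequality. This is exactly where positivity and dependence on all variables enter, through Lemma~\ref{lem:positive-weights}. On the closure, $a_j\ge0$ still makes the reduction valid, so the closed description is also correct; alternatively the closed description follows from the open one by the convexity argument above.
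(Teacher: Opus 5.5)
Your proof is correct and takes essentially the paper's approach: you split into the two $y$-layers and use Lemma~\ref{lem:positive-weights} to reduce the constant layer to a single corner inequality. You then shear by the same linear maps onto $\overline{C_f}\times\R_{\ge0}$, and read off the essential sets from Lemma~\ref{lem:halfline-simplicial} and Theorem~\ref{thm:facet}. The one difference is that you describe the open chamber first and then pass to closures, whereas the paper imposes the weak inequalities directly using the closure half of Lemma~\ref{lem:positive-weights}; your passage to closures is also immediate because $T$ is a linear homeomorphism carrying $C_g$ onto $C_f\times\R_{>0}$.
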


\begin{proof}
Use coordinates $(a,b)$ on the parameter space for threshold functions in the
variables $(x,y)$, and write
\[
        M_{a,b}(x,y)=a_0+\sum_{j=1}^n a_jx_j+by.
\]
Thus $(a,b)\in\overline{C_g}$, respectively $(a,b)\in\overline{C_h}$, means
that $M_{a,b}$ satisfies the weak sign conditions determined by $g$,
respectively by $h$, on $\B^{n+1}$.

For $g=f\vee y$, we have $g(x,0)=f(x)$ and $M_{a,b}(x,0)=L_a(x)$. Hence
the sign conditions on the layer $y=0$ are precisely the closed chamber
conditions $a\in\overline{C_f}$. The layer $y=1$ is required to be positive
at every point:
\[
        a_0+\sum_{j=1}^n a_jx_j+b\ge0
        \qquad (x\in\B^n).
\]
Since the layer $y=0$ gives $a\in\overline{C_f}$, and $f$ is positive,
Lemma~\ref{lem:positive-weights} gives $a_j\ge0$ for all $j$. Hence the
minimum of $L_a(x)$ over the cube occurs at $x=(0,\ldots,0)$, so this whole
layer is equivalent to
\[
        a_0+b\ge0.
\]
Combining the two layers, $\overline{C_g}=\{(a,b):a\in\overline{C_f}\text{ and }a_0+b\ge0\}$. The shear $(a,b)\mapsto(a,u)$ with $u=a_0+b$ is a linear isomorphism fixing $a$, and it carries these conditions to $a\in\overline{C_f}$ and $u\ge0$, hence identifies $\overline{C_g}$ with $\overline{C_f}\times\R_{\ge0}$.

For $h=f\wedge y$ the layer reproducing $f$ is $y=1$, where $M_{a,b}(x,1)=(a_0+b)+\sum_{j=1}^n a_jx_j$ is the original form for $f$ with constant term $a_0+b$. Put
\[
        c_0=a_0+b,
        \qquad c_j=a_j\quad(1\le j\le n),
\]
so that this layer imposes exactly $c\in\overline{C_f}$. The layer $y=0$ is
required to be non-positive:
\[
        a_0+\sum_{j=1}^n a_jx_j\le0
        \qquad (x\in\B^n).
\]
Since $c_j=a_j\ge0$, the maximum over the cube occurs at $(1,\ldots,1)$, so
this is equivalent to
\[
        a_0+\sum_{j=1}^n a_j\le0.
\]
Combining the two layers, $\overline{C_h}=\{(a,b):c\in\overline{C_f}\text{ and }a_0+\sum_{j=1}^n a_j\le0\}$ with $c=(a_0+b,a_1,\ldots,a_n)$. The map $(a,b)\mapsto(c,u)$ with $u=-a_0-\sum_{j=1}^n a_j$ is a linear isomorphism, carrying these conditions to $c\in\overline{C_f}$ and $u\ge0$, hence identifies $\overline{C_h}$ with $\overline{C_f}\times\R_{\ge0}$.

By Lemma~\ref{lem:halfline-simplicial}, $\overline{C_f}\times\R_{\ge0}$ has the facets of $\overline{C_f}$, each crossed with $\R_{\ge0}$, together with the base $u=0$; it has one more facet than $\overline{C_f}$, and is simplicial if and only if $\overline{C_f}$ is.

We obtain the essential-set descriptions by applying the chamber-facet identification to these facets (Theorem~\ref{thm:facet}), under which a cube point is essential exactly when its hyperplane supports a facet. On the unchanged layer $M_{a,b}$ reduces to the original affine form for $f$, so the old facets lie on $H_{(x,0)}$ for $g$ and on $H_{(x,1)}$ for $h$, with $x\in\Ess(f)$. The new facet $u=0$ is $a_0+b=0$ for $g$, namely $H_{(0,\ldots,0,1)}$, and $a_0+\sum_{j}a_j=0$ for $h$, namely $H_{(1,\ldots,1,0)}$. These displayed cube points are precisely the new essential points of the transformed functions.
\end{proof}

A Boolean function $f$ is \emph{self-dual} if $f(\mathbf 1-x)=1-f(x)$ for every $x\in\B^n$, that is, if complementing every input complements the output.

\begin{proposition}[Extension on a variable as a product with a closed half-line]\label{prop:extension-product}
Let $f$ be a positive threshold function depending on all $n$ variables, and
let $g=f^{(x_i,y)}$ be the extension on the variable $x_i$ of \cite{Lozin2022}, namely
\[
        g(x,y)
        =x_i\bigl(y\vee f|_{x_i=1}\bigr)
         \vee y\, f|_{x_i=0},
\]
where $f|_{x_i=1}$ and $f|_{x_i=0}$ are the restrictions of $f$ obtained by fixing $x_i$, each a function of the remaining $n-1$ variables.
If $f$ is not self-dual, then $\overline{C_g}$ is linearly equivalent to
\[
        \overline{C_f}\times\R_{\ge0}.
\]
In particular, extension on a variable preserves simpliciality for non-self-dual
positive threshold functions. For a coefficient vector $c=(c_0,\ldots,c_n)$ in
$C_f$, put
\[
        D(c)=2c_0+\sum_{j=1}^n c_j .
\]
This linear functional has constant non-zero sign on $C_f$. Writing
$\lambda_i(x)=(x,1-x_i)$, the essential set is
\[
        \Ess(g)=\{\lambda_i(x):x\in\Ess(f)\}\cup\{q^{(i)}\},
\]
where $q^{(i)}\in\B^{n+1}$ is determined by the sign of $D$ on $C_f$ as
follows. If $D>0$ on $C_f$, then
\[
        q^{(i)}_i=0,\qquad q^{(i)}_y=0,\qquad
        q^{(i)}_j=1\quad(j\ne i).
\]
If $D<0$ on $C_f$, then
\[
        q^{(i)}_i=1,\qquad q^{(i)}_y=1,\qquad
        q^{(i)}_j=0\quad(j\ne i).
\]
\end{proposition}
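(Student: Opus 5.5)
The plan is to repeat the layer analysis of Proposition~\ref{prop:add-variable-product}. This time the $2^{n+1}$ cube points split into the image of the embedding $\lambda_i$ and the two ``diagonal'' layers $x_i=y$.

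\textbf{Step 1: the truth table of $g$.} Because $f$ is positive, $f|_{x_i=0}\le f|_{x_i=1}$, and reading off the four cases of $(x_i,y)$ from the defining formula gives:
\begin{itemize}
\item $g=0$ when $x_i=y=0$;
\item $g=1$ when $x_i=y=1$;
\item $g(\lambda_i(x))=f(x)$ for every $x\in\B^n$.
\end{itemize}
For the third item, when $x_i=1$ the point $\lambda_i(x)$ has $y=0$ and $g$ equals $f|_{x_i=1}$; when $x_i=0$ it has $y=1$ and $g$ equals $f|_{x_i=0}$.

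\textbf{Step 2: change of coordinates.} Use coordinates $(a,b)$ with $M_{a,b}(x,y)=a_0+\sum_j a_jx_j+by$. On $\lambda_i(x)$ we get
\[
M_{a,b}(\lambda_i(x))=(a_0+b)+\sum_{j\ne i}a_jx_j+(a_i-b)x_i .
\]
So we put $c_0=a_0+b$, $c_i=a_i-b$ and $c_j=a_j$ for $j\ne0,i$. The map $(a,b)\mapsto(c,b)$ is an invertible linear map, and under it the $\lambda_i$-layer imposes exactly $c\in\overline{C_f}$.

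\textbf{Step 3: the two diagonal layers.} By Lemma~\ref{lem:positive-weights}, $c_j=a_j\ge0$ for $j\ne0,i$. This lets us replace each diagonal layer by a single extreme point.
\begin{itemize}
\item The layer $x_i=y=0$ requires $a_0+\sum_{j\ne i}a_jx_j\le0$ throughout. Its maximum is at all other coordinates equal to $1$, so the layer is equivalent to $b\ge P(c):=c_0+\sum_{j\ne0,i}c_j$.
\item The layer $x_i=y=1$ has its minimum at all other coordinates $0$. It is equivalent to $a_0+a_i+b\ge0$, that is, $b\ge Q(c):=-c_0-c_i$.
\end{itemize}
Hence
\[
\overline{C_g}=\{(c,b):c\in\overline{C_f},\ b\ge\max(P(c),Q(c))\},
\]
and $P-Q=2c_0+\sum_{j\ge1}c_j=D(c)$. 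Both inclusions must be checked. The reverse inclusion uses the positivity of $c_j$ on $\overline{C_f}$, so that the extreme-point inequalities imply the whole layer.

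\textbf{Step 4: the sign of $D$ (the main obstacle).} Writing $L_c$ for the affine form of $c$, we have $D(c)=L_c(x)+L_c(\mathbf 1-x)$ for every $x$. Since $f$ is not self-dual, some $x$ has $f(x)=f(\mathbf 1-x)$. Then $L_c(x)$ and $L_c(\mathbf 1-x)$ have the same fixed strict sign on all of $C_f$, so $D$ has constant non-zero sign on $C_f$ and the corresponding weak sign on $\overline{C_f}$. This is exactly where non-self-duality is needed. If $f$ were self-dual, $D$ could change sign across $C_f$, and the maximum of $P$ and $Q$ would not be linear.

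\textbf{Step 5: the product structure.} If $D>0$ on $C_f$, the constraint $b\ge Q$ is implied by $b\ge P$. The shear $u=b-P(c)$ then identifies $\overline{C_g}$ with $\overline{C_f}\times\R_{\ge0}$. If $D<0$, use $u=b-Q(c)$ instead. In particular $\overline{C_g}$ has non-empty interior, which confirms that $g$ is threshold.

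\textbf{Step 6: the essential set.} Lemma~\ref{lem:halfline-simplicial} gives the facets of the product. Simpliciality is preserved and exactly one facet is added.
\begin{itemize}
\item The old facets $F\times\R_{\ge0}$ lie on the hyperplanes $H_{\lambda_i(x)}$ with $x\in\Ess(f)$, because the $\lambda_i$-layer reproduces the form $L_c$.
\item The base $u=0$ is $b=P(c)$ when $D>0$. This is the hyperplane of the point with $x_i=y=0$ and all other coordinates $1$.
\item When $D<0$ the base is $b=Q(c)$. This is the hyperplane of the point with $x_i=y=1$ and all other coordinates $0$.
\end{itemize}
Each facet lies in a unique $H_z$ by Lemma~\ref{lem:Hx-injective}. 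Theorem~\ref{thm:facet} then converts this facet list into the stated formula for $\Ess(g)$, with $q^{(i)}$ as described.
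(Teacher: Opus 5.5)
Your proposal is correct and follows the paper's proof essentially step for step. It uses the same coordinates $c_0=a_0+b$, $c_i=a_i-b$, reduces the diagonal layers to the same two lower bounds on $b$ (which differ by $D(c)$), applies the same shear onto $\overline{C_f}\times\R_{\ge0}$, and reads off $\Ess(g)$ from the product facets in the same way. The only real variation is how you fix the sign of $D$: you take a witness $x$ with $f(x)=f(\mathbf 1-x)$ and read the sign of $D=L_c(x)+L_c(\mathbf 1-x)$ directly, whereas the paper shows $D$ cannot vanish on $C_f$ and then uses connectedness. Your route is slightly more direct and also tells you which sign occurs; your remark that full-dimensionality of the product confirms $g$ is threshold is a small point the paper leaves implicit.
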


\begin{proof}
Use the same ambient affine form
\[
        M_{a,b}(x,y)=a_0+\sum_{j=1}^n a_jx_j+by
\]
for the transformed function $g$. Thus $(a,b)\in\overline{C_g}$ means that
$M_{a,b}$ satisfies the weak sign conditions determined by $g$ on $\B^{n+1}$.
We analyse these conditions after the linear change of coordinates
\[
        \Phi:\R^{n+2}\longrightarrow\R^{n+2},\qquad
        (a_0,\ldots,a_n,b)\longmapsto(c_0,\ldots,c_n,b),
\]
where
\[
        c_0=a_0+b,\qquad
        c_i=a_i-b,\qquad
        c_j=a_j\quad(j\ne i).
\]
This is a linear isomorphism, with inverse
\[
        a_0=c_0-b,\qquad
        a_i=c_i+b,\qquad
        a_j=c_j\quad(j\ne i).
\]
Writing $L_c(x)=c_0+\sum_{j=1}^n c_jx_j$, direct substitution gives
\[
        M_{a,b}(x,y)=L_c(x)+b\,(x_i+y-1).
\]
On the off-diagonal layers $(x_i,y)=(0,1)$ and $(x_i,y)=(1,0)$ the factor
$x_i+y-1$ vanishes, so $M_{a,b}(x,y)=L_c(x)$. These two layers contain the
embedded copy of $\B^n$ given by
\[
        \lambda_i(x)=(x,1-x_i).
\]
For every $x\in\B^n$, the defining formula for the extension gives
\[
        g(\lambda_i(x))=f(x),
\]
and the affine form satisfies
\[
        M_{a,b}(\lambda_i(x))=L_c(x).
\]
Thus the closed-chamber sign conditions for $g$ on this embedded copy are
precisely
\[
        L_c(x)\ge0\quad(f(x)=1),\qquad
        L_c(x)\le0\quad(f(x)=0),
\]
which is the condition $c\in\overline{C_f}$.

From the defining formula, $g$ is constant on each diagonal layer: $g=0$ when
$x_i=y=0$, and $g=1$ when $x_i=y=1$.

On the layer $(x_i,y)=(0,0)$ every point is therefore a zero, which in the
coordinates $c$ becomes
\[
        c_0-b+\sum_{j\ne i}c_jx_j\le0 \qquad (x_j\in\B,\ j\ne i).
\]
Since $c_j\ge0$ for $j\ne i$ by Lemma~\ref{lem:positive-weights}, the left
side is largest when $x_j=1$ for all $j\ne i$, so this reduces to
\[
        b\ge c_0+\sum_{j\ne i}c_j.
\]
On the layer $(x_i,y)=(1,1)$ every point is a one, which becomes
\[
        c_0+c_i+b+\sum_{j\ne i}c_jx_j\ge0 \qquad (x_j\in\B,\ j\ne i);
\]
here the left side is smallest when $x_j=0$, so this reduces to
\[
        b\ge -c_0-c_i.
\]
Hence $\overline{C_g}$ is the set of all $(c,b)$ such that
$c\in\overline{C_f}$ and
\[
        b\ge c_0+\sum_{j\ne i}c_j,
        \qquad
        b\ge -c_0-c_i.
\]

The two lower bounds differ by
\[
\left(c_0+\sum_{j\ne i}c_j\right)-(-c_0-c_i)
        =2c_0+\sum_{j=1}^n c_j
        =D(c).
\]
Also,
\[
        L_c(x)+L_c(\mathbf 1-x)=D(c)
        \qquad (x\in\B^n).
\]
Thus, if $D(c)=0$, then $L_c(\mathbf 1-x)=-L_c(x)$ for every $x$. Here
$c\in C_f$, so $L_c$ has the strict signs defining $f$ and vanishes at no cube
point. Hence this identity would imply
\[
        f(\mathbf 1-x)=1-f(x)
        \qquad (x\in\B^n),
\]
so $f$ would be self-dual. Since $f$ is not self-dual, $D$ does not vanish on
$C_f$. As $C_f$ is connected and $D$ is linear, $D$ has a constant strict sign
on $C_f$. Since $D$ is linear, this strict sign condition extends weakly to
$\overline{C_f}$: if $D>0$ on $C_f$, then $D(c)\ge0$ for all
$c\in\overline{C_f}$, while if $D<0$ on $C_f$, then $D(c)\le0$ for all
$c\in\overline{C_f}$.

If $D>0$ on $C_f$, then
\[
        c_0+\sum_{j\ne i}c_j\ge -c_0-c_i
        \qquad(c\in\overline{C_f}),
\]
so the first lower bound for $b$ implies the second. If $D<0$ on $C_f$, then
\[
        -c_0-c_i\ge c_0+\sum_{j\ne i}c_j
        \qquad(c\in\overline{C_f}),
\]
so the second lower bound implies the first. Define
\[
B(c)=
\begin{cases}
c_0+\sum_{j\ne i}c_j, & \text{if }D>0\text{ on }C_f,\\[1mm]
-c_0-c_i, & \text{if }D<0\text{ on }C_f.
\end{cases}
\]
Then
\[
        \overline{C_g}
        =
        \{(c,b):c\in\overline{C_f},\ b\ge B(c)\}.
\]
Finally, the linear change of coordinates
\[
        (c,b)\longmapsto(c,u),\qquad u=b-B(c),
\]
is a shear, with inverse $(c,u)\mapsto(c,u+B(c))$. It identifies
\[
        \{(c,b):c\in\overline{C_f},\ b\ge B(c)\}
\]
with
\[
        \overline{C_f}\times\R_{\ge0}.
\]
Hence $\overline{C_g}$ is linearly equivalent to
$\overline{C_f}\times\R_{\ge0}$. By Lemma~\ref{lem:halfline-simplicial},
$\overline{C_g}$ is then simplicial if and only if $\overline{C_f}$ is, with
one more facet.

To identify $\Ess(g)$, apply Theorem~\ref{thm:facet}: the essential points of
$g$ are the cube points whose hyperplanes support the facets of
$\overline{C_g}$. In the $(c,u)$-coordinates introduced above, the closed chamber
is exactly
\[
        \overline{C_f}\times\R_{\ge0}.
\]
Its facets are therefore the products
\[
        F\times\R_{\ge0},
\]
where $F$ ranges over the facets of $\overline{C_f}$, together with the single
additional facet
\[
        \overline{C_f}\times\{0\}.
\]
The preceding linear changes of coordinates identify this product cone with
$\overline{C_g}$, so the facets of $\overline{C_g}$ are the corresponding images
of these product facets.

Let $x\in\Ess(f)$. The corresponding facet of $\overline{C_f}$ is given by
$L_c(x)=0$. Since
\[
        M_{a,b}(\lambda_i(x))=L_c(x),
\]
the corresponding inherited facet of $\overline{C_g}$ is given, in the original
coordinates, by
\[
        M_{a,b}(\lambda_i(x))=0.
\]
Thus the inherited essential point of $g$ is $\lambda_i(x)$. As $\lambda_i$ is
injective, the inherited facets contribute exactly
$\{\lambda_i(x):x\in\Ess(f)\}$.

The remaining facet is the equality case of the surviving diagonal-layer
inequality, namely $u=0$, or equivalently $b=B(c)$. If $D>0$ on $C_f$, this
surviving inequality comes from the layer $(x_i,y)=(0,0)$ at $x_j=1$ for
$j\ne i$, namely from the point $q^{(i)}$ with $q^{(i)}_i=q^{(i)}_y=0$ and
$q^{(i)}_j=1$ for $j\ne i$; at this point
\[
        M_{a,b}(q^{(i)})=\left(c_0+\sum_{j\ne i}c_j\right)-b=-u.
\]
If $D<0$ on $C_f$, the surviving inequality comes from the layer
$(x_i,y)=(1,1)$ at $x_j=0$ for $j\ne i$, namely from the point $q^{(i)}$ with
$q^{(i)}_i=q^{(i)}_y=1$ and $q^{(i)}_j=0$ for $j\ne i$; at this point
\[
        M_{a,b}(q^{(i)})=(c_0+c_i)+b=u.
\]
In either case the remaining facet is given by $M_{a,b}(q^{(i)})=0$, so it
corresponds to the new essential point $q^{(i)}$. Theorem~\ref{thm:facet} gives
\[
        \Ess(g)=\{\lambda_i(x):x\in\Ess(f)\}\cup\{q^{(i)}\}.
\]
\end{proof}

\begin{corollary}[Adding a variable and extension on a variable preserve minimum specification]\label{cor:base-preserve}
Let $f\in\Tn$ be
positive, with $n\ge2$. Then $f\vee y$, $f\wedge y$, and $f^{(x_i,y)}$ all lie in
$\Tk{n+1}$, with specification number rising from $n+1$ to $n+2$. The same holds
for every $f\in\Tn$ , not necessarily positive: the three images lie in $\Tk{n+1}$. The reduction to a positive representative, by permuting and complementing variables (a congruence), is carried out within the proof.
\end{corollary}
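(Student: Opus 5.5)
The plan is to split the argument into the positive case, which follows directly from the two product propositions, and a reduction of the general case to it by a congruence.

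\emph{Positive case.} Let $f\in\Tn$ be positive with $n\ge2$. By Theorem~\ref{thm:Tn-simplicial}, $\overline{C_f}$ is simplicial and $f$ depends on all $n$ variables, so Proposition~\ref{prop:add-variable-product} applies. It shows that $\overline{C_{f\vee y}}$ and $\overline{C_{f\wedge y}}$ are linearly equivalent to $\overline{C_f}\times\R_{\ge0}$. By Lemma~\ref{lem:halfline-simplicial} this cone is simplicial in $\R^{n+2}$ with $n+2$ facets. Hence $\sigma_{n+1}=n+2$ by Theorem~\ref{thm:facet}, and Theorem~\ref{thm:Tn-simplicial} (with $n+1\ge2$) places both functions in $\Tk{n+1}$. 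In particular, full relevance of the new function is automatic.

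For $f^{(x_i,y)}$ I use Proposition~\ref{prop:extension-product}, which requires $f$ not to be self-dual. So the one genuine step is to show that a positive $f\in\Tn$ with $n\ge2$ is never self-dual. My first attempt is a facet count. If $f$ is self-dual, then $L_c(\mathbf 1-x)=-L_c(x)$ for the symmetrised representative $c$; for example, average $c$ with its image under the involution $a\mapsto(-a_0-\sum a_j,a_1,\dots,a_n)$, which maps $\overline{C_f}$ to itself when $f$ is self-dual. This involution pairs $x$ with $\mathbf 1-x$, so it permutes the facets and $\Ess(f)$ is closed under $x\mapsto\mathbf 1-x$. The involution is a linear reflection fixing the hyperplane $D=0$, and that hyperplane meets the interior of $\overline{C_f}$. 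A simplicial cone with $n+1$ facets permuted by such a reflection must then have an odd number of facets fixed or an even split, and fixed facets are impossible since $x\ne\mathbf 1-x$. So $n+1$ must be even. This does not immediately give a contradiction.

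If the counting fails, I will fall back on a direct argument. The facet hyperplane $D=0$ cuts the simplicial cone through its interior, and the lifted essential points must span $\R^{n+1}$ (Corollary~\ref{cor:general-position}) while coming in complementary pairs $\widetilde x,\widetilde{\mathbf 1-x}$ whose sum is $(2,\mathbf 1)$. Thus the $n+1$ normals lie in $(n+1)/2$ pairs, each pair sharing a common sum vector. For $n+1\ge3$ this forces linear dependence among $n+1$ vectors, because the differences between pairs span at most $n$ dimensions together with one common vector. The resulting contradiction with linear independence of the $n+1$ facet normals of a simplicial cone is the step I expect to be the main obstacle, and I will check the dimension count carefully.

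Alternatively, if self-duality cannot be excluded cleanly, I will treat the self-dual case directly from the description of $\overline{C_g}$ in the proof of Proposition~\ref{prop:extension-product}. There $\overline{C_g}=\{(c,b):c\in\overline{C_f},\ b\ge\max(\cdot,\cdot)\}$, and $D$ changes sign inside $\overline{C_f}$. Then $\overline{C_g}$ has two extra facets, which would contradict preservation. So the non-self-duality step is in fact necessary.

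\emph{General case.} Any $f\in\Tn$ is obtained from a positive $f^+$ by complementing a set $J$ of variables. On parameter space, complementing $x_j$ is the linear map $a_0\mapsto a_0+a_j$, $a_j\mapsto-a_j$, which permutes the hyperplanes $H_x$ via $x\mapsto x\oplus e_j$. So it carries chambers to chambers and preserves facet counts, and $f^+\in\Tn$. The operations commute with this congruence, provided we define $f\vee y$, $f\wedge y$ and $f^{(x_i,y)}$ for non-positive $f$ by conjugation, complementing the same variables in the image, with $x_i$ complemented exactly when $i\in J$. The positive case then transfers, since congruence preserves simpliciality.
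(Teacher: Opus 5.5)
Your positive case matches the paper. Your fallback for excluding self-duality is also the paper's argument: $\Ess(f)$ is closed under $x\mapsto\mathbf 1-x$, so essential points come in complementary pairs whose lifts sum to $(2,1,\ldots,1)$. Your involution $a\mapsto(-a_0-\sum_j a_j,a_1,\ldots,a_n)$ is a valid parameter-space substitute for the paper's Boolean derivation of that closure.

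The count you sketch does not close the argument, however. ``Differences spanning at most $n$ dimensions together with one common vector'' bounds the span only by $n+1$, which is no contradiction. What is needed is that the $n+1$ facet normals lie in the span of one representative per pair together with $(2,1,\ldots,1)$. That space has dimension at most $(n+1)/2+1<n+1$ for $n\ge2$. Equivalently, as in the paper, two distinct pairs give $\widetilde x+\widetilde{\mathbf 1-x}-\widetilde y-\widetilde{\mathbf 1-y}=0$, a dependence among four facet normals of a simplicial cone. Your third ``alternative'' is not a route to the result; it only explains why the exclusion is needed.

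The genuine gap is in the general case. The statement claims that $f\vee y$, $f\wedge y$ and $f^{(x_i,y)}$, defined by the same formulas, lie in $\Tk{n+1}$ for non-positive $f$. By redefining the operations through conjugation you make the transfer automatic, but you then prove a statement about a different operation.

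For $f\vee y$, $f\wedge y$, and $f^{(x_i,y)}$ with $i\notin J$, the two definitions agree. For $i\in J$ they do not, because complementing $x_i$ interchanges the two restrictions. Put $T=J\setminus\{i\}$, and let $P_1,P_0$ denote $f^+|_{x_i=1}$, $f^+|_{x_i=0}$ with the variables of $T$ complemented. Then the literal image is $x_i(y\vee P_0)\vee yP_1$, while your conjugate is $\bar x_i(y\vee P_1)\vee yP_0$. For instance, with $f=\bar x_1x_2$ and $i=1$, the literal image is $y(x_1\vee x_2)$ but your conjugate is $\bar x_1(y\vee x_2)$.

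The missing step is the paper's observation about $(f^+)^{(x_i,y)}$. Complement the variables of $T$ in it and then interchange $x_i$ and $y$; this yields $y(x_i\vee P_1)\vee x_iP_0$. Like the literal image, it expands to $x_iy\vee x_iP_0\vee yP_1$. Hence the literal image is congruent to $(f^+)^{(x_i,y)}$, and so lies in $\Tk{n+1}$.
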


\begin{proof}
By Theorem~\ref{thm:Tn-simplicial}, $f\in\Tn$ exactly when $\overline{C_f}$ is
simplicial, so $\overline{C_f}$ has $n+1$ facets.

\emph{Self-duality is excluded for $n\ge2$.} For self-dual $f$ the essential set
is invariant under $x\mapsto\mathbf 1-x$:
\begin{align*}
        x\in\Ess(f)
        &\iff f\oplus\mathbf 1_{\{x\}}\ \text{threshold}\\
        &\iff (1-f)\oplus\mathbf 1_{\{\mathbf 1-x\}}\ \text{threshold}\\
        &\iff f\oplus\mathbf 1_{\{\mathbf 1-x\}}\ \text{threshold}\\
        &\iff \mathbf 1-x\in\Ess(f),
\end{align*}
where the first step substitutes $\mathbf 1-z$ for the input $z$, which
preserves the threshold property and, by self-duality $f(\mathbf 1-z)=1-f(z)$,
carries $f\oplus\mathbf 1_{\{x\}}$ to $(1-f)\oplus\mathbf 1_{\{\mathbf 1-x\}}$. The second
complements the output, which preserves it too. The involution
$x\mapsto\mathbf 1-x$ has no fixed point on $\B^n$, so it splits $\Ess(f)$ into
disjoint complementary pairs $\{x,\mathbf 1-x\}$, and $|\Ess(f)|$ is even. The
lifted points of a pair sum to
$\widetilde x+\widetilde{\mathbf 1-x}=(2,1,\ldots,1)$, so two distinct pairs
$\{x,\mathbf 1-x\}$ and $\{y,\mathbf 1-y\}$ would give the relation
$\widetilde x+\widetilde{\mathbf 1-x}-\widetilde y-\widetilde{\mathbf 1-y}=0$
among four distinct lifted points. The facet normals of $\overline{C_f}$ are the
signed lifts $\varepsilon_f(z)\widetilde z$ for $z\in\Ess(f)$, each a non-zero
scalar multiple of the corresponding lift. Scaling the vectors of a set by
non-zero scalars leaves its linear dependences intact, so the displayed relation
shows that the four facet normals $\varepsilon_f(x)\widetilde x$,
$\varepsilon_f(\mathbf 1-x)\widetilde{\mathbf 1-x}$, $\varepsilon_f(y)\widetilde y$,
and $\varepsilon_f(\mathbf 1-y)\widetilde{\mathbf 1-y}$ are linearly dependent
(explicitly, $\varepsilon_f(x)$, $\varepsilon_f(\mathbf 1-x)$, $-\varepsilon_f(y)$,
$-\varepsilon_f(\mathbf 1-y)$ are non-zero coefficients of a vanishing
combination of them). Since $\overline{C_f}$
is simplicial in $\R^{n+1}$, it has exactly $n+1$ facets. By
Lemma~\ref{lem:facet-normals-span}, its facet normals span $\R^{n+1}$, so being $n+1$ spanning vectors in $\R^{n+1}$ they are linearly independent, as is every subset of them. Four linearly dependent facet normals are therefore impossible. Thus
$\Ess(f)$ has at most one pair, $n+1=|\Ess(f)|\le2$, and $n\le1$. For $n\ge2$
the extension of Proposition~\ref{prop:extension-product} therefore applies
without further hypothesis.

By Propositions~\ref{prop:add-variable-product} and~\ref{prop:extension-product}
the chamber of each image, in the $(n+1)$-variable arrangement, is linearly
equivalent to $\overline{C_f}\times\R_{\ge0}$, a simplicial cone with $n+2$
facets in $\R^{n+2}$ by Lemma~\ref{lem:halfline-simplicial}. Theorem~\ref{thm:Tn-simplicial},
applied in dimension $n+1$, places each image in $\Tk{n+1}$. In particular, the
relevance condition is automatic. Corollary~\ref{cor:simplicial} gives its
specification number as $(n+1)+1=n+2$.

\emph{Congruence.} It remains to pass from the positive case to functions
congruent to it. Every threshold function is congruent to a positive one. Choose a strict representation; replacing each variable carrying a negative weight by its
complement gives a representation with non-negative weights, which is positive
(monotone increasing in each variable), and such complementation is a congruence.
Since congruence preserves membership in $\Tn$, every $f\in\Tn$ is congruent to
a positive member of $\Tn$, and it therefore suffices to treat the positive case.
Let $h\in\Tn$ be congruent to a positive function $p\in\Tn$, and
choose $S\subseteq[n]$ such that
\[
        h(x)=p(x\oplus\mathbf 1_S).
\]
Congruence preserves membership in $\Tk{n+1}$, since it merely relabels the cube
points and hence the corresponding chamber facets. The added-variable images are
immediate:
\[
        (h\vee y)(x,y)=(p\vee y)(x\oplus\mathbf 1_S,y),
        \qquad
        (h\wedge y)(x,y)=(p\wedge y)(x\oplus\mathbf 1_S,y).
\]
Since the corresponding images of $p$ lie in $\Tk{n+1}$, the corresponding
images of $h$ do too.

It remains only to discuss extension on a variable. Write $E_i(u)=u^{(x_i,y)}$.
If $i\notin S$, then complementing the variables in $S$ leaves the distinguished
coordinate $x_i$ unchanged. Substitution in the defining formula of
Proposition~\ref{prop:extension-product} gives
\[
        E_i(h)(x,y)=E_i(p)(x\oplus\mathbf 1_S,y),
\]
so $E_i(h)$ is again congruent to an image already known to lie in
$\Tk{n+1}$.

Now suppose $i\in S$, and put $T=S\setminus\{i\}$. Let $P_1$ denote the
restriction $p|_{x_i=1}$ after complementing the variables in $T$, and let $P_0$
denote the corresponding restriction $p|_{x_i=0}$ after complementing the
variables in $T$. Since $h(x)=p(x\oplus\mathbf 1_S)$, complementing $x_i$
interchanges the two restrictions, so
\[
        h|_{x_i=1}=P_0,
        \qquad
        h|_{x_i=0}=P_1.
\]
Hence
\[
        E_i(h)=x_i(y\vee P_0)\vee yP_1.
\]
On the other hand, if in $E_i(p)$ we complement the variables in $T$ and then
interchange $x_i$ and $y$, its defining formula becomes
\[
        y(x_i\vee P_1)\vee x_iP_0.
\]
The two expressions are equal, since both expand to
\[
        x_iy\vee x_iP_0\vee yP_1.
\]
Thus $E_i(h)$ is a congruent copy of $E_i(p)$, and therefore
$E_i(h)\in\Tk{n+1}$.
\end{proof}

\begin{remark}[Adding a variable and extension on a variable coincide on chambers]\label{rem:two-extensions-coincide}
The two operations are distinct as Boolean constructions: adding a variable
attaches a fresh variable by a single conjunction or disjunction, while
extension on a variable entangles a new variable with an existing one
through the two restrictions of $f$. Propositions~\ref{prop:add-variable-product}
and~\ref{prop:extension-product} show that on chambers they are the same
operation, the product with a closed half-line,
\[
        \overline{C}\ \longmapsto\ \overline{C}\times\R_{\ge0}.
\]
Corollary~\ref{cor:base-preserve} is the consequence Lozin \emph{et al.}\ draw from this: both
keep a function in the minimum-specification class, raising the specification
number from $n+1$ to $n+2$, because the half-line contributes a single new facet
and hence one new essential point. The product structure also makes
thresholdness automatic, the image being again a chamber of the larger
arrangement. By contrast, the symmetric-variables extension of \cite{Lozin2022}
is not a product of this form, and need not preserve thresholdness. We treat it
in Subsection~\ref{ssec:symmetric}, and resolve the fourth operation conjectured
by \cite{Lozin2022} in Section~\ref{sec:fourth}.
\end{remark}

\subsection{The symmetric-variables extension}
\label{ssec:symmetric}

We next turn to the third operation of Lozin \emph{et al.}, the symmetric-variables extension.
In their notation, if $u$ and $v$ are symmetric variables of a positive Boolean
function $f$, and $y$ is a new variable, the $(u,v,y)$-s-extension is
\[
        f^{(u,v,y)}
        =
        uvy\, f|_{u=1,v=1}
        \vee
        (u\vee v\vee y)\, f|_{u=1,v=0}
        \vee
        f|_{u=0,v=0}.
\]
This formula uses the three possible types of restriction obtained from the
original pair $u,v$: both set to $1$, exactly one set to $1$, and both set to
$0$. Since $u$ and $v$ are symmetric, the two restrictions with exactly one of
them set to $1$ agree, so it is enough to write $f|_{u=1,v=0}$.

We generalise this operation by adjoining more new variables to the same
original symmetric pair. For $\ell\ge3$, we start with two original symmetric
variables $u,v$ and adjoin $\ell-2$ new variables. The transformed function has
a final symmetric block of size $\ell$, consisting of $u,v$ and the $\ell-2$ new
variables. Thus the operation of Lozin \emph{et al.} is the case $\ell=3$.

The purpose of allowing general $\ell$ is organisational: it exposes the
mechanism behind the $s$-extension rather than changing the object of primary
interest. The thresholdness criterion and the chamber description are most
transparent when written for a final symmetric block of size $\ell$.

Unlike adding a variable and extension on a variable, this operation can fail to
produce a threshold function. We therefore first identify when the transformed
function is threshold. In those cases, the representing chamber of the
transformed function can be described from the original chamber by one explicit
piecewise-linear map.

Write the original function as $f(z,u,v)$, where $u$ and $v$ are the two
symmetric original variables and $z$ denotes the remaining arguments. Define the
collapse map
\[
        \pi:\B^\ell\to\B^2,
        \qquad
        \pi(r)=\left(\bigvee_{s=1}^{\ell} r_s,
                     \bigwedge_{s=1}^{\ell} r_s\right).
\]
For $\ell\ge3$, define the \emph{$\ell$-fold symmetric extension}
$g_\ell=g_\ell(z,r_1,\ldots,r_\ell)$ by
\[
        g_\ell(z,r)=f(z,\pi(r))
        =f\left(z,\bigvee_{s=1}^{\ell}r_s,
                  \bigwedge_{s=1}^{\ell}r_s\right).
\]
Thus the block $r$ is collapsed to three possible cases:
\[
        g_\ell(z,r)=
        \begin{cases}
        f(z,0,0), & r=(0,\ldots,0),\\[1mm]
        f(z,1,1), & r=(1,\ldots,1),\\[1mm]
        f(z,1,0), & \text{otherwise}.
        \end{cases}
\]
The last case is well defined because the two original variables are symmetric,
so $f(z,1,0)=f(z,0,1)$.

Write $c=(c_0,c_1,\ldots,c_{n-2})$, and set
\[
        A_c(z)=c_0+\sum_{k=1}^{n-2}c_k z_k .
\]
A representing affine form for $f$ will be written as
\[
        L_{c,p,q}(z,u,v)=A_c(z)+pu+qv.
\]
For the transformed Boolean function $g_\ell(z,r)$, consider an arbitrary
affine form on the variables $(z,r)$,
\[
        M_{c,b}(z,r)=A_c(z)+\sum_{s=1}^{\ell} b_s r_s .
\]
The following analysis determines exactly when there are coefficients $(c,b)$
for which $M_{c,b}$ has the signs prescribed by $g_\ell$. This is precisely the
condition that $g_\ell$ be threshold.

For a block-weight vector $b=(b_1,\ldots,b_\ell)$, write
\[
        s(b)=\sum_{i=1}^{\ell}b_i,
        \qquad
        m(b)=\min_i b_i .
\]
The all-zero block contributes $0$, the all-one block contributes $s(b)$, and a
mixed block contributes a nonempty proper subset sum
\[
        \sum_{i\in I}b_i,\qquad \varnothing\ne I\subsetneq[\ell].
\]
When all $b_i\ge0$, the least mixed-block contribution is $m(b)$, attained by
a singleton block on a minimum-weight coordinate, and the greatest is
$s(b)-m(b)$, attained by its complement.

Given the block weights $b$, set
\[
        p=m(b),\qquad q=s(b)-m(b).
\]
Under this non-negativity hypothesis, $A_c(z)+p$ is the least value attained by $M_{c,b}(z,r)$ over mixed
blocks $r$, while $A_c(z)+p+q=A_c(z)+s(b)$ is the value at the all-one
block.
We encode this by
\[
        \Phi(c,b)=\bigl(c,m(b),s(b)-m(b)\bigr).
\]

\begin{lemma}[Relevance is preserved in the positive case]
\label{lem:relevance}
If $f$ is positive and depends on all its variables, then $g_\ell$ depends on
all $n+\ell-2$ of its variables.
\end{lemma}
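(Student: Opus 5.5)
The plan is to exhibit, for each variable of $g_\ell$, a pair of inputs differing only in that variable on which $g_\ell$ takes different values. There are two kinds of variable, the variables $z_k$ inherited from $f$ and the block variables $r_s$, and we treat them separately. Throughout we use positivity of $f$ and the symmetry of $u$ and $v$. Symmetry gives $f(z,1,0)=f(z,0,1)$, and hence
\[
        g_\ell(z,r)=f(z,\pi(r)).
\]

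For a variable $z_k$, relevance of $z_k$ in $f$ gives $z,z'$ differing only in coordinate $k$, together with $(u,v)\in\B^2$, such that $f(z,u,v)\ne f(z',u,v)$. We then choose a block $r$ with $\pi(r)=(u,v)$ whenever $(u,v)\in\{(0,0),(1,1),(1,0)\}$. The fourth value $(u,v)=(0,1)$ is replaced by $(1,0)$, which gives the same function values by symmetry. Since $\pi$ maps onto these three values, the pair $(z,r)$ and $(z',r)$ witnesses that $z_k$ is relevant for $g_\ell$.

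For a block variable $r_s$, all block variables play symmetric roles in $g_\ell$, so it suffices to show that $g_\ell$ is not independent of the block. First we show that $f$ depends on $u$. Relevance of $u$ gives some $z$ and $v$ with $f(z,0,v)\ne f(z,1,v)$.

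\begin{itemize}
\item If $v=0$, the values $f(z,0,0)$ and $f(z,1,0)$ differ.
\item If $v=1$, then $f(z,0,1)=f(z,1,0)$ differs from $f(z,1,1)$.
\end{itemize}

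So for some $z$ the sequence $f(z,0,0)\le f(z,1,0)\le f(z,1,1)$ is non-constant, and this sequence is monotone by positivity.

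Now fix $s$. If $f(z,0,0)<f(z,1,0)$, we take $r=0$ and flip $r_s$; this changes $\pi(r)$ from $(0,0)$ to $(1,0)$, so the value of $g_\ell$ changes. Otherwise $f(z,1,0)<f(z,1,1)$. In that case we take $r=\mathbf 1$ and flip $r_s$ to $0$; this changes $\pi(r)$ from $(1,1)$ to $(1,0)$, using $\ell\ge2$ so that the result is a mixed block. Either way $r_s$ is relevant.

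The only delicate point is the case analysis for the block variables, namely making sure that the non-constancy of $f$ in $(u,v)$ is detected across one of the two transitions reachable by a single flip. Positivity together with symmetry handles this, as above. The variable count is $(n-2)+\ell$.
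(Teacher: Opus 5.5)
Your proof is correct and follows the paper's argument. The $z_k$ case uses the same choice of block $r$ with $\pi(r)=(u,v)$, with $(0,1)$ replaced by $(1,0)$ via symmetry, and the block case uses relevance of $u$ with the same two toggles, $\mathbf 0\to e_s$ when $v=0$ and $\mathbf 1-e_s\to\mathbf 1$ when $v=1$. The detour through the monotone sequence $f(z,0,0)\le f(z,1,0)\le f(z,1,1)$ is harmless but unnecessary: your two cases on $v$ already say which transition changes, and positivity is not actually needed for this lemma, only symmetry of $u,v$ and $\ell\ge2$.
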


In the applications below we work with positive functions satisfying these
relevance hypotheses.

\begin{proof}
Let $z_j$ be relevant to $f$, and choose a relevance witness consisting of two
points that differ only in $z_j$. Choose $r\in\B^\ell$ so that $\pi(r)$ is either
$(u,v)$, if $(u,v)\in\{(0,0),(1,0),(1,1)\}$, or $(1,0)$, if $(u,v)=(0,1)$.
In the last case this replacement does not change the value of $f$, because
$u$ and $v$ are symmetric. Keep this same block $r$ in both inputs. The two
resulting inputs to $g_\ell$ differ only in $z_j$, and their values are the
corresponding two values of $f$. Hence $z_j$ is relevant to $g_\ell$.

For a block variable $r_i$, use relevance of $u$. Positivity gives $z$ and
$v\in\B$ such that $f(z,0,v)=0$ and $f(z,1,v)=1$. If $v=0$, toggle $r_i$ from
the all-zero block to $e_i$; if $v=1$, toggle it from $\mathbf1-e_i$ to the
all-one block. In either case the value of $g_\ell$ changes, so every $r_i$ is
relevant.
\end{proof}

For the transformed function $g_\ell$, define
\[
        \Rep(g_\ell)
        =
        \{(c,b): M_{c,b}(z,r)\text{ has the signs prescribed by }g_\ell
        \text{ on all }(z,r)\in\B^{n+\ell-2}\}.
\]
This set may be empty. Non-emptiness is exactly the assertion that $g_\ell$ is
threshold. We will also use the corresponding weak sign cone
\[
        \Rep^{\mathrm{wk}}(g_\ell)
        =
        \{(c,b): M_{c,b}(z,r)\ge0\text{ when }g_\ell(z,r)=1,
        \text{ and }M_{c,b}(z,r)\le0\text{ when }g_\ell(z,r)=0\}.
\]
When $g_\ell$ is threshold, this weak sign cone is the closed chamber
$\overline{C_{g_\ell}}$.

\begin{proposition}[Collapse]
\label{prop:collapse}
Assume that $f=f(z,u,v)$ is positive, depends on all its variables, and has
$u$ and $v$ symmetric. For every
$\ell\ge3$,
\[
        \Rep(g_\ell)=\Phi^{-1}(C_f).
\]
The corresponding weak sign cones satisfy
\[
        \Rep^{\mathrm{wk}}(g_\ell)=\Phi^{-1}(\overline{C_f}).
\]
Hence, whenever $g_\ell$ is threshold,
\[
        \overline{C_{g_\ell}}=\Phi^{-1}(\overline{C_f}).
\]
\end{proposition}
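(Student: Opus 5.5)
The plan is to compare the two systems of sign conditions layer by layer, after first showing that the block weights $b$ are non-negative on both sides of each claimed equality. Once the weights are non-negative, $m(b)$ and $s(b)-m(b)$ really are the least and greatest mixed-block contributions, and the equality becomes a direct comparison. I will give the argument for the strict sets $\Rep(g_\ell)$ and $C_f$, and note where the weak versions differ, which is only in replacing $>$ by $\ge$.

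\emph{Step 1: positivity of the block weights.} Since $f$ is positive and $\pi$ is monotone in each coordinate, $g_\ell$ is positive. By Lemma~\ref{lem:relevance} it depends on every $r_i$. Hence there is a point $w$ with $w_{r_i}=0$, $g_\ell(w)=0$ and $g_\ell(w+e_{r_i})=1$. For $(c,b)\in\Rep(g_\ell)$ this gives $M_{c,b}(w)<0<M_{c,b}(w)+b_i$, so $b_i>0$. For $(c,b)\in\Rep^{\mathrm{wk}}(g_\ell)$ the same witness gives $M_{c,b}(w)\le0\le M_{c,b}(w)+b_i$, so $b_i\ge0$. This direct argument is needed in the weak case, because $\Rep(g_\ell)$ may be empty, so no limiting argument is available.

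Conversely, suppose $\Phi(c,b)=(c,p,q)\in C_f$, respectively $\overline{C_f}$. Lemma~\ref{lem:positive-weights} gives $p=m(b)>0$, respectively $\ge0$. Since every $b_i\ge m(b)$, all $b_i$ are then non-negative. Thus in every case under consideration $b\ge0$, and consequently $q=s(b)-m(b)\ge(\ell-1)m(b)\ge p$.

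\emph{Step 2: layer comparison.} With $b\ge0$, I split the points $(z,r)$ into three layers and compare each with the corresponding condition on $f$ at $\Phi(c,b)$.
\begin{enumerate}[label=\textup{(\roman*)}]
\item On the all-zero block, $M_{c,b}=A_c(z)$ with label $f(z,0,0)$. This is exactly the $f$-condition at $(z,0,0)$, since $L_{c,p,q}(z,0,0)=A_c(z)$.
\item On the all-one block, $M_{c,b}=A_c(z)+s(b)=A_c(z)+p+q$ with label $f(z,1,1)$. This is exactly the $f$-condition at $(z,1,1)$.
\item On the mixed blocks, the label is the common value $f(z,1,0)=f(z,0,1)$. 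The values of $M_{c,b}$ range over $A_c(z)+\sum_{i\in I}b_i$ with $\varnothing\ne I\subsetneq[\ell]$. Because $\ell\ge3$, both a singleton on a minimum-weight coordinate and its complement are proper non-empty subsets, so the minimum $A_c(z)+p$ and the maximum $A_c(z)+q$ are both attained.
\end{enumerate}

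In case (iii), if the label is $1$, the $g_\ell$-condition on all mixed blocks is equivalent to $A_c(z)+p>0$, respectively $\ge0$. Since $q\ge p$, this is in turn equivalent to requiring both $L_{c,p,q}(z,1,0)=A_c(z)+p$ and $L_{c,p,q}(z,0,1)=A_c(z)+q$ to have that sign. If the label is $0$, the $g_\ell$-condition is equivalent to $A_c(z)+q<0$, respectively $\le0$, which again is equivalent to both $f$-values having the prescribed sign because $p\le q$. Taking all $z$ together, $(c,b)$ satisfies the strict (weak) conditions for $g_\ell$ exactly when $\Phi(c,b)$ satisfies those for $f$. This proves $\Rep(g_\ell)=\Phi^{-1}(C_f)$ and $\Rep^{\mathrm{wk}}(g_\ell)=\Phi^{-1}(\overline{C_f})$.

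\emph{Step 3.} When $g_\ell$ is threshold, $\Rep^{\mathrm{wk}}(g_\ell)=\overline{C_{g_\ell}}$, as noted before the proposition, so the final display follows.

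The main obstacle is Step 1: $\Phi$ is only piecewise linear, and the identification of the extreme mixed-block values with $m(b)$ and $s(b)-m(b)$ fails without non-negativity of $b$. The delicate points are obtaining non-negativity in the weak case without passing through a possibly empty strict set, and, in the converse direction, deriving it from $p=m(b)\ge0$ via Lemma~\ref{lem:positive-weights}.
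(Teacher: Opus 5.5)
Your proof is correct and is essentially the paper's argument. It is a layer-by-layer comparison in which the extreme mixed-block values $A_c(z)+m(b)$ and $A_c(z)+s(b)-m(b)$ are realised at $e_j$ and $\mathbf 1-e_j$ for a minimum-weight coordinate $j$, with non-negativity of $b$ on the $\Phi^{-1}(C_f)$ side coming from Lemma~\ref{lem:positive-weights}.

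The only difference is that the paper does not need the first half of your Step~1. For $(c,b)\in\Rep(g_\ell)$ (or $\Rep^{\mathrm{wk}}(g_\ell)$) it simply evaluates $M_{c,b}$ at $0^\ell$, $1^\ell$, $e_j$ and $\mathbf 1-e_j$. This yields the four values $A_c(z)$, $A_c(z)+s(b)$, $A_c(z)+m(b)$, $A_c(z)+s(b)-m(b)$ required of $L_{c,m(b),s(b)-m(b)}$ with no sign assumption on $b$. So the appeal to Lemma~\ref{lem:relevance}, and the obstacle you flag for the weak case, do not arise.
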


\begin{proof}
Suppose first that 
\[
        \Phi(c,b)=(c,m(b),s(b)-m(b))\in C_f.
\]
Lemma~\ref{lem:positive-weights} gives $m(b)>0$. Fix
$z\in\B^{n-2}$ and write
\[
        A=A_c(z),\qquad m=m(b),\qquad s=s(b).
\]
Since $(c,m,s-m)\in C_f$, the four evaluations of $L_{c,m,s-m}$ give
\[
        A,\quad A+m,\quad A+s-m,\quad A+s
\]
at $(u,v)=(0,0),(1,0),(0,1),(1,1)$, respectively. Hence these four quantities
have the signs prescribed by
\[
        f(z,0,0),\quad f(z,1,0),\quad f(z,0,1),\quad f(z,1,1).
\]
Since $u$ and $v$ are symmetric, $f(z,0,1)=f(z,1,0)$. Thus both $A+m$ and
$A+s-m$ have the sign prescribed by the mixed block.

Let $I$ be a nonempty proper subset of $[\ell]$, and put
\[
        t_I=\sum_{i\in I} b_i .
\]
Since all $b_i>0$,
\[
        m\le t_I\le s-m.
\]
The affine value at the block whose support is $I$ is $A+t_I$. We have already
shown that both endpoints $A+m$ and $A+s-m$ have sign $f(z,1,0)$. If this sign
is $1$, then both endpoints are positive, and hence every value between them,
including $A+t_I$, is positive. If this sign is $0$, then both endpoints are
negative, and hence every value between them, including $A+t_I$, is negative.
Thus $A+t_I$ has sign $f(z,1,0)$ for every nonempty proper $I\subsetneq[\ell]$.

It remains only to assemble the three block cases. For the all-zero block,
\[
        M_{c,b}(z,0^\ell)=A,
\]
which has the sign prescribed by $f(z,0,0)=g_\ell(z,0^\ell)$. For the
all-one block,
\[
        M_{c,b}(z,1^\ell)=A+s,
\]
which has the sign prescribed by $f(z,1,1)=g_\ell(z,1^\ell)$. Finally, if
$r$ is mixed, with support $I=\{i:r_i=1\}$, then $I$ is a nonempty proper
subset of $[\ell]$, and
\[
        M_{c,b}(z,r)=A+\sum_{i\in I}b_i.
\]
By the preceding paragraph this has the sign prescribed by
$f(z,1,0)=g_\ell(z,r)$. These three cases cover all $r\in\B^\ell$, so
$M_{c,b}$ has the signs prescribed by $g_\ell$ at every point $(z,r)$. Since
this holds for every $z$, we have $(c,b)\in\Rep(g_\ell)$.

Conversely, suppose $(c,b)\in\Rep(g_\ell)$. Let $j$ be such that
$b_j=m(b)$. Fix $z\in\B^{n-2}$, and write
\[
        A=A_c(z),\qquad m=m(b),\qquad s=s(b).
\]
We show that the affine form
\[
        A_c(z)+mu+(s-m)v
\]
has the signs prescribed by $f$.

At the all-zero block $0^\ell$, the value of $M_{c,b}$ is $A$, and this block
collapses to $(0,0)$. Hence $A$ has the sign prescribed by $f(z,0,0)$. At the
all-one block $1^\ell$, the value is $A+s$, and this block collapses to
$(1,1)$. Hence $A+s$ has the sign prescribed by $f(z,1,1)$.

At the singleton block $e_j$, the value is $A+m$. This block is mixed, so it
collapses to $(1,0)$. Hence $A+m$ has the sign prescribed by $f(z,1,0)$. At
the complementary block $\mathbf 1-e_j$, the value is $A+s-m$. This block is
also mixed, so it collapses to $(1,0)$, and hence $A+s-m$ has the sign
prescribed by $f(z,1,0)$. Since $u$ and $v$ are symmetric,
$f(z,1,0)=f(z,0,1)$. Thus $A+s-m$ has the sign prescribed by $f(z,0,1)$,
which is the sign needed for the $(u,v)=(0,1)$ case of the original affine
form.

Thus the four values
\[
        A,\quad A+m,\quad A+s-m,\quad A+s
\]
have the signs prescribed by
\[
        f(z,0,0),\quad f(z,1,0),\quad f(z,0,1),\quad f(z,1,1),
\]
respectively. Since this holds for every $z$, the coefficient vector
\[
        (c,m,s-m)=\Phi(c,b)
\]
strictly represents $f$. Hence $\Phi(c,b)\in C_f$.

Using the closure part of Lemma~\ref{lem:positive-weights}, the same reasoning
with non-strict inequalities gives
\[
        \Rep^{\mathrm{wk}}(g_\ell)=\Phi^{-1}(\overline{C_f}).
\]
\end{proof}

Since $u$ and $v$ are symmetric, $(c,p,q)\in C_f$ if and only if
$(c,q,p)\in C_f$. Thus the thresholdness condition may be stated using
$\max\{p,q\}$ and $\min\{p,q\}$, without choosing an order for the two
symmetric variables.

\begin{theorem}[Exact thresholdness criterion]
\label{thm:s-threshold}
Assume that $f=f(z,u,v)$ is positive, depends on all its variables, and has $u$ and $v$ symmetric. For every
$\ell\ge3$, the transformed function $g_\ell$ is threshold if and only if $f$
has a representing coefficient vector $(c,p,q)\in C_f$, where $p$ and $q$ are
the coefficients of the two original symmetric variables, such that
\[
        p,q>0
        \quad\text{and}\quad
        \max\{p,q\}>(\ell-1)\min\{p,q\}.
\]
Equivalently,
\[
        C_f\cap
        \left\{(c,p,q): p,q>0,\;
        \max\{p,q\}>(\ell-1)\min\{p,q\}\right\}
        \ne\varnothing.
\]
\end{theorem}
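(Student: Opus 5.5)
The plan is to reduce the statement to a description of the image of the block-weight map $b\mapsto\bigl(m(b),\,s(b)-m(b)\bigr)$, using Proposition~\ref{prop:collapse}. That proposition gives $\Rep(g_\ell)=\Phi^{-1}(C_f)$. So $g_\ell$ is threshold if and only if there are $c$ and $b\in\R^\ell$ with $(c,m(b),s(b)-m(b))\in C_f$. The theorem then follows once we know which pairs $(p,q)=(m(b),s(b)-m(b))$ occur, together with a small perturbation to pass between weak and strict inequalities.

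\emph{Necessity.} Suppose $g_\ell$ is threshold, and take $(c,b)\in\Rep(g_\ell)$. Then $\Phi(c,b)=(c,m,s-m)\in C_f$. Since $f$ is positive and depends on all its variables, Lemma~\ref{lem:positive-weights} gives $m=m(b)>0$, and hence every $b_i>0$. The remaining $\ell-1$ weights are each at least $m$, so
\[
        q:=s-m\ge(\ell-1)m=(\ell-1)p,
        \qquad p:=m>0 .
\]
This is the required inequality, but only in weak form. To make it strict, use that $C_f$ is open: the point $(c,p,q+\eta)$ still lies in $C_f$ for some small $\eta>0$. Then $p,\,q+\eta>0$ and $\max=q+\eta>(\ell-1)p=(\ell-1)\min$.

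\emph{Sufficiency.} Suppose $(c,p,q)\in C_f$ with $p,q>0$ and $\max\{p,q\}>(\ell-1)\min\{p,q\}$. By the symmetry of $u$ and $v$ (noted just before the theorem), $(c,q,p)\in C_f$ as well, so we may assume $p\le q$, whence $q>(\ell-1)p$. Define
\[
        b_1=p,\qquad b_2=\cdots=b_\ell=\frac{q}{\ell-1}.
\]
Each $b_i$ with $i\ge2$ strictly exceeds $p$, so $m(b)=p$, and $s(b)-m(b)=q$. Hence $\Phi(c,b)=(c,p,q)\in C_f$, and Proposition~\ref{prop:collapse} places $(c,b)$ in $\Rep(g_\ell)$, so $g_\ell$ is threshold. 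The second displayed form of the statement is a direct restatement of the same condition.

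\emph{Main obstacle.} The only delicate point is the mismatch between the weak inequality that the collapse map produces and the strict inequality in the statement. Openness of $C_f$ resolves it, together with the fact that raising $q$ slightly keeps $p$ as the minimum and preserves positivity. A secondary check is that the positivity of the block weights really follows from Lemma~\ref{lem:positive-weights} applied to $f$ through $\Phi$. The hypotheses that $f$ is positive and depends on all its variables are exactly those of that lemma, so the forward direction of the collapse argument can be reused verbatim.
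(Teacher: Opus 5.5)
Your proof is correct and follows the paper's argument. Both use Proposition~\ref{prop:collapse} to reduce to the image of $b\mapsto\bigl(m(b),s(b)-m(b)\bigr)$, obtain the weak bound $q\ge(\ell-1)p$, make it strict using openness of $C_f$, and give an explicit choice of block weights for sufficiency. The differences are minor. You get positivity of every $b_i$ directly from $m(b)>0$, which is cleaner than the paper's detour through Lemma~\ref{lem:relevance} applied to $g_\ell$. You also take $b_1=p$ and $b_2=\cdots=b_\ell=q/(\ell-1)$, where the paper takes $b_1=\cdots=b_{\ell-1}=p$ and $b_\ell=q-(\ell-2)p$.
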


\begin{proof}
Suppose first that $g_\ell$ is threshold. Then there is
$(c,b)\in\Rep(g_\ell)$. By Proposition~\ref{prop:collapse},
\[
        \Phi(c,b)=(c,m(b),s(b)-m(b))\in C_f.
\]
Put $p=m(b)$ and $q=s(b)-m(b)$. Then $p>0$, and $q$ is the sum of the remaining
$\ell-1$ block weights. Since $f$ is positive, so is $g_\ell$. By
Lemma~\ref{lem:relevance}, $g_\ell$ depends on all its variables. Applying
Lemma~\ref{lem:positive-weights} to the strict representation $(c,b)$ of
$g_\ell$, all block weights $b_i$ are positive. In particular, each remaining
weight is at least $p$. Hence
\[
        q\ge(\ell-1)p.
\]
If this inequality is strict, we have obtained the required point of $C_f$. If
$q=(\ell-1)p$, then, since $C_f$ is open, we may increase $q$ slightly while
keeping the other coefficients fixed and remain in $C_f$. Hence $C_f$ contains
a point satisfying
\[
        q>(\ell-1)p.
\]
Using the symmetry of $u$ and $v$, this gives the condition stated with
$\max\{p,q\}$ and $\min\{p,q\}$.

Conversely, suppose $(c,p,q)\in C_f$, with $p,q>0$, and, after interchanging
$u$ and $v$ if necessary, suppose
\[
        q>(\ell-1)p.
\]
Choose block weights by setting
\[
        b_1=\cdots=b_{\ell-1}=p,
        \qquad
        b_\ell=q-(\ell-2)p.
\]
Then $b_\ell>p$, so all $b_i>0$, $m(b)=p$, and
\[
        s(b)-m(b)=b_2+\cdots+b_\ell=q.
\]
Hence
\[
        \Phi(c,b)=(c,p,q)\in C_f.
\]
By Proposition~\ref{prop:collapse}, $(c,b)\in\Rep(g_\ell)$, and therefore
$g_\ell$ is threshold.
\end{proof}

For the original operation of Lozin \emph{et al.}, the final symmetric block has
size $\ell=3$. Thus Theorem~\ref{thm:s-threshold} says that the
$(u,v,y)$-s-extension is threshold if and only if $C_f$ contains a coefficient
vector $(c,p,q)$, with $p,q>0$, such that
\[
        \max\{p,q\}>2\min\{p,q\}.
\]
Equivalently, after interchanging the symmetric variables $u$ and $v$ if
necessary, there is a representation with
\[
        q>2p.
\]
The ratio depends on the chosen symmetric pair, so define
\[
        \rho_{u,v}(f)
        =\sup_{(c,p,q)\in C_f}
          \frac{\max\{p,q\}}{\min\{p,q\}}
        \in[1,\infty].
\]
Theorem~\ref{thm:s-threshold} says precisely that
\[
        g_\ell\text{ is threshold}
        \iff \rho_{u,v}(f)>\ell-1.
\]
Thus the allowable block size is governed by the extent to which the chamber
permits the two symmetric weights to become unbalanced. In particular,
thresholdness is monotone in the block size: if $g_\ell$ is threshold, then
$g_m$ is threshold for every $3\le m\le \ell$.
\begin{example}
\label{ex:threshold}
Let $f=x_1\vee(x_2\wedge x_3)$, with $x_2,x_3$ the symmetric pair. For any
$0<\eta<\varepsilon<1$, the affine form
\[
        2x_1+\varepsilon x_2+x_3-(1+\eta)
\]
strictly represents $f$. Indeed, $x_1=1$ already gives a positive value, while,
on the layer $x_1=0$, the points $00,10,01,11$ in the coordinates $(x_2,x_3)$
give values
\[
        -(1+\eta),\quad
        \varepsilon-(1+\eta),\quad
        1-(1+\eta),\quad
        1+\varepsilon-(1+\eta),
\]
of which only the last is positive. Hence $f$ has strict representations with
coefficient ratio $1/\varepsilon$ on the symmetric pair $x_2,x_3$, and this
ratio is unbounded as $\varepsilon\downarrow0$. Thus
$\rho_{x_2,x_3}(f)=\infty$, and every $g_\ell$ is threshold. Explicitly,
\[
        g_\ell=x_1\vee(r_1\wedge\cdots\wedge r_\ell)
\]
is represented by weight $\ell$ on $x_1$, weight $1$ on each $r_i$, and
threshold $\ell$. Theorem~\ref{thm:s-preservation} below gives
$g_\ell\in\Tk{\ell+1}$ for every $\ell\ge3$.
\end{example}
Example~\ref{ex:threshold} shows that the imbalance condition can hold
strongly. The next example shows the opposite behaviour and recovers the
non-threshold $s$-extension exhibited by Lozin \emph{et al.}
\begin{example}[A concrete non-threshold $s$-extension]
\label{ex:lozin-nonthreshold}
Lozin \emph{et al.}\ give the positive threshold function
\[
        f=x_1x_2\vee (x_1\vee x_2)x_3x_4x_5,
\]
where $x_1$ and $x_2$ are symmetric variables, and show that its
$(x_1,x_2,x_6)$-s-extension is not threshold~\cite[Example~2]{Lozin2022}.
Theorem~\ref{thm:s-threshold} explains this failure directly.
Lozin \emph{et al.}\ prove non-thresholdness by exhibiting a direct
2-summability obstruction. Namely, for
\[
        h=f^{(x_1,x_2,x_6)}
        =x_1x_2x_6\vee (x_1\vee x_2\vee x_6)x_3x_4x_5,
\]
the points
\[
        (1,1,0,0,0,1),\qquad (0,1,1,1,1,0)
\]
are true points, while
\[
        (0,1,1,0,1,1),\qquad (1,1,0,1,0,0)
\]
are false points, and
\[
        (1,1,0,0,0,1)+(0,1,1,1,1,0)
        =
        (0,1,1,0,1,1)+(1,1,0,1,0,0).
\]
Thus $h$ is 2-summable, and hence not threshold. The argument below gives the
corresponding chamber explanation: the original chamber $C_f$ does not contain
a representation in which the two coefficients on the symmetric variables
$x_1,x_2$ are sufficiently unbalanced.
Let
\[
        L=d+p x_1+q x_2+a x_3+b x_4+c x_5
\]
be a strict positive representation of $f$, so that $L>0$ on true points and
$L<0$ on false points. By symmetry of $x_1,x_2$, assume $p\ge q$, and put
\[
        r=\min\{a,b,c\}.
\]
The point $(0,1,1,1,1)$ is true for $f$, since $x_2=1$ and
$x_3=x_4=x_5=1$. Hence
\[
        d+q+a+b+c>0.
\]
Now take the point $(1,0,1,1,1)$ and change to $0$ one of the coordinates
$x_3,x_4,x_5$ whose coefficient is $r$. The resulting point has $x_1=1$,
$x_2=0$, but not all of $x_3,x_4,x_5$ equal to $1$, so it is false for $f$.
Hence
\[
        d+p+a+b+c-r<0.
\]
Combining these two inequalities gives
\[
        d+p+a+b+c-r<0<d+q+a+b+c,
\]
and therefore
\[
        p-q<r.
\]
Next, $(1,1,0,0,0)$ is true for $f$, because $x_1x_2=1$. Thus
\[
        d+p+q>0.
\]
Comparing this with the earlier inequality $d+p+a+b+c-r<0$, we obtain
\[
        d+p+a+b+c-r<0<d+p+q,
\]
so
\[
        a+b+c-r<q.
\]
But $a+b+c-r$ is the sum of the two coefficients among $a,b,c$ other than the
minimum one, and each of those two coefficients is at least $r$. Hence
\[
        a+b+c-r\ge 2r.
\]
Together with the inequality $a+b+c-r<q$ obtained above, this gives
\[
        q>2r,
\]
and so
\[
        r<\frac q2.
\]
Combining $p-q<r$ with $r<\frac q2$, we get
\[
        p<q+r<q+\frac q2=\frac32 q.
\]
Since $p\ge q$, every strict positive representation of $f$ satisfies
\[
        \frac{\max\{p,q\}}{\min\{p,q\}}
        =
        \frac pq
        <
        \frac32 .
\]
For the original symmetric-variables extension of Lozin \emph{et al.}, the
final symmetric block has size $\ell=3$. Theorem~\ref{thm:s-threshold}
therefore requires a representation with
\[
        \frac{\max\{p,q\}}{\min\{p,q\}}>2.
\]
No such representation exists for the function above. Hence its
$(x_1,x_2,x_6)$-s-extension is not threshold, in agreement with the example of
Lozin \emph{et al.} Thus Lozin \emph{et al.}'s 2-summability certificate proves
non-thresholdness directly in the extended cube. The point of the present
criterion is different: it detects the same failure already in the representing
chamber of the original function. In this example the chamber never reaches the
coefficient-imbalance region required by the $s$-extension.
\end{example}

\subsubsection*{Preservation of minimum specification}

We now combine the thresholdness criterion above with the preservation theorem
of Lozin \emph{et al.}\ for the original single-variable symmetric extension.
We pause to explain why the iteration is legitimate. The $\ell$-fold block
extension is obtained by applying the single-variable $s$-extension $\ell-2$
times, and thresholdness of the final block extension implies thresholdness at
all intermediate stages by Theorem~\ref{thm:s-threshold}.

Lozin \emph{et al.}\ proved the single-variable case, which we iterate
\cite[Theorem~7]{Lozin2022}: if $g$ is a positive function in $\Tk{m}$ with
symmetric variables $x_i$ and $x_j$, and the $(x_i,x_j,y)$-$s$-extension $g'$ is
threshold, then $g'\in\Tk{m+1}$. The $(x_i,x_j,y)$-$s$-extension is the function
\[
        g'=x_ix_jy\,g|_{x_i=1,x_j=1}
        \vee(x_i\vee x_j\vee y)\,g|_{x_i=1,x_j=0}
        \vee g|_{x_i=0,x_j=0}
\]
of \cite[Definition~3]{Lozin2022}, in which $x_i$, $x_j$, and $y$ are pairwise
symmetric and each restriction is taken in the remaining variables.

\begin{theorem}[Preservation of minimum specification when thresholdness holds]
\label{thm:s-preservation}
Let $f=f(z,u,v)\in\Tn$ be positive, with $u$ and $v$ symmetric variables, and
let $\ell\ge3$. Let $g_\ell$ be the $\ell$-fold symmetric-block extension of
$f$. If $g_\ell$ is threshold, then
\[
        \sigma_{n+\ell-2}(g_\ell)=n+\ell-1,
        \qquad
        \text{so } g_\ell\in\Tk{n+\ell-2}.
\]
\end{theorem}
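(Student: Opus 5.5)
We prove the statement by induction on the block size, iterating the single-variable theorem of Lozin \emph{et al.}\ \cite[Theorem~7]{Lozin2022} recalled above. The plan has three ingredients. First, we identify $g_\ell$ with the result of applying the $(u,v,y)$-$s$-extension $\ell-2$ times. Second, thresholdness of $g_\ell$ gives thresholdness at every intermediate stage. Third, at each stage the current function satisfies the hypotheses of \cite[Theorem~7]{Lozin2022}: it is positive, lies in $\Tk{m}$, and has a symmetric pair.

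\textbf{Step 1 (iteration identity).} We claim $g_{k+1}$ is the $(r_1,r_2,r_{k+1})$-$s$-extension of $g_k$, with $g_2=f$ and $(r_1,r_2)=(u,v)$. Both sides are determined by the collapse $\pi$. On the $(k+1)$-block, the $s$-extension formula gives $g_k|_{r_1=r_2=1}$ exactly when $r_1r_2r_{k+1}=1$. It gives the mixed restriction when some but not all of $r_1,r_2,r_{k+1}$ equal $1$, and the $r_1=r_2=0$ restriction otherwise. Composing with the collapse of $r_1,r_2,r_3,\dots,r_k$ in $g_k$ yields the value $f(z,\bigvee r,\bigwedge r)$, using the symmetry $f(z,1,0)=f(z,0,1)$. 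This is a direct truth-table check by cases on whether the block is all-zero, all-one or mixed.

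We also need that $r_1,r_2$ remain symmetric variables of $g_k$, which is immediate because $g_k$ depends on the block only through $\pi$. Positivity of $g_k$ follows since $\pi$ is monotone and $f$ is positive.

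\textbf{Step 2 (thresholdness at intermediate stages).} Suppose $g_\ell$ is threshold. Then $\rho_{u,v}(f)>\ell-1\ge k-1$ for every $3\le k\le\ell$, so each $g_k$ is threshold by Theorem~\ref{thm:s-threshold}. This is the monotonicity noted after that theorem.

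\textbf{Step 3 (induction).} The base case is $g_2=f\in\Tn$. For the inductive step, suppose $g_k\in\Tk{n+k-2}$ is positive with $r_1,r_2$ symmetric, and $g_{k+1}$ is threshold. Then \cite[Theorem~7]{Lozin2022} gives $g_{k+1}\in\Tk{n+k-1}$. At $k+1=\ell$ we obtain $g_\ell\in\Tk{n+\ell-2}$. Since $n+\ell-2\ge2$, Theorem~\ref{thm:Tn-simplicial} then gives $\sigma_{n+\ell-2}(g_\ell)=n+\ell-1$. Full relevance is supplied independently by Lemma~\ref{lem:relevance}.

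\textbf{Main obstacle.} The delicate point is Step 1: matching the $s$-extension's three-term formula exactly with the collapse definition, including the cases where the new variable $r_{k+1}$ is the only $1$, or the only $0$, in the block. If that bookkeeping proves awkward, there is a fallback that bypasses the iteration. By Proposition~\ref{prop:collapse}, $\overline{C_{g_\ell}}=\Phi^{-1}(\overline{C_f})$, so we could count facets directly. The facets of $\overline{C_f}$ not involving $p,q$ pull back to single facets. The inequalities involving $p=m(b)$ split, on the region where $b_j$ is the minimum, into the surviving singleton-block inequality plus the ordering inequalities. The resulting count should give exactly $n+\ell-1$. However, making this rigorous requires controlling the piecewise-linearity of $m$, which is why the iteration route is preferred.
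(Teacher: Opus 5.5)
Your proposal is correct and follows the paper's proof. Both arguments identify $g_{m+1}$ as the single-variable $s$-extension of $g_m$, deduce thresholdness of every intermediate $g_m$ from $\rho_{u,v}(f)>\ell-1$ via Theorem~\ref{thm:s-threshold}, and induct with \cite[Theorem~7]{Lozin2022}. Your Step~1 truth-table check supplies the bookkeeping that the paper delegates to \cite[Definition~3]{Lozin2022}; it tacitly uses positivity of $g_k$ to read the disjunctive formula as a three-case split. So the fallback facet count is unnecessary.
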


\begin{proof}
Write $g_2=f$, and for $m\ge3$ let $g_m$ denote the $m$-fold symmetric-block
extension of the original pair $u,v$. The operation passing from $g_m$ to
$g_{m+1}$ is exactly the single-variable symmetric extension of Lozin
\emph{et al.}, applied to any two variables in the current symmetric block.
After one such extension, the old two variables and the new variable are
pairwise symmetric in the extended function, so the same operation may be
iterated on the growing block \cite[Definition~3 and p.~8]{Lozin2022}.

Since $g_\ell$ is threshold, Theorem~\ref{thm:s-threshold} gives
\[
        \rho_{u,v}(f)>\ell-1.
\]
Hence, for every $3\le m\le\ell$, we have $\rho_{u,v}(f)>m-1$, and another
application of Theorem~\ref{thm:s-threshold} shows that each intermediate
function $g_m$ is threshold.

We now apply this preservation theorem of Lozin \emph{et al.} inductively. The base case is
$g_2=f\in\Tn$. Suppose that $g_m\in\Tk{n+m-2}$ for some $2\le m<\ell$. Each $g_m$ is
positive by construction, and the variables in the growing block remain
pairwise symmetric. The function $g_{m+1}$ is threshold by the preceding
paragraph, and it is the single-variable symmetric extension of $g_m$ on a
symmetric pair in its block. Lozin \emph{et al.}'s theorem therefore gives
\[
        g_{m+1}\in\Tk{n+m-1}.
\]
By induction, $g_\ell\in\Tk{n+\ell-2}$, which is precisely the displayed equality
for the specification number.
\end{proof}

Combining Theorems~\ref{thm:s-threshold} and~\ref{thm:s-preservation}, for a
positive $f\in\Tn$ and the chosen symmetric pair $u,v$,
\[
        g_\ell\in\Tk{n+\ell-2}
        \iff g_\ell\text{ is threshold}
        \iff \rho_{u,v}(f)>\ell-1.
\]
At $\ell=3$ this recovers \cite[Theorem~7]{Lozin2022} and adds the exact
existence criterion.

\section{The fourth operation}\label{sec:fourth}

Lozin \emph{et al.}\ also proposed a further construction applied to the family
\(f_{n,k}\). We close the main development by analysing this construction.
We prove a result specific to that family: the proposed image has
minimum specification number exactly in the boundary case \(k=n-1\), while for
\(k\le n-2\) the image is not threshold.

\subsection{Statement and tools}

We use two standard tools. Elgot's theorem \cite{Elgot1961} says that a
Boolean function is threshold if and only if it is \emph{asummable}: there are no
\(r\ge2\) ones \(y_1,\dots,y_r\) and \(r\) zeros \(z_1,\dots,z_r\), not
necessarily distinct, with \(\sum y_i=\sum z_i\). The case \(r=2\) is
\(2\)-summability, so one four-point coincidence proves non-thresholdness. By
the facet correspondence of Theorem~\ref{thm:facet},
\(\sigma_m(h)=|\Ess(h)|\) for a threshold function on \(m\) variables. For a
positive threshold function depending on all its variables, the essential points
are extremal, that is, minimal ones or maximal zeros \cite[Theorem~2.5]{ABST1995}. We use
summability below the boundary and the signature count at it.


Lozin \emph{et al.}\ \cite{Lozin2022} study \(\Tk{n}\) through the family
\[
        f_{n,k}=x_1x_2\vee x_1x_3\vee\cdots\vee x_1x_k\vee x_2x_3\cdots x_n,
        \qquad 3\le k\le n-1,
\]
each member of which lies in \(\Tk{n}\). At the end of their Section~6 they
introduce a further operation, their equation~(10), taking \(f_{n,k}\) to the
\((n+2)\)-variable function
\[
        f'=f_{n,k}|_{x_1=1,x_n=0}\,(x_1\vee x_{n+2})
        \;\vee\;
        f_{n,k}|_{x_1=0,x_n=1}\,(x_n\vee x_{n+1})
        \;\vee\;
        x_1(x_{n+1}\vee x_{n+2}).
\]
The operation forms the image from the two restrictions of \(f_{n,k}\) in \(x_1\) and
\(x_n\), each multiplied by a clause in the fresh variables
\(x_{n+1},x_{n+2}\), together with a final clause carrying \(x_1\). Lozin
\emph{et al.}\ verified by hand that the image of \(f_{4,3}\) lies in
\(\Tk{6}\), and conjectured that this operation applied to any \(f_{n,k}\)
again has minimum specification number, leaving the general case open. We settle
it: the conjecture holds at the top of the range and fails everywhere below it.

\begin{theorem}[Resolution of the fourth operation]
\label{thm:main}
Let \(3\le k\le n-1\) and let \(f'\) be the image of \(f_{n,k}\) under this
operation. Then \(f'\) is a threshold function with minimum specification number,
that is, \(f'\in\Tk{n+2}\), if and only if \(k=n-1\). For every \(k\le n-2\) the
function \(f'\) is \(2\)-summable, hence not threshold.
\end{theorem}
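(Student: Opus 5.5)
We treat the two directions separately and begin by writing \(f'\) out explicitly. Setting \(x_1=1,x_n=0\) in \(f_{n,k}\) kills the term \(x_2\cdots x_n\) and leaves \(x_2\vee\cdots\vee x_k\). Setting \(x_1=0,x_n=1\) leaves \(x_2x_3\cdots x_{n-1}\). Hence
\[
        f'=(x_2\vee\cdots\vee x_k)(x_1\vee x_{n+2})\;\vee\;x_2\cdots x_{n-1}(x_n\vee x_{n+1})\;\vee\;x_1(x_{n+1}\vee x_{n+2}).
\]
Everything below works from this positive DNF.

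\emph{The case \(k\le n-2\).} Here the variable \(x_{n-1}\) does not occur in the first clause, and we exhibit a \(2\)-summability certificate and then apply Elgot's theorem. Identify cube points with their supports. The two true points are
\[
y_1=\{2,\dots,n-1,n\},\qquad y_2=\{1,n+1\}.
\]
The point \(y_1\) satisfies the second clause and \(y_2\) satisfies the third. The two false points are
\[
z_1=\{1,n-1,n\},\qquad z_2=\{2,\dots,n-2,n+1\}.
\]

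Check that \(z_1\) is false. It contains no variable among \(x_2,\dots,x_k\), since \(n-1>k\), so the first clause fails. It lacks \(x_2\), so the second clause fails. It lacks \(x_{n+1}\) and \(x_{n+2}\), so the third clause fails.

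Check that \(z_2\) is false. It contains neither \(x_1\) nor \(x_{n+2}\), so the first clause fails. It lacks \(x_{n-1}\), so the second clause fails. It lacks \(x_1\), so the third clause fails.

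Both sums \(y_1+y_2\) and \(z_1+z_2\) equal the indicator of \(\{1,\dots,n+1\}\). So \(f'\) is \(2\)-summable and hence not threshold. The hypothesis \(k\le n-2\) enters only in showing \(z_1\) is false: when \(k=n-1\), the point \(z_1\) satisfies the first clause through \(x_1x_{n-1}\). This explains why the boundary case is different.

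\emph{The case \(k=n-1\).} First we must show \(f'\) is threshold. The plan is to write down an explicit strict representation and verify it on the three clauses. Guided by the symmetries of \(f'\) (the variables \(x_2,\dots,x_{n-1}\) are symmetric), we look for a form
\[
\alpha x_1+\beta\sum_{j=2}^{n-1}x_j+\gamma x_n+\delta x_{n+1}+\epsilon x_{n+2}-\theta .
\]
We reduce the verification to checking the minimal true points and maximal false points. This reduction is valid because the form has non-negative weights and \(f'\) is positive.

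The minimal true points are:
\begin{itemize}
\item \(\{1,j\}\) and \(\{j,n+2\}\) for \(2\le j\le n-1\);
\item \(\{2,\dots,n-1,n\}\) and \(\{2,\dots,n-1,n+1\}\);
\item \(\{1,n+1\}\) and \(\{1,n+2\}\).
\end{itemize}

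The maximal false points are to be enumerated in the same way. They include \(\{1,n\}\), \(\{2,\dots,n-1\}\cup\{\text{one further variable}\}\) configurations, and sets missing \(x_1,x_{n+2}\) and one block variable.

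The weights are then found from a small linear system in the six unknowns \(\alpha,\beta,\gamma,\delta,\epsilon,\theta\) (depending on \(n\)). For example, we would try large \(\beta\)-sum against \(\alpha\) with \(\beta\) small, in the spirit of the representation of \(f_{n,n-1}\) in \cite{Lozin2022}.

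Second, we must show \(\sigma_{n+2}(f')=n+3\). Corollary~\ref{cor:lower} gives \(\sigma_{n+2}(f')\ge n+3\), so it suffices to show \(|\Ess(f')|\le n+3\). By \cite[Theorem~2.5]{ABST1995}, \(\Ess(f')\) lies among the extremal points listed above. So the task is to decide, for each extremal point, whether flipping it keeps \(f'\) threshold. Flips that are threshold we certify by feasibility of the system in Proposition~\ref{prop:lp-test}, or by exhibiting a weight vector on which the form vanishes at that point only. Flips that are not threshold we kill with a \(2\)-summability quadruple, in the same style as above. By the symmetry among \(x_2,\dots,x_{n-1}\), this is a bounded number of orbit types, so it is a finite case analysis uniform in \(n\). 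The expected outcome is exactly \(n+3\) essential points; full relevance then comes for free from Theorem~\ref{thm:Tn-simplicial}.

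The main obstacle is this last count. The extremal set is of size roughly \(2n\), and most of the symmetric families of extremal points must be shown inessential by explicit summability certificates that work for every \(n\). The first thing we would try is to exploit the structure of \(f'\) as glued from the two restrictions of \(f_{n,n-1}\), whose signature is known to have \(n+1\) points. The aim would be to transport the essential points of \(f_{n,n-1}\) along the two slices \(x_1=1,x_n=0\) and \(x_1=0,x_n=1\), and to show that the fresh variables contribute exactly two new facets. This mirrors the half-line-product mechanism of Lemma~\ref{lem:halfline-simplicial}.
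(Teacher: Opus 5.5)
Your treatment of $k\le n-2$ is correct. The quadruple $y_1=\{2,\dots,n\}$, $y_2=\{1,n+1\}$, $z_1=\{1,n-1,n\}$, $z_2=\{2,\dots,n-2,n+1\}$ is a valid $2$-summability certificate; it is the paper's certificate with the split taken at $n-2$ instead of $k$. With Elgot's theorem it gives the ``only if'' half.

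The genuine gap is the boundary case $k=n-1$, which carries the content of the theorem and which you leave as a programme. Your outline (weights, extremal points, summability certificates, lower bound) is the paper's route, but none of its steps is carried out, and some of the data you would feed into it are wrong.

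\begin{itemize}
\item \emph{Weights.} You never produce a representation. Your symmetric ansatz is the right one: the paper uses $w_1=4n-9$, $w_2=\cdots=w_{n-1}=4$, $w_n=1$, $w_{n+1}=2$, $w_{n+2}=4n-11$, with threshold $4n-7$. But you never solve the system.
\item \emph{Maximal zeros.} Your list is wrong. No set $\{2,\dots,n-1\}\cup\{i\}$ is false: adding $x_1$ or $x_{n+2}$ fires the first clause, and adding $x_n$ or $x_{n+1}$ fires the second. You also omit $\{n,n+1,n+2\}$. The maximal zeros are $\{n,n+1,n+2\}$, $\{2,\dots,n-1\}$, $\{1,n\}$, and the $n-2$ sets $\{2,\dots,n+1\}\setminus\{j\}$ for $2\le j\le n-1$. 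So there are $2n+(n+1)=3n+1$ extremal points, not roughly $2n$.
\item \emph{Relevance.} \cite[Theorem~2.5]{ABST1995} assumes all variables are relevant. You must check relevance before invoking it (each variable lies in some minimal one), not recover it afterwards from Theorem~\ref{thm:Tn-simplicial}.
\end{itemize}

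The decisive missing step is the bound $|\Ess(f')|\le n+3$, which needs $2n-2$ explicit inessentiality certificates; you give none. They exist in your own notation. Put $M=\{2,\dots,n-1\}$ and $2\le j\le n-1$. The sets $\{n+1,n+2\}$ and $\{j,n+1\}$ are zeros of $f'$ (the latter because $n\ge4$), and, as sums of indicator vectors,
\[
\begin{aligned}
\{1,j\}+\{n+1,n+2\}&=\{j,n+2\}+\{1,n+1\},\\
\{1,n+2\}+\{j,n+1\}&=\{j,n+2\}+\{1,n+1\},\\
(M\cup\{n+1\})+\{1,n\}&=(M\cup\{n\})+\{1,n+1\},\\
\bigl((M\setminus\{j\})\cup\{n,n+1\}\bigr)+\{j,n+2\}&=\{n,n+1,n+2\}+M .
\end{aligned}
\]
In each line, flipping the first summand on the left yields two ones summing to two zeros. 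Hence $\{1,j\}$, $\{1,n+2\}$, $M\cup\{n+1\}$ and $(M\setminus\{j\})\cup\{n,n+1\}$ are inessential. That leaves exactly $n+3$ candidates, and the lower bound $n+3$ then forces all of them to be essential, with no feasibility certificates needed.

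Your fallback of transporting the signature of $f_{n,n-1}$ and adding two new facets, as in Lemma~\ref{lem:halfline-simplicial}, would not work. On the layer $x_{n+1}=x_{n+2}=0$ the function $f'$ restricts to $f_{n,n-1}$, whose signature contains the $n-2$ points $\{1,j\}$. The first identity above shows these are inessential for $f'$, so the old facets do not survive on the reproducing layer as they do for a half-line product.
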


We divide the proof on the value of \(k\). For \(k\le n-2\) a single coincidence of
four points shows \(f'\) is \(2\)-summable (Section~\ref{ssec:disproof}). For the
boundary value \(k=n-1\), an explicit weight system, an enumeration of the
extremal points, and a count of the essential ones place \(f'\) in \(\Tk{n+2}\)
(Section~\ref{ssec:boundary}).

Throughout, the restrictions are
\[
        f_{n,k}|_{x_1=1,x_n=0}=x_2\vee\cdots\vee x_k,
        \qquad
        f_{n,k}|_{x_1=0,x_n=1}=x_2x_3\cdots x_{n-1},
\]
so the operation reads
\[
        f'=(x_2\vee\cdots\vee x_k)(x_1\vee x_{n+2})
        \vee(x_2\cdots x_{n-1})(x_n\vee x_{n+1})
        \vee x_1(x_{n+1}\vee x_{n+2}).
\]

\subsection{The conjecture fails for \texorpdfstring{$k\le n-2$}{k<=n-2}}
\label{ssec:disproof}

When \(k\le n-2\) the middle conjunction \(x_2\cdots x_{n-1}\) in the displayed
formula for \(f'\) involves a variable, any \(x_i\) with \(k+1\le i\le n-1\), that
the first clause \(x_2\vee\cdots\vee x_k\) does not reach. The certificate
exploits this gap.

\begin{proposition}
\label{prop:disproof}
If \(3\le k\le n-2\), then \(f'\) is \(2\)-summable, and hence not a threshold
function.
\end{proposition}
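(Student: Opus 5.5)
The plan is to exhibit an explicit \(2\)-summability certificate, that is, two true points \(y_1,y_2\) and two false points \(z_1,z_2\) of \(f'\) with \(y_1+y_2=z_1+z_2\). Elgot's theorem \cite{Elgot1961}, in the \(r=2\) case recalled above, then rules out thresholdness. I will work throughout with the expanded form
\[
        f'=(x_2\vee\cdots\vee x_k)(x_1\vee x_{n+2})
        \vee(x_2\cdots x_{n-1})(x_n\vee x_{n+1})
        \vee x_1(x_{n+1}\vee x_{n+2}).
\]
The variable that exploits the gap will be \(x_{n-1}\). Since \(k\le n-2\), it occurs in the middle conjunction but not in the clause \(x_2\vee\cdots\vee x_k\).

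\emph{True points.} Each is chosen to fire one term of \(f'\). Let \(y_1\) have \(x_2=\cdots=x_{n-1}=1\), \(x_n=1\), and all other coordinates \(0\); it satisfies the middle term. Let \(y_2\) have \(x_1=x_{n+1}=1\) and all other coordinates \(0\); it satisfies the last term \(x_1x_{n+1}\).

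Their sum is the \(0/1\) vector with ones in positions \(1,2,\ldots,n+1\) and a zero in position \(n+2\).

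\emph{False points.} I will split the same support the other way.
\begin{itemize}
\item[] \(z_1\) has ones exactly at \(x_1,x_{n-1},x_n\).
\item[] \(z_2\) has ones exactly at \(x_2,\ldots,x_{n-2},x_{n+1}\).
\end{itemize}
These are disjoint and together cover positions \(1,\ldots,n+1\), so \(z_1+z_2=y_1+y_2\).

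To see that \(z_1\) is false, check each term. The first clause needs some \(x_j=1\) with \(2\le j\le k\); here the hypothesis \(k\le n-2\) is used, since it places \(n-1\) outside that range. The middle term needs \(x_2=1\), which fails. The last term needs \(x_{n+1}\) or \(x_{n+2}\), both \(0\).

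To see that \(z_2\) is false, check the same three terms. The first term needs \(x_1\vee x_{n+2}\), which is \(0\). The middle term needs \(x_{n-1}=1\), which fails. The last term needs \(x_1=1\), which fails. Since \(n\ge k+2\ge5\), the range \(2,\ldots,n-2\) is non-empty, so the points are well defined.

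The only delicate step is the choice of the split so that neither false point fires a term. The remaining verification is the routine term-by-term check above. It also shows why the certificate collapses at \(k=n-1\): there \(z_1\) would satisfy \(x_{n-1}x_1\) in the first term, consistent with Theorem~\ref{thm:main}.
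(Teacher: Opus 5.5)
Your certificate is correct and is essentially the paper's proof: you use the same two true points (the paper's $T_1$ and $T_2$), the same coordinate sum, and Elgot's theorem. The only difference is how the false points split the support. The paper puts all of $x_{k+1},\dots,x_n$ with $x_1$, and $x_2,\dots,x_k$ with $x_{n+1}$; you put just $x_{n-1},x_n$ with $x_1$, which works equally well because a single gap variable from $\{k+1,\dots,n-1\}$ is enough to kill the middle term at the other false point.
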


\begin{proof}
Since \(k\le n-2\), the index set \(\{k+1,\dots,n-1\}\) is non-empty. Define
points of \(\B^{n+2}\), in coordinates \((x_1,\dots,x_n,x_{n+1},x_{n+2})\), by
\[
\begin{aligned}
        T_1&:\ x_1=0,\ x_i=1\ (2\le i\le n),\ x_{n+1}=x_{n+2}=0,\\
        T_2&:\ x_1=1,\ x_i=0\ (2\le i\le n),\ x_{n+1}=1,\ x_{n+2}=0,\\
        F_1&:\ x_1=0,\ x_i=1\ (2\le i\le k),\ x_i=0\ (k+1\le i\le n),\ x_{n+1}=1,\ x_{n+2}=0,\\
        F_2&:\ x_1=1,\ x_i=0\ (2\le i\le k),\ x_i=1\ (k+1\le i\le n),\ x_{n+1}=0,\ x_{n+2}=0.
\end{aligned}
\]
From the displayed formula for \(f'\): \(f'(T_1)=1\) because \(x_2\cdots
x_{n-1}=1\) and \(x_n=1\) fire the middle term; \(f'(T_2)=1\) because \(x_1=1\)
and \(x_{n+1}=1\) fire the last term. Next \(f'(F_1)=0\): the first and last
terms vanish as \(x_1=x_{n+2}=0\), and the middle term vanishes because some
\(x_i\) with \(k+1\le i\le n-1\) equals \(0\), so \(x_2\cdots x_{n-1}=0\). And
\(f'(F_2)=0\): the last term vanishes as \(x_{n+1}=x_{n+2}=0\), the first
vanishes as \(x_2=\cdots=x_k=0\), and the middle vanishes as \(x_2\cdots
x_{n-1}=0\). Finally, coordinatewise,
\[
        T_1+T_2=F_1+F_2=(1,\underbrace{1,\dots,1}_{2\le i\le n},1,0),
\]
since in each coordinate both sides equal \(1\), except \(x_{n+2}\) where both
equal \(0\). Thus \(f'\) is \(2\)-summable, and by Elgot's theorem not threshold.
\end{proof}

The two ones \(T_1,T_2\) and the two zeros \(F_1,F_2\) share a coordinate sum, so
no affine form can hold the ones above a threshold and the zeros below it. The
certificate is uniform: only the pattern of the four points changes with \(n\)
and \(k\), never their number.

\begin{example}
It is useful to keep the smallest failing case in mind: \(n=5\), \(k=3\), where \(f'\) has seven variables.
In coordinates \((x_1,\dots,x_5,x_6,x_7)\) the four points are
\[
\begin{aligned}
        T_1&=(0,1,1,1,1,0,0), & T_2&=(1,0,0,0,0,1,0),\\
        F_1&=(0,1,1,0,0,1,0), & F_2&=(1,0,0,1,1,0,0),
\end{aligned}
\]
with \(f'(T_1)=f'(T_2)=1\), \(f'(F_1)=f'(F_2)=0\), and
\(T_1+T_2=F_1+F_2=(1,1,1,1,1,1,0)\).
\end{example}

\subsection{The boundary family \texorpdfstring{$k=n-1$}{k=n-1}}
\label{ssec:boundary}

At \(k=n-1\) the first clause \(x_2\vee\cdots\vee x_{n-1}\) and the middle clause
\(x_2\cdots x_{n-1}\) run over the same middle variables, the gap of the previous
case closes, and \(f'\) becomes threshold. We make this explicit, realising
\(f'\) by weights, listing its extremal points, and discarding the inessential
ones to count the rest.

Fix \(k=n-1\) and \(n\ge4\). Write \(h=n-2\) for the number of \emph{middle}
variables \(x_2,\dots,x_{n-1}\), and \(s=x_2+\cdots+x_{n-1}\in\{0,\dots,h\}\) for
their sum. With \(\vee\) over an empty set read as \(0\) and \(\wedge\) as \(1\),
the displayed formula for \(f'\) becomes
\[
        f'=[\,s\ge1\,](x_1\vee x_{n+2})
        \vee[\,s=h\,](x_n\vee x_{n+1})
        \vee x_1(x_{n+1}\vee x_{n+2}).
\]
We write a point as \((x_1;\,m;\,x_n,x_{n+1},x_{n+2})\), where \(m\in\B^h\) is
the middle block; \(\mathbf 0^h,\mathbf 1^h,e_j\) denote the zero block, the
all-ones block, and the \(j\)-th unit block.

\subsubsection*{Extremal points}

Before giving the general count, it may help to see the smallest boundary case.
Take \(n=4\), so that \(h=n-2=2\), and write points as
\[
        (x_1;\,x_2,x_3;\,x_4,x_5,x_6).
\]
Here the middle variables are \(x_2,x_3\), the variables whose sum appears in
the two bracketed conditions. The transformed function is
\[
        [x_2+x_3\ge 1](x_1\vee x_6)
        \vee
        [x_2+x_3=2](x_4\vee x_5)
        \vee
        x_1(x_5\vee x_6).
\]
The minimal true points are obtained by turning on one of the three displayed
terms in a minimal way. The first term is turned on minimally either by taking
one middle variable together with \(x_6\), or by taking \(x_1\) together with one
middle variable:
\[
\begin{aligned}
T_1&=(0;\,1,0;\,0,0,1),&
T_2&=(0;\,0,1;\,0,0,1),\\
T_3&=(1;\,1,0;\,0,0,0),&
T_4&=(1;\,0,1;\,0,0,0).
\end{aligned}
\]
The second term is turned on minimally by taking both middle variables and
exactly one of \(x_4,x_5\):
\[
\begin{aligned}
T_5&=(0;\,1,1;\,0,1,0),&
T_6&=(0;\,1,1;\,1,0,0).
\end{aligned}
\]
The third term is turned on minimally by taking \(x_1\) and exactly one of
\(x_5,x_6\):
\[
\begin{aligned}
T_7&=(1;\,0,0;\,0,0,1),&
T_8&=(1;\,0,0;\,0,1,0).
\end{aligned}
\]
Thus there are \(8=2n\) minimal true points.

The maximal false points are obtained in the opposite way: we keep all three
terms false, while setting every variable not forced to be zero as high as
possible. If both middle variables are zero, then the first two terms are off.
With \(x_1=0\), the third term is also off, and we may set \(x_4,x_5,x_6\) to
one:
\[
        F_1=(0;\,0,0;\,1,1,1).
\]
If exactly one middle variable is one, then the first term forces
\(x_1=x_6=0\), while \(x_4\) and \(x_5\) may still be one:
\[
\begin{aligned}
F_2&=(0;\,0,1;\,1,1,0),&
F_3&=(0;\,1,0;\,1,1,0).
\end{aligned}
\]
If both middle variables are one and \(x_1=0\), then the first two terms force
\(x_4=x_5=x_6=0\):
\[
        F_4=(0;\,1,1;\,0,0,0).
\]
Finally, if \(x_1=1\), then to keep the first and third terms false we need
\(x_2=x_3=x_5=x_6=0\), while \(x_4\) may still be one:
\[
        F_5=(1;\,0,0;\,1,0,0).
\]
Thus there are \(5=n+1\) maximal false points.

We obtain the general case from this example by replacing the two middle
coordinates by a middle block \(m\in\mathbb B^h\), where \(h=n-2\). The singleton
middle blocks \((1,0)\) and \((0,1)\) become the \(h\) unit vectors \(e_j\); the
all-one middle block becomes \(\mathbf 1^h\); the all-zero middle block becomes
\(\mathbf 0^h\); and the middle blocks with exactly one zero become
\(\mathbf 1^h-e_j\). This clause-by-clause construction is exactly what the
general statement below records for a middle block of arbitrary size.

\begin{lemma}
\label{lem:extremal}
The minimal ones of \(f'\) are the \(2n\) points
\[
\begin{aligned}
        T^{(1)}_j&=(0;e_j;0,0,1), &&1\le j\le h,\\
        T^{(2)}_j&=(1;e_j;0,0,0), &&1\le j\le h,\\[1mm]
        T^{(3)}&=(0;\mathbf 1^h;0,1,0),\\
        T^{(4)}&=(0;\mathbf 1^h;1,0,0),\\[1mm]
        T^{(5)}&=(1;\mathbf 0^h;0,0,1),\\
        T^{(6)}&=(1;\mathbf 0^h;0,1,0).
\end{aligned}
\]
and the maximal zeros are the \(n+1\) points
\[
\begin{aligned}
        F^{(1)}&=(0;\mathbf 0^h;1,1,1),
        & F^{(2)}_j&=(0;\mathbf 1^h-e_j;1,1,0)
                 \quad(1\le j\le h),\\
        F^{(3)}&=(0;\mathbf 1^h;0,0,0),
        & F^{(4)}&=(1;\mathbf 0^h;1,0,0).
\end{aligned}
\]
In particular \(f'\) has \(3n+1\) extremal points and depends on all \(n+2\)
variables.
\end{lemma}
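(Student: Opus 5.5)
The plan is a direct combinatorial verification from the normal form
\[
        f'=[\,s\ge1\,](x_1\vee x_{n+2})\vee[\,s=h\,](x_n\vee x_{n+1})\vee x_1(x_{n+1}\vee x_{n+2}),
\]
using that \(f'\) is positive (it is a monotone DNF). Positivity implies that a true point \(t\) is a minimal one exactly when every single-coordinate decrease \(t-e_i\), for \(i\) in its support, is false. Dually, a false point \(z\) is a maximal zero exactly when every single increase is true. So each listed point needs only a one-step check, and the substantive work is completeness of the two lists.

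\textbf{Minimal ones.} Every minimal one lies in the support-minimal set of some single term, so we analyse the three terms separately. First write the conditions \([s\ge1]\) and \([s=h]\) as their minimal monotone DNFs: the first is \(\bigvee_j m_j\), the second is \(\bigwedge_j m_j\). This gives the prime implicants \(m_jx_{n+2}\), \(m_jx_1\), \(\mathbf 1^hx_n\), \(\mathbf 1^hx_{n+1}\), \(x_1x_{n+1}\) and \(x_1x_{n+2}\). Their indicator points are exactly the listed \(T\)'s. We then check there are no absorptions among them. The one case needing care is \(\mathbf 1^h x_{n+1}\) against \(x_1x_{n+1}\): neither contains the other because \(x_1=0\) in the former. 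The case \(h\ge2\) ensures \(e_j\ne\mathbf 1^h\), which we use to rule out further coincidences. In a positive function the minimal ones are precisely the indicators of the prime implicants of the minimal DNF, so this gives \(2h+4=2n\) points.

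\textbf{Maximal zeros.} Case on \(x_1\) and on \(s\).
\begin{itemize}
\item If \(x_1=1\): falsity forces \(s=0\) and \(x_{n+1}=x_{n+2}=0\). Setting \(x_n=1\) gives \(F^{(4)}\).
\item If \(x_1=0\) and \(s=0\): everything else is free, giving \(F^{(1)}\).
\item If \(x_1=0\) and \(1\le s\le h-1\): we need \(x_{n+2}=0\), and then the maximal choice is \(m=\mathbf 1^h-e_j\) with \(x_n=x_{n+1}=1\), giving \(F^{(2)}_j\).
\item If \(x_1=0\) and \(s=h\): we need \(x_n=x_{n+1}=x_{n+2}=0\), giving \(F^{(3)}\).
\end{itemize}
Within each case the zero set is a down-set with a unique maximum, apart from the middle-block choice in the third case, which yields the \(h\) points \(F^{(2)}_j\). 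We then check that none of these points lies below another. The point \(F^{(3)}\) versus \(F^{(2)}_j\) is fine because \(F^{(2)}_j\) has a zero in the block. Totals: \(1+h+1+1=n+1\), and there are \(3n+1\) extremal points overall.

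\textbf{Relevance.} Every variable occurs in some minimal one: \(x_1\) in \(T^{(5)}\), each \(m_j\) in \(T^{(1)}_j\), \(x_n\) in \(T^{(4)}\), \(x_{n+1}\) in \(T^{(3)}\), \(x_{n+2}\) in \(T^{(1)}_j\). A variable lying in the support of a minimal one \(t\) is relevant, since decreasing it at \(t\) changes the value.

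The main obstacle is the completeness of the maximal-zero list in the intermediate range \(1\le s\le h-1\). There one must verify that \(x_n=x_{n+1}=1\) is permitted, which holds because \([s=h]\) fails, and that the block must be raised all the way to weight \(h-1\). The point is that any false point with \(1\le s\le h-1\) lies below some \(\mathbf 1^h-e_j\) with \(x_n=x_{n+1}=1\) and \(x_1=x_{n+2}=0\).
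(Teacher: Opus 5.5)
Your proposal is correct and follows essentially the same route as the paper: the minimal ones are read off term by term (your absorption check on the expanded monotone DNF plays the role of the paper's pairwise-incomparability check), the maximal zeros come from the same case split on $x_1$ and $s$, and relevance is deduced from the supports of the minimal ones. One small point: $h\ge2$ is not needed to separate the minimal ones. Where it matters is the zeros, since it makes the case $1\le s\le h-1$ non-empty and keeps each $F^{(2)}_j$ incomparable with $F^{(1)}$.
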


\begin{proof}[Proof of Lemma~\ref{lem:extremal}]
Write
\[
        a=x_1,
        \qquad b=x_{n+2},
        \qquad c=x_n,
        \qquad e=x_{n+1}.
\]
Then the boundary formula is
\[
        f'(x)=1 \iff
        \big(s\ge1\wedge(a\vee b)\big)
        \vee
        \big(s=h\wedge(c\vee e)\big)
        \vee
        \big(a\wedge(e\vee b)\big).
\]
The preceding example already indicates the construction. We verify that the
listed points are exactly the extremal points.

First consider true points. The first term
\[
        [s\ge 1](x_1\vee x_{n+2})
\]
is minimally made true either by taking one middle variable and \(x_{n+2}\),
giving \(T^{(1)}_j\), or by taking \(x_1\) and one middle variable, giving
\(T^{(2)}_j\). The second term
\[
        [s=h](x_n\vee x_{n+1})
\]
is minimally made true by taking the full middle block and exactly one of
\(x_n,x_{n+1}\), giving \(T^{(3)}\) and \(T^{(4)}\). The third term
\[
        x_1(x_{n+1}\vee x_{n+2})
\]
is minimally made true by taking \(x_1\) and exactly one of \(x_{n+1},x_{n+2}\),
giving \(T^{(5)}\) and \(T^{(6)}\). Conversely, every true point makes at least one term true and dominates
the minimal point listed for that term, so each minimal true point appears in
the list; as the listed points are pairwise incomparable, they are precisely
the minimal true points.

Now consider false points, taking \(x_1=0\) and \(x_1=1\) in turn. Suppose
first that \(x_1=0\), so the third term vanishes; the first term is then
\([s\ge1]\,x_{n+2}\) and the second is \([s=h](x_n\vee x_{n+1})\). If \(s=0\),
both terms vanish for every choice of \(x_n,x_{n+1},x_{n+2}\), so maximality
sets all three to \(1\), giving \(F^{(1)}\). If \(1\le s\le h-1\), the first
term forces \(x_{n+2}=0\), while the second vanishes because \(s<h\), leaving
\(x_n\) and \(x_{n+1}\) free; the point is then false for every middle block of
weight \(s\). When \(s\le h-2\) such a point is not maximal, since raising any
zero middle coordinate increases \(s\) to a value still at most \(h-1\) and
leaves the point false. Maximality therefore forces \(s=h-1\), with
\(x_n=x_{n+1}=1\) and \(x_{n+2}=0\), giving the points \(F^{(2)}_j\), indexed by
the single zero middle coordinate \(j\). If \(s=h\), the first term forces
\(x_{n+2}=0\) and the second forces \(x_n=x_{n+1}=0\), giving \(F^{(3)}\).
Suppose instead that \(x_1=1\). The third term is then \(x_{n+1}\vee x_{n+2}\),
so falsity forces \(x_{n+1}=x_{n+2}=0\), and the first term, now \([s\ge1]\),
forces \(s=0\), after which the second vanishes; maximality allows \(x_n=1\),
giving \(F^{(4)}\). These are precisely the maximal false points.

Thus there are
\[
        h+h+4=2h+4=2n
\]
minimal true points and
\[
        1+h+1+1=h+3=n+1
\]
maximal false points, as claimed.

Every variable occurs with value \(1\) in at least one listed minimal true point.
Changing that coordinate to \(0\) gives a zero by minimality, so each variable is
relevant and \(f'\) depends on all \(n+2\) variables.
\end{proof}

\subsubsection*{Threshold realisation}

In the same smallest case \(n=4\), a corresponding threshold realisation is
given by
\[
        (w_1,\ldots,w_6)=(7,4,4,1,2,5),
        \qquad t=9.
\]
Thus the threshold inequality is
\[
        7x_1+4x_2+4x_3+x_4+2x_5+5x_6\ge 9.
\]
For instance, the minimal true point
\[
        (0;\,1,0;\,0,0,1)
\]
has weight \(4+5=9\), while the maximal false point
\[
        (0;\,0,0;\,1,1,1)
\]
has weight \(1+2+5=8\).

\begin{proposition}
\label{prop:threshold}
With weights
\[
        w_1=4n-9,\quad w_i=4\ (2\le i\le n-1),\quad
        w_n=1,\quad w_{n+1}=2,\quad w_{n+2}=4n-11,
\]
and threshold \(t=4n-7\), one has
\(f'(x)=1\iff\langle w,x\rangle\ge t\). In particular, \(f'\) is threshold.
\end{proposition}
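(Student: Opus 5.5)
Since $f'$ is positive (it is built from $\vee$ and $\wedge$ of variables, without negations) and the proposed weights are all positive, it suffices to compare $\langle w,\cdot\rangle$ with $t$ on the extremal points listed in Lemma~\ref{lem:extremal}. If every minimal true point has weight at least $t$, monotonicity of $\langle w,\cdot\rangle$ gives $\langle w,x\rangle\ge t$ at every true point. If every maximal false point has weight at most $t-1$, the same argument gives $\langle w,x\rangle<t$ at every false point, because each false point lies below some maximal false point. So the plan is to evaluate $\langle w,\cdot\rangle$ at the $2n$ minimal ones and the $n+1$ maximal zeros and compare with $t=4n-7$.

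First I check the minimal true points, writing $h=n-2$ so that the full middle block has weight $4h=4n-8$:
\[
\begin{aligned}
T^{(1)}_j&: 4+(4n-11)=4n-7, & T^{(2)}_j&: (4n-9)+4=4n-5,\\
T^{(3)}&: (4n-8)+2=4n-6, & T^{(4)}&: (4n-8)+1=4n-7,\\
T^{(5)}&: (4n-9)+(4n-11)=8n-20, & T^{(6)}&: (4n-9)+2=4n-7.
\end{aligned}
\]
All of these are at least $t$. The inequality $8n-20\ge 4n-7$ needs $n\ge 13/4$, and this holds since $n\ge4$.

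Next I check the maximal false points:
\[
\begin{aligned}
F^{(1)}&: 1+2+(4n-11)=4n-8, & F^{(2)}_j&: 4(h-1)+1+2=4n-9,\\
F^{(3)}&: 4n-8, & F^{(4)}&: (4n-9)+1=4n-8.
\end{aligned}
\]
All of these are below $t$. The same calculation at $n=4$ reproduces the worked weights $(7,4,4,1,2,5)$ with $t=9$, which serves as a consistency check.

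The only real obstacle is making sure the extremal lists are complete, because the reduction to extremal points depends on them. Lemma~\ref{lem:extremal} already establishes this completeness, so I would cite it. I would also state the monotonicity reduction explicitly: every true point dominates some minimal true point, every false point is dominated by some maximal false point, and $\langle w,\cdot\rangle$ is nondecreasing coordinatewise because $w\ge0$. These two facts together give $f'(x)=1\iff\langle w,x\rangle\ge t$, and hence $f'$ is threshold.
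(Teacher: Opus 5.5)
Your proposal is correct and matches the paper's proof: you evaluate $\langle w,\cdot\rangle$ on the $2n$ minimal ones and the $n+1$ maximal zeros of Lemma~\ref{lem:extremal}, then extend to all of $\B^{n+2}$ because a form with nonnegative weights is monotone. Your values agree with the paper's table, including the check $8n-20\ge 4n-7$ for $T^{(5)}$.
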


\begin{proof}[Proof of Proposition~\ref{prop:threshold}]
All weights are positive for \(n\ge4\). Writing \(W=\langle w,x\rangle\) and using
\(h=n-2\), the values on the minimal ones of Lemma~\ref{lem:extremal} are
\[
\begin{array}{lll}
        W(T^{(1)}_j)=4n-7, & W(T^{(2)}_j)=4n-5, & W(T^{(3)})=4n-6,\\
        W(T^{(4)})=4n-7, & W(T^{(5)})=8n-20, & W(T^{(6)})=4n-7,
\end{array}
\]
all at least \(t\); for \(T^{(5)}\), \(8n-20\ge4n-7\) when \(n\ge4\). On the
maximal zeros,
\[
        W(F^{(1)})=4n-8,\quad W(F^{(2)}_j)=4n-9,\quad
        W(F^{(3)})=4n-8,\quad W(F^{(4)})=4n-8,
\]
all at most \(t-1\). Since the function is monotone increasing and the weights are positive, every true point dominates one of the listed minimal ones and every false point is dominated by one of the listed maximal zeros, so \(W\ge t\) at every true point and \(W\le t-1\) at every false point.
\end{proof}

\subsubsection*{Inessential points and the count}

\begin{lemma}
\label{lem:inessential}
The \(2n-2\) extremal points \(T^{(2)}_1,\dots,T^{(2)}_h\), \(T^{(3)}\),
\(T^{(5)}\), \(F^{(2)}_1,\dots,F^{(2)}_h\) are inessential.
\end{lemma}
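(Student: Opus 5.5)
We prove each listed point inessential by the same mechanism. By Theorem~\ref{thm:facet}, a point \(x\) is essential for \(f'\) exactly when \(g=f'\oplus\mathbf 1_{\{x\}}\) is threshold. By Elgot's theorem it therefore suffices, for each listed \(x\), to show that \(g\) is \(2\)-summable. Concretely, we exhibit two ones \(y_1,y_2\) of \(g\) and two zeros \(z_1,z_2\) of \(g\) with \(y_1+y_2=z_1+z_2\). The flipped point \(x\) must appear in every such certificate, because \(f'\) itself is threshold by Proposition~\ref{prop:threshold}. The other three points will be chosen among the extremal points of Lemma~\ref{lem:extremal}, or other points whose value under \(f'\) can be read off the bracket formula
\[
[\,s\ge1\,](x_1\vee x_{n+2})\vee[\,s=h\,](x_n\vee x_{n+1})\vee x_1(x_{n+1}\vee x_{n+2}).
\]
Since these points differ from \(x\), flipping at \(x\) leaves their values unchanged. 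Throughout we write points as \((x_1;m;x_n,x_{n+1},x_{n+2})\).

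The certificates we intend to use are the following.

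For \(T^{(2)}_j\), now a zero of \(g\):
\[
T^{(1)}_j+T^{(6)}=(1;e_j;0,1,1)=T^{(2)}_j+(0;\mathbf 0^h;0,1,1).
\]
The last point is false, since \(x_1=0\) and \(s=0\).

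For \(T^{(3)}\), now a zero:
\[
T^{(4)}+T^{(6)}=(1;\mathbf 1^h;1,1,0)=T^{(3)}+F^{(4)}.
\]

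For \(T^{(5)}\), now a zero:
\[
T^{(6)}+(0;\mathbf 1^h-e_j;1,0,1)=(1;\mathbf 1^h-e_j;1,1,1)=T^{(5)}+F^{(2)}_j.
\]
Here \((0;\mathbf 1^h-e_j;1,0,1)\) is true because \(s=h-1\ge1\) (using \(h\ge2\)) and \(x_{n+2}=1\).

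For \(F^{(2)}_j\), now a one:
\[
F^{(2)}_j+T^{(1)}_j=(0;\mathbf 1^h;1,1,1)=F^{(1)}+F^{(3)}.
\]

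Each certificate thus uses exactly one instance of the flipped point. All the other points keep the labels of \(f'\), and these labels have been checked against the formula, so each \(g\) is \(2\)-summable and hence not threshold.

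The main obstacle is finding the certificates, since a valid certificate must pass through \(x\) and the coordinate sums must stay in \(\{0,1\}\) or match exactly. The case \(T^{(5)}\) is the delicate one: naive partners such as \(F^{(4)}\) produce a coordinate equal to \(2\). We resolve it by pairing \(T^{(5)}\) with \(F^{(2)}_j\) and using the auxiliary true point \((0;\mathbf 1^h-e_j;1,0,1)\). This step needs \(h\ge2\), which holds because \(n\ge4\).

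Counting, there are \(h\) points \(T^{(2)}_j\), \(h\) points \(F^{(2)}_j\), and the two points \(T^{(3)}\) and \(T^{(5)}\). This gives \(2h+2=2n-2\) inessential points, as stated.
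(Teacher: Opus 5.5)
Your proof is correct and follows the paper's route. For each listed point you flip it and exhibit a $2$-summability certificate, then invoke Elgot's theorem, using the same identities as the paper for $T^{(2)}_j$, $T^{(3)}$ and $F^{(2)}_j$. The only difference is at $T^{(5)}$: the paper uses $T^{(5)}+Q_j=T^{(1)}_j+T^{(6)}$ with the auxiliary zero $Q_j=(0;e_j;0,1,0)$, whereas your $T^{(5)}+F^{(2)}_j=T^{(6)}+(0;\mathbf 1^h-e_j;1,0,1)$ uses an auxiliary one; both need $h\ge2$, and both are valid.
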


\begin{proof}[Proof of Lemma~\ref{lem:inessential}]
For a point \(x\in\mathbb B^{n+2}\), let \(f'^{\,x}\) denote the Boolean
function obtained from \(f'\) by changing the value at \(x\) and leaving all
other values unchanged. The point \(x\) is inessential iff \(f'^{\,x}\) is not
threshold. By Elgot's theorem, it is enough to show that \(f'^{\,x}\) is
\(2\)-summable.

Put
\[
        P=(0;\mathbf 0^h;0,1,1),
        \qquad
        Q_j=(0;e_j;0,1,0).
\]
Both are zeros of \(f'\). Indeed, for \(P\) we have \(s=0\) and \(a=0\), so the
first and third terms vanish, while \(s<h\), so the second term vanishes. For
\(Q_j\) we have \(a=b=0\) and \(s=1<h\) since \(n\ge4\), so again all three
terms vanish.

The following identities hold coordinatewise:
\[
\begin{aligned}
        T^{(3)}+F^{(4)}&=T^{(4)}+T^{(6)}, & T^{(5)}+Q_j&=T^{(1)}_j+T^{(6)},\\
        T^{(2)}_j+P&=T^{(1)}_j+T^{(6)}, & F^{(2)}_j+T^{(1)}_j&=F^{(1)}+F^{(3)}.
\end{aligned}
\]
For the first three identities, the point being proved inessential is one of
the true points on the left-hand side. After flipping that point, the two points
on the left-hand side are zeros of \(f'^{\,x}\), while the two points on the
right-hand side remain ones of \(f'^{\,x}\). Thus \(f'^{\,x}\) is
\(2\)-summable, and hence is not threshold.

More explicitly, if \(x=T^{(3)}\), then
\[
        T^{(3)}+F^{(4)}=T^{(4)}+T^{(6)}
\]
expresses a sum of two zeros of \(f'^{\,x}\) as a sum of two ones. If
\(x=T^{(5)}\), then, for any \(j\),
\[
        T^{(5)}+Q_j=T^{(1)}_j+T^{(6)}
\]
does the same. If \(x=T^{(2)}_j\), then
\[
        T^{(2)}_j+P=T^{(1)}_j+T^{(6)}
\]
does the same. Therefore in each case \(f'^{\,x}\) is not threshold.

Finally, if \(x=F^{(2)}_j\), then \(F^{(2)}_j\) is flipped from false to true, and
\[
        F^{(2)}_j+T^{(1)}_j=F^{(1)}+F^{(3)}
\]
expresses a sum of two ones of \(f'^{\,x}\) as a sum of two zeros. Hence
\(f'^{\,x}\) is again \(2\)-summable, and therefore is not threshold.
\end{proof}

\begin{theorem}[The boundary case $k=n-1$]
\label{thm:boundary}
For \(k=n-1\) and \(n\ge4\), the image \(f'\) of \(f_{n,n-1}\) under the operation
above satisfies \(\sigma_{n+2}(f')=n+3\), so \(f'\in\Tk{n+2}\). Its essential
points are exactly
\[
        \Ess(f')=\{T^{(1)}_1,\dots,T^{(1)}_h,\ T^{(4)},\ T^{(6)},\ F^{(1)},\ F^{(3)},\ F^{(4)}\}.
\]
\end{theorem}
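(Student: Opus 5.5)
Since Proposition~\ref{prop:threshold} shows that \(f'\) is threshold, Theorem~\ref{thm:facet} gives \(\sigma_{n+2}(f')=|\Ess(f')|\). The plan is therefore to locate \(\Ess(f')\) exactly. The function \(f'\) is positive and, by Lemma~\ref{lem:extremal}, depends on all \(n+2\) variables. By \cite[Theorem~2.5]{ABST1995}, every essential point is extremal, so \(\Ess(f')\) lies among the \(3n+1\) extremal points listed in Lemma~\ref{lem:extremal}. Lemma~\ref{lem:inessential} removes \(2n-2\) of them. What remains is the listed set of \(h+5=n+3\) points: \(T^{(1)}_1,\dots,T^{(1)}_h\), \(T^{(4)}\), \(T^{(6)}\), \(F^{(1)}\), \(F^{(3)}\), \(F^{(4)}\). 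This gives the upper bound \(|\Ess(f')|\le n+3\).

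For the matching lower bound I will not check the essentiality of the survivors one at a time. Corollary~\ref{cor:lower}, applied in dimension \(n+2\), gives \(\sigma_{n+2}(f')\ge n+3\) directly. Combining the two bounds, \(|\Ess(f')|=n+3\), so every one of the \(n+3\) remaining candidates must be essential. This also identifies \(\Ess(f')\) as the displayed set. Finally, \(f'\) depends on all \(n+2\) variables, again by Lemma~\ref{lem:extremal}, and has specification number \((n+2)+1\). By the definition of \(\Tk{n+2}\) (or by Theorem~\ref{thm:Tn-simplicial}), it follows that \(f'\in\Tk{n+2}\).

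The only real obstacle is confirming that the inputs hold exactly as they will be used. Three points need care. First, \cite[Theorem~2.5]{ABST1995} has to apply to a positive threshold function depending on all variables; Lemma~\ref{lem:extremal} supplies the relevance hypothesis. Second, the extremal list in Lemma~\ref{lem:extremal} must be complete, because a missing extremal point could be essential and would break the upper bound. Third, the points removed by Lemma~\ref{lem:inessential} must be pairwise distinct and disjoint from the survivors, so that the count \(3n+1-(2n-2)=n+3\) is correct. The last check is immediate from the coordinate patterns: the removed and surviving points differ in \(x_1\), in the middle block, or in the last three coordinates. The auxiliary points \(P\) and \(Q_j\) used in Lemma~\ref{lem:inessential} require \(h\ge2\), that is, \(n\ge4\), which is within the hypothesis. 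If any of these checks failed, the fallback would be to certify essentiality of each survivor directly, for instance with the linear program of Proposition~\ref{prop:lp-test}, perturbing the weight vector of Proposition~\ref{prop:threshold} so that exactly that point lies on the hyperplane. With the cone lower bound available, however, this should not be needed.
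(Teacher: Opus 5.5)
Your proposal is correct and matches the paper's proof step for step. It uses extremal containment via \cite[Theorem~2.5]{ABST1995} and Lemma~\ref{lem:extremal}, pruning by Lemma~\ref{lem:inessential} to at most $n+3$ candidates, and the cone lower bound of Corollary~\ref{cor:lower} to force every survivor to be essential. One small correction to your side remarks: only the points $Q_j$ (which need $s=1<h$) require $h\ge2$, whereas $P$ is a zero of $f'$ for every $h\ge1$.
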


\begin{proof}
By Proposition~\ref{prop:threshold}, \(f'\) is threshold. The general lower
bound for threshold functions on \(\B^{n+2}\) gives
\[
        \sigma_{n+2}(f')\ge n+3.
\]
On the other hand, by Theorem~\ref{thm:facet}, \(\sigma_{n+2}(f')=|\Ess(f')|\), and for
this positive function \(\Ess(f')\) is contained in the set of \(3n+1\) extremal
points of Lemma~\ref{lem:extremal}. Lemma~\ref{lem:inessential} removes
\(2n-2\) of them, leaving at most \((3n+1)-(2n-2)=n+3\) candidates, the \(n+3\)
points displayed. Hence \(|\Ess(f')|\le n+3\), so \(\sigma_{n+2}(f')=n+3\), every
displayed point is essential, and \(f'\in\Tk{n+2}\).
\end{proof}

Theorem~\ref{thm:main} follows from Proposition~\ref{prop:disproof} and
Theorem~\ref{thm:boundary}.

\begin{remark}
The failure for \(k\le n-2\) has a one-line interpretation in the chamber language of
Section~\ref{sec:chamber}: with \(\widetilde x=(1,x)\), the certificate of
Proposition~\ref{prop:disproof} gives \(\widetilde T_1+\widetilde
T_2=\widetilde F_1+\widetilde F_2\) with \(T_1,T_2\) true and \(F_1,F_2\) false,
so no affine form separates them and the candidate chamber is empty. We record
the result combinatorially, as the summability certificate is self-contained and
uniform.
\end{remark}

\subsection{Product operations and the special role of the half-line}
\label{ssec:product-operations}

We close this section by recording how the four operations of \cite{Lozin2022}
sit together in the chamber picture. The half-line product
of Lemma~\ref{lem:halfline-simplicial} occupies a distinguished position, which
the following observation makes precise: an operation that acts on the chamber by
taking the product with a fixed factor adds that factor's facet count to the
specification number, and lands in the minimum class exactly when both the input
and the factor are simplicial.

\begin{proposition}[Specification numbers under products with a fixed factor]
\label{prop:product-operations}
Let $f\in\mathcal H_n$ with $n\ge2$, let $k\ge1$, and let $Q\subseteq\R^k$ be a
full-dimensional pointed polyhedral cone. Suppose an operation $\Psi$ produces a
threshold function $\Psi(f)\in\mathcal H_{n+k}$ whose closed chamber
$\overline{C_{\Psi(f)}}$ is linearly equivalent to the product
$\overline{C_f}\times Q$. Then
\[
        \sigma_{n+k}\bigl(\Psi(f)\bigr)=\sigma_n(f)+\#\{\text{facets of }Q\},
\]
and
\[
        \Psi(f)\in\Tk{n+k}
        \quad\Longleftrightarrow\quad
        f\in\Tn\ \text{and}\ Q\ \text{is simplicial}.
\]
Whenever $Q$ is simplicial it is linearly equivalent to $\R_{\ge0}^k$, so
$\overline{C_{\Psi(f)}}$ is linearly equivalent to the product
$\overline{C_f}\times\R_{\ge0}^k$ of $\overline{C_f}$ with the orthant; for $k=1$
this is the half-line product of Lemma~\ref{lem:halfline-simplicial}.
\end{proposition}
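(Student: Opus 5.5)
The plan mirrors the proof of Lemma~\ref{lem:halfline-simplicial}, replacing the half-line by a general pointed cone $Q$. First I will count facets of a product of full-dimensional pointed polyhedral cones $K\subseteq\R^d$ and $Q\subseteq\R^k$. The nonempty faces of $K\times Q$ are exactly the products $F\times G$ with $F$ a face of $K$ and $G$ a face of $Q$, and $\dim(F\times G)=\dim F+\dim G$. Hence a face of codimension one must have one factor of codimension one and the other factor full. So the facets are $F\times Q$, for $F$ a facet of $K$, together with $K\times G$, for $G$ a facet of $Q$. These two families are disjoint and each is injectively indexed. The product is also full-dimensional and pointed. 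Therefore
\[
\#\{\text{facets of }K\times Q\}=\#\{\text{facets of }K\}+\#\{\text{facets of }Q\}.
\]

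Next I transfer this count to the chambers. A linear equivalence carries facets bijectively to facets, as noted before Lemma~\ref{lem:halfline-simplicial}. By hypothesis, $\overline{C_{\Psi(f)}}$ is linearly equivalent to $\overline{C_f}\times Q$. Theorem~\ref{thm:facet}, applied once in $\mathcal H_{n+k}$ and once in $\Hn$, identifies each specification number with a facet count. Together these give the displayed additive formula for $\sigma_{n+k}(\Psi(f))$.

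For the equivalence, I use Lemma~\ref{lem:facet-normals-span}. It gives $\#\{\text{facets of }Q\}\ge k$, and Corollary~\ref{cor:lower} gives $\sigma_n(f)\ge n+1$. Since $n+k\ge2$, Theorem~\ref{thm:Tn-simplicial} shows that $\Psi(f)\in\Tk{n+k}$ if and only if $\sigma_{n+k}(\Psi(f))=n+k+1$. Because $n\ge2$, the same theorem shows that $f\in\Tn$ if and only if $\sigma_n(f)=n+1$. Now the sum $\sigma_n(f)+\#\{\text{facets of }Q\}$ equals $(n+1)+k$ exactly when both summands attain their lower bounds. The second summand attains its bound precisely when $Q$ has $k$ facets, which is the definition of simplicial.

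Finally, suppose $Q$ is simplicial. Then it is generated by $k$ linearly independent rays $r_1,\dots,r_k$. The invertible linear map sending $r_i$ to $e_i$ carries $Q$ onto $\R_{\ge0}^k$. Applying this map in the second factor, and the identity in the first, gives a linear equivalence between $\overline{C_f}\times Q$ and $\overline{C_f}\times\R_{\ge0}^k$. For $k=1$ this is Lemma~\ref{lem:halfline-simplicial}.

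The main obstacle is the face structure of products in the first step. If I want to avoid citing that structure, the direct check would be as follows. Write $K\times Q$ as the intersection of the inequalities of $K$, taken in the first coordinates, with those of $Q$, taken in the last coordinates. I would then verify irredundancy of this combined system and apply the argument of Lemma~\ref{lem:facet-halfspaces}, which identifies facets with the inequalities of an irredundant representation. Irredundancy of the combined system follows from the irredundancy of each factor separately, because the two blocks act on independent coordinates.
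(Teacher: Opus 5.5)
Your proposal is correct and follows essentially the same route as the paper: the face structure of a product gives additivity of facet counts, Theorem~\ref{thm:facet} together with the invariance of facet counts under linear equivalence transfers this to specification numbers, and the equivalence holds because each factor's facet count is at least its dimension, with Theorem~\ref{thm:Tn-simplicial} applied in dimensions $n$ and $n+k$. The only difference is cosmetic. You phrase the equality case through $\sigma_n(f)\ge n+1$ and Lemma~\ref{lem:facet-normals-span} rather than through ``the product is simplicial iff both factors are,'' and your irredundancy fallback is a valid alternative to citing the face structure of products.
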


\begin{proof}
The nonempty faces of a product of polyhedra are the products of a face of each factor, and dimension is additive across the product. The codimension-one faces are therefore exactly the sets $F\times Q$, for $F$ a facet of $\overline{C_f}$, together with the sets $\overline{C_f}\times G$, for $G$ a facet of $Q$, because a product $F\times G$ of facets from both factors has codimension two and so is not a facet. Hence $\overline{C_f}\times Q$ has
$\#\{\text{facets of }\overline{C_f}\}+\#\{\text{facets of }Q\}$ facets. By
Theorem~\ref{thm:facet} the facet counts of $\overline{C_f}$ and of
$\overline{C_{\Psi(f)}}$ equal $\sigma_n(f)$ and $\sigma_{n+k}(\Psi(f))$
respectively, and linear equivalence preserves facet counts, so
$\sigma_{n+k}(\Psi(f))=\sigma_n(f)+\#\{\text{facets of }Q\}$. A full-dimensional
pointed cone is simplicial exactly when its facet count equals its dimension.
Since the facet count and the dimension of a product are both additive, and each
factor has at least as many facets as its dimension, $\overline{C_f}\times Q$ is
simplicial if and only if both $\overline{C_f}$ and $Q$ are. By
Theorem~\ref{thm:Tn-simplicial}, applied in dimensions $n$ and $n+k$,
$\overline{C_f}$ is simplicial exactly when $f\in\Tn$ and $\overline{C_{\Psi(f)}}$
exactly when $\Psi(f)\in\Tk{n+k}$. Combining these gives the stated equivalence.
Finally, a simplicial cone in $\R^k$ is spanned by $k$ linearly independent rays,
so it is the image of the orthant $\R_{\ge0}^k$ under the invertible linear map
carrying the standard basis to those rays. Hence a simplicial $Q$ is linearly
equivalent to $\R_{\ge0}^k$, and $\overline{C_f}\times Q$ to the product
$\overline{C_f}\times\R_{\ge0}^k$ of $\overline{C_f}$ with the orthant. For $k=1$
the full-dimensional pointed cones in $\R^1$ are $\R_{\ge0}$ and $\R_{\le0}$,
linearly equivalent to one another.
\end{proof}

The first two operations are the case $Q=\R_{\ge0}$: adding a variable and
extension on a variable coincide on chambers with the half-line product
(Remark~\ref{rem:two-extensions-coincide}), so they carry every $f\in\Tn$ into
$\Tk{n+1}$. The other two constructions are not products with a fixed factor. The
symmetric-variables extension gives a simplicial chamber only when the input $f$
admits a representation with $\max\{p,q\}>(\ell-1)\min\{p,q\}$
(Theorem~\ref{thm:s-threshold}); and the fourth operation, which we analyse only
on the specific family $f_{n,k}$ to which it is applied, yields a threshold image
only in the boundary case $k=n-1$, where that image has minimum specification
number (Theorem~\ref{thm:boundary}, Proposition~\ref{prop:disproof}).

\section{Concluding remarks and further directions}\label{sec:open}

The chamber-facet identification (Theorem~\ref{thm:facet}) leaves several
directions open.

\paragraph{Classifying simplicial chambers.} The minimum-specification
problem becomes: classify the chambers of $\Arr_n$ whose closures are
simplicial. In the language of labelled examples, this asks which labelled
sets of $n+1$ cube points can occur as the essential points of an
all-variable-relevant threshold function. The examples of Lozin \emph{et al.}\ show that this
class is strictly larger than the linear read-once class, so a classification
cannot follow only from the known forbidden-restriction characterisations of
lro functions.

\paragraph{Operations preserving minimum specification.}
Proposition~\ref{prop:product-operations} shows that an operation acting on the
chamber as a product with a fixed factor $Q$ sends $\Tn$ into $\Tk{n+k}$ exactly
when $Q$ is simplicial, in which case the new chamber is the product of the old
one with an orthant. It would be of interest to delimit a class of operations on
threshold functions wide enough to include the constructions of \cite{Lozin2022}
yet structured enough to admit a classification, and to ask whether the half-line
product is, up to congruence, the only single-variable operation in that class
that preserves minimum specification for all inputs. Some restriction is
essential: with none, an operation could ignore its input and return a fixed
member of $\Tk{n+1}$, so the interest lies in the choice of class.

\paragraph{Algorithmic complexity.} From the truth table,
the signature, $\Ess(f)$, and $\sigma_n(f)$ can be computed in time
polynomial in $2^n$ by running the feasibility test of
Proposition~\ref{prop:lp-test} for each cube point. The open question
concerns \emph{compact} representations: given a threshold function by a
compact integer weight representation, can its essential points (and hence, by
Theorem~\ref{thm:facet}, its minimum specifying set) be computed in
output-polynomial time? The chamber-facet identification gives the linear-programming certificate of
Proposition~\ref{prop:lp-test} for testing individual candidate facets, but it does not
by itself give an efficient way to avoid testing many non-essential extremal
candidates. The complexity of computing the signature from a compact
representation, and the related complexity of recognising membership in
$\Tn$, both remain natural questions.

\paragraph{The constant in the linear growth.} Theorem~\ref{thm:theta-n}
settles the order of $\overline\sigma_n$ as linear and, through
Corollary~\ref{cor:exact}, pins down the resonance count $r(\Res)$ to within a
constant factor. What remains is the constant itself:
Problem~\ref{prob:constant} asks whether $\overline\sigma_n/(n+1)$ converges,
and if so to what value in $[1,2]$.

\paragraph{Chow parameters and hybrid specification.} A further direction is
to compare labelled-example specification with Chow-parameter specification
\cite{Chow1961,Winder1971}. The zonotope of Section~\ref{sec:zonotope} makes
the comparison concrete: by Theorem~\ref{thm:zonotope}, labelled-example
specification is the local edge structure of $Z_n$ at the vertex $\chi(f)$,
while Chow-parameter specification is the position of that vertex, so hybrid
specification asks how the two together locate the vertex. Building on the
reconstruction results of Goldberg \cite{Goldberg2006}, O'Donnell and Servedio
\cite{ODonnellServedio2011}, and Servedio \cite{Servedio2007}, we may ask for
hybrid specification bounds: if some Chow parameters are known exactly, or if
all are known approximately, how many labelled examples are still needed to
determine the threshold function? A parallel question in the polynomial
setting is suggested by the robust Chow-parameter determination of
Diakonikolas and Kane \cite{DiakonikolasKane2019}.

\paragraph{Polynomial threshold functions.} The chamber-facet mechanism
extends formally to degree-$d$ polynomial threshold functions
(Proposition~\ref{prop:ptf-average}), but for $d>1$ the corresponding
restricted arrangements are no longer the ordinary resonance arrangement.
Useful region-count estimates for these higher-degree analogues would be a
natural extension of the present approach.

\subsection*{Declaration of generative AI and AI-assisted technologies in the manuscript preparation process}

During the preparation of this work, the author used OpenAI ChatGPT and
Anthropic Claude for assistance with literature review  and
TeX/TikZ code for mathematical diagrams. The author
verified all content and takes full responsibility for the published article.

\end{document}